\documentclass[a4paper,reqno]{amsart}

\usepackage[reqno]{amsmath}
\usepackage{amssymb,amsfonts,amsthm}
\usepackage{geometry}
\usepackage[english]{babel} %% English standard typographical conventions
\usepackage{graphicx}
\usepackage{lmodern}
\usepackage{enumitem}
\usepackage{booktabs}
\usepackage{float}
\usepackage{array}
\usepackage{makecell}

\usepackage{tabularx}
\usepackage{array}
\usepackage{booktabs}
\newcolumntype{L}{>{\raggedright\arraybackslash}X}
\newcolumntype{P}[1]{>{\raggedright\arraybackslash}p{#1}}

\usepackage{aliascnt}

\usepackage[]{color} %%more colors, monochrome=everything black
\usepackage[dvipsnames]{xcolor} %% more colors
\usepackage{csquotes} %% necessary for biber
\usepackage{array} %% for arrays
\usepackage{cancel}
\usepackage{tikz}
\usetikzlibrary{backgrounds}

\usepackage{mathtools,todonotes} %% add-on to amsmath
\usepackage{tikz-cd}  %%For commutative diagrams with tikz
\usepackage[all,cmtip]{xy} %% For diagrams
\usepackage[bbgreekl]{mathbbol} %%For bold greek letters
\usepackage{mathrsfs}
\usepackage{hyperref} %% hyperlinks
\usepackage[capitalize,noabbrev]{cleveref} %% smart cross-references

\theoremstyle{definition} %%upright text, extra space above and below
\newtheorem{definition}{Definition}
\numberwithin{definition}{section}

\theoremstyle{plain} %% italic text, extra space above and below
\newaliascnt{theorem}{definition}
\newtheorem{theorem}[theorem]{Theorem}
\aliascntresetthe{theorem}

\newaliascnt{proposition}{definition}
\newtheorem{proposition}[proposition]{Proposition}
\aliascntresetthe{proposition}

\newaliascnt{lemma}{definition}
\newtheorem{lemma}[lemma]{Lemma}
\aliascntresetthe{lemma}

\newaliascnt{corollary}{definition}

\aliascntresetthe{corollary}

\newaliascnt{claim}{definition}

\aliascntresetthe{claim}

\newaliascnt{assumption}{definition}
\newtheorem{assumption}[assumption]{Assumption}
\aliascntresetthe{assumption}

\theoremstyle{remark}
\newaliascnt{example}{definition}
\newtheorem{example}[example]{Example}
\aliascntresetthe{example}

\theoremstyle{remark} %% upright text, no extra space above or below
\newaliascnt{remark}{definition}
\newtheorem{remark}[remark]{Remark}
\aliascntresetthe{remark}

\crefname{definition}{Definition}{Definitions}
\crefname{theorem}{Theorem}{Theorems}
\crefname{proposition}{Proposition}{Propositions}
\crefname{lemma}{Lemma}{Lemmas}
\crefname{corollary}{Corollary}{Corollaries}
\crefname{claim}{Claim}{Claims}
\crefname{remark}{Remark}{Remarks}
\crefname{example}{Example}{Examples}

\usepackage[
  bibstyle=alphabetic,
  citestyle=alphabetic,
  useprefix,
  giveninits=true,
  minbibnames=99,
  maxbibnames=99,
  sorting=nyt,
  sortcites=true,
  backend=biber
]{biblatex}
\renewbibmacro{in:}{} %% removes in: before journals

\DeclareSourcemap{
  \maps[datatype=bibtex]{
    \map[overwrite]{ %%If DOI is present, doesn't print arXiv
      \step[fieldsource=doi, final]
      \step[fieldset=url, null]
      \step[fieldset=eprint, null]
    }
    \map[overwrite]{ %% If only arXiv is present, doens't print pages and eid (eliminates duplicates)
      \step[fieldsource=eprint, final]
      \step[fieldset=pages, null]
      \step[fieldset=eid, null]
      \step[fieldset=journal, null]
    }  
  }
}

\bibliography{bibliography}  %% Name of the file with the bibliography
\newcommand{\Ker}[1]{\mathrm{Ker}{(#1)}}

\newcommand{\intb}{{\int_\Sigma}}
\newcommand{\intc}{{\int_\Gamma}}
\newcommand{\Tr}[0]{\text{Tr}}

\newcommand{\de}{\mathrm{d}}

\newcommand{\og}{\Omega_\Gamma}

\newcommand{\rog}{\varrho_\Gamma} 

\DeclareMathOperator{\Ima}{Im}

\newcommand{\qsp}[2]{\,\ensuremath{\raise.5ex\hbox{$#1$}\big\slash\raise-.5ex\hbox{$#2$}}} 
\newcommand{\pard}[2]{\frac{\delta#1}{\delta#2}}

\newcommand{\en}{\epsilon_n}
\newcommand{\emm}{\epsilon_m}
\newcommand{\iz}{\iota_\zeta}

\makeatletter
\newcommand{\zzlabel}[1]{\ifmeasuring@\else\ltx@label{#1}\fi} %%new label (necessary if amsmath is present)
\makeatother

\newcounter{terms}[equation] %%counter for terms in a equation
\newcommand{\unl}[2]{\underline{#1}_{\refstepcounter{terms}\zzlabel{#2}\textcolor{blue}{\theterms}}} %%underlines, counts a term and put the corresponding number. Put the term in the first slot and a label in the second

\newcommand{\reft}[2]{(\ref{#1}.[\ref{#2}])} %%refers to a term in a equation as (equation.term)

\RequirePackage{tikz-cd}
\RequirePackage{amssymb}
\usetikzlibrary{calc}
\usetikzlibrary{decorations.pathmorphing}

\tikzset{curve/.style={settings={#1},to path={(\tikztostart)
    .. controls ($(\tikztostart)!\pv{pos}!(\tikztotarget)!\pv{height}!270:(\tikztotarget)$)
    and ($(\tikztostart)!1-\pv{pos}!(\tikztotarget)!\pv{height}!270:(\tikztotarget)$)
    .. (\tikztotarget)\tikztonodes}},
    settings/.code={\tikzset{quiver/.cd,#1}
        \def\pv##1{\pgfkeysvalueof{/tikz/quiver/##1}}},
    quiver/.cd,pos/.initial=0.35,height/.initial=0}

\tikzset{tail reversed/.code={\pgfsetarrowsstart{tikzcd to}}}
\tikzset{2tail/.code={\pgfsetarrowsstart{Implies[reversed]}}}
\tikzset{2tail reversed/.code={\pgfsetarrowsstart{Implies}}}
\tikzset{no body/.style={/tikz/dash pattern=on 0 off 1mm}}
    
\title{The reduced Dirac structure of General Relativity on manifolds with corners}

\author{Alberto S. Cattaneo, Filippo Fila-Robattino and Manuel Tecchiolli}

\renewcommand\footnotemark{}

\thanks{We acknowledge partial support of the SNF Grant No. 200021 227719 and of the Simons Collaboration on Global
Categorical Symmetries. This research was (partly) supported by the NCCR SwissMAP, funded by the Swiss
National
Science Foundation. This article is based upon work from COST Action 21109 CaLISTA, supported by COST
(European
Cooperation in Science and Technology)(www.cost.eu), MSCA-2021-SE-01-101086123 CaLIGOLA, and MSCA-DN
CaLi-
ForNIA -101119552. FFR acknowledges funding from the EU project Caligola HORIZON-MSCA-2021-SE-01, Project
ID:
101086123.}
\begin{document}
\begin{abstract}
In this paper, the corner Poisson structure of four-dimensional Palatini–Cartan gravity is derived. Building on the classical description of gravity on manifolds with boundary, specifically on the boundary constraint algebra, a pre-Dirac structure on the space of corner fields is obtained together with a reduction procedure that yields a maximal Dirac structure, identified as the graph of a Poisson bivector field, on the reduced space of corner fields. It is further shown that this Poisson structure admits an equivalent affine Poisson description, which naturally exhibits the reduced corner theory as a $BF$-like theory and leads to a BF$^2$V formulation. This provides the basis for a unified framework for the bulk, boundary, and corner structures of Palatini–Cartan gravity.
\end{abstract}
\maketitle

\tableofcontents

\section{Introduction}

The study of gauge theories on manifolds with boundary has revealed a rich interplay between variational principles, symplectic geometry, and homological methods. In particular, the BV--BFV formalism provides a coherent framework in which bulk dynamics, boundary geometry and the corresponding reduced phase space can be treated simultaneously. Starting from a classical gauge theory in the bulk, it is possible to associate a BFV theory to the boundary, encoding the symplectic structure and the constraint algebra arising from gauge invariance. Under suitable assumptions, the degree-zero BFV cohomology reproduces the algebra of observables on the reduced phase space.

There is a canonical way of relating bulk and boundary gauge theories. Starting from a BV theory on a manifold with boundary, the failure of the BV action to satisfy the master equation is encoded by boundary contributions, which naturally give rise to BFV data on the boundary. In favorable situations, this leads to a compatible BV--BFV theory and provides a homological description of the reduced phase space.

For four-dimensional Palatini--Cartan gravity, however, this program encounters substantial difficulties. As shown in \cite{CS2017b}, the singular nature of the boundary presymplectic structure obstructs a direct implementation of the BV--BFV construction. A different strategy was therefore developed in \cite{CCS2020}, where one starts from the classical boundary theory and studies the reduction of a coisotropic submanifold of boundary fields. The resulting reduced phase space is then described cohomologically by a BFV theory, providing a resolution of the corresponding quotient. More recently, it was shown in \cite{CC25} that this BFV theory can also be recovered from the bulk BV description after a suitable BV pushforward procedure. Equivalently, the same structure arises from an AKSZ-type construction associated with the BFV data. In this way, the classical and BV descriptions are ultimately reconciled, despite the failure of the naive BV--BFV construction.

Many situations of physical and mathematical interest, however, naturally involve manifolds with corners. Examples include locality and gluing constructions in classical and quantum field theory, the decomposition of spacetime into finite regions, and the study of observables associated with lower-dimensional strata. In these situations, the boundary itself acquires a non-trivial stratification, and one is naturally led to ask whether the structures appearing at codimension one admit analogous extensions to higher codimensions. This question is naturally formulated within the framework of BF$^{(k)}$V theories, where the compatibility between symplectic and homological structures is expected to persist along the entire stratification.

The corner structure of four-dimensional Palatini--Cartan gravity was investigated in \cite{CC2022cor} from the BF$^k$V perspective. Starting from the BFV theory associated with the boundary, the induced codimension-two structure was shown to carry a local $P_\infty$-algebra and to define a pre-BF$^2$V theory. The resulting corner presymplectic form remains singular, and the codimension-two theory is therefore naturally described before reduction. While this construction captures the full homotopy structure inherited from the BFV theory, it does not lead directly to a strict BF$^2$V theory.

The perspective adopted in the present work is instead inspired by the classical construction of \cite{CCS2020}. Rather than starting from the BFV extension and studying the corner theory induced by it, we begin with the classical boundary constraint algebra and investigate the geometric structure generated at codimension two. Our primary aim is to construct and characterize the reduced classical corner geometry, in direct analogy with the construction of the reduced boundary phase space in \cite{CCS2020}. By performing the relevant reduction at the classical level, we remove the degeneracies responsible for the singular corner geometry and obtain a genuine geometric structure on the reduced space of corner fields. The corresponding BF$^2$V structure is not imposed as an initial target, but emerges naturally from the resulting geometry.

More precisely, starting from the boundary constraint algebra of Palatini--Cartan gravity, we show that the residual codimension-two contributions define a pre-Dirac structure on a suitable space of corner fields. Using techniques from generalized geometry, we identify a reduction procedure that removes the obstruction to maximality and yields a genuine Dirac structure on a reduced corner phase space. We then prove that this Dirac structure is the graph of a Poisson bivector field, thereby providing a geometric characterization of the reduced corner degrees of freedom. Since the resulting Poisson bivector field naturally determines the corresponding homological data, a strict BF$^2$V structure can then be obtained directly from the reduced corner geometry.

\subsection*{Structure of the paper}
The paper is organized as follows. In Section \ref{sec:gb}, we recall the
geometric background used throughout the work. We begin with stratified
manifolds and manifolds with corners, in order to fix the distinction between
bulk, boundary and corner strata. We then review the symplectic and Poisson
language needed to describe reduced phase spaces, with particular emphasis on
coisotropic reduction, first-class constraints, and momentum maps. This is
followed by a brief recollection of the BF$^{(k)}$V formalism, which provides
the cohomological framework organizing the data attached to strata of increasing
codimension. Finally, we review the standard Courant algebroid, Dirac
structures, and the two basic classes of Dirac structures arising as graphs of
closed two-forms and Poisson bivectors. This section fixes the geometric
dictionary used later: symplectic geometry describes the codimension-one phase
space, while generalized geometry is the natural language for the
codimension-two structure.

In Section \ref{sec: corner ft ft} we explain the general mechanism by which corner structures arise in classical field theory. Starting from a Lagrangian field theory on a manifold with boundary, the variation of the action produces a presymplectic form on the space of boundary fields. After quotienting the kernel of this presymplectic form, one obtains a symplectic boundary field
space. The equations of motion then define a coisotropic submanifold of admissible boundary data; equivalently, they give first-class constraints. On a manifold with boundary, these constraints are represented by Hamiltonian vector fields and their coisotropic reduction gives the usual reduced phase space of Cauchy data. When we deal with a manifold with corners, however, integration by parts on the boundary produces residual one-forms supported on the corner. Thus a boundary constraint no longer determines only a vector field: it determines a generalized section
\begin{align*}
        X_i+\beta_i
        \in
        T\mathcal F_\Gamma\oplus T^*\mathcal F_\Gamma,
\end{align*}
where $\mathcal F_\Gamma$ is an appropriate space of corner fields (to start with, before reduction, we can take it to be equal to the space of boundary fields). The vector part records the local transformation of corner fields, while the one-form part records the variation of the corner charge (when it is exact) or the flux associated with the global part of the field transformation. We show that these pairs naturally form a generalized distribution in the standard Courant algebroid, and we formulate the two basic conditions required of it: isotropy with respect to the Courant pairing and involutivity with respect to the Dorfman bracket. This gives the abstract notion of a pre-Dirac corner structure. We then describe the reduction by the directions in the kernel of the residual one-forms, which removes the obstruction to maximality and may turn the pre-Dirac structure into a genuine Dirac structure. The general construction is illustrated in two model examples: Yang--Mills theory, where the reduction can be seen explicitly and leads to a Poisson corner structure, and four-dimensional $BF$ theory, where the classical corner structure is already Poisson but the full cohomological description also requires the higher $P_\infty$ data coming from reducibility.

In Section \ref{sec:pc_boundary} we review the boundary structure of
four-dimensional Palatini--Cartan gravity. We recall the bulk fields
\begin{align*}
        (e,\omega)
        \in
        \Omega^1_{\mathrm{nd}}(M,V)\oplus\mathcal A_M
\end{align*}
and the Palatini--Cartan action\footnote{In dimension $4$. Moreover, notice that $e^p$ is the $p$-times wedge product of $e$.}
    \begin{align}
    S_M=\int_{M}\frac1{2}e^{2} F_\omega+\frac1{4!}\Lambda e^4
    \end{align}
Its variation gives the boundary Noether
one-form
\begin{align*}
        \alpha_\Sigma
        =
        \int_\Sigma \frac12 e^2\,\delta\omega
\end{align*}
and hence a degenerate presymplectic form on pre-boundary fields. We recall the
first reduction which produces the geometric boundary phase space
$(\mathcal F_\Sigma,\omega_\Sigma)$. We then describe the constraint surface
of admissible Cauchy data in terms of the three families of boundary
constraints
\begin{align*}
        L_c,\qquad P_\xi,\qquad H_\lambda ,
\end{align*}
corresponding respectively to internal Lorentz transformations, diffeomorphisms
tangent to the boundary, and normal deformations. These constraints are
first-class and define a coisotropic submanifold of the boundary phase space.
We also recall the BFV description of this boundary theory, stressing the role
of the constraint algebra and the obstruction that appears when one attempts to
descend directly to a strict codimension-two BF$^2$V theory.

Section \ref{sec: corner PC} contains the main construction of the paper. Restricting the boundary constraints to the corner
produces four residual one-forms,
\begin{align*}
        \mathcal J_c,\qquad
        \mathcal E_\zeta,\qquad
        \mathcal K_\eta,\qquad
        \mathcal F_\lambda ,
\end{align*}
where the boundary vector field decomposes at the corner as
\begin{align*}
        \xi|_\Gamma=\zeta+\eta\partial_m .
\end{align*}
These one-forms are paired with the push-forwards of the Hamiltonian vector
fields of the corresponding boundary constraints. In this way we obtain a
generalized distribution
\begin{align*}
        \tilde D
        =
        \operatorname{span}
        \bigl\{
        \mathbb J_c+\mathcal J_c,\,
        \mathbb E_\zeta+\mathcal E_\zeta,\,
        \mathbb K_\eta+\mathcal K_\eta,\,
        \mathbb F_\lambda+\mathcal F_\lambda
        \bigr\}
        \subset
        T\mathcal F_\Gamma\oplus T^*\mathcal F_\Gamma .
\end{align*}
We prove that this distribution is isotropic and involutive; hence it is a
pre-Dirac structure on the space of corner fields. We then identify the kernel
of the one-form components, namely the directions of the pre-corner fields that
are invisible to all corner charge variations. Quotienting by this distribution
removes the obstruction to maximality. The reduced corner fields are described
in terms of the natural variables
\begin{align*}
        E=\frac12 e^2,
        \qquad
        \Omega=e(\omega-\omega_0),
\end{align*}
where $\omega_0$ is an arbitrarily chosen reference connection, together with transverse data fixed by the torsion, Einstein and Bianchi relations. On the resulting reduced corner space $P_\Gamma$, the induced Dirac structure is shown to be maximal and, more precisely, to be the graph of
a Poisson bivector field:
\begin{align*}
        D
        =
        \operatorname{Graph}(\pi).
\end{align*}
This provides the reduced Poisson geometry of the Palatini--Cartan corner
structure.
% In Section \ref{sec: corner_charge_algebra}, we extract the algebraic
% content of the reduced Poisson corner structure in terms of corner charges.
% The one-form components of the Dirac structure are interpreted as infinitesimal
% charge variations. When such a one-form is exact,
% \begin{align*}
%         \beta_i=\delta Q_i ,
% \end{align*}
% the corresponding transformation admits an integrable corner charge
% $Q_i$; when it is not exact, it represents a non-integrable charge
% variation, or flux. On the reduced Poisson corner space, integrable charges
% carry the Poisson bracket
% \begin{align*}
%         \{Q_i,Q_j\}_\Gamma
%         =
%         \pi(\delta Q_i,\delta Q_j).
% \end{align*}
% We use this bracket to describe the corner charge algebra associated with what we interpret as the
% surviving (at the corner) global components of the gauge symmetries. In particular, we explain
% how equivariance of the corresponding moment map is expressed as closure of the
% charges under the Poisson bracket. This final section makes explicit the physical meaning of the reduced Dirac structure: it does not merely encode local transformations of corner fields, but also the global nature of such transformations, i.e. the visible (at the corner) effects of a gauge transformation.

Finally, in Section \ref{sec: BF2V}, we pass from the reduced Poisson geometry to the
strict BF$^2$V description. Since a Poisson bivector field can be encoded as a
degree-two function satisfying the classical master equation on the shifted
cotangent bundle of the reduced corner space, the Poisson structure on
$P_\Gamma$ naturally determines a BF$^2$V theory. We first write the BF$^2$V
data associated with the Poisson bivector obtained in
Section \ref{sec: corner PC}. We then perform a field redefinition that
exhibits the same structure in an affine Poisson form. In this form the
BF$^2$V action becomes
\begin{align*}
        S^{PC}_\Gamma
        =
        \int_\Gamma
        \frac12 E[c,c]
        +
        \mu\,e\,d_{\omega_0}c
        +
        \frac12\mu^2F_{\omega_0},
\end{align*}
where the ghost $\mu$ packages the diffeomorphism and normal-deformation
parameters through
\begin{align*}
        \mu
        =
        \iota_\zeta e+\lambda\epsilon_n+\eta\epsilon_m .
\end{align*}
This affine formulation makes the relation with four-dimensional $BF$ theory
transparent and identifies the affine term as the classical source of the
central-extension data relevant for the associated corner algebra.

\subsection*{Acknowledgments}
The authors would like to thank Francesco Bonechi and Giovanni Canepa for the illuminating remarks. 
\section{Geometrical background}\label{sec:gb}

\subsection{Stratified spaces}\label{strat_spac}
\begin{definition}[Stratified manifold\footnote{See \cite{Pflaum2001} for a detailed treatment.}]
A \emph{stratified manifold} is a topological space $X$ equipped with a decomposition
\begin{align*}
X &= \bigsqcup_{\alpha \in A} S_\alpha,
\end{align*}
where each $S_\alpha$ is a (locally closed) smooth manifold of dimension $d_\alpha$, called a \emph{stratum}, and the following conditions hold:
\begin{enumerate}
  \item Each $S_\alpha$ is a smooth manifold.
  \item For any two strata $S_\alpha, S_\beta$, if
  \begin{align*}
    S_\beta \,\cap\, \overline{S_\alpha} &\neq \varnothing,
  \end{align*}
  then $S_\beta \subset \overline{S_\alpha}$ and $\dim S_\beta < \dim S_\alpha$. This is sometimes called the \emph{frontier condition}.
  \item The topology of $X$ is such that each closure $\overline{S_\alpha}$ is a union of strata.
\end{enumerate}
\end{definition}
\begin{remark}
The stratification induces a partial order on the index set $A$ defined by
\begin{align*}
\beta \preceq \alpha \quad\Longleftrightarrow\quad S_\beta \subset \overline{S_\alpha},
\end{align*}
which, by the frontier condition, is well-defined and acyclic.
\end{remark}
\begin{example}[The Circular Cone]
Consider the quadratic cone
\begin{align*}
X &= \{(x,y,z)\in\mathbb{R}^3 : z^2 = x^2 + y^2\}.
\end{align*}
It admits a two-stratum decomposition:
\begin{align*}
S_2 &= X \setminus \{(0,0,0)\},\\
S_0 &= \{(0,0,0)\}.
\end{align*}
Here the apex is a zero-dimensional stratum, and the smooth part is two-dimensional.
\end{example}

\begin{example}[Algebraic Varieties and Whitney Stratification]
Let $V \subset \mathbb{C}^n$ be an affine algebraic variety defined by the vanishing of polynomials $f_1,\dots,f_m$. For each $k=0,\dots,n$, set
\begin{align*}
V_k &= \bigl\{\,p\in V : \operatorname{rank}\bigl(Df(p)\bigr)=n-k\bigr\},
\end{align*}
where $Df(p)$ is the Jacobian matrix of $(f_1,\dots,f_m)$ at $p$. Equivalently, $V_k$ is the locus where $V$ is a smooth complex submanifold of complex dimension $k$ (real dimension $2k$).

Then
\begin{align*}
V &= \bigsqcup_{k=0}^n V_k
\end{align*}
defines a stratification of $V$ satisfying:
\begin{enumerate}
  \item \emph{Frontier condition:} if $V_j\cap\overline{V_k}\neq\emptyset$, then $j<k$ and $V_j\subset\overline{V_k}$.
  \item \emph{Whitney regularity:} for each $j<k$, the pair $(V_j,V_k)$ satisfies Whitney’s Conditions (A) and (B), ensuring that near any $p\in V_j$ the tangent spaces to $V_k$ vary continuously and intersect $T_pV_j$ in the expected dimension.
\end{enumerate}
\end{example}
\begin{example}[Square in $\mathbb{R}^2$]
Let 
\begin{align*}
X &= [0,1]^2 \;\subset\;\mathbb{R}^2.
\end{align*}
We stratify $X$ by
\begin{align*}
S_0 &= (0,1)^2,\\
S_1 &= \bigl(\{0\}\times(0,1)\bigr)\,\cup\,\bigl(\{1\}\times(0,1)\bigr)\,\cup\,\bigl((0,1)\times\{0\}\bigr)\,\cup\,\bigl((0,1)\times\{1\}\bigr),\\
S_2 &= \{(0,0),(0,1),(1,0),(1,1)\}.
\end{align*}
Here $S_2$ consists of the four \emph{corners} of the square—points where two boundary edges meet.  The interior $(0,1)^2$ is $S_0$, the open edges are $S_1$, and the corner points are $S_2$.
\end{example}
\begin{definition}[Manifold with corners]\label{def:strat_man}
A \emph{manifold with corners} is a smooth manifold $M$ of dimension $N$ equipped with a stratification
\begin{align*}
  M = \bigsqcup_{k = 0}^{K} M^{[k]},
\end{align*}
for some $K \leq N$, satisfying the following:
\begin{enumerate}
  \item Each $M^{[k]}$ is the codimension-$k$ stratum, consisting of points locally diffeomorphic to an open subset of
  \begin{align*}
    \{x_1 = x_2 = \dots = x_k = 0\} \subset \mathbb{R}^k \times \mathbb{R}^{N-k},
  \end{align*}
  i.e., modeled on open neighborhoods of $[0, \infty)^k \times \mathbb{R}^{N - k}$ with exactly $k$ vanishing coordinates.

  \item Each stratum $M^{[k]}$ is a smooth manifold of dimension $N - k$ (without boundary).

  \item The stratification satisfies the frontier condition:
  \begin{align*}
    \overline{M^{[k]}} \supset \bigsqcup_{\ell > k} M^{[\ell]}.
  \end{align*}

  \item The manifold $M$ admits a smooth structure modeled on open subsets of $[0, \infty)^k \times \mathbb{R}^{N-k}$ for $k = 0, \dots, K$, with smooth transition maps between overlapping charts.
\end{enumerate}
\end{definition}

\begin{remark}\label{rem:two_strata}
In the following sections, we will restrict our attention to manifolds with corners admitting only strata of codimension at most $2$, that is:
\begin{align*}
  M = M^{[0]} \sqcup M^{[1]} \sqcup M^{[2]}.
\end{align*}
\end{remark}
\begin{remark}
    Notice that, even though a manifold with boundary is a special case of a manifold with corners, consisting of only two strata, one of codimension-$0$ (the bulk) and one of codimension-$1$ (the boundary), the definition of manifold with corners usually refers to stratified structures with higher depth.
\end{remark}
\subsection{Graded symplectic and Poisson geometry}\label{grad_symp_poiss}
In this section, we briefly recall some notions of classical symplectic and Poisson geometry and describe their interplay with graded geometry.

\begin{definition}[Symplectic manifold]
    A symplectic manifold is a pair $(F,\varpi)$, where $\varpi$ is a closed two-form on the smooth manifold $F$ (possibly infinite dimensional) such that the natural map $\mathfrak{X}(F)\to \Omega^1(F) $ sending $X\mapsto \iota_X\varpi$ is an isomorphism (or just injective if $F$ is infinite dimensional). If the non-degeneracy condition is dropped, $\varpi$ is called pre--symplectic.\footnote{Strictly speaking, one also requires the rank of the characteristic distribution of $\omega$ to be constant.} 
\end{definition}
\begin{definition}[Poisson manifold ]
A Poisson structure on $F$ is an $\mathbb{R}$-bilinear antisymmetric bracket $\{\cdot,\cdot\}$ on $C^\infty(F)$ satisfying the Leibniz rule
\[
\{f,gh\} = \{f,g\}h + g\{f,h\}, \qquad \forall f,g,h \in C^\infty(F),
\]
and the Jacobi identity
\[
\{f,\{g,h\}\} + \{g,\{h,f\}\} + \{h,\{f,g\}\} = 0.
\]
These requirements are equivalent to the existence of a Poisson bivector field $\pi \in \Gamma(\wedge^2 TF)$ such that
\[
\{f,g\} = \pi(\de f,\de g), \quad \text{and}\quad [\pi,\pi]=0,
\]
where $[\cdot,\cdot]$ is the Schouten bracket.

A manifold endowed with a Poisson structure $(F,\pi)$ is called a Poisson manifold, while $\mathcal{C^1}(F)$ is called a Poisson algebra. 
\end{definition}

In general, a symplectic structure induces a Poisson algebra. Indeed, letting $(F,\varpi)$ be symplectic, if one considers any functions that have a Hamiltonian vector fields---we call them Hamiltonian functions---, i.e. any $f$ such that there exists $X_f\in\mathfrak{X}(F)$ for which $\iota_{X_f}\varpi=\de f$,\footnote{Notice that, because of the non-degeneracy of $\varpi$, if $X_f$ exists it is unique. Furthermore, when $F$ is finite dimensional, every function is Hamiltonian.} one can define the Poisson bracket
    \[
    \{f,g\}\coloneqq \iota_{X_f}\iota_{X_g}\varpi.
    \]

\begin{definition}\label{def_co_is_la}
    Let $(F, \varpi)$ be a symplectic manifold. Then, a submanifold $C\subseteq F$ is called
    \begin{itemize}
  \item[--] \emph{Isotropic} if $T_xC \subseteq T_xC^\perp$,
  \item[--] \emph{Coisotropic} if $T_xC^\perp \subseteq T_xC$,\footnote{In the infinite-dimensional case, there si also the strong notion of split lagrangian submanifold: in this
case, $T_xC$ is required to be isotopic and to have an isotropic complement for all x.}
  \item[--] \emph{Lagrangian} if $T_xC = T_xC^\perp$,
\end{itemize}
for all $x$ in $C$, where
    \[
    T_xC^\perp\coloneqq\{v\in T_xF \ | \ \varpi_x(v,w)=0 \ \forall w\in T_xC\}.
    \]
\end{definition}

\begin{remark}\label{rem: coisotropic infdim}
    We are mostly interested in the case of a coisotropic submanifold, as that will be the relevant object describing the phase space of the field theories we will take into consideration. Furthermore, when $F$ is infinite-dimensional, we have to resort to a more algebraic definition, which in the finite-dimensional case is equivalent to the previous one. In particular, letting $I$ be the vanishing ideal of $C$, i.e.
    \[
    I\coloneqq \{f\in\mathcal{C^1}(F) \ \vert \ f|_C=0\},
    \]
we require\footnote{In Dirac’s terminology, $C$ is defined by first-class constraints.}
    \[
    \{I,I\}\subseteq I.
    \]
\end{remark}

In general, the coisotropic submanifold $C$ is not symplectic, as the restriction of the symplectic form $\varpi|_C$ is degenerate. In fact, the kernel of $\omega$ restricted to $C$ is the orthogonal bundle $T^\perp C = \cup_{x\in
C} (T_xC)^\perp \subseteq TC$. Moreover, sections on $T^\perp C$ are involution, so we have a regular and
involutive distribution (which, in the finite-dimensional case is automatically integrable by the Frobenius theorem). The goal is then to study the coisotropic reduction $\underline{C}\coloneqq C/T^{\perp} C$. However, in many cases, the quotient $\underline{C}$ is not smooth, hence the geometrical description of the quotient becomes particularly difficult to compute. One solution is to only describe it algebraically, by finding the algebra of functions on $\underline{C}$. An even better solution is to find a cohomological resolution of the quotient in the symplectic setting.

We start by assuming that the vanishing ideal is generated by differentially independent Hamiltonian constraints $\phi_i\in\mathcal{C^1}(F)$, which is to say that $C=\{\phi_i=0\}$. Then, their Hamiltonian vector fields $X_i\in\mathfrak{X}(F)$ are by construction in $\Ker{\varpi|_C}$. Indeed, one immediately sees
    \[
    \iota_{X_i} \varpi|_C=\delta\phi_i|_C=0.
    \]
Notice also that $\mathcal{C^1}(C)=\mathcal{C^1}(F)/I$, as two functions on $F$ whose difference vanishes on $C$ will be the same on $C$. Furthermore, we can also consider the normalizer
    \[
    N(I)\coloneqq \{f\in \mathcal{C^1}(F) \ \vert \ \{I,f\}\subseteq I\}
    \]
and see that $N(I)/I$ is exactly the subalgebra of functions of $\mathcal{C^1}(F)/I$ that are invariant with respect to the action of $I$. In other words,
    \[
    N(I)/I=\{f\in\mathcal{C^1}(C) \ | \ \{I,f\}=0 \}=\{f\in\mathcal{C^1}(C) \ | \ \{\phi_i,f\}=X_i(f)=0 \ \text{for all }i\}\simeq \mathcal{C^1}(C)^I\simeq \mathcal{C^1}(\underline{C}),
    \]
as it is clear that the functions invariant under the actions of the Hamiltonian vector fields $X_i$ are constant along the leaves of the characteristic foliation of $\varpi|_C$, hence defining the functions on $\underline{C}$.

\begin{remark}\label{rem: MW red}
    In the particular case where the constraints $\phi_i$ are the components $\mu_i=\phi_i$ of a moment map $\mu\colon F\to \mathfrak{g}^*$, where $\mathfrak{g}$ is the Lie algebra of a compact Lie group $G$, and such that $0\in \mathfrak{g}^*$ is a regular value, it follows by the Marsden--Weinstein theorem \cite{MW1974} that the coisotropic reduction $\underline{C}=\mu^{-1}(0)/G $ is smooth and symplectic, with symplectic form simply given by $\pi_*(\varpi_C)$, where $\pi\colon C\rightarrow\underline{C}$.
\end{remark}

As mentioned before, another solution is to obtain a cohomological resolution of the space of functions on $\underline{C}$, by means of the so-called BFV formalism. Starting from the simpler example of \cref{rem: MW red}, we notice that the components of the momentum map satisfy the equivariance condition, i.e. \[ \{\mu_i,\mu_j\}=f^k_{ij}\mu_k, \qquad \mathrm{for} \ i=1,\dots,\dim G, \] where $f^k_{ij}$ are the structure constants of $\mathfrak{g}$. The MW theorem can then be reinterpreted with the language of supergeometry. Specifically, one considers the supermanifold\footnote{See the next section for the relevant definitions.} $F\times T^*\mathfrak{g}[1]$, where $\mathfrak{g}[1]$ is just the Lie algebra $\mathfrak{g}$ whose coordinates are now odd (shifted by 1). It is easy to see that $\mathcal{C^1}(T^*\mathfrak{g}[1])\simeq \wedge^\bullet \mathfrak{g}^*\otimes \wedge^\bullet \mathfrak{g}$, and therefore \[ \mathcal{C^1}(F\times T^*\mathfrak{g}[1])\simeq \mathcal{C^1}(F)\otimes \wedge^\bullet \mathfrak{g}^*\otimes \wedge^\bullet \mathfrak{g}. \] On $F\times T^*\mathfrak{g}[1]$ one can then define a differential $Q$ which is the twisting of the Koszul differential and the Chevalley--Eilenberg differential, such that, if we define $(c^i,c^\dag_i)$ as coordinates of $T^*\mathfrak{g}[1]$, we have in components \[ S=\mu_ic^i -\frac{1}{2}f^k_{ij}c^i c^j c^\dag_k, \qquad \mathrm{and} \qquad Q=\{S,\cdot\}, \] where $S$ is commonly refered to as BRST charge \cite{BRS}. One can then show that $H^0_Q\simeq \mathcal{C^1}(\underline{C})$. The proof relies on the fact that $\Ker{Q_0}$ roughly selects the $\mathfrak{g}$-equivariant functions on $M$, while $\Ima(Q_{-1})$ defines the vanishing ideal.\footnote{In reality, the proof requires a few more technical steps. For a complete review of the BRST formalism, we refer to \cite{Figueroa}.} As it turns out, this construction can be generalized to the case where one has constraints $\{\phi_i\}_{i=1,\dots,N}$ which are now just components of a map $\phi\colon F \rightarrow W$, where $W$ is just an $N$-dimensional vector space. Furthermore, we have the coisotropic condition
\[ \{\phi_i,\phi_j\}=f_{ij}^k \phi_k, \qquad \text{where}\qquad f^k_{ij}\in\mathcal{C^1}(F). \]
The BFV \cite{BV3, Stasheff1997} prescription is essentially the same as the BRST one, i.e. one constructs a function $S\in\mathcal{C^1}(F\times T^*W[1])$ such that $Q=\{S,\cdot\}$ is a cohomological vector field, and such that its degree-0 cohomology is isomorphic to the algebra of functions (as a Poisson algebra) $\mathcal{C^1}(\underline{C})\simeq H^0_Q(F\times T^*\mathfrak{g}[1])$.\footnote{When $\underline{C}$ is not smooth, we take $H^0_Q$ as the definition of $\mathcal{C^1}(\underline{C})$.} In particular, one has \[ S=\phi_i c^i -\frac{1}{2}f^k_{ij}c^ic^jc^\dag_k + R, \] where $R$ is determined recursively and contains terms of higher degree in the ``antighosts'' $c^\dag$.

\subsection{The BF$^k$V formalism}\label{subsec: BF^kV}

\begin{definition}[Supermanifold]\label{sup_man}
Let $M$ be a smooth $p$-manifold. A supermanifold of dimension $p\mid q$
with body $M$ is a locally ringed space $\mathcal F = (M,\mathcal O_{\mathcal F})$
such that:
\begin{enumerate}
  \item $\mathcal O_{\mathcal F}$ is a sheaf of supercommutative $\mathbb{R}$-algebras on $M$;
  \item for every point $x \in M$ there exists an open neighborhood
  $U \subseteq M$ with $x \in U$ and an isomorphism of locally ringed spaces
  \begin{align*}
    (U,\mathcal O_{\mathcal F}|_U)
    \cong
    (U,\mathcal C^\infty_U \otimes_\mathbb{R} \wedge^\bullet \mathbb{R}^q),
  \end{align*}
  where $\mathcal C^\infty_U$ denotes the sheaf of smooth real--valued functions on $U$.
\end{enumerate}
\end{definition}

\begin{definition}[dgs supermanifold]\label{dgs_man}
A differentially graded symplectic supermanifold (dgs supermanifold) of degree $k$
is a triple $(\mathcal F,\mathcal Q,\varpi)$ where:
\begin{enumerate}
\item $\mathcal F$ is a $\mathbb Z$--graded supermanifold;\footnote{We will usually assume that the $\mathbb Z$--grading is compatible with the $\mathbb Z_2$--parity, i.e. the parity of a homogeneous element equals its degree modulo $2$. This grading is also called the “ghost number”, when dealing with superfields in the BF$^k$V formalism.}
\item $\mathcal Q\in\mathfrak{X}(\mathcal F)_1$ is a vector field of degree $+1$ satisfying
\begin{align*}
[\mathcal Q,\mathcal Q]=0,
\end{align*}
that is, $\mathcal Q$ is a cohomological vector field;
\item $\varpi\in\Omega^2(\mathcal F)_k$ is symplectic form of degree $k$ on $\mathcal F$;
\item $\mathcal Q$ preserves $\varpi$, i.e.
\begin{align*}
\mathcal L_{\mathcal Q}\varpi = 0.
\end{align*}
\end{enumerate}
\end{definition}

\begin{definition}[BF$^k$V manifold]\label{bfkv_man}
A BF$^k$V manifold is a $(k-1)$--Hamiltonian manifold, that is a quadruple
$(\mathcal F,\mathcal Q,\varpi,\mathcal S)$ where:
\begin{enumerate}
  \item $(\mathcal F,\mathcal Q,\varpi)$ is a dgs supermanifold of degree $k-1$;
  \item $\mathcal S\in C^\infty(\mathcal F)_k$ is a function of degree $k$ whose
  Hamiltonian vector field with respect to $\varpi$ coincides with $\mathcal Q$, i.e.
  \begin{align*}
    \iota_{\mathcal Q}\varpi = \delta\mathcal S,
  \end{align*}
  or equivalently
  \begin{align*}
    \mathcal Q(f) = \{\mathcal S,f\}
  \end{align*}
  for all $f\in C^\infty(\mathcal F)$, where $\{\cdot,\cdot\}$ denotes the
  Poisson bracket of degree $1-k$ induced by $\varpi$.
\end{enumerate}
In this situation, $\mathcal S$ automatically satisfies the classical master equation
\begin{align*}
  \{\mathcal S,\mathcal S\} = 0,
\end{align*}
which is equivalent to $[\mathcal Q,\mathcal Q]=0$.
\end{definition}

\begin{remark}
Given \cref{bfkv_man}, we can recall the two most common cases: 
$-1$--Hamiltonian manifolds, called BV manifolds, and $0$--Hamiltonian manifolds, 
also known as BFV manifolds.
\end{remark}
In applications to field theories on manifolds with
or corners, it is often too restrictive to require the condition
\begin{align*}
  \iota_{\mathcal Q}\varpi = \delta\mathcal S
\end{align*}
as in \cref{bfkv_man}. After integrating by parts, the action functional
$\mathcal S$ may fail to admit a Hamiltonian vector field, as it is in presence of a stratified structure. Moreover, in the most constructions, the form $\varpi$ is no often found to be degenerate. It is therefore convenient to relax these conditions, leading to the following definition.

\begin{definition}[Relaxed BF$^k$V manifold]\label{rel_bfkv_man}
A relaxed BF$^k$V manifold is a quadruple $(\mathcal F,\mathcal Q,\varpi,\mathcal S)$
such that:
\begin{enumerate}\setlength{\itemsep}{2pt}
  \item $\mathcal F$ is a $\mathbb Z$-graded supermanifold;
  \item $\varpi\in\Omega^2(\mathcal F)_{k-1}$ is a closed $2$-form of degree $k-1$
  on $\mathcal F$ (not necessarily non--degenerate);
  \item $\mathcal S\in C^\infty(\mathcal F)_k$ is a function of degree $k$;
  \item $\mathcal Q\in\mathfrak X(\mathcal F)_1$ is a vector field of degree $1$
  satisfying
  \begin{align*}
    [\mathcal Q,\mathcal Q]= 2 \mathcal{Q}^2= 0.
  \end{align*}
\end{enumerate}
In particular, we do not require that $\varpi$ be symplectic nor that
$\mathcal Q$ be the Hamiltonian vector field of $\mathcal S$; when these
additional conditions hold, one recovers a BF$^k$V manifold in the sense of
\cref{bfkv_man}.
\end{definition}

\begin{remark}
We introduce the
“check” $1$-form of degree $k$
\begin{align*}
  \check\alpha \coloneqq  \iota_{\mathcal Q}\varpi - \delta\mathcal S,
\end{align*}
which measures the failure of the Hamiltonian condition
$\iota_{\mathcal Q}\varpi = \delta\mathcal S$ for a BF$^k$V manifold. We then
denote by $\check\varpi$ its differential,
\begin{align*}
  \check\varpi \coloneqq  \delta\check\alpha = - L_{\mathcal Q}\varpi.
\end{align*} 
Notice that, since $\mathcal Q$ is odd, Cartan's formula takes the form
\begin{align*}
  L_{\mathcal Q} = \iota_{\mathcal Q}\delta - \delta \iota_{\mathcal Q}.
\end{align*}
The equation $[\mathcal Q,\mathcal Q] = 0$ then immediately implies
\begin{align*}
  L_{\mathcal Q}\check\varpi = 0.
\end{align*}
The $2$--form
\begin{align*}
  \check\varpi = \delta\check\alpha
\end{align*}
will play the role of the pre--symplectic
(in general degenerate) form induced by higher--codimension corner terms.
\end{remark}
\begin{remark}
    Notice that one has that $\mathcal{Q}$ is now Hamiltonian with respect to $\check{\varpi}$, and that the Hamiltonian $\mathcal{S}$ can be obtained by computing $\iota_{\mathcal{Q}}\iota_{\mathcal{Q}}\varpi$. We refer to \cite{CMW} for more details on this construction. 
\end{remark}
The idea of this construction is to relate the (relaxed) BF$^k$V structures associated with different corner components by transporting the corresponding data from one stratum to another. In the presence of a stratified structure 
as in \cref{def:strat_man}, our construction is compatible with the codimension 
of the strata: the integer $k$ in BF$^k$V corresponds to the codimension of the 
stratum. This reflects the idea of having a relaxed BF$^k$V manifold over each corner component of codimension $k$. In this context, BV is the theory built on the bulk (codimension $0$) and BFV is the one living on the boundary (codimension $1$) presented in \cref{grad_symp_poiss}.

\subsection{Generalized geometry}\label{subsec: gen geom}

Firstly introduced in \cite{Hitchin2003} and later refined in \cite{Gualtieri2004} in the complex case, generalized geometry stems from the study of the differential geometry of $TM\oplus T^*M$, for a smooth manifold $M$. Replacing $T$ with $T\oplus T^*$ allows to very neatly unify complex, symplectic and Poisson geometry within a single structure, paving the way for more general cases. In this paper, we will follow a more algebraic perspective, rather than focusing on the geometric structures of the bundle $TM\oplus T^*M$ (e.g. generalized complex, Kähler and metric structures). Indeed, $TM\oplus T^*M$ can be endowed with the algebraic structure of a Courant algebroid (the standard Courant algebroid), whose definition and properties are displayed in the following section.

\begin{definition}[Courant algebroid\footnote{See \cite{meinrenken2025}, \cite{Roytenberg2002} and references therein for more details.}]\label{def_cour_alg}
A \emph{Courant algebroid} over a smooth manifold $M$ is a vector bundle $E \to M$ equipped with a non-degenerate symmetric bilinear form $\langle \cdot, \cdot \rangle$, a bilinear bracket $[\cdot,\cdot] \colon  \Gamma(E) \times \Gamma(E) \to \Gamma(E)$, and an anchor bundle map $\rho \colon E \to TM$, satisfying the following conditions for all $e_1,e_2,e_3 \in \Gamma(E)$ and $f \in C^\infty(M)$:
\begin{enumerate}[label=(\roman*)]
  \item $[e_1, [e_2, e_3]] = [[e_1, e_2], e_3] + [e_2, [e_1, e_3]]$,
  \item $\rho([e_1,e_2]) = \{\rho(e_1),\rho(e_2)\}$,
  \item $[e_1, f e_2] = f [e_1,e_2] + (\rho(e_1)f) e_2$,
  \item $\langle e_1, [e_2, e_3] +[e_3, e_2]\rangle=\rho(e_1)\langle e_2, e_3 \rangle$,
  \item $\rho(e_1)\langle e_2, e_3 \rangle = \langle [e_1, e_2], e_3 \rangle + \langle e_2, [e_1, e_3] \rangle$,
\end{enumerate}
where $\{\cdot,\cdot\}$ is the Lie bracket of vector fields.
\end{definition}
\begin{remark}\label{rem:cour_dorf}
    From the properties above, it follows that $[e_1,e_2]+[e_2,e_1]=\mathcal{D}\langle e_1, e_2 \rangle$, where $\mathcal{D}\colon C^\infty(M) \to \Gamma(E)$ is a differential operator defined as $\mathcal D= \rho^*\mathrm{d}$, with $\rho^*\colon T^*M\to E$ the co-anchor map\footnote{a.k.a. transposed anchor map.} of $\rho$ defined by $\langle \rho^*(\alpha), e \rangle=\alpha(\rho(e))$, with $\alpha\in\Gamma(T^*M)$ and $e\in\Gamma(E)$, and $\mathrm{d}$ the de Rham differential operator. In other words, the bracket $[\cdot,\cdot]$ is not antisymmetric in general, and its lack of antisymmetry is measured by $\mathcal{D}$ and the pairing. Notice that it would be possible to modify the axioms of \cref{def_cour_alg} in order to account for an antisymmetric bracket. In this case, it would no longer satisfy the Jacobi-type identity of property (i), and the differential operator $\mathcal D$ would then measure its failure. The non--antisymmetric bracket is called the Dorfman bracket, whereas its antisymmetric counterpart is known as the Courant bracket.
\end{remark}
\begin{remark}
Notice the difference with a Lie algebroid. A Lie algebroid is a vector bundle $A \to M$ equipped with a Lie bracket $[\cdot,\cdot]$ on its space of sections and an anchor map $\rho\colon  A \to TM$ satisfying the Leibniz rule
\begin{equation}
[a, f b] = f [a,b] + (\rho(a)f)\, b,
\end{equation}
for all $a,b \in \Gamma(A)$ and $f \in C^\infty(M)$. In this setting, the bracket is skew-symmetric and satisfies both the Leibniz and the Jacobi identities.
\end{remark}

\begin{definition}
    Let $M$ be a smooth manifold. The standard Courant algebroid over $M$ is the vector bundle $TM\oplus T^*M$ endowed with the following structures
    \begin{align*}
        \langle X+\alpha, Y+\beta \rangle &= \iota_X \beta + \iota_Y\alpha,\\
        \rho(X+\alpha)&=X,\\
        [X+\alpha,Y+\beta]_\mathrm{D}&=[X,Y] + \mathrm{L}_X\beta - \iota_Y \de\alpha,
    \end{align*}
defined for all $X+\alpha, Y+\beta\in  \Gamma(TM\oplus T^*M)$, where $[X,Y]$ is the Lie bracket of vector fields and $[-,-]_\mathrm{D}$ denotes the Dorfman bracket of \cref{def_cour_alg}.
\end{definition}

\begin{remark}
    In this case, the differential operator mentioned in \cref{rem:cour_dorf} is just the de Rham differential, indeed
    \begin{align}\label{eq: failure of skew symmetry of Dorf}
     [X+\alpha,Y+\beta]_\mathrm{D} - [Y+\beta,X+\alpha]_\mathrm{D} = d <X+\alpha,Y+\beta>.
    \end{align}
    In accordance with \cref{rem:cour_dorf}, notice that one can also define the bundle by means of the Courant bracket, which would read
    \begin{align*}
        [X+\alpha,Y+\beta]_\mathrm{C}\coloneqq &[X+\alpha,Y+\beta]_\mathrm{D} - \de\langle\alpha,Y\rangle\\
        = &[X,Y] + \mathrm{L}_X\beta - \iota_Y \de\alpha - \de\iota_Y\alpha\\
        = & [X,Y] + \mathrm{L}_X\beta - \mathrm{L}_Y\alpha.
    \end{align*}
    This bracket is antisymmetric but does not satisfy the Jacobi identity. 
\end{remark}

From now on, we will drop the subscripts and automatically identify $[-,-]$ with the Dorfman bracket. 
It is however worth mentioning that the Dorfman bracket is not the unique bracket satisfying the properties of \cref{def_cour_alg}. Indeed, for any closed 3-form $H\in\Omega_{\mathrm{cl}}^3(M) $, we can define the twisted Dorfman bracket as
    \begin{equation*}
        [X+\alpha,Y+\beta]_H\coloneqq [X+\alpha,Y+\beta] + \iota_X\iota_Y H.
    \end{equation*}

Another interesting property of the standard Courant algebroid is that it fits in the following exact sequence
    \begin{equation*}
        0\longrightarrow T^*M \overset{\rho^*}{\longrightarrow}TM\oplus T^*M\overset{\rho}{\longrightarrow}TM\longrightarrow 0.
    \end{equation*}
\begin{definition}[$B$-transformation]
    Let $B$ be a smooth 2-form on $M$, viewed as a linear map $TM\to T^*M$ via $X\mapsto \iota_X B$. A $B$-transformation is an invertible bundle map of the kind
        \begin{equation*}
            e^B\coloneqq \begin{pmatrix}
                1 & 0 \\
                B & 1
            \end{pmatrix}\ \colon \ X+\alpha \longmapsto X+\alpha +\iota_X B,
        \end{equation*}
    where each entry of the matrix is a map from $T^{(*)}\to T^{(*)} $ depending on the position. 
\end{definition}

It is a known result (\cite{Gualtieri2004}) that automorphisms of the standard Courant algebroid are realized as $B$-transformations,\footnote{To be precise, by compositions of diffeomorphisms and $B$-transformations.} for closed $B\in\Omega^2_{\mathrm{cl}}$. Indeed one observes that
    \begin{align*}
        [e^B(X+\alpha),e^B(Y+\beta)]=[X+\alpha,Y+\beta] + \iota_X \iota_Y \de B.
    \end{align*}

\subsubsection{Dirac Structures}
We have already observed in \cref{rem:cour_dorf} and in \cref{eq: failure of skew symmetry of Dorf} that the failure of the Courant algebroid to be a Lie algebroid is ultimately related to exact terms involving $<-,->$. Restoring such property amounts to finding a subbundle $D\subset T\oplus T^*$ which is closed under the Dorfman bracket (i.e. involutive) and isotropic (i.e. $\langle\Gamma(D),\Gamma(D)\rangle=0$).

\begin{definition}[Dirac structure]
A \emph{Dirac structure} on a manifold $M$ is a subbundle $D \subseteq (T \oplus T^*)M$ whose space of sections $\Gamma(D)$ is closed under the bracket $[\cdot,\cdot]$ and that is \emph{maximally isotropic} with respect to the bilinear form $\langle \cdot, \cdot \rangle$, i.e. $\langle X+\alpha, Y+\beta \rangle=0$ for all $X+\alpha,Y+\beta\in \Gamma(D)$ and $D$ has maximal rank among all isotropic subbundles of $(T \oplus T^*)M$.
\end{definition}

The two simplest examples of Dirac structures are just $TM$ and $T^*M$. Indeed, the Dorfmann bracket  reduces to the Lie bracket of vector fields on the former and vanishes on the latter. Being half dimensional with respect to $T\oplus T^*$, they are maximal, and they are trivially isotropic.  

By ``twisting'' these two extremal cases, we obtain the following examples. 

\begin{example}[Dirac structure induced by a pre-symplectic form]
    Let $\varpi\in\Omega^2(M)$ be a pre-symplectic form, then the subbundle
        \begin{equation*}
            D_\varpi\coloneq e^\varpi(TM)=\{ X+\iota_X \varpi \ | \ X\in\Gamma(TM)\}
        \end{equation*}
    is a Dirac structure. Indeed the maximal isotropic subspace $e^\varpi(TM)$ is involutive if{f} $\de\varpi=0$, as we have seen for the case of $B$-transformations. 
\end{example}

\begin{example}[Dirac structure induced by a Poisson bivector field]\label{ex: Dirac structure induced by a Poisson bivector}
    Let $\pi\in\Gamma(\wedge^2 TM)$ be a Poisson bivector field. Then the subbundle 
        \begin{equation*}
            D_\pi\coloneq e^\pi(T^*M)=\{ \iota_\alpha\pi+ \alpha\ | \ \alpha\in\Gamma(T^*M)\}
        \end{equation*}
    is a Dirac structure. Conversely, if $D$ is Dirac and $D=e^{\pi}(T^*M)$ for some bivector $\pi$, then $\pi$ is Poisson. 
    
    To prove such statement, we start by assuming $\pi$ is Poisson. It is enough to consider the case where $\alpha=\de f$. For any $f,g\in\mathcal{C^1}(M)$ we have 
        \begin{equation*}
            \{f,g\}=\pi(\de f,\de g), \quad \text{and} \quad \iota_{\de f}\pi= X_f,
        \end{equation*}
    where $X_f$ is regarded as the Hamiltonian vector field of $f$. Involutivity is observed by noticing
        \begin{align*}
            [X_f + \de f, X_g + \de g]&= [X_f,X_g] + \de \iota_{X_f}\de g \\
            &= X_{\{f,g\}} + \de\{f,g\}.
        \end{align*}
     
    Conversely, assuming $D=e^{\pi}(T^*M)$ is Dirac, we see that the Jacobi identity is satisfied if{f} the Jacobiator Jac$(A,B,C)\coloneq [A,[B,C]] + [B,[C,A]] + [C,[A,B]]$ vanishes. In our case we see
        \begin{align*}
            &\mathrm{Jac}(X_f+\de f,X_g+\de g,X_h+\de h)=\\
            &= \iota_{\de\left(\{f,\{g,h\}\} + \{g,\{h,f\}\} +\{h,\{f,g\}\}\right)}\pi + \de\left(\{f,\{g,h\}\} + \{g,\{h,f\}\} +\{h,\{f,g\}\}\right) \\
            &=0 \quad \Leftrightarrow \quad \{f,\{g,h\}\} + \{g,\{h,f\}\} +\{h,\{f,g\}\}=0,
        \end{align*}
    which is equivalent to asking that $[\pi,\pi]=0$.\footnote{In principle, the above equation would only imply that $[\pi,\pi]$ is constant, but one can show that the only solution is provided by $[\pi,\pi]=0.$}
\end{example}

\section{Dirac corner structures in classical field theories}\label{sec: corner ft ft}
\subsection{Classical Lagrangian field theories on manifolds with corners}\label{sec: corner ft}
In the previous section we introduced a plethora of geometrical objects, whose application to field theory will be eviscerated in the following one. In particular, one can associate symplectic data to the boundary of spacetime, while on codimension 2 strata the relevant objects fit in the language of generalized geometry, where a Dirac structure naturally emerges as the graph of a Poisson structure.
As we shall see, reduction might be needed in both cases. 

Furthermore, we have that bulk, boundary and corner data can also in principle be nicely incorporated within the BF$^k$V formalism, as we will see in some simple applications. 

We start by considering  a spacetime manifold $M$ with corners. A field theory on $M$ is specified by a space of fields $F_M$, usually defined to be the space of sections $F_M=\Gamma(M,F)$ of a  vector bundle $F$, and by an action functional $S_M\in\mathcal{C^1}(F_M)$. In general, one can construct the variational bicomplex \cite{Anderson, Delgado} of local forms on $M\times F_M$ by pulling back the bicomplex of differential forms $\Omega^{\bullet,\bullet}(J^\infty F)$ along the evaluation map
    \[
    \mathrm{ev}\ \circ \ (id\times j^\infty) \ \colon \ M\times F_M \overset{}{\longrightarrow} M\times \Gamma(M,J^\infty F) \longrightarrow J^\infty F,
    \]
obtaining the following bicomplex
\[\begin{tikzcd}
	{\Omega^{\bullet,\bullet}_\mathrm{loc}(M\times F_M)} && {\Omega^{\bullet+1,\bullet}_\mathrm{loc}(M\times F_M)} \\
	\\
	{\Omega^{\bullet,\bullet+1}_\mathrm{loc}(M\times F_M)} && {\Omega^{\bullet+1,\bullet+1}_\mathrm{loc}(M\times F_M)}
	\arrow["\de", from=1-1, to=1-3]
	\arrow["\delta"', from=1-1, to=3-1]
	\arrow["\lrcorner"{anchor=center, pos=0.125}, draw=none, from=1-1, to=3-3]
	\arrow["\delta", from=1-3, to=3-3]
	\arrow["\de", from=3-1, to=3-3]
\end{tikzcd}\]
with horizontal differential d given by the de-Rham differential on $\Omega^\bullet(M)$ and the vertical differential $\delta$ given by the variational differential on $\Omega^\bullet(F_M)$. Notice that, being $D=\de+\delta$ the total differential, one obtains $\de\delta+\delta \de=0$ from $D^2=0$.

\begin{remark}
    For a $D$-dimensional field theory, the Lagrangian is a local form $L\in\Omega_\mathrm{loc}^{D,0}(M\times F_M) $ such that, when integrated along $M$, it produces a 0-form on $F_M$ which is indeed the action functional $S_M\coloneqq\int_M L$. 
\end{remark}
\subsubsection{The boundary structure}
If we assume that $M $ has a boundary $\Sigma\coloneqq \partial M$, then the varation of $S_M$ produces a boundary term after integration by parts, due to Stokes' theorem. In particular,
we have
    \[
    \delta S_M= el_M - \alpha_M,
    \]
where $el_M\in\Omega^{1}( F_M)$ is the Euler-Lagrange 1-form, containing the equations of motion of the theory, and $\alpha_M=\int_{\partial M} a_M$, for some $a_M\in\Omega_\mathrm{loc}^{D-1,1}(M\times F_M)$. We can then define the following two-form on $F_M$
    \[\varpi_M\coloneqq \delta \alpha_M,\]
which is automatically closed, hence pre-symplectic. We immediately notice that all the vector fields on $F_M$ that preserve the values of the fields (and their jets) at the boundary are in the kernel of $\varpi_M$, which is defined as the integral of  $\delta a_M\in \Omega_\mathrm{loc}^{D-1,2}(M\times F_M)$ on the codimension 1 surface $\partial M=\Sigma$. Assuming that such vector fields form a regular involutive distribution, we can quotient with respect to their flows and obtain, as the leaf space, the space of pre-boundary fields $\tilde{F}_\Sigma $, defining $\tilde{\pi}_\Sigma\colon F_M\to \tilde{F}_\Sigma $ to be the quotient map, and $\tilde{\alpha}_\Sigma $ and $\tilde{\varpi}_\Sigma$ such that
    \[
    \alpha_M=\tilde{\pi}^*_\Sigma(\tilde{\alpha}_\Sigma) \qquad \mathrm{and}\qquad \varpi_M=\tilde{\pi}^*_\Sigma(\tilde{\varpi}_\Sigma).
    \]
In general, it is possible that $\tilde{\varpi}_\Sigma$ is still degenerate, which is the case for gravity, and in which case a further reduction is needed. Also in this case one has to assume that $\Ker{\tilde{\varpi}_\Sigma}$ defines a regular involutive distribution, such that the leaf space with respect to the characteristic foliation of $\tilde{\varpi}_\Sigma$ is smooth. We denote such space by $F_\Sigma$ and we call it the 'geometric phase space'. Again, we can define the surjective submersion $\pi_\Sigma\colon F_M\to F_\Sigma$ such that
    \[
    \delta S_M= el_M - \pi^*_\Sigma\alpha_\Sigma \qquad \mathrm{and}\qquad \varpi_M=\pi_\Sigma^*(\varpi_\Sigma).
    \]  

It follows that $(F_\Sigma, \varpi_\Sigma)$ defines a symplectic manifold. 
However, this space does not yet represent the true {physical phase space} of the theory, 
as the latter corresponds to the set of {Cauchy data}. 
\begin{remark}\label{rem: tang ev eq}
To clarify this point, recall that the Euler--Lagrange equations naturally separate into two types: 
the {evolution equations}, which involve derivatives normal to the boundary, 
and the {constraint equations}, which depend only on derivatives tangent to the boundary. 
The constraints must be imposed on the space of pre--boundary fields, 
which typically enlarges the kernel of the pre--symplectic form. 
Performing the corresponding symplectic reduction yields the \emph{reduced phase space}.

\end{remark}
More conceptually, consider the cylindrical manifold 
$M_\epsilon \coloneqq  \Sigma \times [0, \epsilon]$ for some small $\epsilon > 0$. 
Its boundary decomposes as 
\[
\partial M = (\Sigma \times \{0\}) \sqcup (\Sigma \times \{\epsilon\}).
\]
The space $\tilde{L}_{M_\epsilon}$ can then be viewed as a relation%
\footnote{That is, a subset of $A \times B$ for two sets $A$ and $B$.} 
between the pre--boundary field spaces 
$\tilde{F}_\Sigma \simeq \tilde{F}_{\Sigma \times \{0\}}$ and 
$\tilde{F}_{\Sigma \times \{\epsilon\}}$.

The space of {Cauchy data}, denoted $\tilde{C}_\Sigma$, 
consists of those pre--boundary configurations on $\Sigma$ that can be extended 
to full solutions of the Euler--Lagrange equations within a small cylindrical neighborhood of $\Sigma$, i.e.
\[
\tilde{C}_\Sigma \coloneqq 
\left\{
c \in \tilde{F}_\Sigma \simeq \tilde{F}_{\Sigma \times \{0\}} 
\;\middle|\;
\exists\, u \in \tilde{F}_{\Sigma \times \{\epsilon\}} 
\text{ such that } (c,u) \in \tilde{L}_{M_\epsilon}
\right\}.
\]

The induced two--form $\tilde{\varpi}^C_\Sigma$ is in general degenerate on $\tilde{C}_\Sigma$, 
and taking its quotient yields the {reduced phase space} $\underline{C}_\Sigma$. 
Although this space is often non-smooth or singular, in field theory one is primarily interested 
in the algebra of functions defined on it, namely the algebra of {physical observables}. 
Under suitable assumptions, this algebra can be obtained cohomologically 
through the BFV formalism.

\begin{remark}
    In principle one could work with the space of boundary fields $F_\Sigma$ instead of $\tilde F_\Sigma$ and define $C_\Sigma\subset F_\Sigma$, however, one first needs to make sure that the constraints defining $C_\Sigma$ are invariant under the action of the vector fields inside $\Ker{\tilde{\varpi}_\Sigma}$. For Palatini--Cartan gravity in $D=4$ we will see that this is not true in general, however it is possible to use the non-invariant constraint to cleverly fix a representative of the fields in the quotient space $F_\Sigma$, such that the residual constraints on it are indeed invariant under the action of $\Ker{\tilde{\varpi}_\Sigma}$. For more details on the general construction, we refer to \cite{C23}.
\end{remark}

\subsubsection{The corner structure}\label{sec: classical corner}

If one assumes that $M$ has a corner $\Gamma$, the boundary structure of the
field theory is enriched with extra data. We consider the case in which $C_\Sigma$
is coisotropic and described by a family of constraints
$\{\phi_i\}_{i\in I}\subset C^\infty(F_\Sigma)$, possibly dependent, which are first
class:
\begin{align}
        \{\phi_i,\phi_j\}
        &=
        f_{ij}^{k}\phi_k,
        &
        f_{ij}^{k}
        &\in
        C^\infty(F_\Sigma).
        \label{eq:corner-first-class-abstract}
\end{align}
We write $\approx$ for equality modulo the ideal generated by the constraints.

Assume first that $\partial\Sigma=\emptyset$. If $\mathbb X_i$ denotes the Hamiltonian
vector field of $\phi_i$, then
\begin{align}
        \iota_{\mathbb X_i}\varpi_\Sigma
        &=
        \delta\phi_i.
        \label{eq:ham-vf-no-corner}
\end{align}
Hence, on shell,
\begin{align}
        [\mathbb X_i,\mathbb X_j]
        &\approx
        f_{ij}^{k}\mathbb X_k .
        \label{eq:ham-vf-on-shell-closure}
\end{align}
Equivalently, setting
\begin{align*}
        \phi_{ij}
        &:=
        \{\phi_i,\phi_j\},
        &
        \mathbb X_{ij}
        &:=
        [\mathbb X_i,\mathbb X_j],
\end{align*}
one has
\begin{align}
        \iota_{\mathbb X_{ij}}\varpi_\Sigma
        &=
        \delta\phi_{ij}.
        \label{eq: ham vf invol}
\end{align}

We now allow the boundary to have a boundary,
\begin{align*}
        \Gamma
        &=
        \partial\Sigma
        \neq
        \emptyset .
\end{align*}
Then \cref{eq:ham-vf-no-corner} is replaced by
\begin{align}
        \delta\phi_i
        &=
        \iota_{\mathbb X_i}\varpi_\Sigma
        -
        \mathcal X_i^\Sigma,
        \label{eq:corner-ham-equation}
\end{align}
where $\mathcal X_i^\Sigma\in\Omega^1(F_\Sigma)$ is a residual one-form supported at
the corner.\\
The splitting in \cref{eq:corner-ham-equation} is not canonical. Indeed, for a corner
functional $B_i\in C^\infty(\tilde F_\Gamma)$, one may set
\begin{align}
        \widehat\phi_i
        &:=
        \phi_i+\tilde\pi_\Gamma^*B_i,
        &
        \widehat{\mathcal X}_i
        &:=
        \tilde{\mathcal X}_i-\delta B_i.
        \label{eq:corner-exact-ambiguity}
\end{align}
Then
\begin{align*}
        \delta\widehat\phi_i
        &=
        \iota_{\mathbb X_i}\varpi_\Sigma
        -
        \tilde\pi_\Gamma^*\widehat{\mathcal X}_i .
\end{align*}
Thus, adding an exact corner term to the constraint changes the representative of the
residual one-form by an exact one-form.

In what follows we impose the following simplifying assumption.

\begin{assumption}\label{ass:undifferentiated-multipliers}
When the constraints are written by pairing equations with Lagrange multipliers, these
multipliers are not differentiated in the functionals $\phi_i$.
\end{assumption}

At this point, one can repeat the previous reduction argument. Any vector field on
$F_\Sigma$ which preserves the values of the fields, and of their jets, on $\Gamma$
leaves the residual forms $\mathcal X_i^\Sigma$ invariant. Quotienting by the
corresponding characteristic foliation defines the space of pre-corner fields
$\tilde F_\Gamma$, together with a quotient map
\begin{align*}
        \tilde\pi_\Gamma
        &:F_\Sigma\longrightarrow \tilde F_\Gamma
\end{align*}
such that
\begin{align}
        \delta\phi_i
        &=
        \iota_{\mathbb X_i}\varpi_\Sigma
        -
        \tilde\pi_\Gamma^*(\tilde{\mathcal X}_i),
        &
        \tilde{\mathcal X}_i
        &\in
        \Omega^1(\tilde F_\Gamma).
        \label{eq:corner-residual-reduced}
\end{align}
The Hamiltonian vector fields $\mathbb X_i$ are projectable along $\tilde\pi_\Gamma$;
we denote their push-forwards by
\begin{align*}
        \tilde{\mathbb X}_i
        &\in
        \mathfrak X(\tilde F_\Gamma).
\end{align*}
Therefore each boundary constraint determines a generalized section
\begin{align*}
        \tilde{\mathbb X}_i+\tilde{\mathcal X}_i
        &\in
        \Gamma((T\oplus T^*)\tilde F_\Gamma).
\end{align*}

As in \cref{subsec: gen geom}, $(T\oplus T^*)\tilde F_\Gamma$ is the standard Courant
algebroid on $\tilde F_\Gamma$, with pairing, Dorfman bracket and anchor
\begin{align*}
        \langle \mathbb X+\mathcal X,\mathbb Y+\mathcal Y\rangle
        &=
        \iota_{\mathbb X}\mathcal Y
        +
        \iota_{\mathbb Y}\mathcal X,
        \\
        [\mathbb X+\mathcal X,\mathbb Y+\mathcal Y]
        &=
        [\mathbb X,\mathbb Y]
        +
        \mathrm L_{\mathbb X}\mathcal Y
        -
        \iota_{\mathbb Y}\delta\mathcal X,
        \\
        \rho(\mathbb X+\mathcal X)
        &=
        \mathbb X .
\end{align*}

\begin{remark}
In the next sections, in order to readily obtain a BFV description of the boundary
theory, we will define the constraints $\phi_i$ by means of odd Lagrange
multipliers.\footnote{Identified with the ghosts of the theory.} Thus both the
Hamiltonian vector fields and the residual one-forms are odd. With this convention, the
pairing and the Dorfman bracket are read with the corresponding graded signs:
\begin{align*}
        \langle \mathbb X+\mathcal X,\mathbb Y+\mathcal Y\rangle
        &=
        \iota_{\mathbb X}\mathcal Y
        +
        (-1)^{xy}
        \iota_{\mathbb Y}\mathcal X,
        \\
        [\mathbb X+\mathcal X,\mathbb Y+\mathcal Y]
        &=
        [\mathbb X,\mathbb Y]
        +
        \mathrm L_{\mathbb X}\mathcal Y
        -
        (-1)^{xy}
        \iota_{\mathbb Y}\delta\mathcal X,
\end{align*}
where $x$ and $y$ are the parities of $\mathcal X$ and $\mathcal Y$, respectively,
and
\begin{align*}
        \mathrm L_{\mathbb X}
        &:=
        [\iota_{\mathbb X},\delta]
        =
        \iota_{\mathbb X}\delta
        -
        (-1)^x\delta\iota_{\mathbb X}.
\end{align*}
\end{remark}

Let
\begin{align*}
        \tilde D
        &:=
        \operatorname{span}_{C^\infty(\tilde F_\Gamma)}
        \{\tilde{\mathbb X}_i+\tilde{\mathcal X}_i\}_{i\in I}
        \subset
        \Gamma((T\oplus T^*)\tilde F_\Gamma).
\end{align*}
The following two criteria express isotropy and involutivity in terms of the original
boundary constraints.

\begin{lemma}\label{lem:corner-isotropy-criterion}
The distribution $\tilde D$ is isotropic if and only if
\begin{align}
        \mathbb X_i(\phi_j)+\mathbb X_j(\phi_i)
        &=
        0
        \qquad
        \forall i,j.
        \label{eq:corner-isotropy-criterion}
\end{align}
\end{lemma}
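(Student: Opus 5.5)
Since the pairing is $C^\infty(\tilde F_\Gamma)$-bilinear, isotropy of $\tilde D$ reduces to checking the generators: $\tilde D$ is isotropic if and only if
\begin{align*}
\langle \tilde{\mathbb X}_i+\tilde{\mathcal X}_i,\tilde{\mathbb X}_j+\tilde{\mathcal X}_j\rangle
=\iota_{\tilde{\mathbb X}_i}\tilde{\mathcal X}_j+\iota_{\tilde{\mathbb X}_j}\tilde{\mathcal X}_i=0
\qquad\forall i,j .
\end{align*}
The plan is to compute each term by pulling back along $\tilde\pi_\Gamma$ and using \cref{eq:corner-residual-reduced}. Because $\mathbb X_i$ is $\tilde\pi_\Gamma$-related to $\tilde{\mathbb X}_i$, we have $\tilde\pi_\Gamma^*(\iota_{\tilde{\mathbb X}_i}\tilde{\mathcal X}_j)=\iota_{\mathbb X_i}\tilde\pi_\Gamma^*\tilde{\mathcal X}_j$. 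Since $\tilde\pi_\Gamma$ is a surjective submersion, $\tilde\pi_\Gamma^*$ is injective on functions, so the vanishing of the pairing can be tested upstairs on $F_\Sigma$.

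Upstairs, contract \cref{eq:corner-residual-reduced} for the index $j$ with $\mathbb X_i$:
\begin{align*}
\iota_{\mathbb X_i}\tilde\pi_\Gamma^*\tilde{\mathcal X}_j=\iota_{\mathbb X_i}\iota_{\mathbb X_j}\varpi_\Sigma-\iota_{\mathbb X_i}\delta\phi_j=\varpi_\Sigma(\mathbb X_j,\mathbb X_i)-\mathbb X_i(\phi_j).
\end{align*}
Symmetrizing in $i,j$, the $\varpi_\Sigma$ terms cancel by antisymmetry of the two-form, giving
\begin{align*}
\tilde\pi_\Gamma^*\langle \tilde{\mathbb X}_i+\tilde{\mathcal X}_i,\tilde{\mathbb X}_j+\tilde{\mathcal X}_j\rangle=-\bigl(\mathbb X_i(\phi_j)+\mathbb X_j(\phi_i)\bigr).
\end{align*}
Both directions of the equivalence then follow at once from injectivity of the pullback.

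The main point of care is signs in the graded setting of the preceding remark, where the multipliers, $\mathbb X_i$ and $\tilde{\mathcal X}_i$ are odd. I would redo the contraction with the graded conventions. Then $\iota_{\mathbb X_i}\iota_{\mathbb X_j}\varpi_\Sigma$ acquires the sign $(-1)^{x_ix_j}$ under exchange, so it is graded-symmetric rather than antisymmetric. The graded pairing carries the matching factor $(-1)^{x_ix_j}$, so the $\varpi_\Sigma$ contributions still cancel. The residual term becomes $\mathbb X_i(\phi_j)+(-1)^{x_ix_j}\mathbb X_j(\phi_i)$, which is the stated criterion read with the same graded convention.

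A second, minor point is well-definedness: $\iota_{\mathbb X_i}\tilde\pi_\Gamma^*\tilde{\mathcal X}_j$ must descend to a function on $\tilde F_\Gamma$. This holds because it equals the pullback of $\iota_{\tilde{\mathbb X}_i}\tilde{\mathcal X}_j$, which uses the projectability of $\mathbb X_i$ stated above. Consequently $\mathbb X_i(\phi_j)+\mathbb X_j(\phi_i)$ is automatically basic, i.e.\ supported at the corner.
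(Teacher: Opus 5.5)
Your proof is correct and is essentially the paper's argument. You contract \cref{eq:corner-residual-reduced} with the other Hamiltonian vector field and add the two identities; the $\varpi_\Sigma$ terms cancel by (graded) skew-symmetry, so the pairing equals $-\bigl(\mathbb X_i(\phi_j)+\mathbb X_j(\phi_i)\bigr)$. Your extra remarks on reducing to generators, on injectivity of $\tilde\pi_\Gamma^*$, and on the graded signs only make explicit what the paper leaves implicit.
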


\begin{proof}
Using \cref{eq:corner-residual-reduced}, we have
\begin{align*}
        \iota_{\mathbb X_j}\iota_{\mathbb X_i}\varpi_\Sigma
        &=
        \mathbb X_j(\phi_i)
        +
        \iota_{\tilde{\mathbb X}_j}\tilde{\mathcal X}_i,
        \\
        \iota_{\mathbb X_i}\iota_{\mathbb X_j}\varpi_\Sigma
        &=
        \mathbb X_i(\phi_j)
        +
        \iota_{\tilde{\mathbb X}_i}\tilde{\mathcal X}_j .
\end{align*}
Adding the two identities and using the skew-symmetry of $\varpi_\Sigma$ gives
\begin{align*}
        0
        &=
        \mathbb X_j(\phi_i)
        +
        \mathbb X_i(\phi_j)
        +
        \iota_{\tilde{\mathbb X}_j}\tilde{\mathcal X}_i
        +
        \iota_{\tilde{\mathbb X}_i}\tilde{\mathcal X}_j
        \\
        &=
        \mathbb X_j(\phi_i)
        +
        \mathbb X_i(\phi_j)
        +
        \left\langle
        \tilde{\mathbb X}_i+\tilde{\mathcal X}_i,
        \tilde{\mathbb X}_j+\tilde{\mathcal X}_j
        \right\rangle .
\end{align*}
Thus the Courant pairing vanishes precisely when
\cref{eq:corner-isotropy-criterion} holds.
\end{proof}

\begin{lemma}\label{lem:corner-involutivity-criterion}
Assume that
\begin{align}
        \mathbb X_i(\phi_j)
        &\approx
        f_{ij}^{k}\phi_k .
        \label{eq:corner-on-shell-constraint-action}
\end{align}
Then, $\tilde D$ is involutive on shell:
\begin{align}
        [
        \tilde{\mathbb X}_i+\tilde{\mathcal X}_i,
        \tilde{\mathbb X}_j+\tilde{\mathcal X}_j
        ]
        &\approx
        f_{ij}^{k}
        \left(
        \tilde{\mathbb X}_k+\tilde{\mathcal X}_k
        \right).
        \label{eq:corner-on-shell-dorfman-closure}
\end{align}
\end{lemma}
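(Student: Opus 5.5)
The plan is to compute the Dorfman bracket componentwise on $\tilde F_\Gamma$ and compare each component with the boundary data. Throughout we use \cref{eq:corner-residual-reduced} in the form $\tilde\pi_\Gamma^*\tilde{\mathcal X}_i=\iota_{\mathbb X_i}\varpi_\Sigma-\delta\phi_i$, together with the projectability of the $\mathbb X_i$. Projectability means we may pull everything back along the surjective submersion $\tilde\pi_\Gamma$ and work on $F_\Sigma$. We will keep track of the graded signs from the odd-multiplier convention only at the end; the structure of the argument does not depend on them.

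\emph{Vector part.} Since $\tilde\pi_\Gamma$ intertwines brackets of projectable vector fields, we have $[\tilde{\mathbb X}_i,\tilde{\mathbb X}_j]=(\tilde\pi_\Gamma)_*[\mathbb X_i,\mathbb X_j]$. By \cref{eq:ham-vf-on-shell-closure}, this is $\approx f_{ij}^k\tilde{\mathbb X}_k$. One caveat: on a manifold with corners, \cref{eq:ham-vf-on-shell-closure} has to be re-derived from \cref{eq:corner-ham-equation}. We will do this by contracting the bracket with $\varpi_\Sigma$, using $\iota_{[\mathbb X_i,\mathbb X_j]}=[\mathrm L_{\mathbb X_i},\iota_{\mathbb X_j}]$, and observing that the extra terms are pullbacks from the corner. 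This is consistent with the one-form computation below.

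\emph{One-form part.} We pull back $\mathrm L_{\tilde{\mathbb X}_i}\tilde{\mathcal X}_j-\iota_{\tilde{\mathbb X}_j}\delta\tilde{\mathcal X}_i$. Using $\delta\varpi_\Sigma=0$ and Cartan's formula, $\mathrm L_{\mathbb X_i}(\iota_{\mathbb X_j}\varpi_\Sigma-\delta\phi_j)=\iota_{[\mathbb X_i,\mathbb X_j]}\varpi_\Sigma+\iota_{\mathbb X_j}\mathrm L_{\mathbb X_i}\varpi_\Sigma-\delta(\mathbb X_i\phi_j)$. From $\iota_{\mathbb X_i}\varpi_\Sigma=\delta\phi_i+\tilde\pi_\Gamma^*\tilde{\mathcal X}_i$ we get $\mathrm L_{\mathbb X_i}\varpi_\Sigma=\delta\iota_{\mathbb X_i}\varpi_\Sigma=\tilde\pi_\Gamma^*\delta\tilde{\mathcal X}_i$. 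The term $\iota_{\mathbb X_j}\mathrm L_{\mathbb X_i}\varpi_\Sigma$ therefore cancels exactly against $-\iota_{\tilde{\mathbb X}_j}\delta\tilde{\mathcal X}_i$. What remains is
\begin{align*}
\tilde\pi_\Gamma^*\bigl(\text{one-form part}\bigr)=\iota_{[\mathbb X_i,\mathbb X_j]}\varpi_\Sigma-\delta\bigl(\mathbb X_i(\phi_j)\bigr).
\end{align*}
This is precisely the corner-modified analogue of \cref{eq: ham vf invol}: the bracket $[\mathbb X_i,\mathbb X_j]$ satisfies \cref{eq:corner-ham-equation} with function $\mathbb X_i(\phi_j)$.

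\emph{Closing on shell.} By hypothesis \cref{eq:corner-on-shell-constraint-action}, $\mathbb X_i(\phi_j)=f_{ij}^k\phi_k+r_{ij}$, where $r_{ij}$ lies in the square of the constraint ideal, i.e.\ $r_{ij}=\sum g\,\phi_a\phi_b$. Then $\delta r_{ij}\approx 0$ and $\delta(f^k_{ij}\phi_k)\approx f^k_{ij}\delta\phi_k$. Substituting $[\mathbb X_i,\mathbb X_j]\approx f^k_{ij}\mathbb X_k$ and $\delta\phi_k=\iota_{\mathbb X_k}\varpi_\Sigma-\tilde\pi^*_\Gamma\tilde{\mathcal X}_k$ gives the one-form part $\approx f_{ij}^k\tilde{\mathcal X}_k$. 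Pulling back along the submersion $\tilde\pi_\Gamma$ is injective on forms, which yields \cref{eq:corner-on-shell-dorfman-closure}.

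\emph{Main obstacle.} The hard step is making ``$\approx$'' precise on $\tilde F_\Gamma$, and in particular justifying the vanishing of $\delta r_{ij}$ and of the $(\delta f^k_{ij})\phi_k$ terms. The constraint ideal lives on $F_\Sigma$, not on the corner, so we will interpret on-shell equality as an identity of pulled-back sections restricted to $C_\Sigma$. Concretely, we will read $\mathbb X_i(\phi_j)\approx f^k_{ij}\phi_k$ as holding up to terms quadratic in the constraints. The delicate points are to check that the $f^k_{ij}$ are projectable, or to absorb them into the $C^\infty(\tilde F_\Gamma)$-span, and to verify that the graded signs from odd multipliers do not spoil the cancellation. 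The sign $(-1)^{xy}$ in the Dorfman bracket is designed to match $\mathrm L_{\mathbb X}=[\iota_{\mathbb X},\delta]$, so we expect the cancellation to hold verbatim.
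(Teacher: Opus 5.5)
Your argument is the paper's proof essentially verbatim. It uses the same Cartan-calculus cancellation of $\iota_{\mathbb X_j}\mathrm L_{\mathbb X_i}\varpi_\Sigma$ against $\iota_{\tilde{\mathbb X}_j}\delta\tilde{\mathcal X}_i$, leaving $\iota_{[\mathbb X_i,\mathbb X_j]}\varpi_\Sigma-\delta(\mathbb X_i(\phi_j))$, then substitutes the hypothesis and $[\mathbb X_i,\mathbb X_j]\approx f_{ij}^k\mathbb X_k$, and descends along $\tilde\pi_\Gamma$. Your reading of the hypothesis up to terms quadratic in the constraints is a genuine sharpening: the paper's step $\delta(\mathbb X_i(\phi_j))\approx f_{ij}^k\delta\phi_k$ silently needs more than equality modulo the constraint ideal, which is how the paper defines $\approx$.
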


\begin{proof}
From \cref{eq:corner-ham-equation},
\begin{align*}
        \mathcal X_i^\Sigma
        &=
        \iota_{\mathbb X_i}\varpi_\Sigma
        -
        \delta\phi_i .
\end{align*}
Since $\delta\varpi_\Sigma=0$, one has
\begin{align*}
        \delta\mathcal X_i^\Sigma
        &=
        \mathrm L_{\mathbb X_i}\varpi_\Sigma .
\end{align*}
Therefore
\begin{align*}
        \mathrm L_{\mathbb X_i}\mathcal X_j^\Sigma
        &=
        \mathrm L_{\mathbb X_i}\iota_{\mathbb X_j}\varpi_\Sigma
        -
        \delta\bigl(\mathbb X_i(\phi_j)\bigr),
        \\
        \iota_{\mathbb X_j}\delta\mathcal X_i^\Sigma
        &=
        \iota_{\mathbb X_j}\mathrm L_{\mathbb X_i}\varpi_\Sigma .
\end{align*}
Taking the difference gives
\begin{align}
        \mathrm L_{\mathbb X_i}\mathcal X_j^\Sigma
        -
        \iota_{\mathbb X_j}\delta\mathcal X_i^\Sigma
        &=
        \iota_{[\mathbb X_i,\mathbb X_j]}\varpi_\Sigma
        -
        \delta\bigl(\mathbb X_i(\phi_j)\bigr).
        \label{eq:corner-dorfman-oneform-identity}
\end{align}
Using \cref{eq:corner-on-shell-constraint-action} and
\cref{eq:ham-vf-on-shell-closure}, we obtain, on shell,
\begin{align*}
        \delta\bigl(\mathbb X_i(\phi_j)\bigr)
        &\approx
        f_{ij}^{k}\delta\phi_k,
        \\
        \iota_{[\mathbb X_i,\mathbb X_j]}\varpi_\Sigma
        &\approx
        f_{ij}^{k}\iota_{\mathbb X_k}\varpi_\Sigma
        =
        f_{ij}^{k}
        \left(
        \delta\phi_k+\mathcal X_k^\Sigma
        \right).
\end{align*}
Substitution in \cref{eq:corner-dorfman-oneform-identity} yields
\begin{align*}
        \mathrm L_{\mathbb X_i}\mathcal X_j^\Sigma
        -
        \iota_{\mathbb X_j}\delta\mathcal X_i^\Sigma
        &\approx
        f_{ij}^{k}\mathcal X_k^\Sigma .
\end{align*}
Together with
\begin{align*}
        [\mathbb X_i,\mathbb X_j]
        &\approx
        f_{ij}^{k}\mathbb X_k,
\end{align*}
this is precisely the on-shell closure of the corresponding Dorfman bracket. Since the
objects involved are projectable to the pre-corner field space, the identity descends to
$\tilde F_\Gamma$.
\end{proof}

The condition \cref{eq:corner-on-shell-dorfman-closure} is not automatic on the full
space $\tilde F_\Gamma$; in general, the corner fields must be restricted by the
constraints inherited from the boundary equations. Such constraints are identified by
considering the infinitesimal cylinder
\begin{align*}
        \Sigma_\epsilon
        &:=
        \Gamma\times[0,\epsilon],
        &
        \epsilon
        &>
        0.
\end{align*}
On $\Sigma_\epsilon$, let
\begin{align*}
        C_{\Sigma_\epsilon}
        &\subset
        F_{\Sigma_\epsilon}
\end{align*}
be the coisotropic submanifold of boundary fields. We define $F_\Gamma$ as the set of
field configurations at $\Gamma\times\{0\}$ of fields in $C_{\Sigma_\epsilon}$. In
other words, $F_\Gamma$ is the space of corner fields that extend to boundary
conditions\footnote{That is, to Cauchy data in $C_{\Sigma_\epsilon}$.} in an
infinitesimal neighbourhood of $\Gamma$.

By abuse of notation, we still denote by
$\tilde{\mathbb X}_i+\tilde{\mathcal X}_i$ the sections of
$(T\oplus T^*)F_\Gamma$ induced by the corresponding sections on $\tilde F_\Gamma$.
We then impose the following assumption.

\begin{assumption}\label{ass:pre-dirac-corner}
The distribution
\begin{align*}
        \tilde D
        &:=
        \operatorname{span}_{C^\infty(F_\Gamma)}
        \{
        \tilde{\mathbb X}_i+\tilde{\mathcal X}_i
        \}_{i\in I}
        \subset
        \Gamma((T\oplus T^*)F_\Gamma)
\end{align*}
is involutive and isotropic with respect to the Dorfman bracket and the Courant pairing.
We call such data a pre-Dirac structure.
\end{assumption}

We will see in Section~\ref{sec: predirac GR} that this assumption is satisfied for
Palatini--Cartan gravity.

It remains to identify the reduction which removes the obstruction to maximality. Let
\begin{align*}
        \operatorname{pr}_{T^*}
        &:
        (T\oplus T^*)F_\Gamma
        \longrightarrow
        T^*F_\Gamma
\end{align*}
be the projection onto the cotangent component. We define the kernel of $\tilde D$ by
\begin{align}
        K
        &:=
        \Bigl\{
        \mathbb Y\in\mathfrak X(F_\Gamma)
        \;:\;
        \iota_{\mathbb Y}\mathcal X=0
        \quad
        \forall\,
        \mathcal X\in
        \Gamma(\operatorname{pr}_{T^*}\tilde D)
        \Bigr\}.
        \label{eq:corner-kernel-general}
\end{align}
Equivalently, since $\operatorname{pr}_{T^*}\tilde D$ is generated by the residual
one-forms,
\begin{align*}
        K
        &=
        \Bigl\{
        \mathbb Y\in\mathfrak X(F_\Gamma)
        \;:\;
        \iota_{\mathbb Y}\tilde{\mathcal X}_i=0
        \quad
        \forall i
        \Bigr\}.
\end{align*}

\begin{lemma}\label{lem:kernel-involutive-criterion}
The distribution $K$ is involutive if and only if
\begin{align}
        \iota_{\mathbb Y}\iota_{\mathbb Y'}\delta\mathcal X
        &=
        0
        \qquad
        \forall\,\mathbb Y,\mathbb Y'\in K,
        \quad
        \forall\,
        \mathcal X\in\Gamma(\operatorname{pr}_{T^*}\tilde D).
        \label{eq:corner-kernel-involutive-criterion}
\end{align}
\end{lemma}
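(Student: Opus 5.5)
The argument rests on the Cartan formula for the exterior derivative of a one-form evaluated on two vector fields. For any one-form $\mathcal X$ on $F_\Gamma$ and vector fields $\mathbb Y,\mathbb Y'$ we have, up to the sign conventions fixed above,
\begin{align*}
        \iota_{\mathbb Y}\iota_{\mathbb Y'}\delta\mathcal X
        =
        \pm\Bigl(
        \mathbb Y'\bigl(\iota_{\mathbb Y}\mathcal X\bigr)
        -
        \mathbb Y\bigl(\iota_{\mathbb Y'}\mathcal X\bigr)
        -
        \iota_{[\mathbb Y',\mathbb Y]}\mathcal X
        \Bigr).
\end{align*}
Equivalently, $\iota_{[\mathbb Y',\mathbb Y]}=[\mathrm L_{\mathbb Y'},\iota_{\mathbb Y}]$ with $\mathrm L_{\mathbb Y'}=[\iota_{\mathbb Y'},\delta]$. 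I will use this identity for $\mathcal X$ ranging over $\Gamma(\operatorname{pr}_{T^*}\tilde D)$. When the residual one-forms are odd, the parity signs only change overall signs and do not affect whether an expression vanishes, so I will track them only to that extent.

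First, let $\mathbb Y,\mathbb Y'\in K$. By the definition \eqref{eq:corner-kernel-general}, $\iota_{\mathbb Y}\mathcal X=0$ and $\iota_{\mathbb Y'}\mathcal X=0$ identically on $F_\Gamma$ for every $\mathcal X\in\Gamma(\operatorname{pr}_{T^*}\tilde D)$. Hence the two derivative terms in the formula vanish, and
\begin{align*}
        \iota_{\mathbb Y}\iota_{\mathbb Y'}\delta\mathcal X
        =
        \mp\,\iota_{[\mathbb Y',\mathbb Y]}\mathcal X .
\end{align*}
For the direction ($\Rightarrow$): if $K$ is involutive, then $[\mathbb Y',\mathbb Y]\in K$, so the right-hand side vanishes and \eqref{eq:corner-kernel-involutive-criterion} holds. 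For the direction ($\Leftarrow$): if \eqref{eq:corner-kernel-involutive-criterion} holds, then $\iota_{[\mathbb Y',\mathbb Y]}\mathcal X=0$ for all $\mathcal X$, which is exactly the statement $[\mathbb Y',\mathbb Y]\in K$.

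It is enough to test on the generators $\tilde{\mathcal X}_i$, by the equivalent description of $K$ that follows \eqref{eq:corner-kernel-general}. Checking the generators suffices because both sides of the criterion are $C^\infty(F_\Gamma)$-linear in $\mathcal X$ once we restrict to $\mathbb Y,\mathbb Y'\in K$. Indeed, $\delta(f\mathcal X)=\delta f\wedge\mathcal X+f\delta\mathcal X$, and the extra term $\delta f\wedge\mathcal X$ is killed by contraction with elements of $K$.

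The only delicate step is ensuring that the derivative terms vanish. This requires $\iota_{\mathbb Y}\mathcal X=0$ to hold as an identity of functions on $F_\Gamma$, not just pointwise at a single point. That is guaranteed because $K$ is defined as a space of vector fields that annihilate the residual forms everywhere. Beyond this, one must make sure the graded signs are handled consistently, and these only introduce overall factors.
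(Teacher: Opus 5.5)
Your proof is correct and is essentially the paper's argument. Both use the Cartan identity $\iota_{[\mathbb Y,\mathbb Y']}=[\mathrm L_{\mathbb Y},\iota_{\mathbb Y'}]$ (equivalently, the invariant formula for $\delta\mathcal X$), kill the derivative terms because $\mathbb Y,\mathbb Y'\in K$, and read off $\iota_{[\mathbb Y,\mathbb Y']}\mathcal X=-\iota_{\mathbb Y'}\iota_{\mathbb Y}\delta\mathcal X$; your additional reduction to the generators $\tilde{\mathcal X}_i$ via $C^\infty$-linearity is valid but unnecessary, since the statement already quantifies over all of $\Gamma(\operatorname{pr}_{T^*}\tilde D)$.
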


\begin{proof}
Let $\mathbb Y,\mathbb Y'\in K$. Then
$[\mathbb Y,\mathbb Y']\in K$ if and only if
\begin{align*}
        \iota_{[\mathbb Y,\mathbb Y']}\mathcal X
        &=
        0
        \qquad
        \forall\,
        \mathcal X\in\Gamma(\operatorname{pr}_{T^*}\tilde D).
\end{align*}
By Cartan calculus,
\begin{align*}
        \iota_{[\mathbb Y,\mathbb Y']}\mathcal X
        &=
        [\mathrm L_{\mathbb Y},\iota_{\mathbb Y'}]\mathcal X
        \\
        &=
        \mathrm L_{\mathbb Y}(\iota_{\mathbb Y'}\mathcal X)
        -
        \iota_{\mathbb Y'}\mathrm L_{\mathbb Y}\mathcal X .
\end{align*}
The first term vanishes since $\mathbb Y'\in K$. Moreover,
\begin{align*}
        \mathrm L_{\mathbb Y}\mathcal X
        &=
        \iota_{\mathbb Y}\delta\mathcal X
        +
        \delta(\iota_{\mathbb Y}\mathcal X)
        =
        \iota_{\mathbb Y}\delta\mathcal X,
\end{align*}
because $\mathbb Y\in K$. Therefore
\begin{align*}
        \iota_{[\mathbb Y,\mathbb Y']}\mathcal X
        &=
        -
        \iota_{\mathbb Y'}\iota_{\mathbb Y}\delta\mathcal X.
\end{align*}
This proves the claim.
\end{proof}

\begin{definition}\label{def:partial-kernel}
A partial kernel of $\tilde D$ is a regular involutive subdistribution
\begin{align*}
        \widehat K
        &\subset
        K .
\end{align*}
\end{definition}

The distribution used above to pass from $F_\Sigma$ to $\tilde F_\Gamma$, consisting
of vector fields which preserve the values of fields and jets at the corner, is an example
of such a partial kernel at the pre-corner level.

Another canonical example is
\begin{align}
        K_{\tilde D}
        &:=
        \Bigl\{
        \mathbb Y\in\mathfrak X(F_\Gamma)
        \;:\;
        \mathbb Y+0\in\Gamma(\tilde D)
        \Bigr\}.
        \label{eq:corner-KD}
\end{align}

\begin{lemma}\label{lem:KD-partial-kernel}
The distribution $K_{\tilde D}$ is a partial kernel.
\end{lemma}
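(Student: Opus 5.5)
We need to show three things about $K_{\tilde D}$: that $K_{\tilde D}\subset K$, that it is involutive, and (taking it as part of the definition of partial kernel) that it is regular. The plan is to use only the two properties of $\tilde D$ in \cref{ass:pre-dirac-corner}, namely isotropy and involutivity, together with \cref{lem:kernel-involutive-criterion} as a cross-check.

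\emph{Inclusion.} Let $\mathbb Y\in K_{\tilde D}$, so that $\mathbb Y+0\in\Gamma(\tilde D)$. For any generator $\tilde{\mathbb X}_i+\tilde{\mathcal X}_i$, isotropy of $\tilde D$ gives
\begin{align*}
        0=\langle \mathbb Y+0,\tilde{\mathbb X}_i+\tilde{\mathcal X}_i\rangle=\iota_{\mathbb Y}\tilde{\mathcal X}_i ,
\end{align*}
with the appropriate graded sign in the odd setting, which does not affect vanishing. Every element of $\Gamma(\operatorname{pr}_{T^*}\tilde D)$ is a $C^\infty(F_\Gamma)$-combination of the $\tilde{\mathcal X}_i$, so $\mathbb Y\in K$.

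\emph{Involutivity.} For $\mathbb Y,\mathbb Y'\in K_{\tilde D}$, the Dorfman bracket of $\mathbb Y+0$ and $\mathbb Y'+0$ is
\begin{align*}
        [\mathbb Y+0,\mathbb Y'+0]=[\mathbb Y,\mathbb Y']+\mathrm L_{\mathbb Y}0-\iota_{\mathbb Y'}\delta 0=[\mathbb Y,\mathbb Y']+0 .
\end{align*}
Involutivity of $\tilde D$ places this bracket in $\Gamma(\tilde D)$, hence $[\mathbb Y,\mathbb Y']\in K_{\tilde D}$. $K_{\tilde D}$ is also a $C^\infty(F_\Gamma)$-submodule, because $f(\mathbb Y+0)=f\mathbb Y+0$ and $\tilde D$ is a $C^\infty$-span, so it is a genuine distribution.

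As a consistency check, \cref{lem:kernel-involutive-criterion} should also give involutivity, since one expects $\iota_{\mathbb Y}\iota_{\mathbb Y'}\delta\tilde{\mathcal X}_i=0$ for $\mathbb Y,\mathbb Y'\in K_{\tilde D}$. Expanding $[\mathbb Y+0,\tilde{\mathbb X}_i+\tilde{\mathcal X}_i]=[\mathbb Y,\tilde{\mathbb X}_i]+\mathrm L_{\mathbb Y}\tilde{\mathcal X}_i$, this element lies in $\tilde D$, so it pairs to zero with $\mathbb Y'+0$. This yields $\iota_{\mathbb Y'}\mathrm L_{\mathbb Y}\tilde{\mathcal X}_i=\iota_{\mathbb Y'}\iota_{\mathbb Y}\delta\tilde{\mathcal X}_i=0$, using $\iota_{\mathbb Y}\tilde{\mathcal X}_i=0$.

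\emph{Regularity.} This is the main obstacle, since nothing so far forces $K_{\tilde D}=\rho(\tilde D\cap TF_\Gamma)$ to have constant rank. I would argue that $\tilde D\cap TF_\Gamma$ is the kernel of the bundle map $\operatorname{pr}_{T^*}|_{\tilde D}$. Given the standing regularity hypotheses used throughout, namely that $\tilde D$ is a subbundle and that its cotangent projection has constant rank, this kernel is a subbundle, and the Frobenius argument then applies. If those hypotheses are not available, I would state regularity of $K_{\tilde D}$ as the analogous standing assumption, in the same way the paper assumes regularity of the earlier kernels.
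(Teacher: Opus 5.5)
Your inclusion and involutivity arguments are exactly the paper's proof: isotropy against $\mathbb Y+0$ gives $\iota_{\mathbb Y}\mathcal X=0$, and $[\mathbb Y+0,\mathbb Y'+0]=[\mathbb Y,\mathbb Y']+0$ together with involutivity of $\tilde D$ gives closure of $K_{\tilde D}$. The paper's proof does not address regularity, so your point that it does not follow from isotropy and involutivity, and must be assumed or derived from a constant-rank hypothesis on $\operatorname{pr}_{T^*}|_{\tilde D}$, is a fair supplement rather than a gap; the one imprecision is that your cross-check via \cref{lem:kernel-involutive-criterion} only gives $[\mathbb Y,\mathbb Y']\in K$, not $[\mathbb Y,\mathbb Y']\in K_{\tilde D}$.
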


\begin{proof}
If $\mathbb Y\in K_{\tilde D}$, then $\mathbb Y+0\in\Gamma(\tilde D)$. Since
$\tilde D$ is isotropic, for every $\mathbb X+\mathcal X\in\Gamma(\tilde D)$ one has
\begin{align*}
        0
        &=
        \langle
        \mathbb X+\mathcal X,
        \mathbb Y+0
        \rangle
        =
        \iota_{\mathbb Y}\mathcal X .
\end{align*}
Thus $\mathbb Y\in K$. Moreover, if
$\mathbb Y,\mathbb Y'\in K_{\tilde D}$, then
\begin{align*}
        [\mathbb Y+0,\mathbb Y'+0]
        &=
        [\mathbb Y,\mathbb Y']+0 .
\end{align*}
Since $\tilde D$ is involutive, $[\mathbb Y,\mathbb Y']+0\in\Gamma(\tilde D)$, hence
$[\mathbb Y,\mathbb Y']\in K_{\tilde D}$.
\end{proof}

\begin{remark}
For $\mathbb X+\mathcal X\in\Gamma(\tilde D)$ and
$\mathbb Y\in K_{\tilde D}$, the Dorfman bracket gives
\begin{align}
        [
        \mathbb X+\mathcal X,
        \mathbb Y+0
        ]
        &=
        [\mathbb X,\mathbb Y]
        -
        \iota_{\mathbb Y}\delta\mathcal X.
        \label{eq:projectability-KD}
\end{align}
Thus
\begin{align*}
        \iota_{\mathbb Y}\delta\mathcal X
        &=
        0
        \qquad
        \Longleftrightarrow
        \qquad
        [\mathbb X,\mathbb Y]\in K_{\tilde D}.
\end{align*}
This is the condition for $\mathbb X$ to be projectable along $K_{\tilde D}$.
\end{remark}

A stronger partial kernel is
\begin{align}
        K_f
        &:=
        \Bigl\{
        \mathbb Y\in K
        \;:\;
        \iota_{\mathbb Y}\delta\mathcal X=0
        \quad
        \forall\,
        \mathcal X\in\Gamma(\operatorname{pr}_{T^*}\tilde D)
        \Bigr\}
        \notag
        \\
        &=
        \Bigl\{
        \mathbb Y\in\mathfrak X(F_\Gamma)
        \;:\;
        \iota_{\mathbb Y}\mathcal X=0
        \text{ and }
        \iota_{\mathbb Y}\delta\mathcal X=0
        \quad
        \forall\,
        \mathcal X\in\Gamma(\operatorname{pr}_{T^*}\tilde D)
        \Bigr\}.
        \label{eq:corner-Kf}
\end{align}

We shall use the following equivalence relation between generalized distributions. Two
pre-Dirac distributions are considered equivalent if either they are related by a
diffeomorphism or by a $B$-transformation, or one is obtained from the other by
reduction along a regular partial kernel. Thus, if
\begin{align*}
        \widehat F
        &=
        F_\Gamma/\widehat K
\end{align*}
for a regular partial kernel $\widehat K\subset K$, and if the sections of $\tilde D$
are projectable, the reduced distribution $\widehat D$ on $\widehat F$ is equivalent
to $\tilde D$. In the special case $\widehat K=K_{\tilde D}$, this is a
Morita-type equivalence.

In particular, when $K$ itself is regular and involutive, one can take
\begin{align*}
        P_\Gamma
        &:=
        F_\Gamma/K .
\end{align*}
The induced distribution is
\begin{align}
        D
        &:=
        \operatorname{span}_{C^\infty(P_\Gamma)}
        \{
        \mathbb X_i+\mathcal X_i
        \}_{i\in I}
        \subset
        \Gamma((T\oplus T^*)P_\Gamma).
        \label{eq:reduced-dirac-distribution-general}
\end{align}
If involutivity is preserved and $D$ is maximally isotropic, then $D$ is a Dirac
structure on $P_\Gamma$. In the case in which its cotangent projection is all
$T^*P_\Gamma$, it is the graph of a Poisson bivector
\begin{align*}
        \Pi
        &\in
        \Gamma(\wedge^2TP_\Gamma).
\end{align*}

\begin{definition}\label{def:good-field-theory-corner}
We say that the field theory is good on $\Gamma$ if its pre-Dirac corner structure is
equivalent, in the above sense, to a Poisson structure on a reduced corner field space.
\end{definition}

To summarize the discussion, it is useful to keep in mind the following diagram:
\begin{center}
\begin{tikzcd}
{\tilde F_\Sigma}
\\
{F_\Sigma}
&&
{F_{\Sigma_\epsilon}}
&&
{\tilde F_\Gamma}
\\
{C_\Sigma}
&&
{C_{\Sigma_\epsilon}}
&&
{F_\Gamma}
\\
&&&&
{P_\Gamma}
\arrow["{/\ker\tilde\varpi_\Sigma}"', from=1-1, to=2-1]
\arrow["{\tilde\pi_\Gamma}", curve={height=-30pt}, from=2-1, to=2-5]
\arrow["{\tilde\pi_\Gamma^0}"', from=2-3, to=2-5]
\arrow["{\iota_{C_\Sigma}}", hook', from=3-1, to=2-1]
\arrow["{\iota_\Sigma}"', hook', from=3-3, to=2-3]
\arrow["{\pi_\Gamma}", from=3-3, to=3-5]
\arrow["{}"', curve={height=18pt}, from=3-3, to=4-5]
\arrow["{\iota_\Gamma}", hook, from=3-5, to=2-5]
\arrow["{/K}"', from=3-5, to=4-5]
\end{tikzcd}
\end{center}

\begin{remark}
Given a Poisson structure $\Pi\in\Gamma(\wedge^2TP_\Gamma)$, one automatically obtains
a BF${}^2$V structure on $T^*[1]P_\Gamma$, since $\Pi$ can equivalently be
regarded as a degree-two functional
\begin{align*}
        \mathcal S_\Gamma
        &\in
        C^\infty(T^*[1]P_\Gamma)
\end{align*}
satisfying the classical master equation
\begin{align*}
        \{\mathcal S_\Gamma,\mathcal S_\Gamma\}
        &=
        0,
\end{align*}
which is equivalent to
\begin{align*}
        [\Pi,\Pi]
        &=
        0.
\end{align*}
\end{remark}

\begin{remark}
In some cases, such as $BF$ theory, ghost-for-ghost fields affect the corner structure.
The procedure described above may still provide a Dirac structure given by the graph of a
Poisson bivector, but this does not necessarily capture the full homological nature of
the theory. Reducibility data require a higher Poisson structure, or $P_\infty$
structure, which in the cases considered here is concentrated in degrees one and two.

In other words, the ordinary Poisson bivector $\Pi$ is replaced by
\begin{align*}
        \Pi
        &=
        \Pi_1+\Pi_2,
\end{align*}
where $\Pi_1$ is a degree-one vector field and $\Pi_2$ is a bivector. These data satisfy
the Maurer--Cartan equation
\begin{align*}
        [\Pi_1+\Pi_2,\Pi_1+\Pi_2]
        &=
        0,
\end{align*}
where $[\cdot,\cdot]$ denotes the Schouten bracket. Equivalently,
\begin{align*}
        [\Pi_1,\Pi_1]
        &=
        0,
        &
        [\Pi_1,\Pi_2]
        &=
        0,
        &
        [\Pi_2,\Pi_2]
        &=
        0.
\end{align*}
The first identity says that $\Pi_1$ is cohomological, the third is the Poisson
condition for $\Pi_2$, and the second says that $\Pi_2$ is invariant under the
differential generated by $\Pi_1$. Thus the passage from a Poisson bivector to
$\Pi_1+\Pi_2$ incorporates residual reducibility directly into the cohomological
corner description. In practice, this amounts to a Dirac structure with support; see
\cite{BCCS,CC2022cor}.
\end{remark}
    
\subsection{Examples}
We now provide a couple of straightforward examples of field theories presenting corner structures that can be described within the BF$^2$V formalism. In the following, we let $G$ be a semi-simple Lie group with Lie algebra $\mathfrak{g}$ and $M$ be a $D=d+1$ manifold.
\subsubsection{Yang--Mills theory}
    We work in the so-called first order formalism, where only first order derivatives appear in the Lagrangian. The space of fields of the theory is given by $F^{\mathrm{YM}}_M\coloneqq\Omega^1(M,\mathfrak{g})\oplus \Omega^{d-1}(M,\mathfrak{g})\ni(A,B) $, and the action reads
    \[
    S_M^{\mathrm{YM}}=\int_M \Tr(B \wedge F_A) - \frac{1}{2} \Tr (B \wedge * B),
    \]
    where $*$ is the Hodge dual. From now on we will omit the $\wedge$ symbol and the $\Tr$, as they will be assumed in the relevant contexts.

    The variation of the action produces EoMs and a boundary term
    \[
    \delta S_M^{\mathrm{YM}}=\int_M  \mathrm{d}_A B \delta A + (B-*F_A)\delta B
    -\int_{\partial M} B \delta A.
    \]
    Setting $\Sigma=\partial M$, we define $F^{\mathrm{YM}}_\Sigma=\Omega^1(\Sigma,\mathfrak{g})\oplus\Omega^{d-1}(\Sigma,\mathfrak{g})\ni(A,B) $ as the space of boundary fields, on which we have the symplectic form
        \[
        \varpi^{\mathrm{YM}}_\Sigma=\int_\Sigma \delta B \delta A.
        \]
    The phase space  $C^{\mathrm{YM}}_\Sigma$ is defined as the zero locus of the following constraint, which turns out to be first class
        \[
        L_c=\int_\Sigma c \mathrm{d}_AB, \qquad \{L_c,L_c\}=\frac{1}{2}L_{[c,c]}, \qquad \text{for }c\in\Omega^0[1](\Sigma,\mathfrak{g}),
        \]
    having chosen an odd Lagrange multiplier $c$. The Hamiltonian vector field $\mathbb L_c \coloneqq \int_\Sigma \mathbb L_A \frac{\delta}{\delta A} + \mathbb L_B \frac{\delta}{\delta B}$ of $L_c$ gives the infinitesimal gauge transformations as $\mathbb L_A=\mathrm{d}_A c $ and $\mathbb L_B= [c,B] $. %The BFV description is carried out straightforwardly by defining the BFV action on $\mathcal{F}^{\mathrm{YM}}_\Sigma\coloneqq F^{\mathrm{YM}}_\Sigma \times T^*\Omega^0[1](\Sigma,\mathfrak{g})\ni(A,B,c,c^\dag)$ by twisting the Koszul and Chevalley--Eilenberg differentials, obtaining
    %\[
    %\mathcal{S}^{\mathrm{YM}}_\Sigma=\int c \mathrm{d}_A B +\frac{1}{2}c^\dag [c,c], \qquad \quad \varpi^\mathrm{YM}_\Sigma=\int_\Sigma \delta A \delta B + \delta c \delta c^\dag.
    %\]

    In the presence of corners, we have an extra term in the variation of $L_c$, i.e.
        \[
        \delta L_c = \iota_{\mathbb L_c}\varpi^{\mathrm{YM}}_\Sigma - \int_\Gamma c\delta B.
        \]
    Defining $\mathcal{J}_c\coloneqq \int_\Gamma c\delta B$ as a functional on the space of pre-corner fields $\tilde{F}^{\mathrm{YM}}_\Gamma\coloneqq \Omega^1(\Gamma,\mathfrak{g})\oplus\Omega^{d-1}(\Gamma,\mathfrak{g}) $, we can furthermore define 
        \[
        \mathbb J_c = \int_\Gamma \mathrm{d}_A c\frac{\delta}{\delta A} + [c,B]\frac{\delta}{\delta B} \in \mathfrak{X}(\tilde{F}^{\mathrm{YM}}_\Gamma)
        \]
    as the pushforward of $\mathbb L_c$ on $\tilde{F}^{\mathrm{YM}}_\Gamma$ via the projection map $\tilde{\pi}_\Gamma\colon F^{\mathrm{YM}}_\Sigma\to \tilde{F}_\Gamma^{\mathrm{YM}}$. Then the distribution $\tilde{D}^{\mathrm{YM}}_\Gamma\coloneqq \mathrm{span}(\mathbb J_c + \mathcal{J}_c) $ is involutive and isotropic, as can easily be checked by noticing
        \begin{align*}
            &\langle \mathbb J_c + \mathcal{J}_c, \mathbb J_{c'} + \mathcal{J}_{c'}\rangle = \mathcal{J}_{[c,c']} - \mathcal{J}_{[c',c]}=0,\\
            &[\mathbb J_c + \mathcal{J}_c,\mathbb J_c + \mathcal{J}_c]= \frac{1}{2}\mathbb J_{[c,c]} + \frac{1}{2}\mathcal{J}_{[c,c]}.
        \end{align*}
    \begin{remark}
        Following the previous section, we would have to work on the restriction to the corners of the space of Cauchy data. However, in the absence of transversal jet fields, we see that the constraint $\mathrm{d}_AB=0$ does not extend to the $d-1$-dimensional corner, it being a $d-$form. Therefore, in the notation we provided above, we can identify $F_\Gamma^{\mathrm{YM}}$ and $\tilde F_\Gamma^{\mathrm{YM}}$.
    \end{remark}
    In order to achieve the maximal isotropicity property, we quotient out with respect to the distribution defined by the kernel of $\mathcal{J}_c$, given by
        \[
        \Ker{\mathcal{J}_c}=\bigg\{ \int_\Gamma \mathbb X_A \frac{\delta}{\delta A} + \mathbb X_B \frac{\delta}{\delta B} \in\mathfrak{X}(\tilde{F}^{\mathrm{YM}}_\Gamma) \ \big\vert \ \mathbb X_B = 0   \bigg\} \simeq \Omega^1(\Gamma,\mathfrak{g}).
        \]
    Therefore we obtain that the space of classical corner fields is simply given by $P^{\mathrm{YM}}_\Gamma= \tilde{F}^{\mathrm{YM}}_\Gamma/\Omega^1(\Gamma,\mathfrak{g})\simeq \Omega^{d-1}(\Gamma,\mathfrak{g}) $, on which one can define the Poisson bivector field
        \[
        \pi^{\mathrm{YM}}_\Gamma=\int_\Gamma \frac{1}{2}B\left[ \frac{\delta}{\delta B},\frac{\delta}{\delta B} \right] \quad \text{such that } D^{\mathrm{YM}}_\Gamma=\mathrm{Graph}(\pi^{\mathrm{YM}}_\Gamma),
        \]
        such that $\iota_{\mathcal{J}_c}\pi=\mathbb J_c$, 
    proving that $D^{\mathrm{YM}}_\Gamma$ is a Dirac structure.

    Notice also that the Poisson bivector field $\pi^{\mathrm{YM}}_\Gamma$ is the same as a degree-2 function on $T^*[1] P^{\mathrm{YM}}_\Gamma$ satisfying the CME. In particular, one has
        \begin{align}\label{eq: YM BF^2 action}
            \mathcal{S}^{\mathrm{YM}}_\Gamma=\int_\Gamma \frac{1}{2}B [c,c],
        \end{align} 
    where $(B,c)\in\mathcal{F}^{\mathrm{YM}}_\Gamma=\Omega^{d-1}(\Gamma,\mathfrak{g})\oplus \Omega^0[1](\Gamma,\mathfrak{g}) $, after having identified $\mathfrak{g}^*\simeq \mathfrak{g}$ and $(\Omega^{d-1}(\Gamma))^*\simeq \Omega^{0}(\Gamma)$. Furthermore, on $\mathcal{F}^{\mathrm{YM}}_\Gamma$ one can define the canonical symplectic form
        \begin{equation}\label{eq: YM BF^2 symp}
            \varpi^{\mathrm{YM}}_\Gamma=\int_\Gamma \delta B \delta c.
        \end{equation}
    
    %\begin{remark}
    %    Notice that $\mathcal{S}^{\mathrm{YM}}_\Gamma$ is the BF$^2$V extension of $\mathcal{S}^{\mathrm{YM}}_\Sigma$. Indeed one sees
    %        \[
    %        \delta \mathcal{S}^{\mathrm{YM}}_\Sigma = \iota_{Q^{\mathrm{YM}}_\Sigma}\varpi^{\mathrm{YM}}_\Sigma - \pi^*(\vartheta^{\mathrm{YM}}_\Gamma),
    %        \]
    %    such that one finds \cref{eq: YM BF^2 action} and \cref{eq: YM BF^2 symp} as 
    %        \begin{align*}
    %            &\delta\vartheta^{\mathrm{YM}}_\Gamma= \varpi^{\mathrm{YM}}_\Gamma, \\
    %           &\iota_{Q^{\mathrm{YM}}_\Sigma}\iota_{Q^{\mathrm{YM}}_\Sigma}\varpi^{\mathrm{YM}}_\Sigma=2\pi^*(\mathcal{S}^{\mathrm{YM}}_\Gamma).
    %        \end{align*}
    %\end{remark}
    \begin{remark}
        For the sake of completeness, we remark that the same result could be achieved by keeping the transversal jets at all orders in the space of pre-corner fields. In the end, however, such jets are all quotiented out to obtain the maximality property. Ww will see in the next chapters that, for example in the case of PC gravity, it is more convenient to provide an initial description which takes the transversal jets into consideration.  
    \end{remark}
    \begin{remark}
        It is also interesting to notice that the polarization found in this example, obtained from studying the classical corner structure of YM theory, is not the unique choice in the context of the BF$^2$V formalism. Indeed, it was shown in \cite{CC2022cor} that different choices of polarizations lead to the arising of non-trivial $P_\infty$ structures, while preserving the underlying BF$^2$V data. 
    \end{remark}
\subsubsection{$BF$ theory}\label{example: BF theory}
With the same assumptions as in the previous example, the space of fields is the same as in first-order Yang--Mills theory, i.e. a connection 1-form $A$, modeled over the space $\Omega^1(M,\mathfrak{g})$ and a Lie algebra-valued $(d-1)$--form $B\in\Omega^{d-1}(M,\mathfrak{g})$. We denote such space by
$F_M^{{BF}}\coloneqq \Omega^1(M,\mathfrak{g})\oplus \Omega^{d-1}(M,\mathfrak{g})$. The action reads
    \[
    S_M^{{BF}}=\int_M B F_A + \Lambda P(B),
    \]
where $\Lambda$ is reminiscent of the cosmological constant in gravity and $P$ is an invariant polynomial of degree $k$ such that $k(d-1)=2$.\footnote{$P$ is trivial for any dimension where $d\neq 2,3$.} The variation of the action gives
    \[
    \delta S_M^{\mathrm{BF}}=\int_M \delta B (F_A + \Lambda P'(B)) + \delta A \mathrm{d}_A B - \int_\Sigma B \delta A.
    \]
Again, the space of boundary fields is given by $F^{{BF}}_\Sigma=\Omega^1(\Sigma,\mathfrak{g})\oplus\Omega^{d-1}(\Sigma,\mathfrak{g})$, with symplectic form
    \begin{equation*}
        \varpi_\Sigma^\mathrm{BF}=\int_\Sigma \delta A\delta B.
    \end{equation*}
    
For the sake of convenience, we now specify the analysis of $BF$ theory to the case $D=4$. In this setting, there are two constraints on $F^{{BF}}_\Sigma$ of the form\footnote{Here we choose $P(B)=\frac{1}{2}B^2$}
    \[
    L_c=\int_\Sigma c \mathrm{d}_A B, \qquad \mathrm{and} \qquad M_\tau =\int_\Sigma \tau(F_A+\Lambda B),
    \]
where $c\in\Omega^0[1](\Sigma,\mathfrak{g})$ and $\tau\in\Omega^1[1](\Sigma,\mathfrak{g})$ will be indentified with the ghost fields. As expected, such constraints  are in involution, which makes the phase space $C_\Sigma^{BF}$ a coisotropic submanifold. Specifically, one has
    \begin{align*}
        \mathbb L_c =\int_\Sigma \mathrm{d}_A c\frac{\delta}{\delta A} + [c,B]\frac{\delta}{\delta B}, \qquad \qquad \mathbb M_\tau=\int_\Sigma \Lambda \tau \frac{\delta}{\delta A}+ \mathrm{d}_A \tau \frac{\delta}{\delta B},
    \end{align*}
    and
    \begin{align*}
        & \{L_c,L_c\}=\frac{1}{2}L_{[c,c]} & \{L_c,M_\tau\}=-L_{[c,\tau]} &&\{M_\tau,M_\tau\}=0.
    \end{align*}

When corner terms are not neglected, we obtain
    \begin{align*}
        \delta M_\tau = \iota_{\mathbb M_\tau}\varpi ^{BF}_\Sigma - \intc  \tau \delta A, \qquad \mathrm{and}\qquad \delta L_c = \iota_{\mathbb L_c} \varpi^{BF}_\Sigma - \intc c \delta B.
    \end{align*}
Defining $\mathcal{J}_c \coloneqq \intc c \delta B$ and $\mathcal{V}_\tau\coloneqq \tau\delta A$ as one-forms on the space of corner fields $\tilde F^{BF}_\Gamma\coloneqq \Omega^1(\Gamma,\mathfrak{g})\oplus \Omega^2(\Gamma,\mathfrak{g}) \ni (A,B)$, we can furthermore push the Hamiltonian vector fields forward to $F^{BF}_\Sigma$ along the projection $\tilde \pi^\Gamma \colon F^{BF}_\Sigma \to\tilde F^{BF}_\Gamma$, defining 
    \[
    \mathbb J_c \coloneqq \pi^\Gamma_* \mathbb L_c, \qquad \mathrm{and }\qquad \mathbb V_\tau \coloneqq \pi^\Gamma_* \mathbb M_\tau.
    \]

It follows that the distribution $\tilde D^{BF}\coloneqq \mathrm{span}_{\mathcal{C^1}(\tilde F^{BF}_\Gamma)}(\mathbb J_c + \mathcal{J}_c, \mathbb V_\tau + \mathcal{V}_\tau)  $ is isotropic and involutive, as 
    \begin{align*}
        &\langle \mathbb J_c + \mathcal{J}_c, \mathbb J_{c'} + \mathcal{J}_{c'}\rangle =0, & \langle \mathbb J_c + \mathcal{J}_c, \mathbb V_\tau + \mathcal{V}_{\tau}\rangle =0, && \langle \mathbb V_\tau + \mathcal{V}_\tau, \mathbb V_{\tau'} + \mathcal{V}_{\tau'}\rangle =0\\
        &[ \mathbb J_c + \mathcal{J}_c, \mathbb J_{c} + \mathcal{J}_{c}] =\frac{1}{2}\left( \mathbb J_{[c,c]} + \mathcal{J}{[c,c]}\right), & [ \mathbb J_c + \mathcal{J}_c, \mathbb V_{\tau} + \mathcal{V}_{\tau}] =- \mathbb J_{[c,\tau]} - \mathcal{J}_{[c,\tau]}, &&[ \mathbb V_\tau + \mathcal{V}_\tau, \mathbb V_\tau + \mathcal{V}_\tau] =0.
    \end{align*}
Furthermore, $\tilde D^{BF}$ is automatically the graph of the following Poisson structure
    \begin{align*}
        \pi_2^{BF}=\intc \frac{1}{2} B\left[\frac{\delta}{\delta B},\frac{\delta}{\delta B}\right] + \frac{\delta}{\delta A} \mathrm{d}_A \frac{\delta}{\delta B} + \Lambda\frac{\delta}{\delta A}\frac{\delta}{\delta A}.
    \end{align*}

\begin{remark}
It is interesting to notice that in the case of $BF$ one can in principle obtain a Dirac structure without the need of restricting to the submanifold $F_\Gamma^{BF}$ defined as those corner field configurations that extend to boundary conditions in a neighborhood of $\Gamma$ (or, in other words, the restriction of the Cauchy data to the corners). Furthermore, it is not even necessary to reduce our space of (pre-)corner fields, as it already admits the existence of a Poisson structure. This is confirmed by the fact that the kernel of the 1-forms over $\tilde{F}_\Gamma^{BF}$ generated by $\mathcal{J}_c$ and $\mathcal{V}_\tau$ is trivial, indeed
    \[
    \Ker{\mathcal{J}_c+\mathcal{V}_\tau} = \left\{\mathbb X= \intc \mathbb X_A \frac{\delta}{\delta A} + \mathbb X_B \frac{\delta}{\delta B} \ \bigg| \ \iota_{\mathbb X}\mathcal{J}_c=\iota_{\mathbb X}\mathcal{V}_\tau=0\right \}=\{0\}.
    \]
\end{remark}
However,  we have not yet taken into account the presence of a ghost for ghost field. Indeed, considering the particular case where $\tau=\mathrm{d}_A \phi$, we have, on the boundary
    \[
    M_{\mathrm{d}_A\phi} = \int_\Sigma \mathrm{d}_A\phi (F_A +\Lambda B) =\int_\Sigma \Lambda\phi \mathrm{d}_A B = L_{\Lambda \phi},
    \]
which tells us that the gauge symmetry generated by the ghost $\tau$ is reducible. To do so, one needs to introduce another ghost, in fact a ghost for ghost field $\phi\in\Omega^0[2](\Sigma,\mathfrak{g}) $, and introduce gauge transformations for $c$ and $\tau$ as follows
    \begin{align*}
        c\longmapsto c -\Lambda \phi, \qquad \qquad \tau \longmapsto \tau + \mathrm{d}_A \phi.
    \end{align*}
Furthermore, one needs to assume that $\phi$ transforms as $\phi \mapsto \phi + [c,\phi].$\footnote{
In order to implement this in the BFV formalism, we define $\mathcal{F}^{BF}_\Sigma=F^{BF}_\Sigma \times T^*(\Omega^0[1](\Sigma,\mathfrak{g})\oplus\Omega^1[1](\Sigma,\mathfrak{g})\oplus\Omega^0[2](\Sigma,\mathfrak{g})) $, with symplectic form and action given by 
\begin{align*}
    \varpi_{\Sigma}^{BF}=\int_\Sigma& \delta A \delta B + \delta A^\dag \delta c + \delta B^\dag \delta \tau + \delta \tau^\dag \delta\phi;\\
    \mathcal{S}^{BF}_\Sigma=\int_\Sigma & \frac{1}{2}A^\dag [c,c] + c\mathrm{d}_AB + \tau(F_A + \Lambda B) + \phi ([c,\tau^\dag]- \mathrm{d}_A B^\dag ) +  \Lambda(B\tau - A^\dag \phi) .
\end{align*}
}
This is where classical generalised geometry fails to include such contributions. Specifically, at the corner level, one needs to introduce a $P_\infty$ structure. Such case is best analysed in the BF$^2$V formalism. The space of BF$^2$V fields is $\mathcal{F}^{BF}_\Gamma=\tilde F^{BF}_\Gamma\times (\Omega^0[1](\Gamma,\mathfrak{g}) \oplus \Omega^0[2](\Gamma,\mathfrak{g}) \oplus T^*(\Omega^1[1](\Gamma,\mathfrak{g}))) $, which can be regrouped into superfields
    \begin{align*}
        \tilde{A}\coloneqq c+  A +  B^\dag, \qquad \mathrm{and }\qquad \tilde{B}\coloneqq \phi + \tau + B,
    \end{align*}
with symplectic form and action
    \begin{align*}
         \varpi_{\Gamma}^{BF}=\intc &\delta\tilde{A}\delta\tilde{B}\\
         \mathcal{S}_\Gamma^{BF}=\intc& \tilde{B}F_{\tilde{A}} +\frac{1}{2}\Lambda \tilde{B}\tilde{B}\\
         =\intc & \frac{1}{2}B[c,c]+ \tau \mathrm{d}_A c + \phi (F_A + [c,B^\dag])+\Lambda\left(\frac{1}{2}\tau^2 + B\phi\right).
    \end{align*}
\begin{remark}
    Notice that, if we set $B^\dag=0$ and $\phi=0$, we are in the case above, where the action $\mathcal{S}_\Gamma^{BF}$ is equivalent to the Poisson structure $\pi_2^{BF}$.
\end{remark}    
One way to interpret the full $\mathcal{S}_\Gamma^{BF}$ is as a functional over the tangent space (shifted by one) to the space $\mathcal{N}\coloneqq \tilde F^{BF}_\Gamma \oplus \Omega^2[-1](\Gamma,\mathfrak{g})\ni (A,B,B^\dag) $.\footnote{In particular, the conjugate fields to $A$, $B$ and $B^\dag$ are identified with their dual, respectively given by $\tau$, $c$ and $\phi$.}  In this case, $\mathcal{S}_\Gamma^{BF}$ is not of homogeneous degree 2 in the fiber of $T[1]\mathcal{N}$, but there is a contribution which is only linear in the fiber, hence defining a vector field on $\mathcal{N}$. Indeed, $\mathcal{S}_\Gamma^{BF}$ corresponds to a $P_\infty$ structure of the kind $\pi=\pi_1+\pi_2$, with 
    \begin{align*}
        \pi_1=& \intc (F_A+\Lambda B)\frac{\delta}{\delta B^\dag},\\
        \pi_2=&\pi_2^{BF}+\intc B^\dag\left[\frac{\delta}{\delta B^\dag},\frac{\delta}{\delta B}\right].
    \end{align*}
The CME guarantees that $\pi_1$ is cohomological, $\pi_2$ is a Poisson structure which is invariant under $\pi_1$. Furthermore, the degree-0 cohomology of $\pi_1$ amounts to the quotient of the classical functionals on $\tilde F^{BF}_\Gamma$ by the ideal generated by the constraint $F_A+\Lambda B$. In other words, it imposes the restriction to the Poisson submanifold $F_{\Gamma}^{BF}\coloneqq \{ (A,B)\in \tilde F^{BF}_\Gamma \ | \ F_A+\Lambda B=0 \}$. 

\begin{remark}
    Notice that, considering the space $F_{\Gamma}^{BF}$ exactly amounts to restricting ourselves to the field configurations on the corner which extend to Cauchy data on a neighborhood of the corner. Indeed, in the absence of the transversal jets, only the constraint $F_A+\Lambda B=0$ survives, since $\mathrm{d}_AB=0$ is a three-form, which trivially vanishes on the corners. What we have described is then known as a Dirac structure with support, whenever one considers the full homological description provided by the $P_\infty$ structure $\pi=\pi_1+\pi_2$. 
\end{remark}

In the remainder of the paper we will see how, for Palatini--Cartan gravity, both the reduction step and the restriction to a constrained set are necessary to fully describe the Dirac structure at the corners.

\section{Review of PC gravity on manifolds with boundary}\label{sec:pc_boundary}
In this section we provide a quick summary of the results in \cite{CCS2020}. 

\subsection{The Palatini--Cartan theory}
General Relativity is traditionally formulated (\cite{einstein1916}) using a (pseudo)Riemannian metric and its associated Levi-Civita connection. This leads to Einstein’s field equations, expressed in terms of the Ricci tensor $ R_{\mu\nu} $, which depends on the Christoffel symbols $ \Gamma_{\mu\nu}^\lambda $ and, consequently, on the metric $ g_{\mu\nu} $. The Ricci scalar $ R $ and the metric $ g_{\mu\nu} $ further appear in the Einstein--Hilbert action functional:  
\begin{equation}\label{eq:EHaction}
S=\int R\sqrt{-g}\,\de^4x.
\end{equation}  

Alternatively, one can consider the same action \cref{eq:EHaction}, but treat the connection $ \nabla $ as an independent variable, so that $ S = S[g, \nabla] $. This approach allows $ \Gamma $ to represent the coefficients of a generic affine connection, which may, in general, have torsion. If we impose metric compatibility, the variational principle ensures the torsion-free condition (and vice versa). This formulation is known as the Palatini formalism (\cite{Palatini1919}).

By means of the coframe formalism we can cast the theory into a different setting. Namely, we are able to express the affine and the metric structure of the theory through a coframe field (a.k.a.\ a vielbein) $e$ and a principal connection $\omega$, letting general relativity resemble, from a geometrical perspective, an SU$(N)$ gauge theory. This approach is known as the coframe or Palatini--Cartan formulation.

We start by considering a principal $SO(D-1,1)$ bundle $P$ over a $D$-dimensional manifold $M$. Furthermore, we fix a $D$-dimensional vector space $V$ with a reference metric $\eta$ with Lorentzian signature. Then, considering $\rho$ to be the fundamental representation of the Lorentz group, we define the associated vector bundle
    \[
    \mathcal{V}\coloneqq P \times_\rho V,
    \]
which takes the name of ``Minkowski bundle'' (a.k.a.\ ``fake tangent bundle''). In order to define a metric on $M$, which will ultimately be the dynamical object of the theory, we resort to coframes (or vielbein), defined to be non-degenerate linear isomorphisms of the kind
    \[
    e \ \colon \ TM\longrightarrow \mathcal{V}. 
    \]
One can then define a metric $g$ on $M$ as $g=e^*(\eta)$ or, equivalently, as $g_{\mu\nu}=e_\mu^a e_\nu^b \eta_{ab} $. The other fundamental dynamical field is a principal connection $\omega\in\mathcal{A}_M$, defined on $M$. In general, given a reference connection $\omega_0$, we have
    \[
    \omega-\omega_0\in  \Omega^1(M,\mathfrak{so}(D-1,1)) .
    \]
Since the Lie algebra of $SO(D-1,1)$ can be identified with $D\times D$ antisymmetric matrices, we model the space of connections by $\Omega^1(M,\wedge^2\mathcal{V})$.

From now on, we will adopt the following notation
    \[
    \Omega^{i,j}\coloneqq \Omega^i(M,\wedge^j \mathcal{V}).
    \]
Then we are ready to give the definition of PC theory. 
\begin{definition}\label{PC_def}
    The classical Palatini--Cartan theory is the assignment of the pair $(F_M,  S_M)$ to every pseudo Riemannian D-dimensional manifold admitting a Lorentzian structure,\footnote{Note that any particular choice of the Lorentzian structure on $V$ is immaterial, since a change in $V$ would just isomorphically reflect to the space of fields without changing $S_{M}$.} with space of fields given by
    \begin{align*}
    F_M=\Omega^1_{\mathrm{n.d.}}(M,\mathcal{V})\oplus \mathcal A_M\ni (e,\omega),
    \end{align*}
    and action functional (omitting the wedge product symbol)
    \begin{align}\label{eq:actionPCT}
    S_M=\int_{M}\frac1{(D-2)!}e^{D-2} F_\omega+\frac1{D!}\Lambda e^D,
    \end{align}
    where $\Lambda \in\mathbb R$ represents the cosmological constant and $e^k\coloneqq e\wedge \cdots \wedge e$. 
\end{definition}
\begin{remark}

    The integrand in \cref{eq:actionPCT} (i.e., the Lagrangian) is a form in $\Omega_M^{D,D}=\Omega^D(M,\wedge^D \mathcal{V})$, which can canonically be identified with a density \cite{CCS2020}. For simplicity, in the following we assume $M$ to be orientable. A choice of orientation is reflected by a
  ``trace’’ map $\Tr\colon\Lambda^D\to\mathbb{R}$, which applied to the Lagrangian density yields a top form. We
may then view the action as the integral of this top form on $M$ with the given orientation.
\end{remark}

The Euler--Lagrange equations coming from the action principle $\delta  S_M=0$ are, respectively for the variations in $e$ and $\omega$ 
\begin{align}
\frac1{(N-3)!}e^{D-3}F_\omega-\frac1{(D-1)!}\Lambda e^{D-1}&=0\label{eq:PCe}\\[7pt]
e^{D-3}\de_\omega e&= 0,\label{eq:PComega}
\end{align}
which, in $D=4$, reduce to
\begin{align}
e F_\omega-\frac1{3!}\Lambda e^{3}&=0\label{eq:PCe4}\\[7pt]
e \de_\omega e&= 0.\label{eq:PComega4}
\end{align}
By injectivity of the map $e\wedge\cdot$ on $\Omega^{2,1}$ (see \cite{CCS2020}), \cref{eq:PComega4} is equivalent to\footnote{This generalizes to a generic $D$.}
\begin{equation}\label{e:tf}
\de_\omega e = 0,
\end{equation}
which is the torsion-free condition. Therefore, this is the equation that identifies the Levi-Civita connection for the metric induced by $e$.
It is a matter of computations to see that the Palatini--Cartan action is classically equivalent (on-shell) to the Einstein--Hilbert one. 

\subsection{Boundary structure of PC gravity}\label{sec: bdry struct of PC}
In this section, we will assume that our stratified structure, with respect of \cref{def:strat_man}, has only two strata, i.e. we will work on a manifold with boundary. Moreover, from now on, we will specify to the case $D=4$. The boundary structure is then obtained by applying the KT construction \cite{KT1979} outlined in the previous section. See \cite{CCS2020, C23}
\begin{remark}
    Given $i\colon:\Sigma\hookrightarrow M$ the boundary inclusion. We will use the following definition for bundle valued differential forms on the boundary
    \begin{align*}
        \Omega^{i,j}_\Sigma\coloneqq\Omega^i(\Sigma,\textstyle{\bigwedge^j} i^*V).
    \end{align*}
\end{remark}

We notice that the integration by parts in the variation of the action $S_M$ gives rise to a boundary term
\begin{align}\label{noetherform}
    \tilde\alpha_\Sigma=\intb\frac12e^{2}\delta \omega,
\end{align}
which we call the Noether $1$-form. Such object is defined on the space of pre-boundary fields $\tilde{ F}_\Sigma$, given by the restriction of the fields on the boundary, their transversal components and their transversal jets (to all orders).\footnote{This description differs from the description given in \cite{CCS2020}, where only the restriction of the fields to the boundary was taken into consideration.} Specifically, considering an infinitesimal cylindrical neighborhood $M_\epsilon\coloneqq [0,\epsilon]\times \Sigma$ of $\Sigma$, we have that a field $\Phi\in\Omega^{\bullet,\bullet}_{M_\epsilon}$ splits as
    \[
    \Phi=\phi + \phi_n dx^n,
    \]
where $x^n$ is the coordinate along the interval $[0,\epsilon]$, and 
$$\phi\in\mathcal{C^1}([0,\epsilon])\otimes \Omega^{\bullet,\bullet}_\Sigma, \qquad \qquad \phi_n dx^n\in\Omega^1([0,\epsilon])\otimes \Omega^{\bullet-1,\bullet}_\Sigma.$$ 
The space of pre-boundary fields is then obtained by considering the restriction of the fields, their transversal components and their jets at $\Sigma\times\{0\}.$ In practice, we have
    \begin{equation*}
        \tilde{F}_\Sigma\coloneqq (\Omega^{1,1}_{\Sigma,\mathrm{n.d.}}\oplus \Omega^{0,1}_\Sigma ) \oplus (\mathcal{A}_\Sigma \oplus \Omega^{0,2}_\Sigma) \oplus \mathfrak{J}_n(\Omega^{1,1}_{\Sigma}\oplus \mathcal{A}_\Sigma)\ni (e,e_n,\omega,\omega_n,\partial^k_n e, \partial^k_n \omega),
    \end{equation*}
where $\mathfrak{J}_n(\Omega^{1,1}_{\Sigma}\oplus \mathcal{A}_\Sigma)$ denotes the space of transversal jets (to all orders) to $e$ and $\omega$.
\begin{remark}
    Notice that we denote fields on the boundary with the exact same notation of the ones in the bulk. Furthermore, we assume the coframe field $e$ restricted to the boundary to give rise to a non-degenerate boundary metric $g^\Sigma$, which is therefore either space-like or time-like. 
\end{remark}

\begin{lemma}\label{lem_degsympform}
    The $2$-form on ${\widetilde{F}}_\Sigma$ defined via
    \begin{align}\label{degsympform}
        \tilde\varpi_\Sigma=\delta\tilde \alpha_\Sigma=\intb e\delta e \delta \omega
    \end{align}
is closed and degenerate. Furthermore, $\Ker{\tilde{\varpi}_\Sigma}$ defines an involutive regular distribution. 
\end{lemma}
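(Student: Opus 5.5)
Closedness is immediate: $\tilde\varpi_\Sigma=\delta\tilde\alpha_\Sigma$ is $\delta$-exact and $\delta^2=0$. To write it in the form \eqref{degsympform}, I vary $\tilde\alpha_\Sigma=\intb\frac12 e^2\delta\omega$. Since $\delta(\frac12 e^2)=e\,\delta e$ (the $e$'s are odd in form degree and $\delta e$ carries the extra cohomological sign), this gives
\[
\tilde\varpi_\Sigma=\intb e\,\delta e\,\delta\omega .
\]
Only the restrictions $(e,\omega)$ to $\Sigma$ appear. In particular, no transversal components $e_n,\omega_n$ and no jets $\partial_n^k e,\partial_n^k\omega$ enter.

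Degeneracy then follows by exhibiting kernel elements. First, every vector field on $\tilde F_\Sigma$ whose components along $e$ and $\omega$ vanish (i.e.\ those moving only $e_n,\omega_n$ and the transversal jets) contracts to zero. This already gives an infinite-dimensional kernel. The more interesting part lives along $\omega$. A vector field $\mathbb X=\mathbb X_e\frac{\delta}{\delta e}+\mathbb X_\omega\frac{\delta}{\delta\omega}$ satisfies
\[
\iota_{\mathbb X}\tilde\varpi_\Sigma=\intb e\,\mathbb X_e\,\delta\omega \pm e\,\delta e\,\mathbb X_\omega .
\]
It lies in the kernel iff $e\,\mathbb X_e=0$ in $\Omega^{2,2}_\Sigma$ and $e\,\mathbb X_\omega=0$ in $\Omega^{2,3}_\Sigma$. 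For the first condition I use that $e\wedge\cdot\colon\Omega^{1,1}_\Sigma\to\Omega^{2,2}_\Sigma$ is injective for nondegenerate boundary $e$, via the linear-algebra results of \cite{CCS2020}; this forces $\mathbb X_e=0$. For the second, the map $e\wedge\cdot\colon\Omega^{1,2}_\Sigma\to\Omega^{2,3}_\Sigma$ is surjective but not injective. A dimension count gives $18$ versus $12$ fibrewise, so there is a $6$-dimensional kernel, and any nonzero $\mathbb X_\omega$ in it is a nontrivial degenerate direction.

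For involutivity and regularity, the kernel is
\[
\Ker{\tilde\varpi_\Sigma}=\{\mathbb X_e=0,\ e\,\mathbb X_\omega=0\}\oplus\langle\text{transversal directions}\rangle .
\]
Its fibre at each point is the kernel of a constant-rank bundle map depending only on $e$. So it is a subbundle of constant rank (regular), using nondegeneracy of $g^\Sigma$ to guarantee constant rank. Involutivity holds because the defining condition $e\,\mathbb X_\omega=0$ involves only $e$, which the kernel fields do not move. Hence for $\mathbb X,\mathbb Y$ in the kernel, $[\mathbb X,\mathbb Y]_e=0$ and $[\mathbb X,\mathbb Y]_\omega=\mathbb X(\mathbb Y_\omega)-\mathbb Y(\mathbb X_\omega)$ still satisfies $e[\mathbb X,\mathbb Y]_\omega=\mathbb X(e\,\mathbb Y_\omega)-\mathbb Y(e\,\mathbb X_\omega)=0$. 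Alternatively, the kernel of any closed $2$-form is involutive by the Cartan identity $\iota_{[\mathbb X,\mathbb Y]}=[\mathrm L_{\mathbb X},\iota_{\mathbb Y}]$ together with $\mathrm L_{\mathbb X}\tilde\varpi_\Sigma=\delta\iota_{\mathbb X}\tilde\varpi_\Sigma=0$. I would use this as the main argument and keep the explicit description only for regularity.

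The main obstacle is the fibrewise linear algebra: injectivity of $e\wedge$ on $\Omega^{1,1}_\Sigma$ and constant rank of $e\wedge$ on $\Omega^{1,2}_\Sigma$ for nondegenerate boundary coframes. This is where the nondegeneracy of $g^\Sigma$ enters. I would quote it from \cite{CCS2020}, or verify it in an adapted basis where $e$ spans a nondegenerate $3$-plane in $V$.
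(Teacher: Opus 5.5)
Your proposal is correct and matches the paper's proof. Closedness follows from exactness. The kernel splits into the transversal field/jet directions plus the $\omega$-shifts with $\mathbb X_e=0$ and $e\,\mathbb X_\omega=0$, using that $e\wedge\cdot$ is injective on $\Omega^{1,1}_\Sigma$ but not on $\Omega^{1,2}_\Sigma$; regularity comes from constant rank, and involutivity from its being the kernel of a closed $2$-form. One minor point: you do not need nondegeneracy of $g^\Sigma$ for the rank of $e\wedge\cdot$, since that only requires $e$ to be injective, so this does not affect your argument.
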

\begin{proof}
    The two-form is closed because it is exact. The kernel of the pre-symplectic form is not empty, indeed, it splits as
        \[
        \Ker{\tilde\varpi_\Sigma}=\mathfrak{A}\oplus\mathfrak{X}\left( \Omega^{0,1}_\Sigma \oplus \Omega^{0,2}_\Sigma \oplus\mathfrak{J}_n(\Omega^{1,1}_{\Sigma}\oplus \mathcal{A}_\Sigma)\right),
        \]
    since all the vector fields $\mathbb Y$ tangent to the transversal fields and jets are automatically satisfy $\iota_{\mathbb Y}\tilde{\varpi_\Sigma}=0 $.  The distribution $\mathfrak{A}\subset \mathfrak{X}\left(\Omega^{1,1}_{\Sigma,\mathrm{n.d.}}\oplus\mathcal{A}_\Sigma\right) $ is such that, given
    \begin{align*}
        \mathbb X=\int_\Sigma \mathbb X_e\frac{\delta}{\delta e}+\mathbb X_\omega\frac{\delta}{\delta \omega}\in \mathfrak{X}\left(\Omega^{1,1}_{\Sigma,\mathrm{n.d.}}\oplus\mathcal{A}_\Sigma\right) ,
    \end{align*}
    we get
    \begin{align*}
       \iota_{\mathbb X} \tilde\varpi_\Sigma=\intb e(\mathbb X_\omega\delta e + \mathbb X_e\delta \omega)=0.
    \end{align*}
    Since we already mentioned that $e\wedge\cdot$ is injective on $\Omega^{1,1}_\Sigma$ but not on $\Omega^{1,2}_\Sigma$, it follows that $\mathbb X_e=0$ and $e\mathbb X_\omega=0$, namely the flow of $\mathbb X$ acts on $\omega$ as $\omega\mapsto \omega+v$ with $v\in\Omega^{1,2}_\Sigma$ such that $ev=0$. The distribution is clearly regular. It is involutive because it is the kernel of a closed 2-form.
\end{proof}

Before performing the quotient with respect to the above distribution, we give the definition of some maps. 
\begin{definition}\label{def_maps}
Considering $e\in\Omega_\Sigma^{1,1}$, we define the following:
    \begin{align*}
                W_k^{\Sigma, (i,j)}\colon \Omega_\Sigma^{i,j}  & \longrightarrow \Omega_\Sigma^{i+k,j+k} \\
                \alpha  & \longmapsto    e^k\wedge \alpha \nonumber
    \end{align*}  
    \begin{align*}
                    \varrho^{(i,j)} \colon \Omega_{\Sigma}^{i,j}  & \longrightarrow \Omega_{\Sigma}^{i+1,j-1} \\
                    \alpha & \longmapsto [e,\alpha] \nonumber
    \end{align*}

\end{definition}

Then the geometric phase space for the Palatini--Cartan theory is the symplectic manifold ($ F_{\Sigma}, \varpi_\Sigma)$ obtained by a symplectic reduction of the space of pre-boundary fields ${\widetilde{F}}_\Sigma$.
The symplectic form $\varpi$ is the unique $2$-form on ${{F}}_\Sigma$ such that $p^*\varpi_\Sigma = \tilde\varpi_\Sigma$, where $p\colon {\widetilde F}_\Sigma\to{F}_\Sigma$ is the canonical projection, and it is given by
    \begin{align}\label{sympl_form_geo_grav}
        \varpi_\Sigma=\int_{\Sigma}e\delta e \delta [\omega],
    \end{align}
    where
\begin{align}\label{eqcl}
    \omega^\prime\sim\omega\quad\Longleftrightarrow\quad\omega^\prime-\omega \in\mathrm{Ker}W_{1}^{\Sigma, (1,2)}.
\end{align}
We refer to this equivalence class as $\mathcal{A}_{\mathrm{red}}(\Sigma)$.

    As mentioned in \cref{rem: tang ev eq}, we interpret the field equations containing non-transversal derivatives to the boundary as local functionals (constraints) on the space of pre-boundary fields of the theory.

    \noindent Such constraints are given by
    \begin{align}
        \de_\omega e = 0, \qquad \qquad 
        eF_\omega + \frac{1}{3!}\Lambda e^3=0,
    \end{align}
    while the transversal equations are 
    \begin{align}
        &\de_\omega e_n + \de_{\omega_n }e  =0  \label{eq: transv torsion}\\
        & e_n F_\omega - e F_{\omega_n}=0 \label{eq: transversal Einstein},
    \end{align}
    which are evolution equations.\footnote{We will see in section \ref{sec: Bianchi} how they can be used to fix the representatives of the equivalence classes of the transversal jets of order one.}
    
    We notice immediately that the constraints are not well defined on $F_\Sigma$, as the torsion equation is not invariant under the action of $\Ker{\tilde{\varpi}_\Sigma}$.\footnote{Specifically, $d_\omega e=0$ is not invariant under $\omega\mapsto \omega+v$.}  However, given that $\Ker{\tilde{\varpi}_\Sigma}=\mathfrak{A}\oplus \mathfrak{E}$, with the identification $\mathfrak{E}\coloneqq\mathfrak{X}\left( \Omega^{0,1}_\Sigma \oplus \Omega^{0,2}_\Sigma \oplus\mathfrak{J}_n(\Omega^{1,1}_{\Sigma}\oplus \mathcal{A}_\Sigma)\right)$, we can perform a partial reduction $\tilde{F}_\Sigma/\mathfrak{E}$.

    \begin{lemma}
        $\tilde{F}_\Sigma/\mathfrak{E}$ is diffeomorphic to the space 
        \[
        \tilde{F}_\Sigma^n\coloneqq \{ ({e,\omega,\epsilon_n})\in \Omega^{1,1}_{\Sigma,\mathrm{n.d.}}\oplus\mathcal{A}_\Sigma\oplus\Omega^{0,1}_\Sigma \ | \ (e,\en)  \text{ forms a basis of $\mathcal{V}$ and $\delta \en=0$}  \}\simeq \Omega^{1,1}_{\Sigma,\mathrm{n.d.}}\oplus\mathcal{A}_\Sigma
        \]
        
    \end{lemma}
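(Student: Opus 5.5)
The plan is to exhibit the quotient map explicitly and then check that it is a diffeomorphism onto $\tilde F^n_\Sigma$. The distribution $\mathfrak{E}=\mathfrak{X}\bigl(\Omega^{0,1}_\Sigma\oplus\Omega^{0,2}_\Sigma\oplus\mathfrak{J}_n(\Omega^{1,1}_\Sigma\oplus\mathcal A_\Sigma)\bigr)$ consists of all vector fields tangent to the factor $\Omega^{0,1}_\Sigma\oplus\Omega^{0,2}_\Sigma\oplus\mathfrak{J}_n(\cdots)$ of the product decomposition of $\tilde F_\Sigma$. It is therefore the vertical distribution of the projection
\begin{align*}
p\colon \tilde F_\Sigma\longrightarrow \Omega^{1,1}_{\Sigma,\mathrm{n.d.}}\oplus\mathcal A_\Sigma,\qquad (e,e_n,\omega,\omega_n,\partial_n^k e,\partial_n^k\omega)\longmapsto (e,\omega).
\end{align*}
Its leaves are the fibres of $p$, namely affine (contractible, connected) spaces $\{(e,\omega)\}\times(\text{transversal data})$. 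Since $\mathfrak{E}$ is regular and involutive (it is a product foliation), the leaf space is smooth and canonically $\tilde F_\Sigma/\mathfrak{E}\cong\Omega^{1,1}_{\Sigma,\mathrm{n.d.}}\oplus\mathcal A_\Sigma$. This already gives the last isomorphism in the statement.

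It remains to identify this with $\tilde F^n_\Sigma$. A point of $\tilde F^n_\Sigma$ is a triple $(e,\omega,\epsilon_n)$ in which $\epsilon_n$ completes $e$ to a basis of $\mathcal V$ and is frozen, $\delta\epsilon_n=0$. The intended reading is that $\epsilon_n$ is a fixed, non-dynamical choice of section for each $e$. Concretely, one fixes once and for all a smooth choice $e\mapsto\epsilon_n(e)$ with $(e,\epsilon_n(e))$ a frame; for instance, one can take $\epsilon_n$ constant, normalized and transverse to $\operatorname{Im}e$. This is possible because $e$ is non-degenerate, so $\operatorname{Im}e_x$ is a three-dimensional subspace of $\mathcal V_x$ on which $\eta$ restricts nondegenerately, given the assumption that $g^\Sigma$ is spacelike or timelike.

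The map $\tilde F^n_\Sigma\to \Omega^{1,1}_{\Sigma,\mathrm{n.d.}}\oplus\mathcal A_\Sigma$ forgetting $\epsilon_n$ is then a bijection, with inverse $(e,\omega)\mapsto(e,\omega,\epsilon_n(e))$. Both maps are smooth, so the forgetful map is a diffeomorphism. Composing it with the identification of the previous paragraph gives the claimed diffeomorphism $\tilde F_\Sigma/\mathfrak{E}\cong\tilde F^n_\Sigma$.

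The main subtlety is to make precise the role of $\epsilon_n$ and the condition $\delta\epsilon_n=0$. It must not introduce new degrees of freedom, and it should be viewed as the gauge-fixed residue of the transversal component $e_n$, which is quotiented out by $\mathfrak{E}$. I would handle this by noting that a representative of $e_n$ modulo $\mathfrak E$ carries no information, so one may replace it by the fixed transverse vector $\epsilon_n$. Existence of a smooth global choice follows, as above, from the non-degeneracy of the induced boundary metric.
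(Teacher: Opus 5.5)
\textbf{The gap: a field-independent $\epsilon_n$ cannot be chosen globally.} The weak step is identifying the leaf space with $\tilde F^n_\Sigma$. You claim that $\epsilon_n$ can be chosen ``constant, normalized and transverse to $\operatorname{Im}e$'' for every $e\in\Omega^{1,1}_{\Sigma,\mathrm{n.d.}}$. The condition $\delta\epsilon_n=0$ in the definition of $\tilde F^n_\Sigma$ means that $\epsilon_n$ does not depend on the fields. Your justification, nondegeneracy of $\eta$ on $\operatorname{Im}e_x$, does not deliver this. It yields the unit normal to $\operatorname{Im}e_x$, which varies with $e$, so the graph $\{(e,\omega,\epsilon_n(e))\}$ has $\delta\epsilon_n\neq 0$ and is not $\tilde F^n_\Sigma$. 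Your closing remark that a smooth global choice exists is true only for such an $e$-dependent choice.

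A field-independent choice genuinely fails when $g^\Sigma$ is Lorentzian. Every nonzero vector of $\mathcal V_x$, whether timelike, spacelike or null, lies in some Lorentzian $3$-plane. So for any fixed $\epsilon_n$ there are timelike coframes $e$ with $\epsilon_n(x)\in\operatorname{Im}e_x$, and then $(e,\epsilon_n)$ is not a frame. Hence your inverse $(e,\omega)\mapsto(e,\omega,\epsilon_n)$ is only defined on the open set of $e$ transverse to the chosen $\epsilon_n$, and the global bijection you assert fails.

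This is exactly the point the paper's proof addresses. The representative $\epsilon_n\in[e_n]$ depends on $e$ in general, but around any $e$ there is an open $\mathcal U\subset\Omega^{1,1}_{\Sigma,\mathrm{n.d.}}$ on which a fixed $\epsilon_n$ works, so $\delta\epsilon_n=0$ there. A global field-independent choice exists only for space-like $\Sigma$, where any fixed time-like $\epsilon_n$ is transverse to every spacelike $\operatorname{Im}e_x$. To repair your argument you must either restrict to such a $\mathcal U$, making the diffeomorphism local in $e$, or restrict to the space-like case.

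\textbf{A smaller inaccuracy: the leaves of $\mathfrak E$ are not affine.} Your product description is correct for $\omega_n$ and the transversal jets, which are free. However, $e_n$ is constrained by the requirement that $e+e_n\,\mathrm{d}x^n$ be a coframe, which is why the paper singles it out. The $e_n$-part of each fibre of $p$ is therefore the open, $e$-dependent set of sections avoiding the hyperplane field $\operatorname{Im}e$. Pointwise this is the complement of a hyperplane, so the fibre is disconnected: its pieces correspond to the two sides of $\operatorname{Im}e$. Each side is convex, so the leaves are contractible only once you fix the side. Otherwise each $(e,\omega)$ corresponds to more than one leaf.
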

    \begin{proof}
        Clearly, quotienting by $\mathfrak{E}$ removes all the transversal fields and jets at all orders. In particular, since the only expressions containing $\omega_n, \partial^k_n e$ and $\partial^k_n\omega$ are evolution equations ---which we do not take into consideration when defining the phase space--- we can just discard them. 
        
        The field $e_n$ is different, since we require $e+e_n dx^n$ to define a coframe field on $M_{\Sigma_\epsilon}$, i.e. to be such that $(e,e_n)$ is a (local) basis for $\mathcal{V}$. 
        In the quotient space, we can then choose a representative $\epsilon_n\in [e_n]$ to be a nowhere vanishing element of $\Omega^{0,1}_\Sigma$ such that the above requirement is satisfied.

        In principle, such representative depends on $e$. However, for any $e\in\Omega^{1,1}_{\Sigma,\mathrm{n.d.}}$, there exists an open $\mathcal{U}\subset \Omega^{1,1}_{\Sigma,\mathrm{n.d.}}$ such that $\delta\en=0$ on $\mathcal{U}$. The situation is further simplified if $\Sigma$ is assumed to be space-like, as the choice of any time-like $\en$ defines a globally (with respect to the full $\Omega^{1,1}_{\Sigma,\mathrm{n.d.}}$) constant section. 
    \end{proof}
    \begin{remark}
        The expression $\frac{1}{3!}\epsilon_n e^3$ is a nowhere vanishing element of $\Omega^{3,4}_\Sigma$ and can be identified with a volume form on $\Sigma$, after using the identification $\wedge^4 \mathcal{V}\simeq \mathcal{C^1}(\Sigma)$.
    \end{remark}

    The partial reduction has not solved the problem of the torsion constraint $\de_\omega e =0$ not being invariant under the action of the residual vector fields in the kernel of $\tilde{\varpi}_\Sigma$, which is identified with the induced distribution on $\tilde{F}^n_\Sigma$ denoted (by abuse of notation) by $\mathfrak{U}\subset \mathfrak{X}(\tilde{F}^n_\Sigma)$. 
    However notice that, due to the injectivity of the map $W_e^{1,2}\colon\alpha\to e\wedge\alpha$,  $\de_\omega e=0$ is equivalent to $e \de_\omega e=0$  in the bulk. On the contrary, when moving to the boundary such injectivity property is lost, and therefore one needs to complement $e \de_\omega e=0$ ---which is invariant under the action of $\Ker{\tilde\varpi}_\Sigma$---\footnote{The simple computation is given by $e\de_{\omega + v} e = e \de_\omega e + e[v,e] = e \de_\omega e + [ev,e] = e \de_\omega e $, using $[e,e]=0$.} with an appropriate extra constraint, ensuring that $\de_\omega e=0$ is satisfied on the boundary.

The following lemma allows to rewrite $\de_\omega e=0$ as an invariant constraint and an extra equation, which we call structural constraint for reasons that will soon become apparent.

\begin{lemma}\label{strconstr_free_deg}
Assuming the boundary metric is non-degenerate, given $\alpha\in\Omega_\Sigma^{2,1}$, we have that
    \[
    \alpha=0 \qquad \Leftrightarrow \qquad \begin{cases}
        e \alpha =0\\[6pt]
        \en \alpha\in \Ima W_{1}^{\Sigma,(1,1)}\\[6pt]
    \end{cases}
    \]
which implies, setting $\alpha = \de_\omega e$, that
    \[
    \de_\omega e=0 \qquad \Leftrightarrow \qquad \begin{cases}
        e \de_\omega e =0\\[6pt]
        \en \de_\omega e = e\sigma\\[6pt]
    \end{cases}
    \]
    for some $\sigma\in\Omega^{1,1}_\Sigma$. 
\end{lemma}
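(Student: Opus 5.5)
The forward implication is immediate: if $\alpha=0$ then $e\alpha=0$, and $\en\alpha=0=e\cdot 0$ lies in $\Ima W_1^{\Sigma,(1,1)}$. For the converse I will argue pointwise and reduce everything to linear algebra. At each point of $\Sigma$ I choose a basis $(e_1,e_2,e_3,\en)$ of $\mathcal V$, where $e_i=e(\partial_i)$ for local coordinates on $\Sigma$; such a basis exists by the non-degeneracy assumption on the boundary metric together with the requirement that $(e,\en)$ span $\mathcal V$. In these coordinates $\alpha\in\Omega^{2,1}_\Sigma$ is given by components $\alpha_{ij}^a$ with $a\in\{1,2,3,n\}$. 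I decompose $\alpha=\alpha^k e_k+\alpha^n\en$, where $\alpha^k,\alpha^n$ are scalar-valued $2$-forms on $\Sigma$, so there are $3\cdot 3+3=12$ unknowns.

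Next I write out the two conditions. The first, $e\alpha=0$, lives in $\Omega^{3,2}_\Sigma$; pointwise this is the single form index $dx^1dx^2dx^3$ times $\wedge^2\mathcal V$, which gives $6$ equations. These are
\[
e\wedge\alpha=\sum_{i}dx^i e_i\wedge(\alpha^k e_k+\alpha^n\en),
\]
and its $e_k\wedge\en$ components force $\alpha^n=0$, i.e.\ three equations. The $e_j\wedge e_k$ components then give three equations on the $3\times3$ block $\alpha^k_{ij}$, namely that its antisymmetrized trace-like part vanishes. Dualizing $\alpha^k$ to a matrix $A^{kl}$ via $\epsilon^{lij}$, these say that $A$ is symmetric. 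This is exactly the statement that $W_1^{\Sigma,(2,1)}$ has a $6$-dimensional kernel, in agreement with the known failure of injectivity of $e\wedge\cdot$ on the boundary.

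Then I handle the second condition. Using $\alpha^n=0$, I get $\en\alpha=\alpha^k\,\en\wedge e_k\in\Omega^{2,2}_\Sigma$, while $e\sigma=\sum_i dx^i\,e_i\wedge\sigma^a_j dx^j e_a$. The $e_i\wedge e_k$ components of $e\sigma$ must vanish, since $\en\alpha$ has none, and the $e_i\wedge\en$ components reproduce $\alpha^k$. The main obstacle is the precise counting: I must show that the image of $\sigma\mapsto e\sigma$, intersected with the subspace $\en\wedge(\cdot)$, meets the symmetric $A$ only in $0$. I would do this by solving explicitly. Writing $\sigma=\sigma^k e_k+\sigma^n\en$ with $\sigma^k_j$ a $3\times3$ matrix, the vanishing of the $e\wedge e$ part requires $\sigma^k_j$ to be symmetric in an appropriate sense, and the $\en$ part then gives $\alpha^k=dx^k\wedge\sigma^n$ up to sign. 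In dual form this means $A^{kl}=\epsilon^{kl m}\sigma^n_m$ is antisymmetric. A matrix that is both symmetric and antisymmetric vanishes, so $\alpha^k=0$ and hence $\alpha=0$.

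Finally, setting $\alpha=\de_\omega e$ yields the second displayed equivalence, with $\en\de_\omega e=e\sigma$ as a rewriting of the membership in $\Ima W_1^{\Sigma,(1,1)}$. I would double-check the index conventions in the third step by a dimension count, verifying that the symmetric $6$-dimensional kernel of $e\wedge\cdot$ and the antisymmetric $3$-dimensional part reachable through $\Ima W_1$ are complementary.
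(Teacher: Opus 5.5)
Your proof is correct, but it takes a different route from the one the paper has in mind. The paper quotes this lemma from \cite{CCS2020} without proof. The technique it uses for the analogous corner statement (Lemma~\ref{lem: (2,1) constraint vanish}) is a thickening argument. On $\Sigma\times[0,1]$ put $E=e+\epsilon_n\,\mathrm{d}x^n$ and $A=\alpha+\sigma\,\mathrm{d}x^n\in\Omega^{2,1}$, so that $E\wedge A=e\alpha+(e\sigma\pm\epsilon_n\alpha)\,\mathrm{d}x^n$. If $e\alpha=0$ and $\epsilon_n\alpha=e\sigma'$, choosing $\sigma=\mp\sigma'$ gives $E\wedge A=0$. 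Injectivity of $W_1^{(2,1)}$ in four dimensions (\cref{thm: coframes prop}) then yields $A=0$, i.e.\ $\alpha=0$ and, for free, $\sigma'=0$.

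You instead do the pointwise linear algebra in the frame $(e_1,e_2,e_3,\epsilon_n)$. There $e\alpha=0$ kills $\alpha^n$ and makes the dual matrix $A^{kl}$ symmetric, while $\epsilon_n\alpha=e\sigma$ gives $A^{kl}=\pm\epsilon^{klm}\sigma^n_m$, which is antisymmetric. Your version is self-contained, with no appeal to the bulk injectivity theorem. It also makes the mechanism explicit: $\ker W_1^{\Sigma,(2,1)}$ consists of the symmetric $A$ with $\alpha^n=0$, while the condition $\epsilon_n\alpha\in\mathrm{Im}\,W_1^{\Sigma,(1,1)}$ only allows antisymmetric $A$. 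The thickening argument, by contrast, is shorter and frame-free. It also carries over unchanged to the corner lemmas with two transversal directions, where your component bookkeeping would grow considerably.

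There is one inaccuracy, harmless for the conclusion. The vanishing of the $e_i\wedge e_k$ components of $e\sigma$ does not make $\sigma^k_j$ symmetric; it forces $\sigma^k=0$. Contracting those nine equations with $\epsilon_{ikm}$ gives $\sigma^k_j=\delta^k_j\,\sigma^l_l$, and taking the trace shows $\sigma^l_l=0$. You never actually use this, since the $e_k\wedge\epsilon_n$ components involve $\sigma^n$ only. Once $\alpha=0$, the relation $\mathrm{d}x^k\wedge\sigma^n=0$ for all $k$ gives $\sigma^n=0$, so you recover $\sigma=0$ too. Note also that neither argument uses the non-degeneracy of $g^\Sigma$ beyond the existence of the frame $(e_1,e_2,e_3,\epsilon_n)$.
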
\
\begin{remark}\label{rem: brdy strct constr}
    A quick computation shows that the transversal component of $e\de_\omega e=0$ is given by  $$ e_n \de_\omega e = e(\de_\omega e_n + \de_{\omega_n }e)= e(\de_\omega e_n + \partial_n e + [\omega_n,e]),$$ which essentially reproduces the extra constraint, after setting $\sigma=\sigma(e,e_n,\omega,\omega_n, \partial_n e)$. Working on the quotient space $\tilde{F}^n_\Sigma$, the dependence on $\omega_n $ and $\partial_n e$ are lost and $e_n $ is replaced by $\en$.
\end{remark}

\begin{remark}
    If the boundary metric is degenerate, one needs an extra constrain to be imposed in order to obtain $\de_\omega e=0$ on the boundary. We refer to \cite{CCT21} for more details on this specific instance. In the following, we will assume $e\in\Omega^{1,1}_\Sigma$ to define a  non-degenerate boundary metric whenever unspecified.  
\end{remark}

The following theorem allows to make sense of the structural constraint as a way to fix a representative of $[\omega]\in\mathcal{A}_{\mathrm{red}}$. In other words, ensuring the equivalence of $\de_\omega e=0$ and $e \de_\omega e=0$ on the boundary, is enough to determine uniquely the representatives of the equivalence classes defined in \cref{eqcl}.

\begin{theorem}[\cite{CCS2020}]\label{uniquerep_free}
Given $\omega'\in\mathcal{A}_\Sigma$, there exists a unique decomposition 
    \[
    \omega'= \omega + v
    \]
    such that
    \begin{align}\label{eq: str constr bdry}
    ev=0, \quad \mathrm{and} \quad \epsilon_n \de_\omega e = e\sigma    
    \end{align}
    
    Furthermore, the space $$F_\Sigma^{n}\coloneqq\{ (e,\omega)\in\Omega^{1,1}_{\en}\oplus \Omega^{1,2}_\Sigma \ | \ \epsilon_n \de_\omega e = e\sigma \}\subset \tilde F_\Sigma$$ is diffeomorphic to $\tilde F^n_\Sigma/ \mathfrak{U}\simeq \tilde{F}_\Sigma/(\Ker{\tilde{\varpi}_\Sigma})\simeq F_\Sigma$. We use such diffeomorphism to endow $F_\Sigma^{n}$ with a symplectic form.
\end{theorem}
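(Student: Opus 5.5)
The plan is to reduce the statement to a pointwise linear-algebra problem on the fibres of $\Omega^{1,2}_\Sigma$. Fix $e\in\Omega^{1,1}_{\Sigma,\mathrm{n.d.}}$ and $\en$ such that $(e,\en)$ is a basis of $\mathcal V$. Write $\omega=\omega'-v$. Since $[v,e]$ is linear in $v$,
\begin{align*}
\de_{\omega'-v}e=\de_{\omega'}e-[v,e],
\end{align*}
so the structural constraint $\en\de_\omega e\in\Ima W_1^{\Sigma,(1,1)}$ becomes
\begin{align*}
\en[v,e]\equiv \en\de_{\omega'}e \mod \Ima W_1^{\Sigma,(1,1)}.
\end{align*}
This is an affine equation for $v$, to be solved inside $\Ker W_1^{\Sigma,(1,2)}$. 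Existence and uniqueness of the decomposition are therefore equivalent to the claim that the linear map
\begin{align*}
\Phi\colon \Ker W_1^{\Sigma,(1,2)}\longrightarrow \Omega^{2,2}_\Sigma/\Ima W_1^{\Sigma,(1,1)},\qquad v\longmapsto [\en\varrho^{(1,2)}(v)],
\end{align*}
is an isomorphism. Because $\Phi$ is algebraic (it involves no derivatives), it is enough to check this fibrewise at each point of $\Sigma$, where it becomes a statement about finite-dimensional vector spaces.

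At a point I would choose a basis adapted to $e$ and $\en$, using the non-degeneracy of the boundary metric. I would then compute $\dim\Ker W_1^{(1,2)}$. Here $W_1^{(1,2)}$ maps a space of dimension $3\cdot 6=18$ to one of dimension $3\cdot 4=12$, and I expect it to be surjective, so the kernel has dimension $6$. I would also compute $\dim\bigl(\Omega^{2,2}/\Ima W_1^{(1,1)}\bigr)$. $W_1^{(1,1)}$ is injective on $\Omega^{1,1}$, which has dimension $12$, and $\Omega^{2,2}$ has dimension $18$, so the quotient also has dimension $6$. With equal dimensions, it then suffices to prove injectivity of $\Phi$. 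Take $v$ with $ev=0$ and $\en[v,e]=e\tau$ for some $\tau$. Using the injectivity results for the $W$ maps in \cref{strconstr_free_deg}, I would show that these conditions force $[v,e]=0$ componentwise, and then conclude $v=0$ because $\varrho$ restricted to $\Ker W_1$ is injective. This last step is a direct component computation in which the signature enters. I expect this injectivity check, which is where the non-degeneracy of $g^\Sigma$ is used, to be the main obstacle. The fallback is to write $v$ in components $v^{ab}_\mu$ and solve the resulting $6\times 6$ linear system explicitly.

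With the fibrewise isomorphism established, the map $\omega'\mapsto\omega$ is smooth: it is affine in $\omega'$ with coefficients depending smoothly on $e$. It therefore defines a smooth section of the quotient map $\tilde F^n_\Sigma\to\tilde F^n_\Sigma/\mathfrak U$, and its image is exactly $F^n_\Sigma$. Hence $F^n_\Sigma\cong\tilde F^n_\Sigma/\mathfrak U$. Combining this with the preceding lemma $\tilde F_\Sigma/\mathfrak E\cong\tilde F^n_\Sigma$ and with $\Ker{\tilde\varpi_\Sigma}=\mathfrak A\oplus\mathfrak E$ from \cref{lem_degsympform} gives $F^n_\Sigma\cong\tilde F_\Sigma/\Ker{\tilde\varpi_\Sigma}=F_\Sigma$. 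Finally, the symplectic form on $F^n_\Sigma$ is obtained by pulling back $\varpi_\Sigma$ along this diffeomorphism.
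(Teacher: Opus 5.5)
Your proposal is correct. It is essentially the argument of \cite{CCS2020} that the paper relies on; the paper itself gives no proof of this theorem. In both, one reduces to the pointwise linear map $v\mapsto[\epsilon_n[v,e]]$ from the $6$-dimensional $\ker W_1^{\Sigma,(1,2)}$ to the $6$-dimensional quotient $\Omega^{2,2}_\Sigma/\Ima W_1^{\Sigma,(1,1)}$. Injectivity then comes from applying \cref{strconstr_free_deg} to $\alpha=[v,e]$, whose hypothesis $e\alpha=0$ holds because $ev=0$ and $[e,e]=0$, followed by injectivity of $\varrho^{(1,2)}$ on $\ker W_1^{\Sigma,(1,2)}$.

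That last injectivity is the one non-formal input and is exactly where the non-degeneracy of $g^\Sigma$ is used. You should state it and verify it as a separate lemma rather than leave it implicit.
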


We now display the constraints of the theory, which we define as local functionals on $F_\Sigma$ by means of Lagrange multiplier. The vanishing locus $C_\Sigma$ of these functionals is then the phase space of the theory.
\begin{definition}\label{constraints_gravity}
Let $c\in\Omega_\Sigma^{0,2}[1]$, $\xi\in\mathfrak{X}(\Sigma)[1]$ and $\lambda\in C^\infty(\Sigma)[1]$. Then, we define the following functionals
    \begin{align*}
		L_c & = \int_\Sigma ce\de_\omega e \\[3pt]
		P_\xi & = \int_\Sigma \frac{1}{2}\iota_\xi(e^2)F_\omega+\iota_\xi(\omega-\omega_0)e\de_\omega e \\[3pt]
		H_\lambda & = \int_\Sigma \lambda \epsilon_n\Big(eF_\omega+\frac{\Lambda}{3!}e^3\Big)
	\end{align*}
We refer to these as the constraints of the Palatini--Cartan theory.
\end{definition}
\begin{remark}
 We have taken the Lagrange multipliers\footnote{In principle, one would need just one Lagrange multiplier $\mu\in\Omega^{0,1}_\Sigma[1] $ to impose the constraint $e F_\omega=0$; however, we can split $\mu=\iota_\xi e + \lambda \epsilon_n$. Indeed the vector field $\xi$ and the function $\lambda$ will respectively represent the ghost fields associated to diffeomorphisms tangent to $\Sigma$ and normal to $\Sigma$.}
 \footnote{
Also notice that the constraint $P_\xi$ has been complemented with the constraint $L_{\iota_\xi(\omega-\omega_0)}  $, which does not change the vanishing locus of the constraint set but will simplify the computation of the Hamltonian vector field associated to it.} as odd (shifted by one) fields as they will represent ghosts within the BFV formalism.
\end{remark}

We are now able to determine the algebra of the constraints of the theory.
\begin{theorem}[\cite{CCS2020}]\label{thm:Brackets_constraints}
The phase space $C_\Sigma$ is a coisotropic submanifold of $F_\Sigma$.
    In particular, the Poisson brackets of the constraints of \cref{constraints_gravity} read
    \begin{align*}
    &\begin{aligned}
        	&\{L_c, L_c\}  = - \frac{1}{2} L_{[c,c]} & \qquad\qquad
       	& \{P_{\xi}, P_{\xi}\}  =  \frac{1}{2}P_{[\xi, \xi]}- \frac{1}{2}L_{\iota_{\xi}\iota_{\xi}F_{\omega_0}} \\
       	& \{L_c, P_{\xi}\}  =  L_{\mathrm{L}_{\xi}^{\omega_0}c} &
       	& \{H_{\lambda},H_{\lambda}\} = 0 \\
        \end{aligned}\\
	&\{L_c,  H_{\lambda}\}
        = - P_{X^{(i)}} + L_{X^{(i)}(\omega - \omega_0)_i} - H_{X^{(n)}}\\        
        &\{P_{\xi},H_{\lambda}\} =  P_{Y^{(i)}} -L_{ Y^{(i)} (\omega - \omega_0)_i} +H_{ Y^{(n)}} 
    \end{align*}
    with $X=[c,\lambda \epsilon_n]$ and $Y=\mathrm{L}_\xi^{\omega_0}(\lambda \epsilon_n)$ and where the superscripts $(i)$ and $(n)$ describe their components with respect to $\{e_i,\epsilon_n\}$.
\end{theorem}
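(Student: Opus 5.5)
The plan is to compute, for each constraint $\phi\in\{L_c,P_\xi,H_\lambda\}$, its Hamiltonian vector field on $(F_\Sigma^n,\varpi_\Sigma)$ with $\partial\Sigma=\emptyset$. The brackets then follow from $\{\phi_i,\phi_j\}=\iota_{\mathbb X_i}\iota_{\mathbb X_j}\varpi_\Sigma=\mathbb X_j(\phi_i)$ up to sign. Coisotropy of $C_\Sigma$ is then immediate: every right-hand side is a combination of $L,P,H$ with field-dependent coefficients, so the vanishing ideal is closed under the bracket, as in \cref{rem: coisotropic infdim}.

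First I would find the vector fields by varying each functional and writing the result as $\int_\Sigma e\,\delta e\,\mathbb X_\omega+e\,\mathbb X_e\,\delta\omega$. For $L_c$ this gives $\mathbb X_e=[c,e]$ and $\mathbb X_\omega=\de_\omega c$, i.e. internal gauge transformations. For $P_\xi$, the $\omega_0$-dependent term $\iota_\xi(\omega-\omega_0)e\de_\omega e$ is there so that the variation produces exactly $\mathbb X_e=\mathrm L^{\omega_0}_\xi e$ and $\mathbb X_\omega=\iota_\xi F_{\omega_0}+\mathrm L^{\omega_0}_\xi(\omega-\omega_0)$, up to an element of $\Ker W_1^{\Sigma,(1,2)}$. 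For $H_\lambda$ we get $\mathbb X_e=\de_\omega(\lambda\en)+\lambda\sigma$-type terms, and $\mathbb X_\omega$ is determined only modulo $\Ker W_1$.

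The real issue is that $\mathbb X_\omega$ must be adjusted so that the vector field is tangent to the representative submanifold $F^n_\Sigma$ fixed by the structural constraint $\en\de_\omega e=e\sigma$ of \cref{uniquerep_free}. Concretely, I would add a $v$ with $ev=0$, solving the linearized structural constraint. By the uniqueness in \cref{uniquerep_free} and \cref{strconstr_free_deg}, such a $v$ exists and is unique.

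Next I would compute the brackets pairwise. The $\{L,L\}$, $\{L,P\}$ and $\{P,P\}$ brackets are essentially Lie-derivative computations. In $\{P,P\}$, the term $-\tfrac12 L_{\iota_\xi\iota_\xi F_{\omega_0}}$ arises from the curvature of the reference connection in $[\mathrm L^{\omega_0}_\xi,\mathrm L^{\omega_0}_\xi]$. For $\{H,H\}$, vanishing follows from antisymmetry together with the fact that $\lambda$ is odd and $\lambda\lambda=0$ pointwise, after integrating by parts.

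The main obstacle is the mixed brackets $\{L_c,H_\lambda\}$ and $\{P_\xi,H_\lambda\}$. Acting on $H_\lambda$, the Lorentz or diffeomorphism transformation rotates $\lambda\en$ into the vector $X=[c,\lambda\en]$ (respectively $Y=\mathrm L^{\omega_0}_\xi(\lambda\en)$). Since $\en$ is held fixed, this vector is no longer purely normal. I would decompose it in the basis $\{e_i,\en\}$, which is possible because $(e,\en)$ is a basis of $\mathcal V$. The normal part gives $H_{X^{(n)}}$. The tangential part $X^{(i)}e_i=\iota_{X^{(i)}\partial_i}e$ reconstructs the $eF_\omega$ part of $P_{X^{(i)}}$, and the compensating $L_{X^{(i)}(\omega-\omega_0)_i}$ cancels the extra torsion term in the definition of $P$.

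The delicate part is checking that the $\Ker W_1$ corrections to $\mathbb X_\omega$ for $H_\lambda$ do not contribute. I would handle this by using $e v=0$ and noting that the contributions pair with $e\,\de_\omega e$ or $eF_\omega$ only through $ev$, so they vanish. Terms proportional to the structural constraint are then absorbed into the constraint ideal.
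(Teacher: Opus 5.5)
Your overall route is the one behind this statement. The paper does not reprove it: it quotes it from \cite{CCS2020} and lists exactly the Hamiltonian vector fields you describe. Those are the fields with the $\Ker{W_1^{\Sigma,(1,2)}}$-corrections fixed by tangency to $F^n_\Sigma$ and with only $e\mathbb H_\omega$ specified. Your handling of the mixed brackets, by splitting $[c,\lambda\en]$ and $\mathrm L^{\omega_0}_\xi(\lambda\en)$ along $\{e_i,\en\}$, is also the right mechanism. However, the argument you give for $\{H_\lambda,H_\lambda\}=0$ does not work.

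Since $H_\lambda$ is odd, its self-bracket is graded \emph{symmetric}, so antisymmetry gives nothing. In fact $\{H_\lambda,H_\lambda\}$ encodes the whole bracket $\{H_{\lambda_1},H_{\lambda_2}\}$ of even smearings. Write $\{H_\lambda,H_\lambda\}\propto\int_\Sigma e\,\mathbb H_e\,\mathbb H_\omega$, with $e\mathbb H_\omega=\lambda\en F_\omega$ (take $\Lambda=0$) and $\mathbb H_e=\de_\omega(\lambda\en)+\lambda\sigma$. Up to signs this gives
\[
\int_\Sigma\bigl(\de\lambda\,\en+\lambda\,\de_\omega\en+\lambda\sigma\bigr)\,\lambda\,\en F_\omega .
\]
The last two terms vanish by $\lambda^2=0$. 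The first term, however, is proportional to $\lambda\,\de\lambda$, which is not zero for odd $\lambda$. For $\lambda=\theta_1\lambda_1+\theta_2\lambda_2$ it equals $\pm\theta_1\theta_2(\lambda_1\de\lambda_2-\lambda_2\de\lambda_1)$. This is precisely the combination that produces the structure functions in the ADM bracket $\{H(N),H(M)\}$, and no integration by parts removes it.

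The missing idea is that this term carries the internal factor $\en\wedge\en=0$. That factor is available because $\en$ is a fixed, field-independent section ($\delta\en=0$), so $\mathbb H_\lambda$ does not act on it. The ingredients you list (oddness of $\lambda$, $\lambda^2=0$, integration by parts) are equally available in the metric formulation, where this bracket does not vanish. So they cannot be what makes it vanish here.

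Relatedly, the structural constraint $\en\de_\omega e=e\sigma$ should not be ``absorbed into the constraint ideal.'' Off $F^n_\Sigma$, terms proportional to it are not in the ideal generated by $L,P,H$ at all; on $F^n_\Sigma$ it holds identically. You need it as an identity in two places:
\begin{itemize}
\item to produce $\mathbb H_e$ at all, since the $\lambda\sigma$ term is required because $W_1^{\Sigma,(1,1)}$ is not surjective;
\item to discard the kernel corrections when you compute $\mathbb X(H_\lambda)$. After integration by parts and using $ev=0$, the term $\int_\Sigma\lambda\en e\,\de_\omega v$ reduces to $\pm\int_\Sigma\lambda\en\,\de_\omega e\,v$, which vanishes only because $\en\de_\omega e=e\sigma$.
\end{itemize}
It is this identity, not membership in an ideal, that yields the strong equalities in the statement. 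The cleanest bookkeeping is to evaluate every bracket as $\iota_{\mathbb X}\iota_{\mathbb Y}\varpi_\Sigma=\int_\Sigma e(\mathbb X_e\mathbb Y_\omega\pm\mathbb Y_e\mathbb X_\omega)$. There the $\omega$-components enter only through $e\mathbb X_\omega$ and $e\mathbb Y_\omega$, so the kernel corrections drop out automatically.
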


For the sake of completeness, we provide the expression of the Hamiltonian vector fields of the constraints:
\begin{align*}
&\mathbb{L}_e = [c,e] &  \mathbb{L}_\omega = \de_\omega c + \mathbb{V}_L\\
&\mathbb{P}_e = - \mathrm{L}_{\xi}^{\omega_0} e & \mathbb{P}_\omega = - \mathrm{L}_{\xi}^{\omega_0} (\omega-\omega_0) - \iota_ {\xi}F_{\omega_0} + \mathbb{V}_P\\
&\mathbb{H}_e= \de_\omega(\lambda \epsilon_n) + \lambda \sigma  & e \mathbb{H}_\omega =  \lambda \epsilon_n F_{\omega}+\frac{1}{2}\Lambda  \lambda \epsilon_n e^2.
\end{align*}
where we stress that the component $\mathbb X_\omega$ of the above vector fields is defined up to elements in $\mathrm{ker}(W_1^{\Sigma,(1,2)})$.\footnote{Indeed, one can think of the physical fields in the geometric phase space to be $\{e,e(\omega-\omega_0)\}$, where the new field $\Omega\coloneqq e(\omega-\omega_0)\in\Omega_\Sigma^{2,3} $ only depends on the equivalence class of $\omega-\omega_0\in\Omega^{1,2}_\Sigma/\Ker{W_e^{\Sigma(1,2)}}$.}

\section{PC gravity on manifolds with corners}\label{sec: corner PC}
In the following, for simplicity we will assume $\Lambda=0$, as that will have hardly any impact on the computations. The geometric set-up we are working with is that of a stratified manifold---more precisely, a manifold with corners as defined in \cref{def:strat_man} and specified in \cref{rem:two_strata}. From now on, we will refer to the codimension-$1$ and codimension-$2$ strata as, respectively, the boundary and the corner of the manifold.

We start by considering a neighborhood of the corner $\Sigma_{\epsilon}\coloneqq \Gamma\times[0,\epsilon]$, for an arbitrarily small $\epsilon>0$. Setting $\{x^m\}$ to be the coordinate along $[0,\epsilon]$, any field of the form ${\Phi}\in\Omega^{\bullet,\bullet}(\Sigma_\epsilon)$ is split as
\begin{align*}
    \Phi\mapsto\phi + \phi_m \de x^m,
\end{align*}
with $\phi\in \mathcal{C^1}([0,\epsilon])\otimes \Omega^{\bullet,\bullet}_\Gamma$ and $\phi_m\in \Omega^1([0,\epsilon])\otimes \Omega^{\bullet-1,\bullet}_\Gamma$ .

In this section, given $i\colon \Sigma \to M$ and $j\colon \Gamma\to \Sigma$, we will use the following notation:
    \begin{align*}
        &\Omega^{i,j}_\Gamma\coloneq\Omega^{i}(\Gamma, \bigwedge^j(j^*i^*\mathcal V)), &W_{k}^{\Gamma,(i,j)}\colon\Omega^{i,j}_\Gamma&\longrightarrow\Omega^{i+1,j+1}_\Gamma\\
        & &\alpha&\longmapsto e\wedge\alpha.
    \end{align*}
    
By considering the restriction at $\Gamma\times \{0\}$ of ${F}_{\Sigma_\epsilon} $, we obtain the space of pre-corner fields
    \begin{align*}
        \ \tilde{F}_\Gamma&\coloneqq \Omega_\Gamma^{1,1}\oplus \Omega_{\Gamma}^{0,1} \oplus\mathcal{A}_\Gamma \oplus \Omega_{\Gamma}^{0,2}\oplus \mathfrak{J}_m(\Omega_\Gamma^{1,1} \oplus\mathcal{A}_\Gamma )\ni (e,e_m,\omega,\omega_m,\partial_m^ke,\partial_m^k\omega),
    \end{align*}
where $e\in\Omega^{1,1}_\Gamma$ and $e_m\in\Omega^{0,1}_\Gamma$ are such that $\{e(\partial_1),e(\partial_2),e_m,\epsilon_n\}$ is a basis of $j^*i^*\mathcal V$\, where $\{\partial_1,\partial_2\}$ is a local basis of $T\Gamma$. Notice, however, that $e_m$ is not fixed, conversely to the case of $\epsilon_n$. Moreover, we denoted by $ \mathfrak{J}_m(\Omega_\Gamma^{1,1} \oplus\mathcal{A}_\Gamma )$ the space of transversal jets (to all orders) of $e $ and $\omega$. 

Furthermore, we need impose the constraints which are induced from the ones on the geometric phase $F_{\Sigma_\epsilon}$, i.e.
\begin{equation*}
    \en \de_\omega e = e \sigma, \qquad \text{and}\qquad \en(\de_{\omega_m} e - \de_\omega e_m)= e_m \sigma + e\sigma_m.
\end{equation*}
\begin{remark}
In \cref{sec: reduced corner space} we will see that the transversal jets only play a marginal role in the corner structure, as their value will be fixed by constraints, or set to zero identically. The reason to keep them into consideration is simply to be able to rigorously use the Leibniz rule with respect to the operator $\partial_m$.
\end{remark}
\begin{remark}
    The set $\tilde{F}_\Gamma$ can also be interpreted as the quotient of $\tilde F_{\Sigma_\epsilon}$ by the distribution of vector fields that preserve the values of the fields and their jets at $\Gamma\times\{0\}$, from which we obtain the quotient map
    \[
    \tilde{\pi}_\Gamma\colon {F}_{\Sigma_\epsilon}\longrightarrow \tilde{F}_\Gamma.
    \]
\end{remark}

 \subsection{Pre-Dirac corner structure of PC gravity}\label{sec: predirac GR}

Referring to the discussion of \cref{sec: classical corner}, we have that, in presence of corners, the defining equation of the Hamiltonian vector fields of the constraints $\phi_\alpha$ is modified as follows
    \begin{align}\label{eq: def corner 1-forms}
    \delta \phi_\alpha = \iota_{\mathbb X_\alpha} \varpi_\Sigma - \tilde{\pi}^*_\Gamma(\mathcal{X}_\alpha),    
    \end{align}

where $\phi_\alpha=\{L_c, P_\xi,H_\lambda\}$. Notice also that the ghost $\xi$ splits on the corner as $\xi\mapsto \zeta + \eta \partial_m$, with $\zeta\in\mathfrak{X}(\Gamma)[1]$ and $\eta\in\mathcal{C^1}(\Gamma)[1]$. Applying \cref{eq: def corner 1-forms}, we obtain four (1-shifted) 1-forms on $\tilde{F}_\Gamma$, given by
    \begin{align}
        \nonumber&\mathcal{J}_c = \int_\Gamma c e \delta e,  \\
        \nonumber& \mathcal{E}_\zeta = \int_\Gamma \iota_\zeta e e \delta\omega + \iota_\zeta (\omega-\omega_0) e \delta e,\\
        &\mathcal{K}_\eta =\int_\Gamma \eta e_m e \delta \omega + \eta (\omega-\omega_0)_m e \delta e, \label{eq: 1-forms}\\
        \nonumber& \mathcal{F}_\lambda=\int_\Gamma \lambda \epsilon_n e \delta\omega,
    \end{align}
to each of which we pair the corresponding odd vector field, given by the push forward of the Hamiltonian vector fields of the constraints $(\tilde{\pi}_\Gamma)_*(\mathbb{X}_\alpha)$. Explicitly, after some manipulations, we have
        \begin{align}
         \nonumber&\mathbb{J}_{c,e}=[c,e] \quad&&\mathbb{J}_{c,e_m}=[c,e_m]\\
         \nonumber&\mathbb{J}_{c,\omega}=\de_\omega c\quad&&\mathbb{J}_{c,\omega_m}=\de_{\omega_m} c\\
         \nonumber&\mathbb{E}_{\zeta,e}=-L_\zeta^{\omega_0}e \quad&&\mathbb{E}_{\zeta,e_m}=L_\zeta^{\omega_0}e_m+\iota_{\partial_m\zeta}e\\
         \nonumber&\mathbb{E}_{\zeta,\omega}=-\iota_\zeta F_{\omega_0} -L_\zeta^{\omega_0}(\omega-\omega_0)\quad&&\mathbb{E}_{\zeta,\omega_m}=L_\zeta^{\omega_0}\omega_m+\iota_{\partial_m\zeta}(\omega-\omega_0)-\iota_{\partial_m\zeta}\omega_0\\
        &\mathbb{F}_{\lambda,e}=\de_\omega(\lambda \epsilon_n)+\lambda\sigma \quad&&\mathbb{F}_{\lambda,e_m}=\de_{\omega_m}(\lambda \epsilon_n)+\lambda\sigma_m    \label{eq: v.f. corner}\\
         \nonumber&e\mathbb{F}_{\lambda,\omega}=\lambda \epsilon_n F_\omega\quad&&e\mathbb{F}_{\lambda,\omega_m}+e_m\mathbb{F}_{\lambda,\omega}=\lambda \epsilon_n(\partial_m \omega-\de_\omega\omega_m)\\
         \nonumber&\mathbb{K}_{\eta,e}=-\eta\de_{\omega^0_m}e+d\eta e_m\quad&&\mathbb{K}_{\eta,e_m}=\de_{\omega^0_m}(\eta e_m)\\
         \nonumber&\mathbb{K}_{\eta,\omega}=-\eta\de_{\omega^0_m}\omega+d\eta(\omega-\omega^0)_m+\eta\de\omega^0_m\quad&&\mathbb{K}_{\eta,\omega_m}=\de_{\omega^0_m}(\eta(\omega-\omega^0)_m).
    \end{align}

\begin{definition}
    Considering again the infinitesimal neighborhood $\Sigma_\epsilon=\Gamma\times[0,\epsilon]  $ of the corner, the \textbf{space of corner fields} $F_\Gamma$ is defined to be the restriction at $\Gamma\times \{0\}$ of the coisotropic submanifold $C_{\Sigma_\epsilon}\subset F_{\Sigma_\epsilon} $. In particular, $F_\Gamma$ can be realized as a submanifold of $\tilde{F}_\Gamma$ by imposing the following constraints, descending from the definition of $C_{\Sigma_\epsilon} $:
    \begin{align}
    &e_m\de_\omega e=e\de_\omega e_m-e\de_{\omega_m}e\label{add_constr_dom}\\
    &e_mF_\omega-eF_{\omega_m}=0\label{add_constr_F}.
    \end{align}
    Additionally, $F_\Gamma$ is subject to the structural constraints mentioned above
    \begin{align}
        &\epsilon_n \de_\omega e = e\sigma, \label{constr: corner omega}\\
        & \epsilon_n (\de_{\omega_m} e + \de_\omega e_m) = e_m \sigma + e \sigma_m  \label{constr: corner omega_m}
    \end{align}
which the restriction to the structural constraint \cref{eq: str constr bdry} in the boundary.
We therefore obtain a surjective submersion $\pi_\Gamma\colon F_\Sigma \to F_\Gamma$ by post-composing the projection $\tilde{\pi}_\Gamma$ with the restriction map $\tilde{F}_\Gamma\to F_\Gamma$. 
\end{definition}
\begin{remark}
    In the following, we will assume the metric $g^\Gamma\coloneqq e^*(\eta)$ induced on the corners is either time-like or space-like, i.e. such that it is non-degenerate.
\end{remark}

\begin{remark}
Working on $F_\Gamma$ is the same as working on-shell. Indeed, this can be understood by regrouping the constraints as $\epsilon_m F_\omega - e F_{\omega_m}=0$ and 
    \[
    \begin{cases}
        \epsilon_n \de_\omega e = e\sigma \\
        e_m \de_\omega e = e (\de_\omega e_m -\de_{\omega_m} e)\\
        \epsilon_n (\de_\omega e_m -\de_{\omega_m} e) + e_m \sigma = e \sigma_m,
    \end{cases}
    \]
    which, by applying Lemma \ref{lem: (2,1) constraint vanish} with $\alpha=\de_\omega e $ and $\rho = \de_\omega e_m -\de_{\omega_m} e$, uniquely implies
    \begin{align}
        &\de_\omega e=0\label{eq: constr omega corner}\\
        &\de_{\omega_m}e -\de_\omega e_m=0.\label{eq: constrm dme corner}
    \end{align}
    This is not a surprise, since on the boundary $\Sigma$, thanks to Lemma \ref{lem:Omega2,1_d4}, imposing $e \de_\omega e = 0$ and $\epsilon_n \de_\omega e = e\sigma$  implies $\de_\omega e=0$. The same condition then must descend to the corner, along with its transverse component.
\end{remark}
\begin{definition}
Letting $\mathcal{X}_\alpha=\{\mathcal{J}_c,\mathcal{E}_\zeta,\mathcal{K}_\eta,\mathcal{F}_\lambda\}\in\Omega^1(F_\Gamma)[1]$ and $\mathbb X_\alpha=\{ \mathbb J_c, \mathbb E_\zeta, \mathbb K_\eta, \mathbb F_\lambda\}\in\mathfrak{X}[1](F_\Gamma)$, we define $ {D}$ to be the distribution
    \[
     {D}\coloneqq \mathrm{span}_{\mathcal{C^1}(F_\Gamma)}\{\mathbb X_\alpha + \mathcal{X}_\alpha\}_{\alpha=\{c,\zeta,\eta,\lambda\}}\subset \Gamma((T\oplus T^*)[1]F_\Gamma)
    \]
\end{definition}

\begin{proposition}\label{prop: involutivity and isotropicity}
    The distribution $ {D}$ is involutive and isotropic, i.e.
        \[
        [ {D}, {D}]\subset  {D}\qquad \mathrm{and}\qquad \langle {D}, {D}\rangle=0,
        \]
    where $[\cdot,\cdot]$ is the Dorfmann bracket and $\langle\cdot,\cdot\rangle$ is the inner product on the standard Courant algebroid.
\end{proposition}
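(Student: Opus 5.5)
The plan is to reduce both properties to statements about the boundary constraints, using the general criteria of \cref{sec: classical corner}. For isotropy I will invoke \cref{lem:corner-isotropy-criterion}. Since the multipliers are odd, the pairing is graded-symmetric, and it is enough to evaluate it on each of the ten pairs of generators: $(c,c)$, $(c,\xi)$, $(c,\lambda)$, $(\xi,\xi)$, $(\xi,\lambda)$ and $(\lambda,\lambda)$, where each $\xi$-slot splits at $\Gamma$ into its $\zeta$ and $\eta$ parts. The lemma's identity
\[
\iota_{\mathbb X_j}\iota_{\mathbb X_i}\varpi_\Sigma=\mathbb X_j(\phi_i)+\iota_{\tilde{\mathbb X}_j}\tilde{\mathcal X}_i
\]
turns the vanishing of $\langle \mathbb X_i+\mathcal X_i,\mathbb X_j+\mathcal X_j\rangle$ into $\mathbb X_i(\phi_j)+\mathbb X_j(\phi_i)=0$ in the graded sense. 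For odd generators with equal multipliers, this is exactly the statement that the Hamiltonian function of $\{\phi_i,\phi_j\}$ is obtained consistently from both sides.

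Alternatively, I will check the pairings directly on $F_\Gamma$, using the explicit one-forms \eqref{eq: 1-forms} and vector fields \eqref{eq: v.f. corner}.
\begin{itemize}
\item $\iota_{\mathbb J_c}\mathcal J_{c}=\int_\Gamma c e[c,e]$. This vanishes by ad-invariance and oddness of $c$, up to a term proportional to $[c,c]$ which cancels in the symmetrized pairing.
\item $\iota_{\mathbb F_\lambda}\mathcal F_\lambda=\int_\Gamma\lambda\epsilon_n e\,\mathbb F_{\lambda,\omega}=\int_\Gamma \lambda\epsilon_n\lambda\epsilon_n F_\omega$. This vanishes since $\lambda^2=0$.
\item Mixed terms such as $\iota_{\mathbb F_\lambda}\mathcal J_c+\iota_{\mathbb J_c}\mathcal F_\lambda=\int_\Gamma ce\,\de_\omega(\lambda\epsilon_n)+\lambda\epsilon_n e\,\de_\omega c+ce\lambda\sigma$. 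I integrate by parts on the closed corner $\Gamma$. The $\sigma$ term is handled with the structural constraint \eqref{constr: corner omega}, and $\de_\omega e=0$ on $F_\Gamma$ (equation \eqref{eq: constr omega corner}) kills the remainder.
\item The terms involving $\zeta$ reduce to Cartan-type identities $\iota_\zeta$ of a top form on $\Gamma$ that vanish, using the on-shell relations $\de_\omega e=0$ and $e_mF_\omega=eF_{\omega_m}$.
\end{itemize}
This is where restricting from $\tilde F_\Gamma$ to $F_\Gamma$ is essential.

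For involutivity, I will use \cref{lem:corner-involutivity-criterion}. Its hypothesis is $\mathbb X_i(\phi_j)\approx f^k_{ij}\phi_k$ with the structure functions of \cref{thm:Brackets_constraints}. I will verify it by writing $\mathbb X_i(\phi_j)=\{\phi_i,\phi_j\}$ plus a corner term $\iota_{\tilde{\mathbb X}_i}\tilde{\mathcal X}_j$, and showing that the corner term vanishes on $F_\Gamma$. This is the same computation as for isotropy, so the two parts share their core. The lemma then gives
\[
[\mathbb X_i+\mathcal X_i,\mathbb X_j+\mathcal X_j]\approx f^k_{ij}(\mathbb X_k+\mathcal X_k)
\]
on $F_\Gamma$. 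Since on $F_\Gamma$ we are on shell, $\approx$ becomes equality. The coefficients $f^k_{ij}$, such as $X^{(i)},(\omega-\omega_0)_i$ and $\iota_\xi\iota_\xi F_{\omega_0}$, are field-dependent, but $D$ is a $C^\infty(F_\Gamma)$-span, so this is fine.

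Two additional points must be checked. First, the vector fields $\mathbb X_\alpha$ must be tangent to $F_\Gamma$, i.e. they must preserve \eqref{add_constr_dom}--\eqref{constr: corner omega_m}; otherwise $D$ is not a distribution on $F_\Gamma$. This should follow because the boundary Hamiltonian vector fields preserve $C_{\Sigma_\epsilon}$ (coisotropy), so their projections preserve its restriction. Second, the ambiguity of $\mathbb X_\omega$ up to $\Ker W_1^{(1,2)}$ does not affect any $\mathcal X_\alpha$ pairing, since $\omega$ only enters through $e\,\delta\omega$ and $e_m e\,\delta\omega$.

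I expect the main obstacle to be the brackets involving $\mathbb K_\eta+\mathcal K_\eta$ and $\mathbb F_\lambda+\mathcal F_\lambda$. There the transversal data $e_m,\omega_m,\sigma_m$ enter, and the $\{P_\xi,H_\lambda\}$ structure function $Y=\mathrm L^{\omega_0}_\xi(\lambda\epsilon_n)$ has components along $e_m$ that must be matched with the $\eta$-part of $P$. For these terms I would first try to work with the combined multiplier $\mu=\iota_\zeta e+\eta e_m+\lambda\epsilon_n$, together with the transversal constraints \eqref{eq: constrm dme corner} and \eqref{add_constr_F}. The aim is to reduce the $(\eta,\lambda)$ and $(\lambda,\lambda)$ pairings and brackets to expressions containing $\mu^2$ or $\de_\omega e$, which vanish.
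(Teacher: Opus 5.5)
Your isotropy argument is essentially the paper's. The reformulation through \cref{lem:corner-isotropy-criterion} is only a restatement: by the identity in its proof, $\mathbb X_i(\phi_j)+\mathbb X_j(\phi_i)$ is minus the pairing, so evaluating it is the same corner computation. What actually carries the proof is your pair-by-pair check on $F_\Gamma$ using $\de_\omega e=0$, $\de_{\omega_m}e-\de_\omega e_m=0$ and $e_mF_\omega=eF_{\omega_m}$, which is what the paper does.

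The involutivity part rests on a false step. To verify the hypothesis $\mathbb X_i(\phi_j)\approx f^k_{ij}\phi_k$ of \cref{lem:corner-involutivity-criterion}, you write $\mathbb X_i(\phi_j)=\{\phi_i,\phi_j\}\mp\iota_{\tilde{\mathbb X}_i}\tilde{\mathcal X}_j$. You then claim the corner term vanishes on $F_\Gamma$ ``by the same computation as for isotropy''. But isotropy only gives the vanishing of the graded-symmetrized combination $\iota_{\tilde{\mathbb X}_i}\tilde{\mathcal X}_j\pm\iota_{\tilde{\mathbb X}_j}\tilde{\mathcal X}_i$. The individual terms are the values $\pi(\mathcal X_i,\mathcal X_j)$ of the corner Poisson bivector, and they are generically nonzero.

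Your own first bullet already shows this: $\iota_{\mathbb J_c}\mathcal J_{c'}=\int_\Gamma c'\,e[c,e]=\int_\Gamma c'[c,E]$. Up to sign this is $\int_\Gamma E[c,c']$, the $\frac12 E[\frac{\delta}{\delta E},\frac{\delta}{\delta E}]$ part of $\pi$. Likewise $\iota_{\mathbb J_c}\mathcal F_\lambda=\pm\int_\Gamma\lambda\epsilon_n e\,\de_\omega c\neq 0$. If all these terms vanished, the bivector of \cref{thm: Poisson struct} would be zero.

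The other half of your decomposition fails as well. If $\{\phi_i,\phi_j\}$ means $\iota_{\mathbb X_i}\iota_{\mathbb X_j}\varpi_\Sigma$, it is not $\approx f^k_{ij}\phi_k$ once $\partial\Sigma\neq\emptyset$. The algebra of \cref{thm:Brackets_constraints} was derived without corners, and with a corner this quantity picks up exactly these corner charges.

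What the lemma really requires is a direct computation of $\mathbb X_i(\phi_j)$ on $\Sigma$. You bring it to the form $f^k_{ij}\phi_k$ and show that the boundary terms produced by integrating by parts on $\Sigma$ vanish on $F_\Gamma$. An example is a corner term of the form $\int_\Gamma c\,\iota_\xi(e\,\de_\omega e)$ arising in $\mathbb P_\xi(L_c)$, which is killed by the torsion and transversal torsion constraints. These residual terms are different objects from $\iota_{\tilde{\mathbb X}_i}\tilde{\mathcal X}_j$.

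Two things then make up the real content of the involutivity proof, and your plan addresses neither. First, controlling these residual terms. Second, splitting the structure functions at the corner, which brings in $\partial_m\zeta$, $\partial_m(\lambda\epsilon_n)$ and the $e_m$-components of $[c,\lambda\epsilon_n]$ and $\mathrm L^{\omega_0}_\xi(\lambda\epsilon_n)$.

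The paper avoids the lemma and computes the Dorfman brackets directly in Appendix~\ref{app:proof invol}. Isotropy makes the bracket graded skew-symmetric. The $\delta$-exactness of $\mathcal J_c$ and $\mathcal E_\zeta$ removes the $\iota\delta\mathcal X$ contributions. The brackets involving $\mathcal K_\eta$ and $\mathcal F_\lambda$ are closed using \cref{eq: constr omega corner}, \cref{eq: constrm dme corner} and \cref{add_constr_F}.
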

\begin{proof}
    We limit ourselves to show the isotropicity, leaving the involutivity for Appendix \ref{app:proof invol}. 
    First notice that, since we're working with odd quantities, any element $\mathbb X_\alpha + \mathcal{X}_\alpha$ is null-like, i.e. such that
        \[
        \langle \mathbb X_\alpha + \mathcal{X}_\alpha , \mathbb X_\alpha + \mathcal{X}_\alpha \rangle =0.
        \]
    For the sake of completeness, however, we consider couples of parameters of the same kind and show the isotropicity. In particular, we start with
        \begin{align*}
            \langle \mathbb J_c + \mathcal{J}_c, \mathbb J_{c'} + \mathcal{J}_{c'}\rangle &= \intc c' e [c,e] - c e [c',e] = \intc c'\left[c,\frac{e^2}{2}\right] - c \left[c',\frac{e^2}{2}\right]=0,
        \end{align*}
    having used graded Leibniz. 
        \begin{align*}
            \langle \mathbb E_\zeta + \mathcal{E}_\zeta, \mathbb E_{\zeta'} + \mathcal{E}_{\zeta'}\rangle &= \intc \iota_{\zeta'}\frac{e^2}{2} (-\iota_{\zeta} F_{\omega_0} - \mathrm{L}_{\zeta}^{\omega_0}(\omega-\omega_0)) - \iota_{\zeta'}(\omega-\omega_0) \mathrm{L}^{\omega_0}_{\zeta}\frac{e^2}{2}\\
            &\phantom{=\intc} - \iota_{\zeta}\frac{e^2}{2} (-\iota_{\zeta'} F_{\omega_0} - \mathrm{L}_{\zeta'}^{\omega_0}(\omega-\omega_0)) + \iota_{\zeta}(\omega-\omega_0) \mathrm{L}^{\omega_0}_{\zeta'}\frac{e^2}{2}\\
            &=\intc \frac{e^2}{2}\left(\iota_{\zeta'}\iota_\zeta F_{\omega_0} + \iota_{\zeta'}\mathrm{L}_{\zeta}^{\omega_0}(\omega-\omega_0) - \mathrm{L}_{\zeta}^{\omega_0}\iota_{\zeta'}(\omega-\omega_0) \right)\\
            &\phantom{=\intc} -\frac{e^2}{2}\left(\iota_\zeta\iota_{\zeta'} F_{\omega_0} + \iota_{\zeta}\mathrm{L}_{\zeta'}^{\omega_0}(\omega-\omega_0) - \mathrm{L}_{\zeta'}^{\omega_0}\iota_{\zeta}(\omega-\omega_0) \right)\\
            &=\intc \frac{e^2}{2} \left([\iota_{\zeta'},\mathrm{L}_{\zeta}^{\omega_0}](\omega-\omega_0) - [\iota_{\zeta},\mathrm{L}_{\zeta'}^{\omega_0}](\omega-\omega_0)  \right)\\
            &=\intc \frac{e^2}{2} \left(\iota_{[\zeta',\zeta]}(\omega-\omega_0) - \iota_{[\zeta,\zeta']}(\omega-\omega_0)  \right)=0,
        \end{align*}
    having used that $\iota_{\zeta}\iota_{\zeta'}=\iota_{\zeta'}\iota_{\zeta}$ and $[\zeta,\zeta']=[\zeta',\zeta]$.
        \begin{align*}
            \langle \mathbb K_\eta + \mathcal{K}_\eta, \mathbb K_{\eta'} + \mathcal{K}_{\eta'}\rangle &= \intc \eta' \emm e (-\eta\de_{\omega^0_m}\omega + \de\eta (\omega-\omega_0)_m + \eta \de\omega^0_m ) + \eta'(\omega-\omega_0)_m \de_\omega(\eta \emm e)\\
            &\phantom{=\intc}- \eta \emm e (-\eta'\de_{\omega^0_m}\omega + \de\eta' (\omega-\omega_0)_m + \eta' \de\omega^0_m ) - \eta(\omega-\omega_0)_m \de_\omega(\eta' \emm e)\\
            &=\intc \eta'\emm e \de\eta(\omega-\omega_0)m + \eta' (\omega-\omega_0)_m \de\eta \emm e - (\eta \leftrightarrow \eta')=0,
        \end{align*}
        \begin{align*}
            \langle \mathbb F_\lambda + \mathcal{F}_\lambda, \mathbb F_{\lambda'} + \mathcal F_{\lambda'}\rangle &= \intc  \lambda' \en \lambda \en F_{\omega} - \lambda \en \lambda' \en F_\omega=0.
        \end{align*}
    We are now left to show the isotropicity in the off-diagonal entries.
    \begin{align*}
        \langle \mathbb{J}_c+\mathcal{J}_c, \mathbb{E}_\zeta+\mathcal{E}_\zeta \rangle&=\iota_{\mathbb J_c}\intc\iota_\zeta \frac{e^2}2\delta\omega+\iota_\zeta(\omega-\omega_0) e\delta e-\iota_{\mathbb E_\zeta}\intc ce\delta e\\[3pt]
        &=\intc\iota_\zeta\frac{e^2}2 \de_\omega c+\iota_\zeta(\omega-\omega_0)e[c,e]+ceL_\zeta^{\omega_0}e\\[3pt]
        &=\intc-\frac{e^2}2\iota_\zeta \de_\omega c+\frac{e^2}2[\iota_\zeta(\omega-\omega_0),c]+\frac{e^2}2L_\zeta^{\omega_0}c\\[3pt]
        &=0,
    \end{align*}
    \begin{align*}
        \langle \mathbb{J}_c+\mathcal{J}_c, \mathbb{K}_\eta+\mathcal{K}_\eta \rangle&=\iota_{\mathbb J_c}\intc\eta e_me\delta\omega +\eta(\omega-\omega_0)_me\delta e-\iota_{\mathbb K_\eta}\intc ce\delta e\\[3pt]
        &=\intc \eta e_me\de_\omega c+\eta(\omega-\omega_0)_m[c,\frac{e^2}2]+ce(\eta \de_{\omega_m^0}e-\de\eta e_m)\\[3pt]
        &=\intc \de\eta e_m ec-\eta \de_\omega (e_m e)c+c[\eta(\omega-\omega_0)_m,\frac{e^2}2]\\
        &\phantom{=\intc}+c\eta \de_{\omega_m^0}\frac{e^2}2-ce\de\eta e_m\\[3pt]
        &=\intc c\eta \de_\omega(e_me)+c\eta e\de_{\omega_m}e\\[3pt]
        &=0,
    \end{align*}
    where we imposed \cref{eq: constr omega corner} and \cref{eq: constrm dme corner},
    \begin{align*}
        \langle \mathbb{J}_c+\mathcal{J}_c, \mathbb{F}_\lambda+\mathcal{F}_\lambda \rangle&=\iota_{\mathbb J_c}\intc\lambda \epsilon_n e\delta\omega-\iota_{\mathbb F_\lambda}\intc ce\delta e\\[3pt]
        &=\intc \lambda \epsilon_n e \de_\omega c-c\de_\omega(\lambda \epsilon_ne)\\[3pt]
        &=0,
    \end{align*}
    \begin{align*}
        \langle \mathbb{E}_\zeta+\mathcal{E}_\zeta, \mathbb{K}_\eta+\mathcal{K}_\eta \rangle&=\iota_{\mathbb E_\zeta}\intc\eta e_m e\delta\omega+\eta(\omega-\omega_0)_me\delta \iota_{\mathbb K_\eta}\intc\iota_\zeta\frac{e^2}2\delta\omega+\iota_\zeta(\omega-\omega_0)e\delta e\\[3pt]
        &=\intc\eta e_me(\iota_\zeta F_\omega+\de_\omega\iota_\zeta(\omega-\omega_0))-\eta(\omega-\omega_0)_mL_{\zeta}^{\omega_0}\frac{e^2}2
         -\iota_\zeta e eF_{\omega_m}\\
        &\phantom{=\intc}-\iota_\zeta e e\de_\omega(\eta(\omega-\omega_0)_m)+\iota_\zeta(\omega-\omega_0)e\eta\de_{\omega_m^0}e-\iota_\zeta(\omega-\omega_0)e\de\eta e_m\\[3pt]
        &=\intc \eta e_m\iota_\zeta e F_\omega -\de\eta e_m e\iota_\zeta (\omega-\omega_0) +\eta \de_\omega(e_m e)\iota_\zeta(\omega-\omega_0)\\
        &\phantom{=\intc}+\eta(\omega-\omega_0)_m\de_{\omega_0}\iota_\zeta\frac{e^2}2 -\iota_\zeta eeF_{\omega_m}-\eta \de_\omega\iota_\zeta\frac{e^2}2(\omega-\omega_0)_m\\
        &\phantom{=\intc}+\iota_\zeta(\omega-\omega_0)e\eta\de_{\omega_m^0}e-\iota_\zeta(\omega-\omega_0)e\de\eta e_m\\[3pt]
        &=\intc\eta e_m\iota_\zeta eF_\omega+\eta e\de_{\omega_m}e\iota_\zeta(\omega-\omega_0)-\iota_\zeta e eF_{\omega_m}\\
        &\phantom{=\intc}-\eta(\omega-\omega_0)_m[\omega-\omega_0,\iota_\zeta\frac{e^2}2]+\iota_\zeta(\omega-\omega_0)e\eta\de_{\omega_m^0}e\\[3pt]
        &=\intc \eta\iota_\zeta e(e_m F_\omega-eF_{\omega_m})-\eta e\de_{\omega_m}e\iota_\zeta(\omega-\omega_0)\\
        &\phantom{=\intc}+\eta\iota_\zeta(\omega-\omega_0)([(\omega-\omega_0)_m,\frac{e^2}2]-\de_{\omega_m^0}\frac{e^2}2)\\[3pt]
        &=0,
    \end{align*}
where we  used \cref{eq: constr omega corner}, \cref{eq: constrm dme corner} and \cref{add_constr_F},
        \begin{align*}
        \langle \mathbb{E}_\zeta+\mathcal{E}_\zeta, \mathbb{F}_\lambda+\mathcal{F}_\lambda \rangle&=\iota_{\mathbb E_\zeta}\intc\lambda \epsilon_n e\delta\omega-\iota_{\mathbb F_\lambda}\intc\iota_\zeta\frac{e^2}2\delta\omega+\iota_\zeta(\omega-\omega_0)e\delta e\\[3pt]
        &=\intc\lambda \epsilon_n e(-\iota_\zeta F_\omega+\de\omega\iota_\zeta(\omega-\omega_0))-\iota_\zeta e\lambda \epsilon_n F_\omega-\iota_\zeta(\omega-\omega_0)\de_\omega(\lambda \epsilon_n e)\\[3pt]
        &=0,
    \end{align*}
    and
    \begin{align*}
        \langle \mathbb{F}_\lambda+\mathcal{F}_\lambda, \mathbb{K}_\eta+\mathcal{K}_\eta \rangle&=\iota_{\mathbb F_\lambda}\intc\eta e_m e\delta\omega+\eta(\omega-\omega_0)_me\delta e-\iota_{\mathbb K_\eta}\intc\lambda \epsilon_n e\delta\omega\\[3pt]
        &=\intc\eta e_m\lambda \epsilon_n F_\omega+\eta(\omega-\omega_0)_m\de_\omega(\lambda \epsilon_ne)-\lambda \epsilon_ne\eta F_{\omega_m}-\lambda \epsilon_n e\de_\omega(\eta(\omega-\omega_0)_m)\\[3pt]
        &=\intc\eta\lambda \epsilon_n(e_mF_\omega-eF_{\omega_m})\\[3pt]
        &=0,
    \end{align*}
    where we implemented again \cref{add_constr_F}, which concludes the proof of the isotropy.

\end{proof}

\subsection{Maximality and reduction}
In order to obtain a genuine Dirac structure, which equates to imposing the maximality condition, we follow the prescription described in \cref{sec: classical corner}. Given that $F_\Gamma$ is the submanifold of $\tilde{F}_\Gamma
$ on which constraints \cref{eq: constr omega corner}, \cref{eq: constrm dme corner}  and  \cref{add_constr_F} are imposed, it is not easy to work with $TF_\Gamma$. Therefore, we quotient $\tilde F_\Gamma$ with respect to the distribution $\tilde{\mathfrak{W}}\coloneqq\Ker{\sum_\alpha \mathcal{X}_\alpha}\subset \Gamma(T\tilde{F}_\Gamma)$ and then apply the constraints on the quotient.\footnote{As we will see, this will not spoil the involutivity and isotropy property.}

A quick computation\footnote{To be precise, $\tilde{\mathfrak{W}}$ is given by those vector fields $\mathbb X\in\mathfrak{X}(F_\Gamma)$ such that
    \begin{align*}
        &\int_\Gamma c e \mathbb X_e=0 && \int_\Gamma \iota_\zeta e e \mathbb X_\omega + \iota_\zeta(\omega-\omega_0) e \mathbb{X}_e=0\\
        &\int_\Gamma \lambda \epsilon_n e \mathbb X_\omega =0 &&\int_\Gamma \eta e_m e \mathbb X_\omega + \eta (\omega-\omega_0)_m e \mathbb X_e=0.
    \end{align*}
    From which we obtain $e \mathbb X_e = 0$ and $ (\iota_\zeta e + \eta e_m +\lambda \epsilon_n) e \mathbb X_\omega= \mu e \mathbb X_\omega = 0 $, for all $\mu\in\Omega^{0,1}_\Gamma $, implying $e \mathbb X_\omega=0$.
} shows that $\tilde{\mathfrak{W}}$ is given by
    \[
    \left\{\int_\Gamma \mathbb X_e \frac{\delta}{\delta e} + \mathbb X_{e_m} \frac{\delta}{\delta e_m}+\mathbb X_\omega \frac{\delta}{\delta \omega}  + \mathbb X_{\omega_m} \frac{\delta}{\delta \omega_m} + \sum_{k\geq1}\left(\mathbb X_{\partial^k_m e} \frac{\delta}{\delta (\partial^k_m e)} + \mathbb X_{\partial^k_m \omega} \frac{\delta}{\delta( \partial^k_m \omega)} \right)\ \bigg\vert \ e\mathbb X_e=0, \ e\mathbb X_\omega=0 \right\}.
    \]
Notice that we can split $\tilde{\mathfrak{W}}\simeq\tilde{\mathcal{B}}\oplus\mathfrak{X}(\Omega_\Gamma^{0,1}\oplus  \Omega^{0,2}_\Gamma\oplus \mathfrak{J}_m(\og^{1,1}\oplus \mathcal{A}_\Gamma) )$
with
    \[
    \tilde{\mathcal{B}}=\left\{\mathbb X=\int_\Gamma \mathbb X_e \frac{\delta}{\delta e} +\mathbb X_\omega \frac{\delta}{\delta \omega}\ \bigg\vert \ e\mathbb X_e=0, \ e\mathbb X_\omega=0. \right\}.
    \]

\begin{remark}
    The above splitting will imply that, in the quotient, we can discard the transversal fields $e_m$ and $\omega_m$ and the jets to all orders $\partial^k_m e $ and $\partial^k_m \omega$ as dynamical fields, fixing their values in a way compatible with the constraints.
\end{remark}

\begin{proposition}
    The distribution $\tilde{\mathfrak{W}}$  is regular and involutive.
\end{proposition}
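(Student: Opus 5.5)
The plan is to use the splitting $\tilde{\mathfrak W}\simeq\tilde{\mathcal B}\oplus\mathfrak X(\Omega^{0,1}_\Gamma\oplus\Omega^{0,2}_\Gamma\oplus\mathfrak J_m(\og^{1,1}\oplus\mathcal A_\Gamma))$ and treat the two summands separately. The second summand consists of all vector fields along the transversal fields $e_m,\omega_m$ and the jets. It is the full tangent distribution of a factor of the product $\tilde F_\Gamma$, so it is trivially regular and involutive. Its coefficients may depend on $(e,\omega)$, but the defining conditions $e\mathbb X_e=0$, $e\mathbb X_\omega=0$ do not involve these components at all. It therefore suffices to prove regularity and involutivity of $\tilde{\mathcal B}$, and to check that brackets between the two summands stay in $\tilde{\mathfrak W}$.

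\emph{Regularity.} At each point, $\tilde{\mathcal B}$ is the pointwise kernel of the maps $W_1^{\Gamma,(1,1)}\colon\Omega^{1,1}_\Gamma\to\Omega^{2,2}_\Gamma$ and $W_1^{\Gamma,(1,2)}\colon\Omega^{1,2}_\Gamma\to\Omega^{2,3}_\Gamma$. These are fiberwise linear maps determined by $e$. Since $g^\Gamma$ is non-degenerate, $e$ has maximal rank $2$ everywhere, and one can choose a local frame $\{e(\partial_1),e(\partial_2),e_m,\en\}$ of $\mathcal V$. A linear-algebra count in this frame gives the dimension of $\ker(e\wedge\cdot)$ in each degree, and this dimension is independent of the point and of $e$. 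By the same argument as in \cref{lem_degsympform}, the kernels are then constant-rank subbundles, which proves regularity.

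\emph{Involutivity.} I would follow the strategy of \cref{lem:kernel-involutive-criterion}. Since $\tilde{\mathfrak W}=\Ker{\sum_\alpha\mathcal X_\alpha}$, it is enough to show that $\iota_{\mathbb Y}\iota_{\mathbb Y'}\delta\mathcal X_\alpha=0$ for all $\mathbb Y,\mathbb Y'\in\tilde{\mathfrak W}$ and every residual one-form in \eqref{eq: 1-forms}. I would compute the differentials directly:
\begin{align*}
\delta\mathcal J_c&=\intc c\,\delta e\,\delta e, &
\delta\mathcal F_\lambda&=\intc\lambda\en\,\delta e\,\delta\omega, \\
\delta\mathcal E_\zeta&=\intc \delta(\iota_\zeta e\, e)\delta\omega+\iota_\zeta\delta\omega\, e\,\delta e+\iota_\zeta(\omega-\omega_0)\delta e\,\delta e, &
\delta\mathcal K_\eta&=\intc \eta\,\delta(e_m e)\delta\omega+\eta\,\delta\omega_m\, e\,\delta e+\eta(\omega-\omega_0)_m\delta e\,\delta e.
\end{align*}
Contracting twice with elements of $\tilde{\mathfrak W}$, each term contains a factor of the form $\mathbb Y_e\mathbb Y'_e$, $\mathbb Y_e\mathbb Y'_\omega$, $e\,\mathbb Y_e$ or $e\,\mathbb Y_\omega$. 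The last two vanish by definition. For the products of two kernel directions, I would use the explicit description of $\ker W_1^{\Gamma,(1,1)}$ and $\ker W_1^{\Gamma,(1,2)}$ in the adapted frame to show that the relevant wedge products, paired with $c$, $\lambda\en$, $\iota_\zeta e$ or $\eta e_m$, integrate to zero. The terms involving $\delta e_m$ or $\delta\omega_m$ contract against the transversal summand. They would have to be cancelled by the fact that they come multiplied by $e\mathbb Y_e$ or $e\mathbb Y_\omega$, which vanish.

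\emph{Main obstacle.} The hardest step is the vanishing of the mixed terms $\mathbb Y_e\mathbb Y'_\omega$ and $\mathbb Y_e\mathbb Y'_e$ in this lowest-dimensional, corner setting. Here $\ker(e\wedge\cdot)$ on $\Omega^{1,1}_\Gamma$ is no longer trivial, unlike on $\Sigma$. If the pointwise linear algebra fails to give zero, I would fall back on a different argument. The conditions $e\mathbb X_e=0$ and $e\mathbb X_\omega=0$ are, up to the transversal factor, the vertical directions of the map $(e,\omega)\mapsto(E,\Omega)=(\tfrac12e^2,e(\omega-\omega_0))$. Its differential is $(\mathbb X_e,\mathbb X_\omega)\mapsto(e\mathbb X_e,\;\mathbb X_e(\omega-\omega_0)+e\mathbb X_\omega)$, so I would check whether the fibres of this map coincide with the leaves of $\tilde{\mathfrak W}$; the extra $\mathbb X_e(\omega-\omega_0)$ term is the crux and must be handled. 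If they do coincide, $\tilde{\mathfrak W}$ is the vertical distribution of a submersion and is therefore involutive.
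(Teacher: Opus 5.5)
\textbf{This is a genuine gap, and the missing step is false.} Your regularity argument matches the paper's. So does your reduction of involutivity, via \cref{lem:kernel-involutive-criterion}, to the vanishing of $\iota_{\mathbb Y}\iota_{\mathbb Y'}\delta\mathcal X_\alpha$. The transversal summand is harmless, because the conditions $e\mathbb X_e=0$ and $e\mathbb X_\omega=0$ involve only $e$.

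For $\alpha=c,\zeta$ there is nothing to check. $\mathcal J_c$ and $\mathcal E_\zeta$ are $\delta$-exact: up to sign they are $\delta\int_\Gamma cE$ and $\delta\int_\Gamma e\,\iota_\zeta\Omega$. So the differentials you wrote for them vanish identically.

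Everything therefore rests on $\mathcal F_\lambda$ and $\mathcal K_\eta$. After the terms containing $e\,\mathbb Y_e$ or $e\,\mathbb Y_\omega$ drop, you are left with $\int_\Gamma\lambda\,\en(\mathbb Y_e\mathbb Y'_\omega-\mathbb Y'_e\mathbb Y_\omega)$ and $\int_\Gamma\eta\,e_m(\mathbb Y_e\mathbb Y'_\omega-\mathbb Y'_e\mathbb Y_\omega)$. This is exactly the step you flagged, and these terms do not vanish.

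Here is a counterexample. In a chart write $e=e_1dx^1+e_2dx^2$. Let $\mathbb X$ have $\mathbb X_e=e_2\,dx^1$, and let $\mathbb Y$ have $\mathbb Y_\omega=e_1\wedge e_m\,dx^2$, with all other components zero.
\begin{itemize}
\item Both lie in $\tilde{\mathcal B}$, since $e_2\wedge e_2=0$ and $e_1\wedge e_1=0$.
\item Their bracket has $[\mathbb X,\mathbb Y]_e=0$ and $[\mathbb X,\mathbb Y]_\omega=\mathbb X(\mathbb Y_\omega)=e_2\wedge e_m\,dx^2$.
\item Hence $e\,[\mathbb X,\mathbb Y]_\omega=\pm e_1\wedge e_2\wedge e_m\,dx^1dx^2\neq0$, and $\iota_{[\mathbb X,\mathbb Y]}\mathcal F_\lambda=\pm\int_\Gamma\lambda\,\en e_1e_2e_m\,dx^1dx^2\neq0$.
\end{itemize}
So $\tilde{\mathfrak W}$ on $\tilde F_\Gamma$ is not involutive. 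Each of $\{e\mathbb X_e=0,\ \mathbb X_\omega=0\}$ and $\{\mathbb X_e=0,\ e\mathbb X_\omega=0\}$ is involutive on its own. Their sum is not, because the defining condition of the second depends on $e$, which the first moves.

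\textbf{Your fallback fails for the reason you suspected.} The vertical distribution of $(e,\omega)\mapsto(E,\Omega)$ is $\{e\mathbb X_e=0,\ e\mathbb X_\omega=-\mathbb X_e(\omega-\omega_0)\}$. It is involutive, but it is not $\tilde{\mathfrak W}$, and it is not contained in $K$. For instance, on it $\iota_{\mathbb X}\mathcal F_\lambda$ equals, up to sign, $\int_\Gamma\lambda\,\en\,\mathbb X_e(\omega-\omega_0)$, which is generically nonzero.

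\textbf{The paper's own proof breaks at the same point.} It infers $e\,[\mathbb X,\mathbb Y]_\omega=0$ from the component formula for the bracket. But applying $\mathbb X$ to $e\,\mathbb Y_\omega=0$ gives $e\,\mathbb X(\mathbb Y_\omega)=-\mathbb X_e\mathbb Y_\omega$. So $e\,[\mathbb X,\mathbb Y]_\omega$ is, up to sign, the same mixed term $\mathbb X_e\mathbb Y_\omega-\mathbb Y_e\mathbb X_\omega$, which is nonzero by the example above.

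Any workable version must change the object being quotiented. The paper effectively does this afterwards, by working on the slice of \cref{thm: fix e corn} so that $\Omega=e(\omega-\omega_0)$ becomes well defined. That is a different statement from the one you were asked to prove.
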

\begin{proof}
    The regularity condition is immediately verified by noticing that $\dim \tilde{\mathcal{B}}=\dim \ker(W_1^{\Gamma,(1,1)})+ \dim \ker(W_1^{\Gamma,(1,2)})$, which is constant.

    Regarding involutivity, we first notice that $\mathfrak{X}(\Omega_\Gamma^{0,1}\oplus \Omega^{0,2}_\Gamma \oplus \mathfrak{J}_m(\og^{1,1}\oplus \mathcal{A}_\Gamma))$ is involutive by construction, while $[\tilde{\mathcal{B}},\mathfrak{X}(\Omega_\Gamma^{0,1}\oplus \Omega^{0,2}_\Gamma\oplus \mathfrak{J}_m(\og^{1,1}\oplus \mathcal{A}_\Gamma) )]=0$ trivially. We are then left with showing that $\mathcal{\tilde{B}}$ is involutive. 
    To do so, consider the following identity
        \[ 
        \iota_{[\mathbb X, \mathbb Y]} \mathcal{X}_\alpha= \mathrm{L}_{\mathbb X}\iota_{\mathbb Y}\mathcal{X}_\alpha - \iota_{\mathbb Y} \mathrm{L}_{\mathbb X}\mathcal{X}_\alpha,
        \]
    where $\mathrm{L}_{\mathbb X} = \iota_{\mathbb X} \delta + \delta \iota_{\mathbb X} $. Letting  $\mathbb X, \mathbb Y\in\tilde{\mathcal{B}}$, i.e. such that for all $\alpha=c,\zeta,\eta,\lambda$, $\iota_{\mathbb X}\mathcal{X}_\alpha=\iota_{\mathbb Y}\mathcal{X}_\alpha=0$, we have
        \begin{align*}
            -\iota_{[\mathbb X, \mathbb Y]} \mathcal{X}_\alpha=& \iota_{\mathbb Y} \mathrm{L}_{\mathbb X}\mathcal{X}_\alpha\\
            =&  \iota_{\mathbb Y}  \iota_{\mathbb X} \delta\mathcal{X}_\alpha +  \iota_{\mathbb Y}  \delta  \iota_{\mathbb X}\mathcal{X}_\alpha\\
            =&\iota_{\mathbb Y}  \iota_{\mathbb X} \delta\mathcal{X}_\alpha . 
        \end{align*}
   For $\alpha=c,\zeta$, we notice that $\mathcal{J}_c$ and $\mathcal{E}_\zeta$ are $\delta$-exact, hence obtaining $\iota_{[\mathbb X, \mathbb Y]} \mathcal{J}_c=\iota_{[\mathbb X, \mathbb Y]} \mathcal{E}_\zeta=0$. We then just need to show $\iota_{[\mathbb X, \mathbb Y]} \mathcal{K}_\eta=\iota_{[\mathbb X, \mathbb Y]} \mathcal{F}_\lambda=0$. We have
    \begin{align*}
        \iota_{[\mathbb X, \mathbb Y]} \mathcal{K}_\eta=\int_\Gamma & \eta \mathbb X_{e_m} e \mathbb Y_\omega - \eta \mathbb Y_{e_m} e \mathbb X_\omega -\eta e_m (\mathbb X_e \mathbb Y_\omega - \mathbb Y_e \mathbb X_\omega) - \eta \mathbb X_{\omega_m} e \mathbb Y_e    + \eta \mathbb Y_{\omega_m} e \mathbb X_e    \\
        =\int_\Gamma & \eta e_me( \mathrm{L}_{\mathbb X}\mathrm{L}_{\mathbb Y}(\omega-\omega_0) - \mathrm{L}_{\mathbb Y}\mathrm{L}_{\mathbb X}(\omega-\omega_0))\\
        =\int_\Gamma &\eta e_m e \mathrm{L}_{[\mathbb X,\mathbb Y]}(\omega-\omega_0)\\
        =\int_\Gamma &\eta e_m e [\mathbb X,\mathbb Y]_\omega,
    \end{align*}
    having used $e\mathbb X_e= e\mathbb Y_e=0$, $e\mathbb X_\omega= e\mathbb Y_\omega=0$, the fact that $\mathbb X_\Phi= \mathrm{L}_{\mathbb X} \Phi$ for any $\Phi\in \tilde{F}_\Gamma$ and the identity
    \[
    \mathrm{L}_{\mathbb X}\mathrm{L}_{\mathbb Y} - \mathrm{L}_{\mathbb Y}\mathrm{L}_{\mathbb X}= \mathrm{L}_{[\mathbb X,\mathbb Y]}.
    \]
    It is a quick computation to see that 
    \[
     [\mathbb X,\mathbb Y]_\omega = \left(\mathbb X_e \frac{\delta \mathbb Y_\omega}{\delta e} + \mathbb X_\omega \frac{\delta \mathbb Y_\omega}{\delta \omega}\right) - \left(\mathbb Y_e \frac{\delta \mathbb X_\omega}{\delta e} + \mathbb Y_\omega \frac{\delta \mathbb X_\omega}{\delta \omega}\right),
    \]
    hence verifying $e[\mathbb X,\mathbb Y]_\omega=0$. Lastly, we are left with
    \begin{align*}
    \iota_{[\mathbb X,\mathbb Y]}\mathcal{F}_\lambda=\int_\Gamma \lambda \epsilon_n e [\mathbb X,\mathbb Y]_\omega=0,
    \end{align*}
    showing involutivity.
\end{proof}

\subsubsection{The reduced space of corner fields}\label{sec: reduced corner space}
The quotient of $\tilde{F}_\Gamma$ with respect to $\tilde{\mathfrak{W}}$ is given  by
    \[\tilde{P}_\Gamma\coloneqq \tilde{F}_\Gamma/\tilde{\mathfrak{W}}\simeq(\Omega^{1,1}_\Gamma\oplus \mathcal{A}_\Gamma)/\tilde{\mathcal{B}}.\]
This is achieved by fixing the transversal fields $e_m\in\og^{0,1}$ and $\omega_m\in\og^{0,2}$ and the transversal jets $\partial^k_m e \in\og^{1,1}$ and $\partial^k_m \omega\in\mathcal{A}_\Gamma$. 
We will see how some of these fields will be fixed by imposing the structural constraints on the reduced space of corner fields. For the moment, we choose a particular $e_m$ that is convenient for future calculations.
\begin{definition}
    We define $\epsilon_m\in\og^{0,1}$ to be the unique section of $\mathcal{V}$ which satisfies the following conditions
        \begin{equation}\label{eq: def e_m}
        \begin{split}
            & [e,\epsilon_m]=0\\
            & [\epsilon_n,\epsilon_m]=0\\
            & [\epsilon_m,\epsilon_m]=\nu,
        \end{split}
        \end{equation}
    with $\nu=\pm 1$ depending on the signature of $\Gamma$ and $\Sigma$.

\end{definition}

\begin{remark}
    The above definition is well given, as $\epsilon_m$ has four components which are point-wise uniquely determined by the above four equations. They amount to requesting $\epsilon_m$ to be linearly independent from $e$ and $\epsilon_n$, while also being normalized to $\pm 1$. It follows that $\epsilon_m\in\og^{0,1}$ is such that $\{e_i,\epsilon_m,\epsilon_n\}$ form a local basis of $\mathcal{V}$.
\end{remark}

\begin{proposition}
    We can fix $\epsilon_m\in\og^{0,1}$ as a representative of $[e_m]$ in $\tilde{P}_\Gamma$.
\end{proposition}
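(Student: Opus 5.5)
The approach is to show that $e_m$ is a pure gauge direction of the quotient by $\tilde{\mathfrak{W}}$, and that the choice $\epsilon_m$ is an admissible, and convenient, point in each fiber. Recall the splitting
\[
\tilde{\mathfrak{W}}\simeq\tilde{\mathcal{B}}\oplus\mathfrak{X}\bigl(\Omega_\Gamma^{0,1}\oplus\Omega^{0,2}_\Gamma\oplus\mathfrak{J}_m(\og^{1,1}\oplus\mathcal{A}_\Gamma)\bigr).
\]
The second summand contains all vector fields $\mathbb X_{e_m}\frac{\delta}{\delta e_m}$ with arbitrary $\mathbb X_{e_m}\in\og^{0,1}$. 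None of the residual one-forms \cref{eq: 1-forms} contains $\delta e_m$, so these vector fields contract to zero with every $\mathcal X_\alpha$.

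First, I would argue that the flows of these vector fields act on $e_m$ by arbitrary translations $e_m\mapsto e_m+v$, restricted only by the open condition that $\{e(\partial_1),e(\partial_2),e_m,\epsilon_n\}$ be a basis of $j^*i^*\mathcal V$. Consequently, the class $[e_m]$ in $\tilde P_\Gamma$ is the whole admissible open set of transversal vectors at each point.

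Second, I would check that $\epsilon_m$, as defined by \cref{eq: def e_m}, lies in this admissible set. The conditions $[e,\epsilon_m]=0$ and $[\epsilon_n,\epsilon_m]=0$ say that $\epsilon_m$ is $\eta$-orthogonal to the image of $e$ and to $\epsilon_n$. The normalization $[\epsilon_m,\epsilon_m]=\nu$ then makes it non-null. Since the corner metric $g^\Gamma$ is assumed non-degenerate, and $\epsilon_n$ is transverse to $\Sigma$ with the boundary metric non-degenerate, the span of $\{e(\partial_1),e(\partial_2),\epsilon_n\}$ is a non-degenerate 3-plane. Its orthogonal complement is therefore a non-degenerate line, which is spanned by $\epsilon_m$ and is complementary to the 3-plane. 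This gives both existence and uniqueness up to sign, fixed by orientation, and shows that $\{e_i,\epsilon_m,\epsilon_n\}$ is a basis. Hence $\epsilon_m\in[e_m]$.

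Third, I need the choice to be a genuine section of the quotient, meaning a smooth slice transversal to the $e_m$-directions. Since $\epsilon_m$ depends smoothly and pointwise on $e$ and $\epsilon_n$, the map $(e,\omega,\dots)\mapsto(e,\epsilon_m(e),\omega,\dots)$ is a smooth global section. The fact that it now depends on $e$ is harmless: the quotient only fixes the $e_m$-coordinate, and $\tilde{\mathcal B}$ does not act on $e_m$. The pushed-forward vector fields and one-forms remain well defined because the $\mathcal X_\alpha$ vanish on $e_m$-directions. The $e_m$-components such as $\mathbb E_{\zeta,e_m}$ simply drop out after projection.

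The main obstacle I anticipate is compatibility with the constraints \cref{add_constr_dom}, \cref{add_constr_F}, \cref{constr: corner omega_m} that are imposed on the quotient afterwards. In these constraints $e_m$ enters together with $\omega_m$, $\sigma_m$ and the first jets $\partial_m e$, $\partial_m\omega$. I would handle this by noting that those remaining transversal fields are still free representatives of their own classes. For instance, \cref{eq: constrm dme corner} can be solved for $\partial_m e$, and \cref{add_constr_F} can be solved for the $\omega_m$/$\partial_m\omega$ data, with $e_m=\epsilon_m$ already fixed. Thus fixing $\epsilon_m$ first does not obstruct solving the constraints, and the claim follows.
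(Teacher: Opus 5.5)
Your route is the one the paper relies on implicitly; it gives no separate proof. The splitting of $\tilde{\mathfrak W}$ makes every $e_m$-direction pure gauge, since no residual one-form contains $\delta e_m$, and the Remark after the definition of $\epsilon_m$ asserts that $\{e_i,\epsilon_m,\epsilon_n\}$ is a basis.

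The step that fails as written is your derivation of that admissibility. Non-degeneracy of $g^\Gamma$ and of the boundary metric, together with transversality of $\epsilon_n$, do \emph{not} make $W=\mathrm{span}\{e(\partial_1),e(\partial_2),\epsilon_n\}$ non-degenerate. What matters is the component $\epsilon_n^\perp$ of $\epsilon_n$ orthogonal to $\mathrm{Im}\,e$. For a space-like corner, $(\mathrm{Im}\,e)^\perp$ is a Lorentzian plane, and $\epsilon_n^\perp$ can be null there. Concretely, take $\eta=\mathrm{diag}(-1,1,1,1)$ with
\begin{itemize}
\item $e(\partial_1)=\partial_x$ and $e(\partial_2)=\partial_y$,
\item $e_m=\partial_t$, so the boundary is time-like,
\item the space-like $\epsilon_n=\partial_t+\partial_x+\partial_z$.
\end{itemize}
All your hypotheses hold, yet $\partial_t+\partial_z\in W\cap W^\perp$. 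The conditions $[e,\epsilon_m]=[\epsilon_n,\epsilon_m]=0$ then force $\epsilon_m\propto\partial_t+\partial_z$, which is null and lies in $W$, so no normalized admissible $\epsilon_m$ exists. Non-degeneracy of $W$ is therefore an extra condition on $\epsilon_n$ relative to $e$. It holds automatically for a time-like corner, where $(\mathrm{Im}\,e)^\perp$ is definite, or for a time-like $\epsilon_n$. The paper's Remark assumes it tacitly; you should state it as a hypothesis, or restrict to the open set of configurations where it holds, rather than claim to derive it.

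Two smaller corrections.
\begin{itemize}
\item The orbit of $e_m$ is not the whole admissible set. $\tilde F_\Gamma$ excludes the hyperplane $W$, so the flows preserve the sign of $\det(e(\partial_1),e(\partial_2),e_m,\epsilon_n)$, and $[e_m]$ is one half-space. This is precisely why the sign of $\epsilon_m$ has to be matched to the orientation.
\item The $e$-dependence of $\epsilon_m$ is harmless for a different reason than the one you give. It is not that $\tilde{\mathcal B}$ does not act on $e_m$; it is that $\tilde{\mathfrak W}$ contains all $e_m$-directions. Any $\tilde{\mathcal B}$-flow can therefore be compensated by an $e_m$-flow so as to stay on the slice $e_m=\epsilon_m(e)$.
\end{itemize}
In fact more is true. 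By Cartan's lemma, $e\alpha=0$ forces the components of $\alpha$ to lie in $\mathrm{Im}\,e$. Hence $\mathrm{Im}\,e$ with its orientation, and so $\epsilon_m$, is constant along $\tilde{\mathcal B}$-orbits: $\epsilon_m$ depends only on $E$ and the fixed $\epsilon_n$. This is what makes its later use in Theorem~\ref{thm: fix e corn}, to fix the representative of $[e]$, non-circular.
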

\begin{remark}
    It is important to notice that $\epsilon_m$ depends on $e$, and therefore $\delta \epsilon_m\neq 0.$
\end{remark}
\begin{remark}
    We can equivalently define $\emm$ by taking the hodge dual of $E\en$ (both in $\wedge^\bullet V$ and $\wedge^\bullet T^* M$). Indeed, assuming $\{v_a\}$ is an orthonormal basis of $V$, i.e. such that $[v_a,v_b]=\eta_{ab}$, one has $v_a v_b v_c v_d = \frac{1}{4!}\epsilon_{abcd} \mathrm{Vol}_V$, hence we can define 
        \begin{equation}
            \emm^a \coloneqq \frac{\varsigma}{2\cdot 4!}\epsilon^{ij}\eta^{ab}\epsilon_{bcdf} e^c_i e^d_j \en^f,  \label{def emm}
        \end{equation}
    where $\varsigma$ is a normalisation factor defined such that $E\emm \en= \mathrm{Vol}_\Gamma \mathrm{Vol}_V$.\footnote{With such definition, one automatically obtains that $[e,\emm]=0$ and $[\en,\emm]=0$.}
\end{remark}

We can now move on to describe the quotient $(\Omega^{1,1}_\Gamma\oplus \mathcal{A}_\Gamma)/\tilde{\mathcal{B}}$, starting from the coframes $[e]$.
Denoting the reduced space of corner vielbeins by $\Omega_{\Gamma,r}^{1,1}$, we have
    \[
    \Omega_{\Gamma,r}^{1,1}\coloneqq \{ [e]\subset \Omega_{\Gamma}^{1,1} \ \vert \ e' \sim e \ \mathrm{iff}  \ \exists \ t\geq 0 \ \mathrm{s.t.} \ e'=\Psi_{\mathbb X}^t(e), \text{ for some } \mathbb X\in \tilde{\mathfrak{W}} \ \},
    \]
where $\Psi_{\mathbb X}^t$ is the flow of $\mathbb X$ at time $t$. Then, on an open $\mathcal{U}_\Gamma\subset\Omega_{\Gamma,r}^{1,1} $, one can identify $\tilde{P}_\Gamma$ with the fiber bundle $\tilde{P}_\Gamma\longrightarrow \Omega_{\Gamma,r}^{1,1}$ such that locally
    \[
    \tilde{P}_\Gamma\big\vert_{\mathcal{U}_\Gamma}\simeq \mathcal{U}_\Gamma\times \mathcal{A}_{\Gamma,r},
    \]
where $\mathcal{A}_{\Gamma,r}\coloneqq \{ [\omega]\subset \mathcal{A}_\Gamma \ \vert \ \omega'\sim \omega \ \mathrm{iff} \ \omega' - \omega = v, \text{ s.t. } ev=0 \}$.

The difficulty of working with quotient spaces is evident. Therefore, it is convenient to find a more manageable description of such spaces. In particular, we can overcome the problem of dealing with $\Omega_{\Gamma,r}^{1,1}$ with two equivalent strategies: fix a representative of $[e]$ by working in a specific submanifold of $\Omega_{\Gamma}^{1,1}$ that is in 1-to-1 correspondence with $\Omega_{\Gamma,r}^{1,1} $, or establish a different bijection between $\Omega_{\Gamma,r}^{1,1}$ and another more convenient space. 

The latter option has been explored in \cite{CC2022cor} and is realized by considering the field redefinition $E\coloneqq \frac{e^2}{2}\in\Omega_\Gamma^{2,2} $. It is easy to prove that $E$ only depends on the equivalence class $[e]$. Indeed, considering the infinitesimal transformation $e\mapsto e+\alpha$, with $e\alpha =0$ and $\alpha^2=0$,\footnote{One can think of $\alpha$ as $\alpha=\mathbb X_e dt$, then $\alpha^2\propto dt^2=0$.} we see 
    \[
    E\longmapsto \frac{1}{2}(e+\alpha)^2= \frac{1}{2}e^2 + e\alpha + \frac{1}{2}\alpha^2=E.
    \]
It was established in \cite{CC2022cor} that any element $B\in\Omega^{2,2}_\Gamma$ is of the form $B=\frac{1}{2}a^2$ for some $a\in\Omega^{1,1}_\Gamma$ if and only if $\mathrm{Pf}(B)=0$, where Pf indicates the Pfaffian. Denoting by $\Omega^{2,2}_{\Gamma,0}$ the subset of $\Omega^{2,2}_{\Gamma}$ of elements with vanishing Pfaffian, one obtains the isomorphism 
    \begin{align*}
        \Omega^{1,1}_{\Gamma,r}\simeq\Omega^{2,2}_{\Gamma,0}.
    \end{align*}

\begin{remark}
    The non-degeneracy assumption of the vielbein $e$ is reformulated by the open condition
        \[
        \epsilon_m \epsilon_n E\neq 0.
        \]
\end{remark}
Alternatively, one can find a representative of $[e]$ thanks to the following.
\begin{theorem}\label{thm: fix e corn}
    Given $e^0\in\Omega^{1,1}_\Gamma$ as a reference vielbein, for all $\tilde{e}\in\Omega^{1,1}_\Gamma$ such that $ \epsilon_m\epsilon_n e(e^0)^2$  induces the same orientation as $\epsilon_m\epsilon_n \tilde{e}^2$,\footnote{In particular, one still requires $\{e_0,\epsilon_m,\epsilon_n\}$ to define a local basis of $\mathcal{V}$.} there exists a unique $e\in[\tilde{e}]$ such that
        \begin{align}\label{constr: fix e corner}
            \epsilon_n \epsilon_m e \in \Ima W_{e^0}^{\Gamma,(0,2)}.
        \end{align}
\end{theorem}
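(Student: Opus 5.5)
The proof is a pointwise linear-algebra argument on the fibres of $\mathcal V$ over $\Gamma$. Since $\Gamma$ is two-dimensional, we fix local coordinates $x^1,x^2$ and write $e=e_1\,\de x^1+e_2\,\de x^2$ with $e_i\in\og^{0,1}$, and similarly $e^0=e^0_1\,\de x^1+e^0_2\,\de x^2$. Then $E=\frac12 e^2$ is, up to a factor, $e_1\wedge e_2\,\de x^1\de x^2$.

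\emph{Step 1: describe the fibre $[\tilde e]$.} The first step is to show that $[\tilde e]$ is exactly the set of $e$ with $\frac12 e^2=\frac12\tilde e^2$. One inclusion was already shown when proving that $E$ depends only on $[e]$. For the converse, the condition $e\mathbb X_e=0$ says that $\mathbb X_{e,i}=g_i{}^j e_j$ for some $g\in\mathfrak{sl}(2)$. So the flows of $\tilde{\mathcal B}$ act on the pair $(e_1,e_2)$ by the connected group $SL(2)$ acting on the form index, and two frames with the same non-zero bivector $e_1\wedge e_2$ differ by exactly such an element. Hence the fibre is the $SL(2)$-orbit $e_i\mapsto g_i{}^j e_j$, a three-dimensional space at each point. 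We also note that $\epsilon_m$, defined by \cref{eq: def e_m}, depends only on the plane spanned by $e_1,e_2$ (equivalently on $E$, by the Hodge-dual description \eqref{def emm}). Therefore $\epsilon_m$ is constant along the orbit, and the condition \eqref{constr: fix e corner} is a well-posed condition on representatives.

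\emph{Step 2: analyse \eqref{constr: fix e corner} in an adapted basis.} By hypothesis $\{e^0_1,e^0_2,\epsilon_m,\epsilon_n\}$ is a basis of $\mathcal V$. Expand
\begin{align*}
e_i=a_i{}^j e^0_j+b_i\epsilon_m+c_i\epsilon_n .
\end{align*}
Then $\epsilon_n\epsilon_m e$ only sees the matrix $a$, namely $\epsilon_n\epsilon_m e=a_i{}^j\,\epsilon_n\epsilon_m e^0_j\,\de x^i$. We write a general $X\in\og^{0,2}$ in the induced basis of $\wedge^2\mathcal V$ and impose $\epsilon_n\epsilon_m e=e^0X$ component by component. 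Any component of $X$ other than those along $\epsilon_n\epsilon_m$ and $e^0_1e^0_2$ produces terms in $e^0 X$ that are not of the form $\epsilon_n\epsilon_m e^0_j$, so these components must vanish. The $e^0_1e^0_2$ component drops out, which accounts for the one-dimensional kernel of $W^{\Gamma,(0,2)}_{e^0}$. With $X=x\,\epsilon_n\epsilon_m+y\,e^0_1e^0_2$, the equation forces $a_i{}^j=x\,\delta_i{}^j$, up to a sign fixed by the graded conventions. A dimension count confirms this: the image of $W^{\Gamma,(0,2)}_{e^0}$ is $5$-dimensional inside the $8$-dimensional fibre of $\Omega^{1,3}_\Gamma$, so the constraint imposes exactly three conditions, matching the three-dimensional fibre from Step 1.

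\emph{Step 3: existence and uniqueness.} Under $e_i\mapsto g_i{}^je_j$ the matrix $a$ transforms as $a\mapsto ga$, while $b,c$ transform along and play no role. Non-degeneracy of $\tilde e$ together with the fact that $\{\tilde e_1,\tilde e_2,\epsilon_m,\epsilon_n\}$ is a basis gives $\det a\neq0$. The orientation hypothesis on $\epsilon_m\epsilon_n\tilde e\,(e^0)^2$ versus $\epsilon_m\epsilon_n\tilde e^2$ gives $\det a>0$. We then need $g\in SL(2)$ with $ga=x\,\mathrm{Id}$. This forces $g=x\,a^{-1}$, and $\det g=1$ gives $x^2=\det a$. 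The orientation requirement selects $x=\sqrt{\det a}>0$, so $g$ exists and is unique. It depends smoothly on the point of $\Gamma$ and on $\tilde e$, which yields a unique global representative $e\in[\tilde e]$.

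The main obstacle is Step 1: checking that the leaf of $\tilde{\mathcal B}$ through $\tilde e$ is the whole $SL(2)$-orbit (connectedness plus the $E$-characterisation), rather than only an infinitesimal neighbourhood of it. The sign bookkeeping of Step 2 under the graded wedge conventions is a secondary, tedious but routine, point.
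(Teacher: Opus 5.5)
The obstacle you flag in Step 1 is harmless at the paper's level of rigour, since the paper identifies classes with fibres of $E$. The real gap is at the end of Step 3. From $ga=x\,\mathrm{Id}$ and $\det g=1$ you correctly get $x^2=\det a$. But both roots are admissible, because $\det(-\sqrt{\det a}\,a^{-1})=\det(\sqrt{\det a}\,a^{-1})=1$ for $2\times 2$ matrices. Concretely, if $e\in[\tilde e]$ satisfies $\epsilon_n\epsilon_m e\in\Ima W^{\Gamma,(0,2)}_{e^0}$, then so does $-e$:
\begin{itemize}
\item $-e$ has the same $E=\frac12 e^2$, so it lies in $[\tilde e]$. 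It even lies on the leaf through $e$: in local coordinates it is reached at time $\pi$ by the flow of the rotation generator $\mathbb X_{e,1}=-e_2$, $\mathbb X_{e,2}=e_1$, which satisfies $e\mathbb X_e=0$ identically.
\item $\epsilon_m$ is unchanged, since it depends only on $E$, as you note yourself.
\item $\epsilon_n\epsilon_m(-e)=e^0(-X)$.
\end{itemize}
The hypothesis you invoke is a condition on $\tilde e$. Its only meaningful version compares $\epsilon_m\epsilon_n(e^0)^2$ with $\epsilon_m\epsilon_n\tilde e^2$; the expression with $\tilde e\,(e^0)^2$ is a $3$-form on the surface $\Gamma$ and vanishes. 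That comparison yields exactly $\det a>0$ and nothing more. No condition on $\tilde e$, or on the class, can separate $e$ from $-e$, since both belong to the same class. So your argument proves existence and uniqueness only up to sign. To get the stated uniqueness you must impose an explicit normalisation on the representative itself, e.g.\ $x>0$. Equivalently, require that $\epsilon_m\epsilon_n\,e\,e^0$ induce the same orientation as $\epsilon_m\epsilon_n(e^0)^2$; this may be what the statement's orientation clause, which mentions $e$, intends. It does not follow from the hypothesis on $\tilde e$.

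Apart from this, your route is sound and genuinely different from the paper's. The paper applies Lemma~\ref{lem: decomp (1,3)} to write $\epsilon_m\epsilon_n\tilde e=e^0\alpha+\epsilon_m\epsilon_n\beta$ and sets $e=\tilde e-\beta$. This gives the unique constrained element of the affine class $\tilde e+\ker W^{\Gamma,(1,1)}_{e^0}$: in your notation $a\mapsto a+h$ with $h$ traceless, so $x=\frac12\operatorname{tr}a$, and no sign ambiguity arises. The paper then declares it ``immediate'' that these affine classes correspond to the true classes $[\tilde e]$. Your approach works directly in the true class (the pointwise $SL(2)$-orbit, i.e.\ the fibre of $E$) and avoids that unproved identification. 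It also shows why an orientation hypothesis is needed at all: existence requires $\det a>0$, and the paper's argument never uses this. The price is the residual $\mathbb Z_2$, which you have to fix by hand.
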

\begin{proof}
    By Lemma \ref{lem: decomp (1,3)}, we have that in general
        \[
        \epsilon_m \epsilon_n \tilde{e} = e^0 \alpha + \epsilon_m \epsilon_n \beta, \qquad \mathrm{s.t.}\qquad \begin{cases}
            \epsilon_m \epsilon_n \alpha=0\\
            e^0 \beta=0
        \end{cases}
        \]
    Therefore, defining $e\coloneqq \tilde{e}-\beta$ uniquely defines a representative of the equivalence class 
    \[
    [\tilde{e}]_0\coloneqq \{ e\in\Omega_\Gamma^{1,1} \ | \ e^0(\tilde{e}-e)=0\}.
    \]
    However, it is immediate to see that the two spaces $\Omega^{1,1}_{\Gamma,r_0} \coloneqq \{[\tilde{e}]_0\subset\Omega^{1,1}_\Gamma \} $ and $\Omega^{1,1}_{\Gamma,r}$ are isomophic. Therefore, fixing a representative of $[\tilde{e}]_0$ is equivalent to fixing a representative of $[{e}]$.
\end{proof}
\begin{remark}
    To have a more constructive perspective on the last theorem, one can alternatively notice that it is possible to write 
    \[e=\iota_\chi e^0 + \alpha_m\epsilon_m +\alpha_n \epsilon_n,\]
    with $\chi\in\Omega^{1,0}_\Gamma\otimes \mathfrak{X}(\Gamma)$ and $\alpha_m,\alpha_n\in\Omega^{0,0}_\Gamma$. Then the constraint $\epsilon_m \epsilon_n e = e^0 \alpha$ implies 
    \[\iota_{\chi}e^0 \epsilon_m\epsilon_n = e^0 \alpha,\]
    which, after retracing the steps in the proof of \ref{lem: decomp (1,3)}, is solved by $\iota_\chi e^0=f e^0$ for some $f\in\Omega^{0,0}_\Gamma $. Therefore, \cref{constr: fix e corner} imposes 3 equations, which ultimately fix the representative in the 5-dimensional space $\Omega^{1,1}_{\Gamma,r}$.  

    Additionally, from the proof of proposition 62 in \cite{CC2022cor}, we have that the isomorphism $\Omega_{\Gamma,r}^{1,1}\simeq \Omega_{\Gamma,0}^{2,2} $ is ensured by requiring that $E=\frac{1}{2}e^2$ is such that $e=fe^0 + \alpha_m \epsilon_m + \alpha_n \epsilon_n $, which is exactly what we find by imposing \cref{constr: fix e corner}.

\end{remark}

We are still left with the task of describing $\mathcal{A}_{\Gamma,r} $. Again, we can obtain an equivalent description by introducing a reference connection $\omega_0$,\footnote{This enables us to consider $\og^{1,2}$ as our model space for $\mathcal{A}_\Gamma$, as $(\omega-\omega_0)\in\og^{1,2}$.} and by considering the field redefinition
    \[\Omega\coloneqq e(\omega-\omega_0).\]
Indeed the map $W_1^{\Gamma,(1,2)}\colon (\omega-\omega_0)\mapsto e(\omega-\omega_0) $ is only sensitive to the equivalence class $[\omega-\omega_0]$. Furthermore, due to the surjectivity of $W_1^{\Gamma,(1,2)}$, we have that $W_1^{\Gamma,(1,2)}$ establishes an isomorphism
    \[
    \Omega_{\Gamma,r}^{1,2}\simeq \Omega_{\Gamma}^{2,3},
    \]
where $\Omega_{\Gamma,r}^{1,2}\coloneqq \{ [\omega-\omega_0]\subset\og^{1,2} \ | \  \omega\sim \omega' , \iff  \omega-\omega'\in\Ker{W_1^{\Gamma,(1,2)}}\}$.
\begin{remark}
    Fixing the representative of $[e]\in\Omega^{1,1}_{\Gamma,r}$ ensures that the maps $W_e^{\Gamma,(i,j)}$ are well defined on the quotient space $\tilde{P}_\Gamma$, guaranteeing that $\Omega$ is well-defined.
\end{remark}    

Equivalently, we can obtain a description of $\Omega_{\Gamma,r}^{1,2}$ by the following.
\begin{theorem}\label{thm: fix omega corner}
    For any $\omega'\in\mathcal{A}_\Gamma $, there exist unique $r\in\og^{0,1}$, $b\in\og^{1,0}$ and $v\in\og^{1,2}$ such that
    \[
    \omega'-\omega_0= v + er + \epsilon_m \epsilon_n b, \quad \mathrm{s.t.} \quad \begin{cases}
        [e,r]=\rog^{0,1} r=0\\
        ev=0
    \end{cases}
    \]
    Therefore, the constraint
    \begin{equation}\label{constr: fix omega corner}
        \omega-\omega_0= er + \epsilon_m\epsilon_nb
    \end{equation}
    uniquely fixes a representative of $[\omega-\omega_0]$.
\end{theorem}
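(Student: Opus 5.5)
The plan is to prove the statement pointwise on $\Gamma$ as a linear-algebra fact about $\og^{1,2}$, and then read off \cref{constr: fix omega corner} as a choice of representative. Set
\[
S_e\coloneqq\{er+\epsilon_m\epsilon_n b \ | \ r\in\og^{0,1},\ [e,r]=0,\ b\in\og^{1,0}\}\subset\og^{1,2}.
\]
The claim is equivalent to the pointwise direct-sum decomposition
\[
\og^{1,2}=\Ker{W_1^{\Gamma,(1,2)}}\oplus S_e,
\]
together with uniqueness of $r$ and $b$ once the element $er+\epsilon_m\epsilon_n b\in S_e$ is given. Applying this decomposition to $\omega'-\omega_0$ yields $v$, $r$ and $b$.

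\textbf{Dimension count.} I work in the local basis $\{e_1,e_2,\epsilon_m,\epsilon_n\}$ of $\mathcal V$, with $e=e_i\,dx^i$, $i=1,2$, which is available by the non-degeneracy assumption and the definition \cref{eq: def e_m} of $\epsilon_m$.
\begin{itemize}
\item Pointwise, $\og^{1,2}$ has dimension $2\cdot 6=12$ and $\og^{2,3}$ has dimension $1\cdot 4=4$.
\item Since $W_1^{\Gamma,(1,2)}$ is surjective (as used above), its kernel has dimension $8$.
\item The condition $[e,r]=0$ says that $r$ is $\eta$-orthogonal to $e_1$ and $e_2$. By the defining relations of $\epsilon_m$ (and $\epsilon_n\perp e$), this means $r=r^m\epsilon_m+r^n\epsilon_n$, which gives $2$ parameters. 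The $1$-form $b$ on $\Gamma$ gives another $2$.
\end{itemize}
So $S_e$ is the image of a $4$-dimensional parameter space. It therefore suffices to show that $W_1^{\Gamma,(1,2)}$ restricted to this parametrisation is injective. That single fact gives three things at once: $(r,b)\mapsto er+\epsilon_m\epsilon_n b$ is injective, $S_e\cap\Ker{W_1^{\Gamma,(1,2)}}=0$, and $\dim S_e=4$, hence the direct sum by dimension count.

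\textbf{Injectivity (main step).} I compute $e(er+\epsilon_m\epsilon_n b)=e^2 r+e\,\epsilon_m\epsilon_n b$ in this basis.
\begin{itemize}
\item The first term is $2\,e_1e_2(r^m\epsilon_m+r^n\epsilon_n)\,dx^1dx^2$.
\item The second term is $(e_1b_2-e_2b_1)\epsilon_m\epsilon_n\,dx^1dx^2$, up to a sign.
\end{itemize}
The four trivectors $e_1e_2\epsilon_m$, $e_1e_2\epsilon_n$, $e_1\epsilon_m\epsilon_n$, $e_2\epsilon_m\epsilon_n$ are linearly independent in $\wedge^3\mathcal V$. Hence vanishing of the sum forces $r^m=r^n=0$ and $b_1=b_2=0$. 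This is where the specific choice of $\epsilon_m$ (linear independence from $e$ and $\epsilon_n$, and $[e,\epsilon_m]=0$) is essential.

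The main care needed is in handling the graded signs and the identification $[e,r]=\rog^{0,1}r$ correctly; the computation itself is short. As a cross-check, the same conclusion could alternatively be extracted from the analogous decomposition lemmas used for \cref{thm: fix e corn} (Lemma \ref{lem: decomp (1,3)}).

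\textbf{Conclusion.} Existence and uniqueness of $(v,r,b)$ follow from the direct sum. Since two connections are equivalent exactly when they differ by an element of $\Ker{W_1^{\Gamma,(1,2)}}$, each class $[\omega-\omega_0]$ meets the affine slice \cref{constr: fix omega corner} in exactly one point. Smoothness in $e$ is automatic, because the decomposition is given by a pointwise linear projector depending smoothly on $e$, $\epsilon_m$ and $\epsilon_n$.
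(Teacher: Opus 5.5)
Your proof is correct, and its key step differs from the paper's. There is one small slip to fix first. The paper never assumes $\epsilon_n\perp e$: on the boundary $\epsilon_n$ is only required to complete $e$ to a basis, with $\delta\epsilon_n=0$. Hence $\ker\rog^{0,1}$ is spanned by $\epsilon_m$ and $u\coloneqq\epsilon_n-\mathrm{pr}_{\operatorname{span}(e_1,e_2)}\epsilon_n$, not by $\epsilon_m$ and $\epsilon_n$. Since $e_1e_2e_i=0$, one has $e_1e_2u=e_1e_2\epsilon_n$. So your formula for $e^2r$ is unchanged once $r^n$ is read as the coefficient of $u$, and the independence of the four trivectors finishes the argument verbatim.

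\textbf{Comparison with the paper.} The paper proves the three-term splitting $\og^{1,2}=\ker W_1^{\Gamma,(1,2)}\oplus e\ker\rog^{0,1}\oplus\epsilon_m\epsilon_n\og^{1,0}$ with the same count $8+2+2=12$. It then gets independence from the thickening lemmas on $\Gamma\times[0,1]^2$:
\begin{itemize}
\item multiplying $v+er+\epsilon_m\epsilon_nb=0$ by $\epsilon_n$ and $\epsilon_m$ and applying Lemma~\ref{lem: (1,2) vanish} gives $v=0$;
\item Lemma~\ref{lem: (0,1) vanish} then gives $r=0$;
\item injectivity of $W^{1,0}_{mn}$ gives $b=0$.
\end{itemize}
These are the relevant lemmas, not Lemma~\ref{lem: decomp (1,3)}, which concerns $\og^{1,3}$.

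You instead show in one stroke that $W_1^{\Gamma,(1,2)}$ maps the four-dimensional slice isomorphically onto $\og^{2,3}$. This single fact gives injectivity of the parametrisation, trivial intersection with the kernel, and the direct sum. It also shows directly that the slice is compatible with the redefinition $\Omega=e(\omega-\omega_0)$ used right after the theorem.

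Your computation also uses $[e,r]=0$ exactly where it is needed: for $r=e_i$ one has $er\in\ker W_1^{\Gamma,(1,2)}$, so without it the sum is not direct. The paper's independence argument never invokes $[e,r]=0$ explicitly. The last hypothesis of Lemma~\ref{lem: (1,2) vanish} is $\epsilon_nv_m-\epsilon_mv_n=0$. With $v_n=\epsilon_nr$ and $v_m=\epsilon_mr$ this amounts to $\epsilon_n\epsilon_mr=0$, i.e.\ $r\in\operatorname{span}(\epsilon_m,\epsilon_n)$. That is the same orthogonality you assumed, and your frame computation does not actually need it.

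The paper's route has the advantage of fitting the uniform machinery of Theorem~\ref{thm: coframes prop} and the $M_{mn}$ lemmas used throughout the appendix.
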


\begin{proof}
We want to     prove that
    \[
    \og^{1,2}= \ker W_1^{\Gamma,(1,2)} \oplus e\ker (\rog^{0,1}) \oplus \epsilon_m\epsilon_n \og^{1,0}.
    \]
We start by checking that the dimensions of the subspaces add up to the dimension of $\og^{1,2}$. In particular, we know by \cite{CC2022cor} that $\dim (\ker (W_1^{\Gamma,(1,2)}))=8$, while it is easy to check that $[e,r]=0$ is a set of two independent equations, hence $\dim (\ker(\rog^{0,1}))=2$. One also trivially has $\dim\og^{1,0}=2$ . We observe
    \[
    \dim \og^{1,2}=12=\dim\ker W_1^{\Gamma,(1,2)} + \dim\ker\og^{0,1} + \dim\og^{1,0}.
    \]
Secondly, from \cref{thm: coframes prop} one sees that $W_1^{\Gamma,(0,1)} $ is injective. In the same way, a simple computations shows that $W_{mn}^{1,0}$ is also injective. 
We are left to show that $v+er+\epsilon_m\epsilon_nb=0$ implies $v=0$, $r=0$ and $b=0$. That is to say the three terms are linearly independent. 

By separately multiplying by $\epsilon_m$ and $\epsilon_n$ we obtain
    \[\epsilon_n v = \epsilon_n e r, \qquad \epsilon_m v = \epsilon_m e r,\]
Therefore, by defining $v_n=\epsilon_n r$ and $v_m=\epsilon_n r$, since $ev=0$, we can apply Lemma \ref{lem: (1,2) vanish} to show that $v=0$.

We are then left with $er=\epsilon_n\epsilon_m b$, from which we infer
    \[
    \begin{cases}
        e^2 \epsilon_n r =0\\
        e^2 \epsilon_m r=0\\
        e\epsilon_n\epsilon_m r=0
    \end{cases}
    \]
then, from \ref{lem: (0,1) vanish}, it follows $r=0$. Lastly, $\epsilon_n \epsilon_m b=0$ is uniquely solved by $b=0$ due to injectivity of $W_{mn}^{1,0}$.
\end{proof}

Given $\omega-\omega_0=er +\epsilon_m \epsilon_n b$, one would be tempted to fix 
\[
(\omega-\omega_0)_m=\epsilon_m (r +\epsilon_n b_m),
\]
however, the following choice will be more suitable for the computations to come.
\begin{definition}\label{fixomega_m}
    We fix the representative of $[\omega_m]$ by requiring
    \begin{equation}\label{eq: fix omega_m}
        (\omega-\omega_0)_m=0.
    \end{equation}
\end{definition}
%\begin{remark}
%    Notice that, as $\epsilon_m$, one has that $\delta(\omega-\omega_0)_m\neq 0$, as it depends both on $e$ and $(\omega-\omega_0)$. Moreover, it is also convenient to define
%    \begin{equation}
%        \Omega_m\coloneq e(\omega-\omega_0)_m + \epsilon_m(\omega-\omega_0)=e\epsilon_m r.
%    \end{equation}
%    with the property that $[e,\Omega_m]=0$.
%\end{remark}

Finally, we can define the reduced space of corner fields $P_\Gamma$ by imposing the constraints \cref{eq: constr omega corner}, \cref{eq: constrm dme corner} and \cref{add_constr_F}. 

\begin{proposition}
    The constraint \cref{eq: constrm dme corner} fixes the values of the transversal jet $\partial_m e$.
\end{proposition}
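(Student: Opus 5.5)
The plan is to substitute the gauge choices already made on $\tilde P_\Gamma$ into the constraint \cref{eq: constrm dme corner}, namely
\[
\de_{\omega_m}e-\de_\omega e_m=0 ,
\]
and read it as a linear equation for $\partial_m e$. There are two gauge choices to use. First, $e_m$ is represented by $\epsilon_m$, so the $m$-component of the coframe is a function of $e$ alone. Second, \cref{eq: fix omega_m} sets $(\omega-\omega_0)_m=0$, so $\omega_m=(\omega_0)_m$.

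Expanding the covariant derivatives in the transversal direction gives
\[
\de_{\omega_m}e=\partial_m e+[(\omega_0)_m,e].
\]
The constraint then becomes
\[
\partial_m e=\de_\omega\epsilon_m-[(\omega_0)_m,e].
\]
The right-hand side involves only $e$, the chosen representative $\omega=\omega_0+er+\epsilon_m\epsilon_n b$ from \cref{thm: fix omega corner}, and reference data. So the $\Omega^{1,1}_\Gamma$-valued transversal jet $\partial_m e$ is uniquely determined.

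The step I expect to be the main obstacle is consistency. Since $\partial_m e$ is a jet in $\tilde F_\Gamma$, I must check that fixing it this way is compatible with the quotient by $\tilde{\mathfrak W}$, which already allowed arbitrary values of the jets. I must also check that it does not clash with the remaining constraints \cref{constr: corner omega_m} and \cref{add_constr_F}.

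For the first point, $\tilde{\mathfrak W}$ contains $\mathfrak X(\mathfrak J_m(\cdot))$, so every value of $\partial_m e$ lies in a single equivalence class. Choosing the representative dictated by the constraint is therefore legitimate, in the same way as the choices of $\epsilon_m$ and $(\omega-\omega_0)_m$.

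For the second point, \cref{constr: corner omega_m} should then reduce to a definition of $\sigma_m$. By the Remark equating $F_\Gamma$ with working on shell, together with Lemma \ref{lem: (2,1) constraint vanish}, the structural constraints are equivalent to \cref{eq: constr omega corner} and \cref{eq: constrm dme corner}. So once $\partial_m e$ is chosen as above, $\sigma_m$ is fixed accordingly and nothing further is imposed.

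Finally, I will remark on $\partial_m\omega$. It enters only \cref{add_constr_F} through $F_{\omega_m}$, and possibly higher jets. So the proof should isolate $\partial_m e$ as the only quantity constrained by \cref{eq: constrm dme corner}, leaving the remaining jets to be treated by the subsequent constraints.
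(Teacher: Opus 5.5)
Your proposal is correct and is essentially the paper's proof. The paper solves \cref{eq: constrm dme corner} with $e_m=\epsilon_m$ as $\partial_m e=\de_\omega\epsilon_m-[\omega_m,e]$, which is your formula before you substitute $\omega_m=\omega^0_m$ from \cref{eq: fix omega_m}; your checks against the quotient by $\tilde{\mathfrak W}$ and the remaining constraints go beyond the paper's one-line argument but are consistent with it.
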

\begin{remark}\label{rem: partial m}
    In principle, one cannot integrate by parts with respect to $\partial_m$, as $\partial_m e$ and $\partial_m \omega$ are to be treated as fields. However, this operation becomes rigorous when considering the whole space of jets to all orders.
\end{remark}

\begin{proof}
   Wee see that \cref{eq: constrm dme corner} is such that 
    \[
    \partial_m e = \de_\omega \epsilon_m - [\omega_m,e],
    \]
    hence fixing $\partial_m e$. 
\end{proof}
\begin{remark}
    In the following subsection, we will see how to fix the transversal jet $\partial_m \omega$.
\end{remark}
Lastly, since the higher order transversal jets never appear in any computations, we simply fix them by requiring that the vanish in the quotient space $\tilde{F}_\Gamma/\tilde{\mathfrak{W}}$. The last constraint fixes some components of $\omega$, by requiring $\de_\omega e=0$, but not all of them.

\subsubsection{Bianchi identities and boundary fields revisited}\label{sec: Bianchi}

Bianchi identities in the bulk simply read
    \[
    \de_\omega F_\omega = 0 \qquad \text{and} \qquad \de_\omega^2(-)=[F_\omega,-].
    \]
The question arises on how such equations ``propagate'' to higher codimension strata.\footnote{In the following, we will denote bulk, boundary and corner fields with the same notation. They will be distinguishable by the context.} In order to answer the question fully, one needs to consider the transversal fields and jets to the boundary $\Sigma$.  In a way similar to the one presented above, it is possible to fix the values of these extra fields such that the resulting theory is equivalent to the one we presented in Section \ref{sec: bdry struct of PC}, when restricting to the coisotropic submanifold $C_\Sigma$. In particular, considering $(e_n,\omega_n)\in\Omega^{0,1}_\Sigma\oplus \Omega^{(0,2)}_\Sigma $  and the transversal jets $\partial ^k_n e $ and $\partial^i_n\omega$ of the fields at all orders, one can, for example, rewrite the condition $\de_\omega e =0$ as 
    \begin{align}
        e \de_\omega e = 0, \\
        \epsilon_n \de_\omega e = e (\de_{\omega_n} e - \de_\omega \epsilon_n) \label{eq: second bdry constr},
    \end{align}
having fixed the value of $e_n$ as $e_n =\epsilon_n$. As already mentioned in Remark \ref{rem: brdy strct constr}, notice here that \cref{eq: second bdry constr} is a reinterpretation of the structural constraint \cref{eq: str constr bdry} which includes the transversal jet $\partial_ne$. Furthermore, if we restrict ourselves to the submanifold where the full torsion constraint $\de_\omega e = 0$ is imposed, \cref{eq: second bdry constr} reduces to 
    \[
   \de_{\omega_n} e = \de_\omega \epsilon_n, 
    \]
which fixes the value of the transversal jet $\partial_n e$. 

On the other hand, when considering the transversal fields and jets, Einstein's equation in the bulk gives rise to an additional constraint descending to the boundary:
    \begin{equation}
        \en F_\omega = eF_{\omega_n} \label{constr omega_n}.
    \end{equation}
At first, one could think that this equation should somehow fix some components the curvature $F_\omega$, however, one can always find an element $\gamma\in\Omega_\Sigma^{1,2} $ such that $\en F_\omega=e\gamma$, as $W_{e}^{\Sigma,(1,2)}$ is surjective. We can then use \cref{constr omega_n} to fix some components of $F_{\omega_n}$ (and henceforth of $\partial_n \omega$), leaving out the ones which lie in the kernel of $W_e^{\Sigma,{1,2}}$. 

Taking the covariant derivative of the torsionless condition, the second Bianchi identity then simply implies $$[F_\omega, e]=0.$$ 
We stress that such equation holds only on the coisotropic submanifold which we called $C_\Sigma$, where $\de_\omega e =0$ has been imposed.

The exterior covariant derivative of \cref{eq: second bdry constr} yields 
    \begin{align*}
        \de_\omega \en \de_\omega e - \epsilon_n [F_\omega,e]=\de_\omega e (\de_{\omega_n}e - \de_\omega \en) + e (\de_{\omega_n}\de_\omega e + [F_{\omega_n},e] - [F_\omega,\en] ),
    \end{align*}
which, applying the on-shell condition and the second Bianchi identity, implies 
    \begin{equation}
        e([F_{\omega_n},e] - [F_\omega,\en]) =0. \label{eq: fake second bianchi}
    \end{equation}
This suggest considering the following equation, which we will refer to as $n$\textbf{-transversal Bianchi identity}:
    \begin{equation}
        [e,F_{\omega_n}]=[\epsilon_n,F_\omega]. \label{second constr omega_n}
    \end{equation}
\begin{proposition}\label{prop: fix partialnpmega}
    The $n$-transversal Bianchi identity \cref{second constr omega_n} and the $n$-transversal Einstein equation \cref{constr omega_n} uniquely fix $F_{\omega_n}$, hence the transversal jet $\partial_n \omega$ to the boundary. 
        
\end{proposition}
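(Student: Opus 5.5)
The strategy is to view the two equations as a linear system for the unknown $\gamma\coloneqq F_{\omega_n}\in\Omega^{1,2}_\Sigma$. Here $e$, $\en$ and $\omega$ are fixed, so $F_\omega$ is known and $\en F_\omega$, $[\en,F_\omega]$ are given sources. The system is
\begin{align*}
W_1^{\Sigma,(1,2)}(\gamma)=e\gamma=\en F_\omega,\qquad \varrho^{(1,2)}(\gamma)=[e,\gamma]=[\en,F_\omega].
\end{align*}
The proof then splits into existence and uniqueness of $\gamma$. Afterwards one recovers $\partial_n\omega$ from $F_{\omega_n}=\partial_n\omega-\de_\omega\omega_n$ (up to sign conventions), with $\omega_n$ already fixed as a representative in the quotient. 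Hence fixing $F_{\omega_n}$ fixes $\partial_n\omega$.

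\textbf{Uniqueness.} Suppose $\gamma$ satisfies $e\gamma=0$ and $[e,\gamma]=0$. By the description of $\Ker{W_1^{\Sigma,(1,2)}}$ used in \cite{CCS2020}, the first condition leaves a space of dimension $18-12=6$, since $\dim\Omega^{1,2}_\Sigma=18$ and $\dim\Omega^{2,3}_\Sigma=12$. I would then show that $\varrho^{(1,2)}$ restricted to $\Ker{W_1^{\Sigma,(1,2)}}$ is injective. This is the boundary analogue of the structural-constraint argument behind \cref{uniquerep_free}. There, $\Ker{W_1}$ is complemented by the image of the $\en$-part, and $\varrho$ detects exactly the kernel directions. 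Concretely, I would work in an adapted basis $\{e_1,e_2,e_3,\en\}$ and write a general $v\in\Ker{W_1^{\Sigma,(1,2)}}$ in components. Then I would check that $[e,v]=0$ forces all six free components to vanish. I expect this linear-algebra check to be the main obstacle. I would first try the parametrisation of $\Ker{W_1^{\Sigma,(1,2)}}$ from \cite{CCS2020}, i.e. $v$ satisfying the symmetric/traceless conditions in the $e_i$ indices, and show that $[e,\cdot]$ maps it isomorphically onto a $6$-dimensional subspace of $\Omega^{2,1}_\Sigma$.

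\textbf{Existence.} Since $W_1^{\Sigma,(1,2)}$ is surjective, choose $\gamma_0$ with $e\gamma_0=\en F_\omega$. We look for $v\in\Ker{W_1^{\Sigma,(1,2)}}$ such that
\[
[e,v]=[\en,F_\omega]-[e,\gamma_0].
\]
The right-hand side lies in $\Omega^{2,1}_\Sigma$, which has dimension $12$, while the image of the kernel has dimension only $6$. So we must check that the right-hand side lies in $\varrho(\Ker{W_1})$. To do this I would characterise that image by compatibility conditions. The natural candidate is the condition that $e\wedge(\cdot)$ of the element vanishes. This is precisely what \cref{eq: fake second bianchi} provides, because
\[
e\bigl([\en,F_\omega]-[e,\gamma_0]\bigr)=-e\bigl([e,\gamma_0]-[\en,F_\omega]\bigr)=0
\]
on shell. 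One must then verify that $\{\beta\in\Omega^{2,1}_\Sigma : e\beta=0\}$ coincides with $\varrho(\Ker{W_1^{\Sigma,(1,2)}})$. Here $e\beta\in\Omega^{3,2}_\Sigma$, which is $6$-dimensional, so these are $6$ conditions on a $12$-dimensional space, leaving $6$ dimensions. Combined with the injectivity established above, this gives equality by dimension count, provided $W_1^{\Sigma,(2,1)}$ is surjective. That surjectivity is another finite linear-algebra check in the adapted basis.

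Putting uniqueness and existence together yields a unique $F_{\omega_n}$, and hence, as explained above, the transversal jet $\partial_n\omega$.
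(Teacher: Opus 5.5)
Your proposal is correct and is essentially the paper's argument. The paper counts the $12$ equations of the transversal Einstein equation plus the $6$ components of the transversal Bianchi identity that remain once its $e\wedge$-part is shown to hold automatically on $C_\Sigma$ (rewritten there as $[\epsilon_n,eF_\omega]=0$); your injectivity of $\varrho^{(1,2)}$ on $\ker W_1^{\Sigma,(1,2)}$, together with $\varrho(\ker W_1^{\Sigma,(1,2)})=\ker W_1^{\Sigma,(2,1)}$, is exactly the linear algebra that makes this $12+6=18$ count rigorous.

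Two small points should be tightened. First, your dimension argument needs the inclusion $\varrho(\ker W_1^{\Sigma,(1,2)})\subseteq\ker W_1^{\Sigma,(2,1)}$, which follows from $e[e,v]=\pm[e,ev]$ and $[e,e]=0$. Second, compatibility for an arbitrary $\gamma_0$ should be checked directly via $e[e,\gamma_0]=\pm[e,\epsilon_n F_\omega]$, using $eF_\omega=0$ and $[e,F_\omega]=0$, rather than by quoting \cref{eq: fake second bianchi}, which concerns the actual $F_{\omega_n}$.
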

\begin{proof}
    We give some hints of proof here and leave the full one for the reader.% Appendix \ref{app: proof Bianchi}.
    
    We start by noticing that $e([F_{\omega_n},e] - [F_\omega,\en]) =0$ and $[F_{\omega_n},e] - [F_\omega,\en]=0$ are not equivalent. In particular, the latter comprises 12 equations while the former only 6. However, we argue that equation \cref{eq: fake second bianchi} holds trivially on $C_\Sigma$. Indeed, using the fact that $[e,e]=0$, \cref{constr omega_n} and the second Bianchi identity $[e,F_\omega]=0$, one can rewrite \cref{eq: fake second bianchi} as 
        \begin{align*}
            [\epsilon_n, eF_\omega]=0.
        \end{align*}
        Therefore, on $C_\Sigma$ \cref{second constr omega_n} only imposes 6 equations (lying in the kernel of $W_e^{\Sigma,(1,2)} $),  which, along with the 12 from \cref{constr omega_n}, uniquely fix the 18 components of $\partial_n \omega$.\footnote{Strictly speaking, the above equations fix $F_{\omega_n}$, but $F_{\omega_n}=-\partial_n\omega + \de_\omega(\omega_n)$.}
\end{proof}

Now we can repeat the same arguments for the corner fields. We summarise the result in the following table, setting $\sigma \coloneqq\de_{\omega_n} e - \de_\omega \epsilon_n$. 
% Please add the following required packages to your document preamble:
% \usepackage{booktabs}
\begin{table}[H]\label{tab: bdry corner equations}
\centering
\begin{tabular}{@{}llll@{}}
\cmidrule(l){2-4}
                                 & $M$                                     & \hspace{18mm}$\Sigma$ & \hspace{22mm}$\Gamma$ \\ \cmidrule(l){2-4} \\
                                  & $\de_\omega e = 0$      & \makecell{$e \de_\omega e = 0,$ \\                                  
        $\epsilon_n \de_\omega e = e (\de_{\omega_n} e - \de_\omega \epsilon_n) $ }       &  \makecell{$\emm \de_\omega e = e(\de_{\omega_m} e - \de_\omega \epsilon_m),$ \\                                  
        $\epsilon_n \de_\omega e = e (\de_{\omega_n} e - \de_\omega \epsilon_n) $, \\
        $\en (\de_\omega \emm -\de_{\omega_m}e) = \emm \sigma + e\sigma_m $}     \\
        \\ \cmidrule(l){2-4} \\         
                                & $[e,F_\omega]=0 $ & \makecell{$[e,F_\omega]=0,$\\
                                    $[\en, F_\omega]=[e,F_{\omega_n}]$}        &  \makecell{ $[\en, F_\omega]=[e,F_{\omega_n}]$,\\ $[\emm,F_\omega]=[e,F_{\omega_m}],$\\ $[\en, F_{\omega_m}]=[\emm,F_{\omega_n}]$ }       \\
                                \\ \cmidrule(l){2-4} \\  
                                & $eF_\omega=0$        & \makecell{$e F_\omega =0$, \\ $\en F_\omega = e F_{\omega_n} $}        & \makecell{$\en F_\omega = e F_{\omega_n}$, \\ $\emm F_\omega = e F_{\omega_m}$, \\ $\en F_{\omega_m}=\emm F_{\omega_n} $ }        \\ 
                                \\\cmidrule(l){2-4} 
                                &                                         &          &         
\end{tabular}
\caption{Torsion, Einstein and second Bianchi in various codimensions} 
\end{table}

\begin{proposition}
    The transversal jet $\partial_m \omega$ is uniquely fixed (in the reduced space of corner fields $P_\Gamma$) by requiring 
    \begin{align*}
        & \emm F_\omega = e F_{\omega_m}, &\en F_{\omega_m}=\emm F_{\omega_n}, \\ 
        & [\emm,F_\omega]=[e,F_{\omega_m}], &[\en, F_{\omega_m}]=[\emm,F_{\omega_n}].
    \end{align*}
    
\end{proposition}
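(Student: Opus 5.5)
The plan follows the argument of \cref{prop: fix partialnpmega}, now one codimension higher. In $P_\Gamma$ the transversal connection is fixed by \cref{eq: fix omega_m}, i.e.\ $\omega_m=(\omega_0)_m$. Hence
\[
F_{\omega_m}=-\partial_m\omega+\de_\omega\omega_m
\]
depends on $\partial_m\omega$ through an affine bijection. It therefore suffices to show that the four equations determine $X\coloneqq F_{\omega_m}\in\og^{1,2}$ uniquely, which is $12$ unknowns. All other quantities in the equations are already fixed: $F_\omega$ is a function of the reduced fields, and $F_{\omega_n}$ is fixed by \cref{prop: fix partialnpmega} (restricted to $\Gamma$). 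So the system is linear in $X$.

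The first step is to study the linear map
\[
\Phi\colon \og^{1,2}\to \og^{2,3}\oplus\og^{1,3}\oplus\og^{2,1}\oplus\og^{1,1},\qquad X\mapsto (eX,\ \en X,\ [e,X],\ [\en,X]),
\]
and prove that it is injective. I would expand $X$ in the local basis $\{e_i,\emm,\en\}$ of $\mathcal V$ built from \cref{eq: def e_m}. This gives
\[
X=X^{ij}e_ie_j+X^{i}_m e_i\emm+X^i_n e_i\en+X_{mn}\emm\en,
\]
with each coefficient a one-form on $\Gamma$. The argument would use \cref{lem: (1,2) vanish}, \cref{lem: (0,1) vanish} and the injectivity of $W_{mn}^{1,0}$, in the same way as in \cref{thm: fix omega corner}, and proceed in three stages:
\begin{itemize}
\item First use $eX=0$ and $\en X=0$. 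By \cref{lem: (1,2) vanish} (with the $\emm$-component still free), these leave only the components in $\Ker{W_1^{\Gamma,(1,2)}}\cap\Ker{W_{\en}}$.
\item Next use the bracket conditions $[e,X]=0$ and $[\en,X]=0$. Since $[e,\emm]=[\en,\emm]=0$ and the basis elements are orthonormal, these contract away the remaining components.
\item Check the count: $\dim\Ker{W_1^{\Gamma,(1,2)}}=8$. The Einstein-type equations supply at most $4+8$ conditions, but they are not independent, and the Bianchi-type ones must make up the deficit. This is the same phenomenon as on the boundary, where $6$ Bianchi equations complemented $12$ Einstein equations.
\end{itemize}

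The second step is existence, i.e.\ consistency of the overdetermined system, since the target of $\Phi$ has dimension $4+8+4+8=24>12$. The right-hand sides are $(\emm F_\omega,\ \emm F_{\omega_n},\ [\emm,F_\omega],\ [\emm,F_{\omega_n}])$. I must show that they lie in $\Ima\Phi$ on $P_\Gamma$. To do this I would identify the linear relations satisfied by elements of $\Ima\Phi$, for instance identities of the form
\[
\en(eX)-e(\en X)=0,\qquad [\en,eX]=e[\en,X]+[\en,e]X,
\]
together with the graded Leibniz rule for $[\cdot,\cdot]$. I would then verify that the right-hand sides satisfy the same relations. This uses the constraints that hold there: \cref{eq: constr omega corner}, \cref{eq: constrm dme corner}, \cref{add_constr_F}, the boundary relations $[e,F_\omega]=0$ and $\en F_\omega=eF_{\omega_n}$, and $[\en,F_\omega]=[e,F_{\omega_n}]$. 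It also uses $[e,\emm]=[\en,\emm]=0$, exactly as the redundancy \cref{eq: fake second bianchi} was shown to be automatic on $C_\Sigma$ by rewriting it as $[\en,eF_\omega]=0$.

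I expect the main obstacle to be this compatibility check together with the precise rank count. The Einstein and Bianchi blocks overlap in a way that is not obvious from dimension counting alone, and one must see which combinations of the $24$ equations are identically satisfied on-shell. I would first compute $\Phi$ explicitly in the adapted orthonormal basis. I would split $X$ according to $\ker W_1^{\Gamma,(1,2)}$ versus a complement on which $W_1^{\Gamma,(1,2)}$ is bijective onto $\og^{2,3}$. The Einstein equations then fix the complement, and the Bianchi equations fix the kernel part, reproducing the $12+6$ split of \cref{prop: fix partialnpmega} adapted to the corner dimensions.
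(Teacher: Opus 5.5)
Your uniqueness half is correct and is, in substance, the paper's argument. The paper reads the four equations as the $\de x^m$-components of the boundary relations on $\Sigma_\epsilon$ for $e+\emm\de x^m$ and $F_\omega+F_{\omega_m}\de x^m$, and repeats the count of \cref{prop: fix partialnpmega}. You instead check injectivity of $\Phi$ directly on $\Gamma$. This can be shortened. For non-null $\en$, write $X=\en\wedge a+Y$ with $a\perp\en$ and $Y\in\wedge^2\en^\perp$. Then $\en X=0$ gives $Y=0$, and $[\en,X]=0$ gives $a=0$.

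One justification has to go, however. $F_{\omega_n}$ is not ``already fixed'' by \cref{prop: fix partialnpmega} restricted to $\Gamma$. That proposition determines $\partial_n\omega$ from the full boundary curvature, whose $\de x^m$-component at $\Gamma$ is $F_{\omega_m}$ itself, so this route makes the data depend on $X$. You must treat $F_{\omega_n}$ as given corner data, as the statement implicitly does.

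The genuine gap is the existence half. You leave it open, and it cannot be closed the way you plan. Wedging $\en F_{\omega_m}=\emm F_{\omega_n}$ with $\en$ gives an $X$-independent necessary condition, $\en\emm F_{\omega_n}=0$. Bracketing $[\en,F_{\omega_m}]=[\emm,F_{\omega_n}]$ with $\en$ and using $[\en,[\en,\cdot\,]]=0$ gives another, $[\en,[\emm,F_{\omega_n}]]=0$. Take a frame in which $\{e_1,e_2,\emm,\en\}$ is orthonormal. Via $[\en,F_\omega]=[e,F_{\omega_n}]$ and $\en F_\omega=eF_{\omega_n}$, these two conditions force the components of $F_\omega$ along $e\,\en$ and along $e\,\emm$ to vanish.

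None of the relations you list implies this. The only consequence of $\de_\omega e=0$ for $F_\omega$ on $\Gamma$ is $[F_\omega,e]=0$, which is a $3$-form on a surface and hence empty. Concretely, take constant fields in such a frame, with $\omega_0=0$ and $\omega=er+\emm\en b$, $r\in\mathrm{span}(\emm,\en)$. Then $\de_\omega e=0$ holds, yet those components of $F_\omega$ are bilinear in $r$ and $b$ and generically nonzero.

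So the right-hand sides lie in $\Ima\Phi$ only because the third column of the table is imposed in the definition of $P_\Gamma$, which thereby also cuts down the admissible $(E,\Omega)$. They do not follow from the other constraints. The proposition should be proved as what it can rigorously mean: uniqueness of $\partial_m\omega$ on $P_\Gamma$, with existence holding by definition. Your injectivity argument proves exactly this. The paper's one-line sketch does not carry out a corner-level consistency check either.
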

\begin{proof}
    The proof follows the same steps as the proof of Proposition \ref{prop: fix partialnpmega} on the manifold $\Gamma\times [0,\epsilon]\eqqcolon \Sigma_\epsilon$, by redefining
        \begin{align*}
            &e|_{\Sigma_\epsilon} = e + \emm \de x^m,\\
            & F_\omega|_{\Sigma_\epsilon}= F_\omega + F_{\omega_m}\de x^m,
        \end{align*}
    where $x^m$ is interpreted as the coordinate along the infinitesimal interval $[0,\epsilon]$.
\end{proof}

To summarize the above discussion, we will consider the following definition. 

\begin{theorem}
    The reduced space of corner fields is given by the submanifold $P_\Gamma$ of $\tilde{P}_\Gamma$, on which the constraint $\de_\omega e=0$ is imposed. Explicitly $P_\Gamma$ is diffeomorphic to the following space
    \begin{equation*}
        \{(E,\Omega,\epsilon_m,\omega_m, \partial_m e,\partial_m \omega) \ | \ \mathrm{s.t.} \ \cref{eq: def e_m}, \cref{constr: fix e corner}, \cref{constr: fix omega corner}, \cref{eq: fix omega_m} \text{ and 3$^{\mathrm{rd}}$ column of Table \ref{tab: bdry corner equations}} \},
    \end{equation*}
    while all higher transversal jets are identically set to zero. 
\end{theorem}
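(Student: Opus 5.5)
The plan is to assemble the statement from the reduction steps already carried out, in the order they were performed.

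\emph{Step 1: the quotient $\tilde P_\Gamma=\tilde F_\Gamma/\tilde{\mathfrak W}$.} I would use the splitting $\tilde{\mathfrak W}\simeq\tilde{\mathcal B}\oplus\mathfrak X(\Omega^{0,1}_\Gamma\oplus\Omega^{0,2}_\Gamma\oplus\mathfrak J_m(\og^{1,1}\oplus\mathcal A_\Gamma))$ together with the regularity and involutivity of $\tilde{\mathfrak W}$ proved above. This yields a smooth leaf space that is locally $\Omega^{1,1}_{\Gamma,r}\times\mathcal A_{\Gamma,r}$, times a single point in each transversal direction.

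\emph{Step 2: fix representatives.} On $\Omega^{1,1}_{\Gamma,r}$, either \cref{thm: fix e corn} (the slice \cref{constr: fix e corner}) or the isomorphism $\Omega^{1,1}_{\Gamma,r}\simeq\Omega^{2,2}_{\Gamma,0}$, $[e]\mapsto E$, identifies the class with $E$. Once $e$ is fixed, $[\omega-\omega_0]$ is identified with $\Omega=e(\omega-\omega_0)$ via the isomorphism $W_1^{\Gamma,(1,2)}$. Equivalently, it is identified with the slice \cref{constr: fix omega corner} of \cref{thm: fix omega corner}, so $(E,\Omega)$ are honest coordinates. In the transversal directions, the whole $\mathfrak X(\cdots)$ factor lies in $\tilde{\mathfrak W}$, so any choice is a valid representative. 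I would choose $e_m=\epsilon_m$ as in \cref{eq: def e_m}, which is pointwise unique, and $(\omega-\omega_0)_m=0$ as in \cref{eq: fix omega_m}. All jets $\partial^k_m$ with $k\ge2$ are set to zero, since they appear in none of $\mathcal X_\alpha$, $\mathbb X_\alpha$ or the constraints.

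\emph{Step 3: impose the constraints on the slice.} On $\tilde P_\Gamma$ in these coordinates, I would impose the third column of the table: torsion $\de_\omega e=0$ together with its $m$-transversal part, Einstein, and transversal Bianchi. By the proposition above, the $m$-torsion equation determines $\partial_m e=\de_\omega\epsilon_m-[\omega_m,e]$. By the last proposition, the $m$-Einstein and $m$-Bianchi equations determine $\partial_m\omega$, through $F_{\omega_m}$. The remaining condition, $\de_\omega e=0$, cuts out a submanifold in the $\Omega$ directions. It does not fix all of them, which is why $P_\Gamma$ is still parametrized by $(E,\Omega)$ subject to these relations.

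The main obstacle is checking that these constraints are well defined on the quotient, i.e.\ compatible with the chosen representatives, rather than depending on the representative of $[\omega]$. Here $\de_\omega e$ is not invariant under $\omega\mapsto\omega+v$ with $ev=0$. I would handle this as on the boundary (\cref{uniquerep_free}): regroup the corner torsion equations via the $(2,1)$-vanishing lemma into invariant pieces $e\de_\omega e=0$ plus structural pieces, and use the latter only after the slice \cref{constr: fix omega corner} has been fixed. One then has to verify that the resulting equations define a smooth submanifold, which amounts to constant rank of the linearized system, checked pointwise using the component counts $12=8+2+2$. Once that is done, the diffeomorphism onto the stated space is the composition of the slice identifications.
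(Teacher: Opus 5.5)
Your Steps 1--3 follow the paper's route. The statement has no proof of its own; it collects the preceding discussion: the quotient by $\tilde{\mathfrak W}$, the representatives $\epsilon_m$ and $(\omega-\omega_0)_m=0$, the slices of Theorems~\ref{thm: fix e corn} and~\ref{thm: fix omega corner}, $\partial_m e$ from the $m$-torsion equation, $\partial_m\omega$ from the $m$-Einstein and $m$-Bianchi equations, and higher jets set to zero. The trouble is in the extra argument you add to close the proof.

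First, at the corner $e\,\de_\omega e\in\Omega^{3,2}_\Gamma$ vanishes identically, so there is no invariant piece to split off; Lemma~\ref{lem: (2,1) constraint vanish} consists only of the $\epsilon_n$- and $\epsilon_m$-equations. Moreover, any equation imposed on a representative fixed by a global slice is automatically well defined on $\tilde P_\Gamma$, so that is not where the difficulty lies.

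More seriously, your smoothness step would fail. On the slice $\omega-\omega_0=er+\epsilon_m\epsilon_n b$ with $[e,r]=0$, one has $[e,er]=0$ and $[e,\epsilon_m\epsilon_n b]=0$. This is the identity $[e,\omega-\omega_0]=0$ that the paper uses in the proof of Theorem~\ref{thm: Poisson struct}. Hence on the slice $\de_\omega e=\de_{\omega_0}e+[\omega-\omega_0,e]=\de_{\omega_0}e$. So the torsion condition does not involve $\Omega$ at all; it is a condition on $e$ relative to the arbitrary reference connection $\omega_0$. Your sentence that it cuts out a submanifold ``in the $\Omega$ directions'' repeats a remark in the text, but it is false on this slice. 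The count $12=8+2+2$ only certifies the direct-sum decomposition of $\Omega^{1,2}_\Gamma$ behind the slice; it says nothing about the rank of the torsion system.

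What your well-definedness worry should have detected is a different issue. Slicing first and then imposing a non-invariant constraint is not the same as taking the image in $\tilde P_\Gamma$ of the on-shell configurations. On the boundary (Theorem~\ref{uniquerep_free}), the structural part of torsion is exactly what selects the representative of $[\omega]$.

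At the corner, compute pointwise in an orthogonal basis adapted to $e$ (two vectors spanning the image of $e$, two spanning its orthogonal complement). The map $v\mapsto[v,e]$ sends the $8$-dimensional $\ker W_1^{\Gamma,(1,2)}$ onto the $4$-dimensional $\Omega^{2,1}_\Gamma$, with $4$-dimensional kernel. Hence every class $[\omega]$ contains torsion-free representatives, and torsion alone imposes no condition on $(E,\Omega)$. By contrast, ``slice, then impose'' produces the extra, $\omega_0$-dependent condition $\de_{\omega_0}e=0$.

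To make the identification of $P_\Gamma$ meaningful, you have to say which of these two spaces is meant. The version consistent with the boundary procedure uses $\de_\omega e=0$ to fix four of the eight directions of $\ker W_1^{\Gamma,(1,2)}$. It then uses a partial slice only for the residual directions $\{v:\ ev=0,\ [v,e]=0\}$, instead of imposing $\de_\omega e=0$ on the representative already fixed by \eqref{constr: fix omega corner}.
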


\subsection{Involutivity and isotropy of the induced (pre-)Dirac structure on $P_\Gamma$}\label{sec: inv_isotr}
Having defined the reduced space of corner fields $P_\Gamma$, we can induce vector fields $\mathbb X_\alpha$ and one forms $\mathcal{\chi}_\alpha$ from \cref{eq: v.f. corner} and \cref{eq: def corner 1-forms}. In particular, this amounts to rewriting the vector fields and one forms in terms of the dynamical fields $E$ and $\Omega$ in $P_\Gamma$. For example, for $\alpha=c$ we automatically have
    \begin{align*}
        \mathcal{J}_c=\int_\Gamma c \delta E.% \qquad \mathbb J_c=\int_\Gamma[c,E]\frac{\delta}{\delta E} + \left( [c,\Omega] + e\de_{\omega_0}c \right)\frac{\delta}{\delta \Omega}. 
    \end{align*}
For $\alpha= \zeta$,  we see
    \begin{align*}
        \mathcal{E}_\zeta&=\delta\int_\Gamma - e \iota_\zeta\Omega =\int ( \iota_\zeta e (\omega-\omega_0) + e \iota_\zeta(\omega-\omega_0) )\delta e + \iota_\zeta e \delta\Omega\\
        &=\int_\Gamma \iota_\zeta e \delta\Omega + \iota_\zeta (\omega-\omega_0) \delta E+\iota_\zeta e (\omega-\omega_0) \delta e.
    \end{align*}
Notice however how, by \cref{thm: fix omega corner}, we can rewrite $\iota_\zeta e (\omega-\omega_0) \delta e = \iota_\zeta e (e r+ \epsilon_m \epsilon_n b) \delta e $. We notice that $\iota_\zeta e \epsilon_m \epsilon_n b\in\og^{1,3}$, therefore we can apply lemma \ref{lem: decomp (1,3)} to see 
    \[
    \iota_\zeta e \epsilon_m \epsilon_n b= e\alpha_\zeta + \epsilon_m\epsilon_n \beta_\zeta,
    \]
for some uniquely defined $\alpha_\zeta \in\og^{0,2} $ and $\beta_\zeta\in\ker(W_e^{\Gamma(1,1)})$. We then see 
\begin{align*}
    \epsilon_m \epsilon_n \beta_\zeta \delta e&= \epsilon_m \epsilon_n \delta(e\beta_\zeta)-\epsilon_m \epsilon_n e \delta\beta_\zeta\\
    &=-\epsilon_m \epsilon_n e\left(\frac{\delta \beta_\zeta}{\delta e} \delta e + \frac{\delta \beta_\zeta}{\delta\omega}\delta\omega\right)\\
    &=-\epsilon_m \epsilon_n e\frac{\delta \beta_\zeta}{\delta e} \delta e - \epsilon_m \epsilon_n \left( \frac{\delta (e \beta_\zeta)}{\delta \omega} -\frac{\delta e }{\delta \omega}\beta_\zeta\right)\delta\omega\\
    &=-\epsilon_m \epsilon_n \frac{\delta \beta_\zeta}{\delta e}\delta E
\end{align*}
We obtain
    \begin{align*}
        &\mathcal{E}_\zeta =\int_\Gamma \iota_\zeta e \delta\Omega + \left(\iota_\zeta (\omega-\omega_0) + \iota_\zeta e r + \alpha_\zeta - \emm\en \frac{\delta\beta_\zeta}{\delta e}\right) \delta E.%,\\
        %&\mathbb E_\zeta =\int_\Gamma -\mathrm{L}_\zeta^{\omega_0}(E)\frac{\delta}{\delta E } -\left(e \iota_\zeta F_\omega -\mathrm{L}_\zeta^{\omega_0}(\Omega) \right)\frac{\delta}{\delta\Omega}.
    \end{align*}
Similarly, we have
    \begin{align*}
        & \mathcal{K}_\eta=\int \eta \epsilon_m \delta\Omega + \eta \epsilon_m r  \delta E.%\\
       % &\mathbb K_\eta = \int_\Gamma \left( \de_\omega(\eta \epsilon_m e) - \eta[(\omega-\omega_0),E] \right)\frac{\delta }{\delta E} + (\de\eta \Omega_m - \eta\partial_m \Omega)\frac{\delta}{\delta\Omega},
    \end{align*}
    Lastly, we get
    \begin{align*}
        &\mathcal{F}_\lambda=\int_\Gamma \lambda\epsilon_n \delta\Omega +\lambda \epsilon_n r \delta E.%\\
%        &\mathbb F_\lambda=\int_\Gamma \de_\omega(\lambda \epsilon_n e)\frac{\delta}{\delta E} + \left( \lambda\epsilon_n F_\omega + (\de_\omega(\lambda\epsilon_n)+\lambda\sigma)(\omega-\omega_0) \right)\frac{\delta}{\delta\Omega}
    \end{align*}

The vector fields $\mathbb X_\alpha$ also change according to the redefinitions of the fields. Specifically, using constraints again the transversal torsionless, Einstein and Bianchi identities, we obtain
    \begin{align}
         \nonumber&\mathbb{J}_{c,E}=[c,E] \quad&&\mathbb{J}_{c,\Omega}=e\de_{\omega_0} c+[c,\Omega]\\
         \nonumber&\mathbb{E}_{\zeta,E}=-\mathrm{L}_\zeta^{\omega_0}E \quad&&\mathbb{E}_{\zeta,\Omega}=-e \iota_\zeta F_{\omega} +\de_\omega\iota_\zeta \Omega -[\iota_\zeta(\omega-\omega_0),e](\omega-\omega_0)\\
        &\label{newdef ham v.f.} \mathbb{F}_{\lambda,E}=\de_\omega(\lambda \epsilon_ne)\quad&&\mathbb{F}_{\lambda,\Omega}=\lambda \epsilon_n F_\omega+ \de_\omega(\lambda \en)(\omega-\omega_0)\quad&&\\
         \nonumber&\mathbb{K}_{\eta,E}=\de_\omega(\eta \emm e) && \mathbb{K}_{\eta,\Omega}=-\de_\omega(\eta\Omega_m)-\eta\emm \de_\omega(\omega-\omega_0)+\eta \emm F_\omega
    \end{align}
\begin{remark}
    Regarding the vector fields along $\omega_m$, we have naively $\mathbb X_{\omega_m}=  \mathrm{L}_{\mathbb X}(\omega-\omega_0)_m=0$. However, we can define $\Omega_m\coloneqq \emm (\omega-\omega_0) + e(\omega-\omega_0)_m=\emm (\omega-\omega_0)$, and see that the vector fields along the transversal fields $\Omega_m $ and $\emm$ are 
    \begin{align*}
        &\mathbb{J}_{c,\emm}=[c,\emm] \quad &&\mathbb{J}_{c,\Omega_m}=e\de_{\omega^0_m} c +\emm\de_{\omega_0}e +[c,\Omega_m] \\ 
        &\mathbb{E}_{\zeta,e_m}=\mathrm{L}_\zeta^{\omega_0}e_m+\iota_{\partial_m\zeta}e \quad &&  \mathbb{E}_{\lambda,\Omega_m}=\de_\omega(\iota_\zeta \Omega_m))+\de_{\omega_m}(\iota_\zeta \Omega)-\iota_\zeta e F_{\omega_m} - [\iota_\zeta(\omega-\omega_0),\emm](\omega-\omega_0) \\
        &\mathbb{F}_{\lambda,e_m}=\de_{\omega_m}(\lambda \epsilon_n) \quad &&\mathbb{F}_{\lambda,\Omega_m}=\de_{\omega_m}(\lambda \en)(\omega-\omega_0) +\lambda\en F_{\omega_m}\\
        &\mathbb{K}_{\eta,\epsilon_m}=\de_{\omega^0_m}(\eta \epsilon_m) \quad && \mathbb{K}_{\eta,\Omega_m }=\eta \emm F_{\omega_m} +\de_{\omega_m}(\eta \Omega_m)  - \eta \emm\de_{\omega_m}(\omega-\omega_0).
    \end{align*}
\end{remark}

\begin{proposition}
    Let $ D$ be the distribution defined by 
    \begin{equation}\label{def D ridotta}
        D\coloneqq \mathrm{span}_{\mathcal{C^1}(P_\Gamma)}(\mathbb X_\alpha + \mathcal{X}_\alpha)\subset \Gamma((T\oplus T^*)[1]P_\Gamma), \qquad \mathrm{with} \ \alpha=c,\zeta,\lambda,\eta.
    \end{equation} 
    Then, $D$ is involutive and isotropic. 
\end{proposition}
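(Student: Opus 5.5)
The plan is to transport the result of \cref{prop: involutivity and isotropicity} from $F_\Gamma$ to $P_\Gamma$, rather than redo the pairwise computations from scratch. By construction $P_\Gamma$ is obtained from $\tilde F_\Gamma$ in two steps: the quotient by $\tilde{\mathfrak W}=\Ker{\sum_\alpha\mathcal X_\alpha}$, and then restriction to the submanifold cut out by the torsion, Einstein and Bianchi relations of Table \ref{tab: bdry corner equations}, together with the gauge fixings \cref{eq: def e_m}, \cref{constr: fix e corner}, \cref{constr: fix omega corner} and \cref{eq: fix omega_m}. The key observation is that pairing and Dorfman bracket are natural under projection along a partial kernel, in the sense of \cref{def:partial-kernel}, and under restriction to a submanifold to which the vector fields are tangent. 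We therefore have to verify three things.
\begin{enumerate}
\item The one-forms $\mathcal X_\alpha$ are basic for $\tilde{\mathfrak W}$.
\item The vector fields $\mathbb X_\alpha$ are projectable.
\item The projected vector fields are tangent to the constrained locus.
\end{enumerate}

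For the first point, $\iota_{\mathbb Y}\mathcal X_\alpha=0$ holds for $\mathbb Y\in\tilde{\mathfrak W}$ by definition. We also need $\iota_{\mathbb Y}\delta\mathcal X_\alpha=0$, so that $\tilde{\mathfrak W}\subset K_f$ as in \cref{eq:corner-Kf}. This is immediate for $\mathcal J_c$ and $\mathcal E_\zeta$, which are $\delta$-exact. For $\mathcal K_\eta$ and $\mathcal F_\lambda$ it follows from the explicit rewritings $\int\eta\emm\delta\Omega+\eta\emm r\,\delta E$ and $\int\lambda\en\delta\Omega+\lambda\en r\,\delta E$. Since $\delta E$ and $\delta\Omega$ annihilate $\tilde{\mathcal B}$ (because $e\mathbb Y_e=e\mathbb Y_\omega=0$), the remaining contributions reduce to $\delta(\eta\emm)$ and $\delta(\eta\emm r)$ paired against $\delta\Omega$ and $\delta E$. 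Both of these are killed by $\tilde{\mathcal B}$ once the representatives $\emm$ and $r$ are fixed as functions of $[e]$ and $[\omega]$.

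For the second point, the remark after \cref{eq:projectability-KD} reduces projectability to the same condition $\iota_{\mathbb Y}\delta\mathcal X_\alpha=0$. It can also be seen directly: the formulas \cref{newdef ham v.f.} express $\mathbb X_{\alpha,E}$ and $\mathbb X_{\alpha,\Omega}$ through $E$, $\Omega$, $\emm$, $r$ and $\omega_0$ only, up to terms vanishing on the constraints.

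With these in hand, isotropy follows at once. The pairing $\langle\mathbb X_\alpha+\mathcal X_\alpha,\mathbb X_\beta+\mathcal X_\beta\rangle$ on $P_\Gamma$ pulls back to the same pairing on $F_\Gamma$, which vanishes there. I would nevertheless check the cross terms $\langle\mathbb E+\mathcal E,\mathbb K+\mathcal K\rangle$ and $\langle\mathbb F+\mathcal F,\mathbb K+\mathcal K\rangle$ explicitly in the $(E,\Omega)$ variables, since those computations used the transversal constraints \cref{eq: constrm dme corner} and \cref{add_constr_F}. Now, with $(\omega-\omega_0)_m=0$, these relations are encoded in the third column of Table \ref{tab: bdry corner equations}.

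Involutivity is inherited in the same way. The Dorfman bracket of projectable sections with basic one-forms projects to the bracket of the projections. The closure relations of Appendix \ref{app:proof invol} hold modulo the constraints, with structure functions given by \cref{thm:Brackets_constraints}; they therefore descend, provided the structure functions are themselves projectable. This holds because they depend only on $\epsilon_n$, on the ghosts, on $e$ through $[e]$, and on $\omega-\omega_0$ through $r$ and $b$.

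The main obstacle is the third point: showing that the projected $\mathbb X_\alpha$ are tangent to $P_\Gamma\subset\tilde P_\Gamma$, i.e. that they preserve $\de_\omega e=0$, the transversal torsion, Einstein and Bianchi relations, and the gauge fixings. If some $\mathbb X_\alpha$ fails to preserve a gauge fixing (for instance $\mathbb K_\eta$ acting on $\emm$, or $\mathbb E_\zeta$ on \cref{constr: fix e corner}), I would first try to correct it by a compensating element of $\tilde{\mathfrak W}$. Such a correction changes neither the one-form nor the class of the vector field in the quotient, so it is harmless. Preservation of the dynamical constraints should follow from the on-shell closure in \cref{lem:corner-involutivity-criterion}, since the constraints generate an ideal preserved by the Hamiltonian flows.
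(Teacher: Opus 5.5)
There is a genuine gap in the involutivity half, even though your isotropy argument is essentially right: the pairing is pointwise, and it is unchanged when a vector field is shifted by an element of $\tilde{\mathfrak W}$, since $\iota_{\mathbb Y}\mathcal X_\beta=0$ there.

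\textbf{Step 1 fails.} Involutivity rests on $\tilde{\mathfrak W}\subset K_f$, i.e.\ on the $\mathcal X_\alpha$ being basic, and this is false.
\begin{itemize}
\item $\tilde{\mathfrak W}$ contains all directions along $e_m$, $\omega_m$ and the jets. But $\mathcal K_\eta=\int_\Gamma\eta e_me\delta\omega+\eta(\omega-\omega_0)_me\delta e$ depends explicitly on $e_m$ and $\omega_m$. For $\mathbb Y$ along these directions one gets $\iota_{\mathbb Y}\delta\mathcal K_\eta=\int_\Gamma\pm\eta\,\mathbb Y_{e_m}e\,\delta\omega\pm\eta\,e\,\mathbb Y_{\omega_m}\delta e\neq0$.
\item The rewriting $\int\eta\epsilon_m\delta\Omega+\eta\epsilon_m r\,\delta E$ that you invoke already presupposes $e_m=\epsilon_m(e)$ and $(\omega-\omega_0)_m=0$, so it cannot be used to prove basicness.
\item $\delta\Omega$ does not annihilate $\tilde{\mathcal B}$: $\iota_{\mathbb Y}\delta\Omega=\pm\mathbb Y_e(\omega-\omega_0)$. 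For $0\neq\mathbb Y_e\in\ker W_1^{\Gamma,(1,1)}$ this is in general nonzero, since it sees the $\epsilon_m\epsilon_n b$ component of $\omega-\omega_0$. This is exactly why the paper fixes the representative of $[e]$ via \cref{constr: fix e corner} before $\Omega$ is even defined.
\item Likewise $\iota_{\mathbb Y}\delta\mathcal F_\lambda$ contains $\pm\int_\Gamma\lambda\epsilon_n\mathbb Y_e\delta\omega\neq0$.
\end{itemize}

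\textbf{Step 2 and the structure functions fail for the same reason.}
\begin{itemize}
\item The $\mathbb X_\alpha$ are not projectable; e.g.\ $\mathbb K_{\eta,e}=-\eta\de_{\omega^0_m}e+\de\eta\,e_m$ depends on $e_m$ and $\partial_me$.
\item The remark after \cref{eq:projectability-KD} does not help, because it concerns $K_{\tilde D}$, not $K=\tilde{\mathfrak W}$.
\item The structure functions of Appendix \ref{app:proof invol} involve $e_m$, $(\omega-\omega_0)_m$, $(\omega-\omega_0)_i$ and frame components such as $[c,\lambda\epsilon_n]^i$, not only $[e]$, $r$ and $b$.
\end{itemize}
In the paper, $P_\Gamma$ is a gauge slice: one chooses $e_m$, $\omega_m$, the jets, and representatives of $[e]$ and $[\omega]$. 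It is not a quotient along which $\tilde D$ descends.

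\textbf{Your fallback in step 3 does not rescue involutivity.} On the slice the $\mathbb X_\alpha$ need compensators to become tangent. For example, $\mathbb J_c$ moves $(\omega-\omega_0)_m$ by $\de_{\omega_m}c$, so it does not preserve $(\omega-\omega_0)_m=0$. A compensator $\mathbb Y_\alpha\in\tilde{\mathfrak W}$ is invisible to the pairing but not to the Dorfman bracket. Replacing $\mathbb X_\alpha$ by $\mathbb X_\alpha+\mathbb Y_\alpha$ adds, up to graded signs:
\begin{itemize}
\item the vector terms $[\mathbb Y_\alpha,\mathbb X_\beta]$, $[\mathbb X_\alpha,\mathbb Y_\beta]$ and $[\mathbb Y_\alpha,\mathbb Y_\beta]$;
\item the one-form terms $\mathrm L_{\mathbb Y_\alpha}\mathcal X_\beta=\iota_{\mathbb Y_\alpha}\delta\mathcal X_\beta$ and $-\iota_{\mathbb Y_\beta}\delta\mathcal X_\alpha$.
\end{itemize}
The one-form terms vanish only for compensators in $K_f$. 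The compensators along $e_m$ and $\omega_m$ are not in $K_f$, by the formula for $\iota_{\mathbb Y}\delta\mathcal K_\eta$ above. Showing that these extra terms recombine into elements of $D$ is exactly the content your argument skips.

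\textbf{How the paper avoids this.} Its (sketched) proof sidesteps descent entirely. It writes the reduced one-forms in terms of $\delta E$ and $\delta\Omega$, and the reduced vector fields \cref{newdef ham v.f.} explicitly on $P_\Gamma$, with $\epsilon_m$ and $\Omega_m$ components replacing $e_m$ and $\omega_m$. It then reruns the pairwise pairing and bracket computations of \cref{prop: involutivity and isotropicity} with these formulas, using the relations of Table \ref{tab: bdry corner equations}. To close the gap you would need to do the same, or verify directly that the compensator contributions lie in $D$.
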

\begin{proof}
    The proof follows, mutatis mutandis, from the proof of proposition \ref{prop: involutivity and isotropicity}. 
\end{proof}

\subsection{The Poisson structure on $P_\Gamma$}

\begin{theorem}\label{thm: Poisson struct}
    Let $ D$ be the distribution defined by \cref{def D ridotta}.
    Then $ D$ is a Dirac structure, obtained as the graph of the Poisson structure $(P_\Gamma, \pi)$, where $\pi\in \Gamma(\wedge^2T P_\Gamma)$ is the Poisson bivector field given by
    \begin{align}\label{Poisson bivector}
        \nonumber\pi&=\intc \frac12E\left[\pard{}{E},\pard{}{E}\right]-\pard{}{\Omega}\left(e\de_{\omega_0}\pard{}{E}+\left[\Omega,\pard{}{E}\right]\right)+\left[\pard{}{E},\epsilon_m\right]\pard{}{\epsilon_m}\\
        &\phantom{=\intc} +\pard{}{\Omega_m}\left(e\de_{\omega_m^0}\pard{}{E}+\left[\Omega_m,\pard{}{E}\right]+\epsilon_m\de_{\omega_0}\pard{}{E}\right)+\frac12(F_\omega+\Theta)\pard{}{\Omega}\pard{}{\Omega}\\
        \nonumber&\phantom{=\intc}+(F_{  \omega_m}+\Theta_m)\pard{}{\Omega}\pard{}{\Omega_m}+\left(\de_{\omega_m}\pard{}{\Omega}-\left[\pard{}{\Omega}r,\epsilon_m\right]\right)\pard{}{\epsilon_m}\\
        \nonumber&\phantom{=\intc}-W_e^{-1}\left[\pard{}{\Omega}\epsilon_m\epsilon_nb,\epsilon_m\right]\pard{}{\epsilon_m},
    \end{align}
\end{theorem}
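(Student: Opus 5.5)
The plan is to use the characterization of \cref{ex: Dirac structure induced by a Poisson bivector}. A maximally isotropic, involutive subbundle of the form $\{\iota_\alpha\pi+\alpha\}$ is Dirac, and $\pi$ is then automatically Poisson. By the preceding proposition, $D$ is already involutive and isotropic on $P_\Gamma$. It therefore remains to prove two things:
\begin{itemize}
\item[(i)] the cotangent projection $\operatorname{pr}_{T^*}D$ is all of $T^*P_\Gamma$;
\item[(ii)] for every generator, the vector part equals the contraction of the one-form part with $\pi$, i.e. $\iota_{\mathcal X_\alpha}\pi=\mathbb X_\alpha$ for $\alpha=c,\zeta,\eta,\lambda$.
\end{itemize}
Once (ii) holds, $D\subseteq\operatorname{Graph}(\pi)$. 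Given (i), both sides have the same rank, so $D=\operatorname{Graph}(\pi)$, and maximality follows. Skew-symmetry of $\pi$ will come out of isotropy. Indeed, $\langle\mathbb X_\alpha+\mathcal X_\alpha,\mathbb X_\beta+\mathcal X_\beta\rangle=0$ reads $\pi(\mathcal X_\alpha,\mathcal X_\beta)+\pi(\mathcal X_\beta,\mathcal X_\alpha)=0$ (with graded signs). Hence one may first define $\pi$ as the bilinear form $\pi(\mathcal X_\alpha,\mathcal X_\beta):=\iota_{\mathcal X_\beta}\mathbb X_\alpha$ and read off its coefficients.

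For (i) I would work in the coordinates $(E,\Omega)$ of $P_\Gamma$, together with the dependent transversal data $\epsilon_m,\Omega_m$. The one-forms computed in \cref{sec: inv_isotr} are
\[
\mathcal J_c=\intc c\,\delta E,\qquad \mathcal F_\lambda+\mathcal K_\eta+\mathcal E_\zeta=\intc \mu\,\delta\Omega+(\dots)\,\delta E,
\qquad \mu=\iota_\zeta e+\eta\epsilon_m+\lambda\epsilon_n .
\]
Since $\{e_i,\epsilon_m,\epsilon_n\}$ is a frame, $\mu$ ranges over all of $\Omega^{0,1}_\Gamma$, and $c$ ranges over all of $\Omega^{0,2}_\Gamma$. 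A triangular argument then applies: first subtract the $\delta E$-components using $\mathcal J_c$, after which the $\mu\,\delta\Omega$ part spans the dual of $\Omega^{2,3}_\Gamma\simeq(\Omega^{0,1}_\Gamma)^*$. Together with $c$ spanning $(\Omega^{2,2}_\Gamma)^*$, this shows that the span is all of $T^*P_\Gamma$. The Pfaffian condition is open-dense and does not alter cotangent fibres.

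The kernel quotient by $\tilde{\mathfrak W}$ is exactly what makes this pairing nondegenerate, since $K=0$ after reduction. One caveat: $\epsilon_m$ and $\Omega_m$ are functions of $(E,\Omega)$ on $P_\Gamma$, so the $\delta/\delta\epsilon_m$ and $\delta/\delta\Omega_m$ terms in \cref{Poisson bivector} should be read as their chain-rule images.

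For (ii) I would contract generator by generator, reading \cref{newdef ham v.f.}.
\begin{itemize}
\item The $c$-sector gives $\frac12E[\delta_E,\delta_E]$ and the $\delta_\Omega\,(e\de_{\omega_0}\delta_E+[\Omega,\delta_E])$ term, exactly as in Yang--Mills, and reproduces $\mathbb J_c$.
\item The $\lambda$- and $\eta$-sectors fix the $\delta_\Omega\delta_\Omega$ coefficient. Contracting $\intc\lambda\epsilon_n\delta\Omega$ twice must reproduce $\lambda\epsilon_nF_\omega+\de_\omega(\lambda\epsilon_n)(\omega-\omega_0)$ minus the $r\,\delta E$ correction already accounted for via the mixed $E\Omega$ term. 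This identifies $F_\omega+\Theta$, where $\Theta$ collects the $\omega-\omega_0=er+\epsilon_m\epsilon_nb$ corrections coming from \cref{thm: fix omega corner}.
\item The $\epsilon_m$- and $\Omega_m$-terms are obtained by applying the same contraction to the transversal vector fields listed in the remark preceding the proposition. They are consistent with \cref{eq: def e_m} because $\epsilon_m$ is determined by $E$; the term $W_e^{-1}[\dots,\epsilon_m]$ comes from differentiating \cref{eq: def e_m} and solving for $\delta\epsilon_m$.
\end{itemize}

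The main obstacle is the $\zeta$-sector. Its one-form has the messy $\delta E$ coefficient $\iota_\zeta(\omega-\omega_0)+\iota_\zeta e\,r+\alpha_\zeta-\epsilon_m\epsilon_n\,\delta\beta_\zeta/\delta e$, and its vector field involves $e\iota_\zeta F_\omega$. I would avoid computing it directly, using two facts:
\begin{itemize}
\item $\pi$ is already fixed by the $c,\lambda,\eta$ sectors together with $\mathcal J_c$, since these one-forms span $T^*P_\Gamma$ by (i).
\item $\mathcal E_\zeta$ is then a $C^\infty(P_\Gamma)$-combination of them, namely $\iota_\zeta e$ in the $\mu$-slot plus a $\mathcal J$-term.
\end{itemize}
So I would write $\mathbb E_\zeta+\mathcal E_\zeta$ minus the corresponding combination of generators. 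This difference is a section of $D$ of the form $\mathbb Y+0$. Isotropy against the spanning set forces $\iota_{\mathbb Y}\mathcal X=0$ for all $\mathcal X$, hence $\mathbb Y\in K=0$. This yields $\iota_{\mathcal E_\zeta}\pi=\mathbb E_\zeta$ without expanding the torsion and Bianchi identities by hand.
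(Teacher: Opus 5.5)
Your overall strategy is the paper's: take isotropy and involutivity from the previous proposition, show the one-forms span $T^*P_\Gamma$, then check $\iota_{\mathcal X_\alpha}\pi=\mathbb X_\alpha$ on generators. The gap is the shortcut for the $\zeta$-sector, which is where much of the displayed formula is actually tested.

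\textbf{Why the shortcut is circular.} You claim that $\mathcal J_c,\mathcal F_\lambda,\mathcal K_\eta$ already span $T^*P_\Gamma$. This contradicts your own step (i). The $\delta\Omega$-parts of $\mathcal F_\lambda$ and $\mathcal K_\eta$ are $\lambda\epsilon_n$ and $\eta\epsilon_m$, so the covectors $\int_\Gamma \iota_\varsigma e\,\delta\Omega$ come only from $\mathcal E_\varsigma$. This is exactly why the paper decomposes $b=\iota_\varsigma e+\nu\epsilon_n+\tau\epsilon_m$ and uses $\mathcal E_\varsigma$. Writing $\mathcal E_\zeta$ as ``$\iota_\zeta e$ in the $\mu$-slot plus a $\mathcal J$-term'' is therefore circular: the only generator with that $\mu$-slot is $\mathcal E_\zeta$ itself, so the difference you form vanishes identically. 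Isotropy plus spanning (with $\mathcal E_\zeta$ included) does give $D\cap(TP_\Gamma\oplus 0)=0$. Hence $D$ is the graph of \emph{some} bivector $\pi_D$, which is Poisson by involutivity. It does not give $\pi_D=\pi$ for the stated $\pi$.

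\textbf{Why the missing entries are genuinely independent.} Antisymmetry only transfers the mixed values $\pi(\mathcal E_\zeta,\mathcal X_\alpha)$ for $\alpha=c,\lambda,\eta$; the block $\pi(\mathcal E_\zeta,\mathcal E_{\zeta'})$ is unconstrained by the other sectors. Two concrete instances:
\begin{enumerate}
\item The $\lambda$- and $\eta$-sectors fix only the products $\epsilon_n(F_\omega+\Theta)$ and $\epsilon_m(F_\omega+\Theta)$. These determine every component of $F_\omega+\Theta\in\Omega^{2,2}_\Gamma$ except the one along $\epsilon_m\epsilon_n$, which is precisely the component seen by $\frac12(F_\omega+\Theta)(\iota_\zeta e)^2$.
\item The last term $-W_e^{-1}[\frac{\delta}{\delta\Omega}\epsilon_m\epsilon_n b,\epsilon_m]\frac{\delta}{\delta\epsilon_m}$ vanishes on $\mathcal F_\lambda$ and $\mathcal K_\eta$, since $\epsilon_n^2=\epsilon_m^2=0$, and is tested only by $\mathcal E_\zeta$. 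In the paper's verification it cancels the $\frac{\delta}{\delta\epsilon_m}$-contribution of $\alpha_\zeta-\epsilon_m\epsilon_n\frac{\delta\beta_\zeta}{\delta e}$, which sits in the $\delta E$-coefficient of $\mathcal E_\zeta$.
\end{enumerate}

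\textbf{What closing the gap requires.} You cannot avoid computing $\iota_{\mathcal E_\zeta}\pi$ directly, as the paper does. The ingredients are:
\begin{enumerate}
\item the representative $\omega-\omega_0=er+\epsilon_m\epsilon_n b$;
\item the decomposition $\iota_\zeta e\,\epsilon_m\epsilon_n b=e\alpha_\zeta+\epsilon_m\epsilon_n\beta_\zeta$;
\item the vanishing of products such as $\epsilon_m\epsilon_n b\,W_e^{-1}(\cdots\epsilon_m\epsilon_n b)$;
\item matching $\iota_\zeta e(F_\omega+\Theta)$ and $\iota_\zeta e(F_{\omega_m}+\Theta_m)$ against $e\iota_\zeta F_\omega$ and $e\iota_\zeta F_{\omega_m}$.
\end{enumerate}

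\textbf{A smaller slip.} The vanishing-Pfaffian condition on $E$ is closed and of codimension one, not open dense. The $E$-part of the cotangent fibre is therefore $\Omega^{0,2}_\Gamma/\ker W_e^{\Gamma,(0,2)}$ rather than all of $\Omega^{0,2}_\Gamma$. This does not affect the spanning argument, but the fibre is smaller than you state.
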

where
\begin{align*}
    &\Theta=-[e,\omega-\omega_0]r-\de_\omega(\omega-\omega_0),
    &\Theta_m=- [\emm,\omega-\omega_0] r-\de_\omega(\omega-\omega_0)_m -\de_{\omega_m}(\omega-\omega_0),
\end{align*}
and having defined $\Omega_m\coloneqq \emm (\omega-\omega_0)$.
\begin{proof}
We work with representatives of the equivalence classes fixed by \cref{fixomega_m}, where it holds that $\Theta_m=- [\emm,\omega-\omega_0] r -\de_{\omega_m}(\omega-\omega_0)$. Since we have already shown that $D$ is both involutive and isotropic, if we can find a bivector $\pi\in\Gamma(\wedge^2 P_\Gamma)$ such that
    \[
    D=e^\pi(\Omega^1[1](P_\Gamma))=\{ \iota_{\mathcal{X}}\pi +\mathcal{X} \ | \ \mathcal{X}\in\Omega
    ^1[1](P_\Gamma)\},
    \]
we are guaranteed to have maximality as well, from which it follows that $D$ is Dirac. Moreover, by Example \ref{ex: Dirac structure induced by a Poisson bivector} we have, as a consequence of involutivity, that $[\pi,\pi]=0$, showing that $\pi $ is Poisson.

In order to show that $D=e^\pi(\Omega^1[1](P_\Gamma))$, we start by noticing that 1-shifted one--forms on $P_\Gamma$ are generated by 
    \[
    A\coloneqq \intc a \delta E, \qquad \mathrm{and} \qquad B\coloneqq \intc b\delta \Omega,
    \]
with $a\in\Gamma[1](P_\Gamma,\Omega^{0,2}_\Gamma)$ and $b\in\Gamma[1](P_\Gamma,\Omega^{0,1}_\Gamma) $. However, we can always decompose
    \begin{align*}
        b=\iota_\varsigma e + \nu \epsilon_n +\tau \epsilon_m,
    \end{align*}
with $\nu,\tau\in \mathcal{C^1}[1](P_\Gamma)$ and $\varsigma\in \Gamma[1](P_\Gamma,\mathfrak{X}(\Gamma)) $, obtaining
    \[
    B=\mathcal{E}_\varsigma - \mathcal{J}_{\iota_\varsigma(\omega-\omega_0)} - \mathcal{J}_{\iota_\varsigma e r}  -\mathcal{J}_{W_e^{-1}(\iota_\varsigma e\epsilon_n \epsilon_m b )} + \mathcal{F}_\nu + \mathcal{K}_\tau ,
    \]
which, after noticing that $A=\mathcal{J}_a$, implies that 
    \[
    \Omega^1(P_\Gamma)=\mathrm{span}_{\mathcal{C^1}(P_\Gamma)}( \mathcal{X}_\alpha) \qquad \mathrm{with} \ \alpha=a,\varsigma,\nu,\tau.
    \]
It follows that it suffices to check that $\pi(\mathcal{X}_\alpha)=\iota_{\mathcal{X}_\alpha}\pi =\mathbb X_\alpha$ for all $\alpha$'s.
We leave this for appendix \ref{app: proof poisson}.

\end{proof}

\section{The BF$^2$V structure of PC gravity}\label{sec: BF2V}

Given a poisson structure on a manifold $P$, a BF$^2$V structure is automatically obtained by promoting the Poisson bivector to a function on the 1-shifted tangent bundle of $P$. In the Palatini--Cartan case, such Poisson bivector depends, in principle, only on the fields $E$ and $\Omega$ in the reduced space of corner fields. However, the strict dependency on them is not manifest in \cref{Poisson bivector}, as it contains terms proportional to $\frac{\delta}{\delta\Omega_m}$ and $\frac{\delta}{\delta \epsilon_m}$, which have not been expressed explicitly in terms of the two dynamical fields $E$ and $\Omega$.

In order to obtain a BF$^2$V action of PC gravity, we set $\mathcal{F}_\Gamma\coloneqq T[1]P_{\Gamma}$. Since $P_\Gamma$ is decorated with the transversal fields $\Omega_m$ and $\emm$ ---which have been fixed to obtain $P_\Gamma$ as a submanifold of the original space of corner fields $F_\Gamma$---  the terms proportional to $\frac{\delta}{\delta\Omega_m}$ and $\frac{\delta}{\delta \epsilon_m}$ in the BF$^2$V action require introducing the corresponding conjugate fields (in the fiber of $T[1]$)  and fix them such that they can only depend on $E$, $\Omega$ and their conjugate fields.

\begin{remark}
    While $\Omega\in\og^{2,3} $ is not subject to any constraint, $E\in \og^{2,2}$ is required to have a vanishing Pfaffian. Such constraint will be reflected also on its conjugate field in $T[1]\og^{2,2}$.
\end{remark}

We introduce the following
    \[
    \mathcal{F}_\Gamma\coloneqq \{ \ (E,\Omega,\emm,\Omega_m,c,\mu,\nu^\dag,f^\dag) \ | \ c\in\Omega_{\Gamma,r}^{0,2}, \ \mu\in\og^{0,1} , \ f^\dag,\nu^\dag \text{ fixed}  \}.
    \]
We will shortly see how $f^\dag\in\og^{2,3}[1] $ and $\nu^\dag\in\og^{1,2} $ are fixed in terms of $(E,\Omega,c,\mu)$. First, we have to define $\Omega_{\Gamma,r}^{0,2}$ as the fiber of $T[1]\Omega_{\Gamma,r}^{1,1} $. In order to do so, we consider the symplectic  form on $T[1](\og^{1,1}\oplus \og^{2,2})$ 
    \begin{equation}\label{eq: presympl BF2V}
        \tilde{\varpi}_\Gamma\coloneqq \int_\Gamma \delta E \delta c + \delta \Omega \delta \mu .
    \end{equation}
After setting $E=\frac{e^2}{2}$, we notice that that $\tilde{\varpi}_\Gamma$ becomes pre-symplectic, indeed
    \[
    \Ker{\tilde{\varpi}_\Gamma}=\bigg\{ \int_\Gamma \mathbb{X}_c \frac{\delta}{\delta c} \in \mathfrak{X}(T[1](\og^{1,1}\oplus \og^{2,2})) \ | \ e \mathbb X_c =0 \bigg\}.
    \]
Therefore, in order to have a symplectic form on $T[1]P_\Gamma$, we need to consider only those elements $[c]$ living in the quotient $\Omega_{\Gamma,r}^{0,2}[1]\coloneqq \og^{2,2}[1]/\Ker{W_e^{\Gamma,(0,2)}} $. 

\begin{lemma}
    There exists a unique representative of $c\in[c]\in\Omega_{\Gamma,r}^{0,2}[1]$ such that 
    \begin{equation}
        \emm\en c =0. 
    \end{equation}
\end{lemma}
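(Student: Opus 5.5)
The plan is to reduce the statement to pointwise linear algebra in $\wedge^\bullet \mathcal V$ at each point of $\Gamma$. The key observation is that the condition $\emm\en c=0$ is a single scalar equation, since it takes values in $\wedge^4\mathcal V$. It should therefore cut out a complement of $\Ker{W_e^{\Gamma,(0,2)}}$, provided the kernel is one-dimensional and transverse to the condition. So I would proceed in three steps: describe the kernel, prove transversality, and then write down the representative explicitly.

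\emph{Step 1: the kernel.} Fix a point of $\Gamma$ and local coordinates $x^1,x^2$, and write $e=\de x^1 e_1+\de x^2 e_2$ with $e_i\in\mathcal V$. By the non-degeneracy assumption, $\{e_1,e_2,\emm,\en\}$ is a basis of $\mathcal V$. For $c\in\wedge^2\mathcal V$ we have $ec=\de x^1\, e_1c+\de x^2\, e_2 c$, so $ec=0$ if and only if $e_1c=0$ and $e_2c=0$ in $\wedge^3\mathcal V$. Expanding $c$ in the basis of $\wedge^2\mathcal V$ induced by $\{e_1,e_2,\emm,\en\}$, the condition $e_1c=0$ forces every monomial of $c$ to contain $e_1$, and likewise for $e_2$. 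Hence
\begin{equation*}
\Ker{W_e^{\Gamma,(0,2)}}=\mathrm{span}\{\kappa\},\qquad \kappa\coloneqq e_1e_2 .
\end{equation*}
The element $\kappa$ is intrinsic: $E=\frac12 e^2=\de x^1\de x^2\,\kappa$, so $\kappa$ is $E$ with its form part stripped off by $\mathrm{Vol}_\Gamma$. In particular, the kernel is a regular rank-one distribution, and the quotient $\Omega^{0,2}_{\Gamma,r}$ has fiber dimension $5$.

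\emph{Step 2: transversality.} I will check that $\emm\en\kappa\neq 0$ pointwise. This is exactly the open non-degeneracy condition $\emm\en E\neq0$ recorded in the remark after the isomorphism $\Omega^{1,1}_{\Gamma,r}\simeq\Omega^{2,2}_{\Gamma,0}$. Equivalently, it follows from the fact that $\{e_1,e_2,\emm,\en\}$ is a basis. This step also shows that the statement does not depend on which representative of $[e]$ we pick, since it only involves $E$, and $E$ depends only on $[e]$.

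\emph{Step 3: existence and uniqueness.} Take any $c'\in[c]$. Every other element of the class has the form $c=c'+k\kappa$ with $k$ a (degree-one) function on $\Gamma$. The condition $\emm\en c=0$ becomes $\emm\en c'+k\,\emm\en\kappa=0$. Both terms lie in the rank-one bundle $\wedge^4\mathcal V$, and the coefficient $\emm\en\kappa$ is nowhere vanishing, so
\begin{equation*}
k=-\frac{\emm\en c'}{\emm\en \kappa}
\end{equation*}
is the unique solution. Uniqueness follows because two admissible representatives differ by some $k\kappa$ with $k\,\emm\en\kappa=0$, which forces $k=0$. The function $k$ depends smoothly (in fact algebraically) on $e$ and $c'$, so the choice defines a section of the quotient map.

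The main, if mild, obstacle is Step 1. One has to make sure that the kernel of $W_e^{\Gamma,(0,2)}$ is exactly rank one and not larger; a wedge-product argument that is too naive could miss components. I would settle it through the explicit basis expansion above, and cross-check by counting dimensions: $6-1=5$ should match $\dim \Omega^{1,1}_{\Gamma,r}=5$, as it must for $\Omega^{0,2}_{\Gamma,r}[1]$ to be the fiber of $T[1]\Omega^{1,1}_{\Gamma,r}$. The grading plays no role, because the argument is $C^\infty$-linear in $c$.
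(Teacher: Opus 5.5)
Your proof is correct, but it takes a different route from the paper. The paper never computes $\Ker{W_e^{\Gamma,(0,2)}}$ explicitly. It identifies $[c]$ with $C=ec\in\Ima W_e^{\Gamma,(0,2)}\subset\og^{1,3}$ and invokes the splitting $\og^{1,3}=e\,\Ker{W^{0,2}_{mn}}\oplus\emm\en\,\Ker{W_1^{\Gamma,(1,1)}}$ of \cref{lem: decomp (1,3)}. That lemma's proof rests on the count $8=5+3$, with $\dim\Ker{W_1^{\Gamma,(1,1)}}=3$ imported from \cite{CC2022cor}, and on injectivity of $e\wedge\cdot$ on $\Ker{W^{0,2}_{mn}}$; uniqueness of $c$ is exactly that injectivity.

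You instead show that the kernel is the line spanned by $\kappa=e_1e_2$, and that the single scalar condition $\emm\en c=0$ is transverse to it because $\emm\en\kappa\neq0$. This buys three things:
\begin{itemize}
\item your argument is self-contained;
\item it gives the explicit representative $c'-\frac{\emm\en c'}{\emm\en\kappa}\,\kappa$;
\item it supplies a step the paper leaves implicit. To get existence from the splitting, one needs the $\emm\en\beta$-component of every $C=ec'$ to vanish, i.e.\ $\Ima W_e^{\Gamma,(0,2)}=e\,\Ker{W^{0,2}_{mn}}$. This is precisely your observation that any $c'$ differs from an element of $\Ker{W^{0,2}_{mn}}$ by a multiple of $\kappa$.
\end{itemize}

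The paper's route, in turn, places the gauge fixing of $c$ inside the same decomposition of $\og^{1,3}$ it already uses for $\mathcal{E}_\zeta$, for \cref{thm: fix e corn}, and to define $W_e^{-1}$ in the Poisson bivector and the BF$^2$V action. The choice of representative is therefore manifestly compatible with those constructions.
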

\begin{proof}
    We know, essentially by construction, that $\Ima(W_e^{\Gamma,(0,2)})[1]\simeq \og^{2,2}[1]/\Ker{W_e^{\Gamma,(0,2)}}$. Indeed, defining $C\coloneqq e c\in \Ima(W_e^{\Gamma,(0,2)})[1]$, have $C=e c'$ if and only if $c'-c \in \Ker{W_e^{\Gamma,(0,2)}}$. However, by Lemma \ref{lem: decomp (1,3)}, any $K\in \og^{1,3}$ splits uniquely as 
        \[
        K= e\alpha + \epsilon_m\en \beta \qquad \text{ s.t.}\quad  \emm \en \alpha =0 \quad \mathrm{and} \quad e\beta=0.
        \]
    This implies that for any $C\in\Ima(W_e^{\Gamma,(0,2)})[1]\subset \og^{1,3}[1] $ there exists a unique $c\in\og^{0,2}[1] $ such that 
        \[
        C= ec \qquad \mathrm{and} \qquad \emm\en c =0.
        \]
\end{proof}

Naively, the BF$^2V$ action reads
    \begin{align}
        \nonumber\mathcal{S}_\Gamma&=\intc \frac12E[c,c]-\mu(e\de_{\omega_0}c+[\Omega,c])+[c,\epsilon_m]f^\dag\\
        &\phantom{=\intc} +\nu^\dag(e\de_{\omega_m^0}c+[\Omega_m,c]+\epsilon_m\de_{\omega_0}c)+\frac12(F_\omega+\Theta)\mu^2\\
        \nonumber&\phantom{=\intc}+(F_{  \omega_m}+\Theta_m)\mu\nu^\dag+(\de_{\omega_m}\mu-[\mu r,\epsilon_m])f^\dag\\
        \nonumber&\phantom{=\intc}-W_e^{-1}[\mu\epsilon_m\epsilon_nb,\epsilon_m]f^\dag,
    \end{align}
while the symplectic form is
    \begin{equation}
        \varpi_\Gamma=\intc \delta E \delta c + \delta\Omega \delta \mu +\delta \Omega_m \delta\nu^\dag + \delta\emm \delta f^\dag.,
    \end{equation}
stressing that $\delta \Omega_m \delta\nu^\dag + \delta\emm \delta f^\dag$ is going to depend just on the other fields. 

The above expressions are quite cumbersome. Specifically, we would like to find a symplectomorphism that symplifies both the action and the symplectic form. We start by noticing that, due to our own choice, we have $\Omega_m = \emm (\omega-\omega_0)$. Hence, we obtain
    \begin{align*}
        \delta\nu^\dag\delta\Omega_m &= \delta\nu^\dag (\delta\emm (\omega-\omega_0) - \emm \delta(\omega-\omega_0))\\
        &=  \delta(\nu^\dag(\omega-\omega_0))\delta\emm -  \delta(\nu^\dag\emm)\delta\omega.
    \end{align*}
Similarly, we have
    \begin{align*}
        \delta\mu\delta\Omega&= \delta\mu (\delta e (\omega-\omega_0) + e \delta(\omega-\omega_0))\\
        &= - \delta(\mu(\omega-\omega_0)) \delta e + \delta M\delta\omega,
    \end{align*}
having defined $M\coloneqq e\mu$. The symplectic form is then
    \begin{align*}
        \varpi_\Gamma=\intc &\delta (ec - \mu(\omega-\omega_0)) + \delta(M-\nu^\dag \emm)\delta\omega + \delta(f^\dag + \nu^\dag(\omega-\omega_0))\delta\emm.
    \end{align*}
This suggests fixing $f^\dag= -\nu^\dag(\omega-\omega_0)$, while $\nu^\dag= e \gamma^\dag$, with $\gamma^\dag=[\emm,\mu]$. Furthermore, consider the field redefinition $\Psi\colon T[1]P_\Gamma \to  T[1]P_\Gamma $, such that it acts as the identity on all the fields except $\mu$ and $c$, where
    \begin{align*}
        &\Psi^*(\mu)= \mu + \emm \gamma^\dag\\
        &\Psi^*(c)= c - W_e^{-1}(\mu (\omega-\omega_0)).
    \end{align*}
\begin{remark}
    Notice that $\mu\in\og^{0,1}$, can be rewritten as $\mu = \iz e + \lambda \en + \eta\emm$. This implies that the ghosts are the conjugate momenta to the fields in the BF$^2$V structure of PC gravity. 
\end{remark}
\begin{proposition}
    With the above (re-)definition of the fields, the data
        \begin{equation}
            \varpi_\Gamma'=\Psi^*(\varpi_\Gamma)=\intc\delta c \delta E + \delta\mu e\delta\omega + \mu \delta e \delta\omega,
        \end{equation}
    and 
        \begin{equation}\label{eq: easy BF2V action}
            \mathcal{S}'_\Gamma=\Psi^*(\mathcal{S}_\Gamma)=\intc \frac{1}{2}E [c,c] + \mu e \de_\omega c + \frac{1}{2}\mu^2 F_\omega
        \end{equation}
    define a BF$^2$V structure on $\mathcal{F}_\Gamma$.
\end{proposition}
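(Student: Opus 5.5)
The plan is to derive both formulas from the data already at hand rather than checking the master equation from scratch. The point is that $\mathcal S_\Gamma$ is, by construction, the degree-two function on $T[1]P_\Gamma$ associated with the bivector $\pi$ of \cref{thm: Poisson struct}. Hence $\{\mathcal S_\Gamma,\mathcal S_\Gamma\}=0$ is equivalent to $[\pi,\pi]=0$, which that theorem already provides. A pullback along a diffeomorphism preserves this bracket relation, so the work splits into three steps:
\begin{enumerate}
\item show that $\Psi$ is invertible and compatible with the gauge fixings ($f^\dag=-\nu^\dag(\omega-\omega_0)$, $\nu^\dag=e\gamma^\dag$, $\gamma^\dag=[\epsilon_m,\mu]$, and $\epsilon_m\epsilon_n c=0$);
\item compute $\Psi^*\varpi_\Gamma$;
\item compute $\Psi^*\mathcal S_\Gamma$.
\end{enumerate}

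For $\Psi^*\varpi_\Gamma$, I start from the rewritten form
\[
\varpi_\Gamma=\intc \delta\bigl(ec-\mu(\omega-\omega_0)\bigr)\delta e+\delta(M-\nu^\dag\epsilon_m)\delta\omega+\delta\bigl(f^\dag+\nu^\dag(\omega-\omega_0)\bigr)\delta\epsilon_m ,
\]
obtained just before the statement. Imposing $f^\dag=-\nu^\dag(\omega-\omega_0)$ kills the last term. Since $\nu^\dag=e\gamma^\dag$ with $\gamma^\dag=[\epsilon_m,\mu]$, we get $M-\nu^\dag\epsilon_m=e(\mu+\epsilon_m\gamma^\dag)$ up to sign conventions, and this is exactly absorbed by $\Psi^*\mu=\mu+\epsilon_m\gamma^\dag$. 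Likewise $\Psi^*c$ removes the $\mu(\omega-\omega_0)$ shift in the first term.

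One then expands $\delta(e\mu)\delta\omega=\delta\mu\, e\,\delta\omega+\mu\,\delta e\,\delta\omega$ and uses $\delta(ec)\delta e=\delta c\,\delta E$, which holds because $e\,\delta e=\delta E$. This gives $\varpi'_\Gamma$, and it is manifestly closed as $\delta\intc(c\,\delta E+\mu e\,\delta\omega)$. For non-degeneracy I would repeat the kernel computation following \cref{eq: presympl BF2V}, using the fixed representatives of $[e]$, $[\omega-\omega_0]$ and $[c]$ from \cref{thm: fix e corn} and \cref{thm: fix omega corner}.

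For the action, I substitute the fixed $f^\dag$, $\nu^\dag$ and the shifts of $\mu$ and $c$ into the naive $\mathcal S_\Gamma$ and collect terms by ghost number in $(c,\mu)$.
\begin{itemize}
\item The $c^2$ term stays $\tfrac12E[c,c]$. The cross terms generated by $c\mapsto c-W_e^{-1}(\mu(\omega-\omega_0))$ must combine with $-\mu(e\,\de_{\omega_0}c+[\Omega,c])$ into $\mu e\,\de_\omega c$, using $\Omega=e(\omega-\omega_0)$ and $\de_\omega=\de_{\omega_0}+[\omega-\omega_0,\cdot]$.
\item The $\mu^2$ terms must reassemble $\tfrac12(F_\omega+\Theta)\mu^2$, together with the $\nu^\dag$, $f^\dag$ and $\epsilon_m$ contributions, into $\tfrac12\mu^2F_\omega$. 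This uses $F_\omega=F_{\omega_0}+\de_{\omega_0}(\omega-\omega_0)+\tfrac12[\omega-\omega_0,\omega-\omega_0]$ together with the corner relations of Table~\ref{tab: bdry corner equations}: torsion-freeness, $\epsilon_mF_\omega=eF_{\omega_m}$, and the transversal Bianchi identities. The gauge $(\omega-\omega_0)_m=0$ from \cref{eq: fix omega_m} should make $\Theta_m$ and the $F_{\omega_m}$ terms cancel against the $\epsilon_m$-dependent pieces.
\end{itemize}

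The main obstacle is this $\mu^2$ bookkeeping, in particular disposing of the $W_e^{-1}[\mu\epsilon_m\epsilon_n b,\epsilon_m]f^\dag$ and $[\mu r,\epsilon_m]f^\dag$ terms. I would first decompose $\mu=\iota_\zeta e+\lambda\epsilon_n+\eta\epsilon_m$ and treat the components separately, invoking Lemma~\ref{lem: decomp (1,3)} to split each $\og^{1,3}$ expression.

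As an independent cross-check, I would verify the classical master equation for $\mathcal S'_\Gamma$ directly with respect to $\varpi'_\Gamma$. The action has the $BF$-like shape $\tilde B F_{\tilde A}$ of Example~\ref{example: BF theory}, so $\{\mathcal S'_\Gamma,\mathcal S'_\Gamma\}=0$ should reduce to $\de_\omega F_\omega=0$, $\de_\omega^2c=[F_\omega,c]$, the Jacobi identity, and $\de_\omega e=0$ on $P_\Gamma$.
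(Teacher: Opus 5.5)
You leave open the one step that gives your transfer argument content. Pulling back along a diffeomorphism does preserve the master equation. But that says nothing about the action in the statement until you prove $\Psi^*\mathcal S_\Gamma=\int_\Gamma\frac12E[c,c]+\mu e\,\de_\omega c+\frac12\mu^2F_\omega$. In your proposal the $\mu^2$ terms ``must reassemble'', $\Theta_m$ and the $F_{\omega_m}$ terms ``should cancel'', and the terms $[\mu r,\emm]f^\dagger$ and $W_e^{-1}[\mu\emm\en b,\emm]f^\dagger$ are named as the main obstacle but never disposed of. This is precisely the computation the paper calls too involved and deliberately skips. Completing it would prove more than the paper does, namely the asserted equality itself.

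Two further points in your main route are less automatic than you suggest:
\begin{itemize}
\item The premise is not true ``by construction''. On $P_\Gamma$ the fields $\emm$ and $\Omega_m$ are functions of $(E,\Omega)$, and $f^\dagger,\nu^\dagger$ are fixed in terms of $(E,\Omega,c,\mu)$. So $\varpi_\Gamma$ is not a canonical form in independent coordinates. To deduce $\{\mathcal S_\Gamma,\mathcal S_\Gamma\}=0$ from $[\pi,\pi]=0$, you must argue that this fixing is a choice of lift of covectors along which both the canonical form and the function of $\pi$ descend. That argument uses that $\pi$ is tangent to $P_\Gamma$.
\item Invertibility of $\Psi$ is not a formality. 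The shift $\emm\gamma^\dagger=\emm[\emm,\mu]$ only rescales the $\eta\emm$-component of $\mu$, by a factor that depends on $\nu$ and on sign conventions.
\end{itemize}

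Your cross-check is in fact the paper's proof. The paper computes the Hamiltonian vector field $\mathbb Q_\Gamma$ of $\mathcal S'_\Gamma$ with respect to $\varpi'_\Gamma$, checks $\mathbb Q_\Gamma^2=0$ componentwise, and checks that $\mathbb Q_\Gamma$ preserves the constraints, without ever computing $\Psi^*\mathcal S_\Gamma$. The ingredients you list would not close it:
\begin{itemize}
\item On the two-dimensional corner, $\de_\omega F_\omega$ vanishes for degree reasons.
\item $\de_\omega^2c=[F_\omega,c]$ and the Jacobi identity hold identically, so they carry no on-shell information.
\item $\mathcal S'_\Gamma$ is only the restriction of the $BF$ corner action to $\tau=e\mu$, $\phi=\frac12\mu^2$, $B^\dagger=0$. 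The $BF$ master equation does not pass to such a restriction automatically.
\end{itemize}

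The essential inputs are the transversal relations in the table of corner equations:
\begin{itemize}
\item The Einstein relations $\en F_\omega=eF_{\omega_n}$ and $\emm F_\omega=eF_{\omega_m}$ give $\mu F_\omega=e(-\iz F_\omega+\lambda F_{\omega_n}+\eta F_{\omega_m})$. Without this, $\mathbb Q_{\Gamma,\omega}$ (defined by $e\,\mathbb Q_{\Gamma,\omega}=e\,\de_\omega c+\mu F_\omega$) and $\mathbb Q_{\Gamma,c}$ are not even defined.
\item Together with $\en F_{\omega_m}=\emm F_{\omega_n}$, they yield $W_e^{-2}(\mu^2F_\omega)=\frac12\iz\iz F_\omega-\lambda\iz F_{\omega_n}-\eta\iz F_{\omega_m}$.
\item After Jacobi removes the $c$-terms, $\mathbb Q_{\Gamma,\mu}^2=-[W_e^{-2}(\mu^2F_\omega),\mu]$. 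This vanishes only by the transversal Bianchi identities $[\en,F_\omega]=[e,F_{\omega_n}]$, $[\emm,F_\omega]=[e,F_{\omega_m}]$ and $[\en,F_{\omega_m}]=[\emm,F_{\omega_n}]$.
\item The same identities give $[F_\omega,\mu]=[W_e^{-1}(\mu F_\omega),e]$ and $[W_e^{-1}(\mu F_\omega),\mu]=[W_e^{-2}(\mu^2F_\omega),e]$. These are what make $\mathbb Q_{\Gamma,e}^2=0$ and $\mathbb Q_\Gamma(\de_\omega e)=[c,\de_\omega e]\approx0$.
\end{itemize}
Promoting this direct check to the main argument, with these inputs, is what actually proves the proposition.
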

\begin{proof}
    Since the full computation of the pullback of $\mathcal{S}_\Gamma$ along $\Psi$ is quite involved, we prefer to omit it and to prove that the resulting data is that of a genuine BF$^2$V structure on $\mathcal{F}_\Gamma$ by means of the classical master equation. 

    We start by finding the Hamiltonian vector field $\mathbb Q_\Gamma$ of $\mathcal{S}'_\Gamma$, which, after some manipulations, is simply obtained as
        \begin{align}
            &\mathbb Q_{\Gamma,e}= [c,e] + \de_\omega \mu   & e \mathbb Q_{\Gamma,\omega} = e \de_\omega c + \mu F_\omega\\
            \nonumber&e\mathbb Q_{\Gamma,c} = \frac{1}{2}e [c,c] -\mu W_{e}^{-1}(\mu F_\omega) &\mathbb Q_{\Gamma,\mu}=[c,\mu].
        \end{align}
    We remark that $\mathbb Q_{\Gamma,e}$ is defined up to terms in $\ker(W_e^{\Gamma(1,1)})$. The same is true for $\mathbb Q_{\Gamma,c}$ and $\mathbb Q_{\Gamma,\omega}$. It is interesting to consider the expression $\mu F_\omega$. Indeed, using Einstein's equation and its transversal components, summarized in Table \ref{tab: bdry corner equations}, it can be rewritten as
        \begin{align*}
            \mu F_\omega = & (\iz e + \lambda \en + \eta \emm) F_\omega \\
            =& e(- \iz F_\omega + \lambda F_{\omega_n} + \eta F_{\omega_m}),
        \end{align*}
    hence 
        \begin{align*}
            &\mathbb Q_{\Gamma,e}= [c,e] + \de_\omega \mu   &  \mathbb Q_{\Gamma,\omega} =  \de_\omega c - \iz F_\omega + \lambda F_{\omega_n} + \eta F_{\omega_m}\\
            &\mathbb Q_{\Gamma,\mu}=[c,\mu] &e\mathbb Q_{\Gamma,c} = \frac{1}{2}e [c,c] +\mu ( \iz F_\omega - \lambda F_{\omega_n} - \eta F_{\omega_m}) .
        \end{align*}
    For the following computations, it is also convenient to compute $W_e^{-2}(\mu^2F_\omega)=W_e^{-1}(\mu W_e^{-1}(\mu F_\omega))$. We find, using the transversal Bianchi identities from table \ref{tab: bdry corner equations},
        \begin{align*}
            \mu ^2 F_\omega =& \iz e \iz e F_\omega + 2 \iz e (\lambda \en + \eta\emm) F_\omega + 2 \eta\lambda \en\emm F_\omega\\
            =& -\iz E \iz F_\omega - 2 e \lambda \en \iz F_\omega - 2 e \eta \emm \iz F_\omega\\
            =& E \iz \iz F_\omega-2e(\lambda\iz (eF_{\omega_n} ) + \eta \iz (e F_{\omega_m}))\\
            =&E \iz \iz F_\omega -2 E (\lambda \iz F_{\omega_n} + \eta \iz F_{\omega_m}).
        \end{align*}\
    having used $2\en\emm F_\omega= \en F_{\omega_m} - \emm F_{\omega_n}=0$. We can then infer
        \begin{equation}
            W_e^{-2}(\mu^2F_\omega)=\frac{1}{2}\iz \iz F_\omega - \lambda \iz F_{\omega_n} - \eta \iz F_{\omega_m}.
        \end{equation}
    The classical master equation, in absence of vertices, is equivalent to the cohomological property of the Hamiltonian vector field $\mathbb Q_\Gamma$. Hence, we just need to check $\mathbb Q_\Gamma^2=0$. We have
        \begin{align*}
            \mathbb Q^2_{\Gamma,\mu}= &\frac{1}{2}[[c,c],\mu] - [W_e^{-2}(\mu^2F_\omega),\mu] - e [c,[c,\mu]]\\
            =& - [W_e^{-2}(\mu^2F_\omega),\mu]\\
            =& [\iz e + \lambda \en + \eta \emm, \frac{1}{2}\iz \iz F_\omega - \lambda \iz F_{\omega_n} - \eta \iz F_{\omega_m} ]\\
            =& \frac{1}{2}\iz\iz\iz[e,F_\omega] - \frac{1}{2}\iz\iz [e,\lambda F_{\omega_n} - \eta F_{\omega_m}] \\
            &+ [\lambda\en , \frac{1}{2}\iz \iz F_\omega - \lambda \iz F_{\omega_n} - \eta \iz F_{\omega_m} ]\\
            &+[\eta\emm , \frac{1}{2}\iz \iz F_\omega - \lambda \iz F_{\omega_n} - \eta \iz F_{\omega_m} ]\\
            =&- \frac{1}{2}\iz\iz (\lambda [\en,F_\omega] +\eta [\emm,F_\omega] )\\
            &+ [\lambda\en , \frac{1}{2}\iz \iz F_\omega ] +\lambda\eta \iz[\en,F_{\omega_m}]\\
            &+[\eta\emm , \frac{1}{2}\iz \iz F_\omega ] - \lambda\eta \iz[\emm,F_{\omega_n}]\\
            =&0,
        \end{align*}
    having used the graded Jacobi identity and the transversal Bianchi identities. In a similar way, one can also show that 
        \begin{align}
            &[F_\omega,\mu]= [W_e^{-1}(\mu F_\omega),e] \label{id: Bianchi mu}\\
            &[W_e^{-1}(\mu F_\omega),\mu]=[W_e^{-2}(\mu^2 F_\omega),e].\label{id: Bianchi mu 2}
        \end{align}
        \begin{align*}
            e\mathbb Q_{\Gamma,\omega}^2=&\mathbb Q_\Gamma(e\mathbb   Q_{\Gamma\omega}) - (\mathbb Q_{\Gamma,e})(\mathbb Q_{\Gamma,\omega})\\
            =&  e [\de_\omega c,c] + e [W_e^{-1}(\mu F_\omega),c ] - \frac{1}{2} e\de_\omega[c,c]+ e \de_\omega(W_e^{-2}(\mu^2 F_\omega)  )+ [c,\mu] F_\omega \\
            & - \mu \de_\omega(\de_\omega c + W_e^{-1}(\mu F_\omega)) - [c,e]W_e^{-1}(\mu F_\omega) - \de_\omega\mu W_e^{-1}(\mu F_\omega)\\
            =& [\mu F_\omega,c ] + [c,e] W_e^{-1}(\mu F_\omega) - \de_\omega(\mu W_e^{-1}(\mu F_\omega)) + [c,\mu ]F_\omega\\
            &-\mu [F_\omega,c ] - \mu \de_\omega( W_e^{-1}(\mu F_\omega)) - [c,e] W_e^{-1}(\mu F_\omega) - \de_\omega\mu  W_e^{-1}(\mu F_\omega)\\
            =&0 ,
        \end{align*}        
        \begin{align*}
             \mathbb Q_{\Gamma,e}^2=& \frac{1}{2}[[c,c],e] - [ W_e^{-2}(\mu^2 F_\omega),e] - [c,[c,e]] - [c,\de_\omega\mu]\\
             &+ [\de_\omega c,\mu] + [ W_e^{-1}(\mu F_\omega),\mu] - \de_\omega[c,\mu]\\
             =&-[ W_e^{-2}(\mu^2 F_\omega),e]+ [ W_e^{-1}(\mu F_\omega),\mu]\\
             \overset{\cref{id: Bianchi mu 2}}{=}&0 ,
        \end{align*}

    Lastly, we need to check that the remaining constraint on the fields (which is not a ``structural constraint'' used to fix the values of the transversal fields and jets) is invariant under the action of $\mathbb Q_\Gamma$. One finds
        \begin{align*}
            \mathbb Q_\Gamma (\de_\omega e ) &= [\de_\omega c + W_e^{-1}(\mu F_\omega), e ] - \de_\omega([c,e] + \de_\omega\mu)\\
            &=[W_e^{-1}(\mu F_\omega), e ] - [F_\omega,\mu] + [c,\de_\omega e]\\
            &\overset{\cref{id: Bianchi mu}}{=} [c,\de_\omega e],
        \end{align*}
    which vanishes on-shell. Similarly, one can check the transversal Bianchi identities to be preserved by the action of $\mathbb Q_\Gamma$.
\end{proof}
\begin{remark}
    As one can check by quickly inspecting \cref{eq: easy BF2V action}, the BF$^2$V action contains quadratic terms in the classical fields $e$ and $\omega$. However, the wish is to be able to rewrite it---through yet another field redefinition---such that it only contains  linear terms in the classical fields. The BF$^2$V structure would then correspond to a linear (or an affine) Poisson structure on the original space of reduced corner fields $P_\Gamma$.
\end{remark}
\begin{proposition}
    Let $\Phi\colon T[1]P_\Gamma\to T[1]P_\Gamma $ be the diffeomorphism acting as the identity on all the fields except on $c$, where
    $$\Phi^*(c)=c-\iota_\zeta (\omega-\omega_0).$$
    Then, we have
        \begin{equation}
            \mathcal{S}^{PC}_\Gamma\coloneqq \Phi^*(S'_\Gamma)=\intc \frac{1}{2}E[c,c] + \mu e\de_{\omega_0}c +\frac{1}{2} \mu^2 F_{\omega_0}, 
        \end{equation}
    while the symplectic form becomes
        \begin{equation}
            \varpi^{PC}_\Gamma\coloneqq \Phi^*(\varpi'_\Gamma)=\intc \delta E \delta(c - \iz (\omega-\omega_0))+ \delta\mu e\delta\omega + \mu \delta e \delta\omega.
        \end{equation}
    \end{proposition}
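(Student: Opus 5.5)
The proof is a direct pullback computation: I will substitute $\Phi^*(c)=c-\iz(\omega-\omega_0)$ into $\mathcal S'_\Gamma$ and $\varpi'_\Gamma$ and reorganize. Throughout I use $\mu=\iz e+\lambda\en+\eta\emm$ and the decomposition $\omega=\omega_0+(\omega-\omega_0)$. I also use the identities $\de_\omega=\de_{\omega_0}+[\omega-\omega_0,\cdot\,]$ and
\[
F_\omega=F_{\omega_0}+\de_{\omega_0}(\omega-\omega_0)+\tfrac12[\omega-\omega_0,\omega-\omega_0].
\]

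For the symplectic form the statement is essentially immediate. Only $c$ is transformed, so $\delta c$ is replaced by $\delta\bigl(c-\iz(\omega-\omega_0)\bigr)$ in the term $\delta c\,\delta E$, and the remaining terms $\delta\mu\, e\delta\omega+\mu\,\delta e\,\delta\omega$ are untouched. This gives exactly $\varpi^{PC}_\Gamma$. Since $\zeta$ is a component of $\mu$ through $\mu=\iz e+\dots$, the shift is field-dependent, and no further simplification is expected or claimed.

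For the action, I will expand the three terms of $\mathcal S'_\Gamma$ with $c\mapsto c-\iz A$, where $A\coloneqq\omega-\omega_0$, and sort the result by polynomial degree in $A$.

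\emph{Degree zero in $A$.} These terms are $\tfrac12E[c,c]+\mu e\de_{\omega_0}c+\tfrac12\mu^2F_{\omega_0}$, i.e.\ the target action.

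\emph{Degree one in $A$.} There are four contributions:
\begin{itemize}
\item $-E[c,\iz A]$, from the quadratic term;
\item $\mu e[A,c]$, from $\de_\omega$;
\item $-\mu e\,\de_{\omega_0}\iz A$;
\item $\tfrac12\mu^2\de_{\omega_0}A$.
\end{itemize}
The first two cancel after writing $\mu e=\iz E+(\lambda\en+\eta\emm)e$ and integrating the $\iota$'s by parts via $\iz E\,X=-E\,\iz X$ on the two-dimensional corner. The remaining $\lambda\en,\eta\emm$ pieces must vanish by the transversal components; I expect these to be killed by the constraints of Table \ref{tab: bdry corner equations} or by the structural choices \cref{eq: fix omega_m} and \cref{constr: fix omega corner}. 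For the last two contributions, I rewrite $\tfrac12\mu^2=\tfrac12(\iz e)^2+\dots$ and use Cartan's formula $\mathrm L_\zeta^{\omega_0}=[\iz,\de_{\omega_0}]$ to convert $-\mu e\,\de_{\omega_0}\iz A$ into $\tfrac12\mu^2\de_{\omega_0}A$ plus a total derivative.

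\emph{Degree two in $A$.} These are $\tfrac12E[\iz A,\iz A]$, $-\mu e[A,\iz A]$, and $\tfrac14\mu^2[A,A]$. With the same $\iota$-reshuffling, and since $\iota_\zeta\iota_\zeta$ acting on $E$ produces the needed symmetric pairing, they should assemble into $\tfrac12\iz\iz(E[A,A])$-type expressions. Such expressions vanish because they are top-degree plus one on the two-dimensional $\Gamma$.

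The main obstacle is the bookkeeping of graded signs together with the $\lambda\en+\eta\emm$ parts of $\mu$. For $\mu=\iz e$ the computation is the standard statement that $\iz A$ converts $\de_\omega$ into $\de_{\omega_0}$ up to a Lie derivative. For the transversal parts, I will try first to show that the offending terms are proportional to $\en A$ or $\emm A$. On the chosen representative $A=er+\emm\en b$ these reduce to expressions in $e$, which in turn vanish by the fixing of $r$ ($[e,r]=0$) and by the degree count on $\Gamma$. If a residual term survives, I would absorb it by checking that it equals a contraction vanishing on $P_\Gamma$ via the transversal Einstein equations $\en F_\omega=eF_{\omega_n}$ and $\emm F_\omega=eF_{\omega_m}$.
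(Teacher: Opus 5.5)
Your treatment of the symplectic form is fine: only $c$ is moved, so $\Phi^*\varpi'_\Gamma$ is obtained by replacing $\delta c$ with $\delta(c-\iota_\zeta(\omega-\omega_0))$. For the action, the paper omits the computation, so everything rests on whether your expansion closes. It does not. The two steps you leave open are not bookkeeping: ``the $\lambda\epsilon_n,\eta\epsilon_m$ pieces must vanish'' and ``convert $-\mu e\,\de_{\omega_0}\iota_\zeta A$ into $\frac12\mu^2\de_{\omega_0}A$ plus a total derivative'' both fail.

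\textbf{The $\lambda,\eta$ pieces.} On the locus $\zeta=0$ the map $\Phi$ is the identity. There the claim reduces to
\[
\int_\Gamma \mu e[\omega-\omega_0,c]+\tfrac12\mu^2(F_\omega-F_{\omega_0})=0 \qquad\text{for } \mu=\lambda\epsilon_n+\eta\epsilon_m,
\]
and this is false on $P_\Gamma$. Take the following configuration.
\begin{itemize}
\item $e=v_1\de x^1+v_2\de x^2$ constant, with $\{v_a\}$ orthonormal, $\epsilon_m=v_3$, $\epsilon_n=v_4$.
\item $\omega_0=0$ and $\omega=\epsilon_m\epsilon_n\,\de x^1$, so $r=0$ and $b=\de x^1$. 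Then $\de_\omega e=0$ and $F_\omega=0$, so all corner constraints hold.
\item $c=\gamma\,v_1v_4$, which satisfies $\epsilon_m\epsilon_n c=0$, and $\mu=\lambda\epsilon_n$.
\end{itemize}
Then $[\omega-\omega_0,c]\propto\gamma\,v_1v_3\,\de x^1$, and hence $\lambda\epsilon_n e[\omega-\omega_0,c]\propto\lambda\gamma\,v_4v_2v_1v_3\,\de x^2\de x^1\neq0$. Separately, the $\lambda\eta$ term $\lambda\eta\,\epsilon_n\epsilon_m(F_\omega-F_{\omega_0})$ is untouched by $\Phi$. It involves the curvature of the arbitrary reference connection, which the transversal Einstein equations you want to invoke cannot control, since they only constrain $F_\omega$.

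\textbf{The pure $\zeta$ sector.} This does not close either. Commuting $\iota_\zeta$ past $\mathrm{L}^{\omega_0}_\zeta$ gives
\[
\iota_\zeta\iota_\zeta\de_{\omega_0}A=2\,\mathrm{L}^{\omega_0}_\zeta\iota_\zeta A-\iota_{[\zeta,\zeta]}A .
\]
So after all cancellations a term proportional to $\int_\Gamma E\,\iota_{[\zeta,\zeta]}(\omega-\omega_0)$ survives, and it is not a total derivative. Concretely, use the configuration above with $c=0$, $\mu=\iota_\zeta e$ and $\zeta=\zeta^1\partial_1$. Every term in your degree-zero, degree-one and degree-two lists vanishes except
\[
-\mu e\,\de_{\omega_0}\iota_\zeta A=\pm\,\zeta^1\partial_1\zeta^1\,v_1v_2v_3v_4\,\de x^1\de x^2 .
\]
Its integral is nonzero: for $\zeta^1=\theta_1\sin x^1+\theta_2\cos x^1$ on a torus one gets $\zeta^1\partial_1\zeta^1=-\theta_1\theta_2$, i.e.\ $\pm\theta_1\theta_2$ times the area. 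Meanwhile $\mathcal S^{PC}_\Gamma$ vanishes at this point, because $c=0$ and $F_{\omega_0}=0$.

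\textbf{Conclusion.} Each discrepancy is a single nonvanishing monomial, so no choice of graded sign conventions removes it. The identity $\Phi^*\mathcal S'_\Gamma=\mathcal S^{PC}_\Gamma$ cannot be established as stated. Your expansion actually yields $\mathcal S^{PC}_\Gamma$ plus an $\iota_{[\zeta,\zeta]}(\omega-\omega_0)$ term and $\lambda,\eta$-dependent corrections. Removing these would need a different field redefinition, hence a different $\varpi^{PC}_\Gamma$, or additional hypotheses.
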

 \begin{proof}
    The proof is just the result of a quite long but simple computation, which we omit for the sake of readability. 
 \end{proof}
 \begin{remark}
     The above BF$^2$V data correspond to the following affine Poisson structure
        \begin{equation}
            \pi^{PC}\coloneqq \intc \frac{1}{2}E\big[\frac{\delta }{\delta E},\frac{\delta}{\delta E}\big] + \frac{\delta}{\delta\Omega }e\de_{\omega_0}\frac{\delta}{\delta E} + \frac{1}{2}\frac{\delta}{\delta \Omega}\frac{\delta}{\delta \Omega} F_{\omega_0}.
        \end{equation}
    Furthermore, notice that such action closely resembles the one found in the $BF$ theory of Example \ref{example: BF theory}, after setting $F_{\omega_0}=0$. Specifically, one can recover the BF$^2$V structure of PC gravity as a particular $BF$ theory for the Lie algebra $\mathfrak{so}(3,1)$, restricted to the submanifold defined by the following field parametrisation
        \begin{align*}
            &c=c  & A=\omega  & &B = E,\\
            & \phi = \frac{1}{2}\mu^2 & \tau = e\mu  && B^\dag =0.
        \end{align*}
 \end{remark}
%\subsection{Quantization and outlook}
\section{Conclusion and Outlook}

In this paper, we investigated the corner structure of four-dimensional Palatini--Cartan gravity on manifolds with corners from a classical perspective. Starting from the boundary constraint algebra, we associated to the codimension-two strata a pre-Dirac structure on the space of corner fields. Using techniques from generalized geometry, we identified a natural reduction procedure that removes the degeneracies responsible for the failure of maximality and yields a genuine Dirac structure on a reduced corner phase space. We further showed that this Dirac structure can be identified with the graph of a Poisson bivector, thereby providing a geometric characterization of the reduced corner degrees of freedom.

The resulting Poisson geometry naturally gives rise to a strict BF$^2$V structure. In this way, the corner theory is obtained directly from the reduced classical geometry, in close analogy with the construction of the BFV theory for Palatini--Cartan gravity developed in \cite{CCS2020}. This should be contrasted with the BFV-induced construction of \cite{CC2022cor}, where the codimension-two theory is obtained by descending from the boundary BFV data and is naturally described by a local $P_\infty$-algebra. There, the induced presymplectic structure remains singular and the resulting theory is pre-BF$^2$V in nature. By performing the relevant reduction before introducing antifields, the construction presented here removes this obstruction and yields a genuine BF$^2$V theory associated with the reduced corner phase space.

A second outcome of our analysis is the identification of an equivalent affine Poisson description of the reduced corner theory. This reformulation reveals a close relation with the corner structure of four-dimensional $BF$ theory and places the gravitational corner degrees of freedom in a setting that is particularly well suited for quantization. Indeed, affine Poisson structures naturally determine central extensions of Lie algebras through the Kirillov--Kostant construction. Consequently, part of the quantization procedure is already encoded in the classical geometry itself: the affine term determines the extension data that govern the corresponding quantum corner algebra.

This observation suggests a possible connection with the quantization program developed for corners in four-dimensional $BF$ theory. In that setting, the affine Poisson structure gives rise to a centrally extended infinite-dimensional Lie algebra whose universal enveloping algebra defines the free corner algebra. The physical corner algebra is subsequently obtained by imposing the quantum counterparts of the classical constraints. While this construction successfully produces a rich class of representations of the free corner algebra, the passage to the physical algebra turns out to be considerably more subtle. In the non-abelian case, the constraint ideal acts non-trivially on the available modules, and the resulting representations descend only trivially to the physical corner algebra. Thus, the main challenge lies not in the construction of the free quantum corner algebra itself, but rather in understanding the physical quotient and its representation theory.

The affine Poisson structure obtained in the present work suggests that analogous questions can now be formulated for gravitational corners. In particular, one may hope to associate to the reduced corner theory a gravitational corner algebra obtained from the quantization of the corresponding central extension and to study its representations as candidate quantum corner state spaces. Whether non-trivial physical representations survive the implementation of the gravitational constraints remains a completely open question. Nevertheless, the framework developed here provides a concrete geometric and algebraic setting in which such problems can be systematically investigated.

Another natural direction concerns the relation between the BF$^2$V structure constructed here and the BFV theory of Palatini--Cartan gravity. Since our approach follows the philosophy of \cite{CCS2020}, it is natural to ask whether there exists a boundary BFV theory whose codimension-two descendant automatically reproduces the strict BF$^2$V structure obtained in this paper. Such a construction would provide a genuine BF$^{(k)}$V description of Palatini--Cartan gravity compatible with the reduced corner geometry.

A particularly intriguing possibility is that such a BFV theory may be related to the one constructed in \cite{CCS2020} through a suitable partial coisotropic reduction. If so, the reduced corner theory described here could emerge naturally from a reduced BFV structure already at codimension one. Clarifying the relation between these constructions would provide a better understanding of the interplay between reduction and the BF$^{(k)}$V descent procedure and could ultimately lead to a unified description of bulk, boundary, and corner structures in Palatini--Cartan gravity.

More broadly, the appearance of Dirac and affine Poisson geometry at codimension two suggests that generalized geometric structures may play a fundamental role in the organization of higher-codimensional data in gauge theories. It would therefore be interesting to investigate whether similar mechanisms arise in other gauge theories admitting BF$^{(k)}$V descriptions, and whether the affine Poisson structures obtained in this way admit a systematic quantization compatible with gluing and locality principles. We hope that the present work provides a first step in this direction.

\newpage
\appendix
\section{Useful results and notation}
The following section contains lemmas and identities useful throughout the paper. They are regrouped here for convenience. Considering inclusions $i\colon \Sigma\hookrightarrow M$ and $j\colon \Gamma \hookrightarrow\Sigma $, by abuse of notation we will denote by $\mathcal{V}$ also the induced vector bundles on the boudnary and corners $i^* \mathcal{V}$ and $(i\circ j)^*\mathcal{V}$. First, we recall the definitions
    \begin{align}
        \Omega^{i,j}\coloneqq \Omega^i(M,\wedge^j V) ,\qquad \Omega_\Sigma^{i,j}\coloneqq \Omega^i(\Sigma,\wedge^j V) ,\qquad \Omega^{i,j}_\Gamma\coloneqq \Omega^i(\Gamma,\wedge^j V) ,
    \end{align}
    and maps $W_k^{i,j}\coloneqq e^k\wedge -$, where $e$ is understood to be the vielbein on $M$, $\Sigma$ or $\Gamma$ depending on the context, 
    \begin{align*}
        W_k^{i,j}\ \colon \ & \Omega^{i,j}\longrightarrow \Omega^{i+k,j+k}  &W_k^{\Sigma(i,j)} \colon & \ \Omega_\Sigma^{i,j}\longrightarrow \Omega_\Sigma^{i+k,j+k}   & W_k^{\Gamma(i,j)} \colon & \ \Omega_\Gamma^{i,j}\longrightarrow \Omega_\Gamma^{i+k,j+k} .
    \end{align*}
We also define $\varrho^{i,j}_{(\Gamma,\Sigma)}\ \colon \ \Omega^{i,j}_{(\Gamma,\Sigma)} \to \Omega^{i+1,j-1}_{(\Gamma,\Sigma)} \ \colon \ \alpha \mapsto [e,\alpha]  $. 

Additionally, defining $\epsilon_m\in\Omega^{0,1}_\Gamma $ such that $\{e_i,\epsilon_m,\epsilon_n\}$ is a local basis of $\mathcal{V}$, with $e_i$ is such that $e=e_i \de x^i$ is the corner vielbein, we have the following map
    \begin{align}
        W_{mn}^{i,j}\colon  \Omega_\Gamma^{i,j}&\longrightarrow \Omega_\Gamma^{i,j+2}\\
        \alpha & \longmapsto \epsilon_m\epsilon_n \alpha.
    \end{align}
With the above definitions, we have the following theorem
\begin{theorem}\cite{Canepa:2024rib}\label{thm: coframes prop}
    Letting $l$ be the codimension of the manifold we're working on, i.e. $l(M)=0$, $l(\Sigma)=1$ and $l(\Gamma)=2$. Then $W_k^{l,(i,j)}$ is surjective if and only if
        \[i+j\geq  4-k\]
    and injective if and only if
    \[
    i+j\leq 4- l-k.
    \]
\end{theorem}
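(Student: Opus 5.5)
The statement is pointwise linear algebra: $e^k\wedge$ is a bundle map, so it is enough to study it on a single fibre. I will recast the fibre as a sum of exterior algebras of a symplectic vector space and apply the linear hard Lefschetz theorem.

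\textbf{Step 1: fibre decomposition.} Let $n\coloneqq 4-l$ be the dimension of the stratum. At a point $p$, nondegeneracy of $e$ means $e_p\colon T_p\to V$ is injective; call its image $W$, which has dimension $n$. Choose a complement $U$, of dimension $l$, spanned by the transversal vectors ($\epsilon_n$, or $\epsilon_m,\epsilon_n$). With $\{v_a=e(\partial_a)\}$ a basis of $W$, we have $e=\sum_a \de x^a\otimes v_a$. This gives
\[
\Lambda^iT_p^*\otimes\Lambda^jV\;\cong\;\bigoplus_{s=0}^{l}\Lambda^i W^*\otimes\Lambda^{j-s}W\otimes\Lambda^sU .
\]
Since $e$ has no $U$-component, $e^k\wedge$ preserves each summand and acts as the identity on the $\Lambda^sU$ factor. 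Hence $W_k^{l,(i,j)}$ is injective (resp.\ surjective) if and only if each map
\[
L^k\colon \Lambda^iW^*\otimes\Lambda^{r}W\to\Lambda^{i+k}W^*\otimes\Lambda^{r+k}W,\qquad r=j-s,\ 0\le s\le l,
\]
is injective (resp.\ surjective), where $L=\sum_a\de x^a\wedge v_a\wedge$.

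\textbf{Step 2: Lefschetz.} Now identify $\Lambda W^*\otimes\Lambda W\cong\Lambda(W^*\oplus W)$; the graded signs are harmless because they do not affect rank. The space $W^*\oplus W$ of dimension $2n$ carries the canonical pairing, with symplectic form $\sum_a \de x^a\wedge v_a$, and under this identification $L$ is the Lefschetz operator. Together with its adjoint contraction $\Lambda$ and the degree-counting operator $H=\deg-n$, it forms an $\mathfrak{sl}_2$-triple. All three operators shift the bidegree $(i,r)$ by multiples of $(1,1)$, so they preserve $i-r$. Consequently the $\mathfrak{sl}_2$-decomposition into primitive strings is compatible with the bigrading. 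The standard linear hard Lefschetz statement then applies to each fixed-$(i-r)$ slice: $L^k$ on total degree $d=i+r$ is injective iff $d\le n-k$ and surjective iff $d\ge n-k$. Within the nonzero range, this follows from the fact that an $\mathfrak{sl}_2$-string with lowest weight $-m$ is killed by $L^k$ exactly at its top $k$ steps.

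\textbf{Step 3: assembling.} The degree of the summand indexed by $s$ is $d_s=i+j-s$. Injectivity for every $s$ is governed by the worst case $s=0$, which gives $i+j\le n-k=4-l-k$. Surjectivity for every $s$ is governed by the worst case $s=l$, which gives $i+j-l\ge 4-l-k$, i.e.\ $i+j\ge 4-k$. These are exactly the claimed conditions.

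\textbf{Main obstacle.} The hard part is the ``only if'' direction, together with the bookkeeping of degenerate cases. For necessity one must exhibit a nonzero summand on which Lefschetz actually fails. That requires the extremal component ($s=0$ for injectivity, $s=l$ for surjectivity) to be nonzero, and the bigraded slice to contain a primitive string that is truncated at the relevant degree. When $i$, $j-s$, $i+k$ or $j+k-s$ leave the ranges $[0,n]$ or $[0,4]$, the source or target vanishes and the statement holds only by convention. I would handle this first by an explicit primitive vector: for example, take $\de x^1\wedge\dots\wedge\de x^i\otimes v_{i+1}\wedge\dots$ with disjoint index sets, which is annihilated by $\Lambda$, and check its $L$-string length directly. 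Failing that, I would compare the dimensions $\binom{n}{i}\binom{n}{r}$ of source and target. The dimension counts quoted in the paper (for instance $\dim\ker W_1^{\Gamma,(1,2)}=8$) provide a sanity check.
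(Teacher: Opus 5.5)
Your argument is sound. The paper gives no proof of this statement; it is imported from the cited reference, so there is nothing internal to match, and your fibrewise proof is a legitimate self-contained route.

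The closest in-paper technique is the thickening used in the appendix lemmas: replace $\Gamma$ by $\Gamma\times[0,1]^2$ with coframe $e+\epsilon_n dx^n+\epsilon_m dx^m$ and expand a bulk statement into components. That method presupposes the bulk case of this very theorem and descends from it to the corner. Your splitting $V=W\oplus U$ is its fibrewise counterpart but runs the other way, proving all codimensions at once from one linear-algebra fact. It also shows that only injectivity of $e_p$ matters: $U$ is an arbitrary complement, and no property of the induced metric is used. As a by-product it computes kernel dimensions that the paper quotes from elsewhere. For $W_1^{\Gamma,(1,1)}$ the pieces $s=0,1,2$ have kernels of dimension $3,0,0$; for $W_1^{\Gamma,(1,2)}$ they have dimension $2,6,0$. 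This gives the $3$ and $8$ used in the text.

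When you write it out, tighten three points.

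First, the cases you set aside are genuine exceptions to the literal ``iff'', not conventions under which it holds. For $k=0$ the map is the identity. A vanishing source (e.g.\ $i>n$) gives trivial injectivity. And $W_1^{\Gamma,(2,0)}\colon\Omega^{2,0}_\Gamma\to\Omega^{3,1}_\Gamma=0$ is surjective although $i+j=2<3$. So assume $k\ge1$ and nonzero source and target.

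Second, the extremal piece can vanish even under those assumptions. If $j>n$, the $s=0$ piece is zero; but the $s=j-n$ piece has degree $i+n>n-k$ and still witnesses non-injectivity. For surjectivity you must run over \emph{target} pieces, including those with $s>j$, where the source piece is zero. When $i+j\ge 4-k$ these vanish automatically: a nonzero one would need $i+k\le n$, which is incompatible with $i+j-s\ge n-k$ and $s>j$. When $i+j<4-k$, a nonzero one witnesses failure.

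Third, the ``only if'' inside a slice follows from monomials, with no need to know which strings occur. Since $L^k=k!\sum_{|C|=k}\prod_{c\in C}dx^c v_c$, the element $L^k(dx^A\otimes v_B)$ is a signed sum of $dx^{A\cup C}\otimes v_{B\cup C}$ over $k$-sets $C$ disjoint from $A\cup B$.
\begin{itemize}
\item If $i+r>n-k$, a monomial with $|A\cup B|=\min(n,i+r)>n-k$ lies in the kernel.
\item The image is contained in the span of monomials $dx^{A'}\otimes v_{B'}$ with $|A'\cap B'|\ge k$. So if $i+r<n-k$, any target monomial with overlap $\max(0,i+r+2k-n)<k$ is missed.
\end{itemize}
Your $\mathfrak{sl}_2$ argument, which holds in any finite-dimensional representation, then supplies the ``if'' directions.
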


In the following, we will make extensive use of the following manifolds
    \begin{align*}
        &\Sigma_m\coloneq \Gamma\times[0,1] \ni (x,x^m), \qquad \text{and}\qquad
        M_{mn}\coloneqq \Gamma \times [0,1]^2\ni (x,x^m,x^n).
    \end{align*}

\begin{lemma}\cite{CCS2020}\label{lem:Omega2,1_d4}
Let $\alpha \in \Omega^{2,1}_\Sigma$. Then 
\begin{align}\label{ConditionforOmega21_d4}
\alpha=0 \qquad \Longleftrightarrow  \qquad \begin{cases}
e\alpha =0 \\
\epsilon_n\alpha \in \Ima W_{1}^{\Sigma, (1,1)}
\end{cases}
\end{align}
where the last condition implies $\epsilon_n\alpha = e\sigma$ for some $\sigma\in\Omega_\Sigma^{1,1} $.
\end{lemma}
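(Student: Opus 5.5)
The plan is to reduce the statement to pointwise linear algebra and then do a dimension count. Both conditions are $C^\infty(\Sigma)$-linear in $\alpha$, so it suffices to work at a point of $\Sigma$. There $\Omega^{2,1}_\Sigma$ becomes $\wedge^2 T^*_x\Sigma\otimes V$, which has dimension $3\cdot 4=12$.

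The forward direction is trivial: if $\alpha=0$, then $e\alpha=0$ and $\epsilon_n\alpha=e\cdot 0$, so one may take $\sigma=0$.

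For the converse, fix coordinates $x^1,x^2,x^3$ on $\Sigma$ and use the adapted basis $\{e_1,e_2,e_3,\epsilon_n\}$ of $V$, where $e=e_i\,\de x^i$. This is a basis precisely because of the non-degeneracy assumption built into $\tilde F^n_\Sigma$. Write
\[
\alpha=\tfrac12\,\de x^i\de x^j\,\bigl(\alpha_{ij}^{\ k}e_k+\alpha_{ij}^{\ n}\epsilon_n\bigr).
\]
By \cref{thm: coframes prop} (with $l=1$, $k=1$, $i+j=3$), the map $W_1^{\Sigma,(2,1)}$ is surjective onto $\Omega^{3,2}_\Sigma$, which has dimension $6$. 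Hence $\ker W_1^{\Sigma,(2,1)}$ has dimension exactly $6$, and the first step is to describe it explicitly. Expanding $e\alpha$ in the basis $\de x^1\de x^2\de x^3\otimes e_a e_b$, I expect the equations to split as follows. Three of them, the $e_k\epsilon_n$ components, force $\alpha_{ij}^{\ n}=0$. The other three, the $e_ke_l$ components, say that the trace-type part of $\alpha_{ij}^{\ k}$ vanishes. So the kernel should be the $6$-dimensional space of purely $e$-valued $\alpha$ satisfying three antisymmetry/trace relations. Concretely, writing $\alpha_{ij}^{\ k}=\varepsilon_{ijl}A^{lk}$, the condition should read $A^{lk}=A^{kl}$, so the kernel is parametrised by a symmetric $3\times3$ matrix $A$.

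The second step is to test the remaining condition on this kernel. Take $\alpha\in\ker W_1^{\Sigma,(2,1)}$, so $\alpha\leftrightarrow A$ with $A$ symmetric. Then $\epsilon_n\alpha$ is $\frac12\varepsilon_{ijl}A^{lk}\,\de x^i\de x^j\,\epsilon_n e_k$, which lives entirely in the $\epsilon_n e_k$ components of $\Omega^{2,2}_\Sigma$. For a general $\sigma=\de x^i(\sigma_i^{\ k}e_k+\sigma_i^{\ n}\epsilon_n)$, the $\epsilon_n e_k$ components of $e\sigma$ come only from $\sigma_i^{\ n}$ and equal $\de x^j\de x^i\,\sigma_i^{\ n}\,e_j\epsilon_n$. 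Matching gives $\varepsilon_{ijl}A^{lk}\propto\delta^k_{[j}\sigma_{i]}^{\ n}$. Meanwhile the $e_ke_l$ components of $e\sigma$ must vanish, which forces $\sigma_i^{\ k}$ to lie in the (trivial, by injectivity of $W_1^{\Sigma,(1,1)}$) kernel. Contracting the matching equation with $\varepsilon^{ijm}$ shows that $A^{mk}$ is antisymmetric in $(m,k)$. Together with the symmetry of $A$ this yields $A=0$, hence $\alpha=0$.

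As a consistency check, the image of $W_1^{\Sigma,(1,1)}$ has dimension $12$ inside the $18$-dimensional space $\Omega^{2,2}_\Sigma$. The second condition therefore imposes $6$ equations, exactly enough to kill the $6$-dimensional kernel; the content of the argument is that these equations are transverse to it.

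The main obstacle is the bookkeeping of signs and index placement in the two explicit computations: identifying $\ker W_1^{\Sigma,(2,1)}$ with symmetric $A$, and showing that $\epsilon_n\alpha\in\Ima W_1^{\Sigma,(1,1)}$ forces $A$ to be antisymmetric. If the symmetric/antisymmetric split turns out to be arranged differently, the fallback is a direct rank computation of the stacked $24\times 12$ linear map $\alpha\mapsto(e\alpha,\ \epsilon_n\alpha \bmod \Ima W_1^{\Sigma,(1,1)})$. This computation is independent of the point, since only the algebraic type of the basis matters, and showing it has full rank $12$ proves the claim.
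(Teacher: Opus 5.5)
Your proof is correct, but it follows a different route from the one the paper would use. The paper gives no proof of this lemma; it is quoted from \cite{CCS2020}. For the corner analogues (Lemma~\ref{lem: (2,1) constraint vanish} and the lemmas after it), however, the paper uses a thickening argument, which applies verbatim here.

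On $\Sigma\times[0,1]$, take the coframe $E=e+\epsilon_n\,\de x^n$ and set $A=\alpha+\tau\,\de x^n$ with $\tau\in\Omega^{1,1}_\Sigma$. Then $E\wedge A=e\alpha+(\epsilon_n\alpha\pm e\tau)\,\de x^n$. So the two hypotheses say exactly that $E\wedge A=0$ for $\tau=\mp\sigma$. Injectivity of the bulk map $W_1^{(2,1)}$ (Theorem~\ref{thm: coframes prop} with $l=0$, $i+j=3\le 3$) then gives $A=0$, hence $\alpha=0$ and also $\sigma=0$.

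That argument is a few lines, index-free, and extends mechanically to higher codimension, as the paper does for the corner. Its cost is that all the real content sits inside the bulk injectivity theorem. Your pointwise computation is self-contained and more informative. It identifies $\ker W_1^{\Sigma,(2,1)}$ with symmetric $3\times 3$ matrices $A$. It then shows that the six conditions in $\epsilon_n\alpha\in\Ima W_1^{\Sigma,(1,1)}$ force $A$ to be antisymmetric. This tells you exactly which components of $\de_\omega e$ the structural constraint detects.

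The index steps you hedged on go through as you predicted. Writing $\alpha_{ij}^{\ n}=\varepsilon_{ijm}a^m$, the $e_l\epsilon_n$ part of $e\alpha$ is proportional to $a^l$, and the $e_le_k$ part is $A^{lk}e_le_k$. Contracting the matching equation with $\varepsilon^{ijm}$ gives $A^{mk}$ proportional to $\varepsilon^{mkj}\sigma_j^{\ n}$.

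So the fallback is unnecessary, and as written it is misdimensioned. The map $\alpha\mapsto(e\alpha,\ \epsilon_n\alpha \bmod \Ima W_1^{\Sigma,(1,1)})$ lands in a $6+6=12$-dimensional space, not a 24-dimensional one. A $24\times12$ version without the quotient would only test the weaker claim that $e\alpha=0=\epsilon_n\alpha$ forces $\alpha=0$.
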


\begin{lemma}\label{lem: (2,1) constraint vanish}
Let $\alpha \in \Omega^{2,1}_\Gamma$. Then 
\begin{align*}
\alpha=0 \qquad \Longleftrightarrow  \qquad \begin{cases}
\epsilon_n\alpha=e\sigma\\
\epsilon_m \alpha= e\rho \\
\epsilon_n \rho + \epsilon_m \sigma \in \Ima W_1^{\Gamma,(0,1)}
\end{cases}
\end{align*}
for some $\sigma,\rho\in\Omega_\Gamma^{1,1}$. Additionally, we also obtain $\sigma=\rho=0$
\end{lemma}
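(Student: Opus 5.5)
The plan is to reduce the statement to \cref{lem:Omega2,1_d4}, applied on the auxiliary three-dimensional manifold $\Sigma_m=\Gamma\times[0,1]$. There the relevant vielbein is $\tilde e\coloneqq e+\epsilon_m\de x^m$ and the transverse vector is $\epsilon_n$. Since $\{e_i,\epsilon_m,\epsilon_n\}$ is a local basis of $\mathcal V$ and the corner metric is non-degenerate, $\tilde e$ is a non-degenerate vielbein on $\Sigma_m$ and $(\tilde e,\epsilon_n)$ is a frame. So \cref{lem:Omega2,1_d4} applies verbatim with $\tilde e$ in place of $e$.

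The direction $(\Rightarrow)$ is trivial. If $\alpha=0$, we take $\sigma=\rho=0$ and all three conditions hold.

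For $(\Leftarrow)$ I would build the $\Sigma_m$-form
\[
A\coloneqq \alpha+\rho\,\de x^m\in\Omega^{2,1}_{\Sigma_m},
\]
with the sign in front of $\rho$ fixed by the graded conventions, and check the two hypotheses of \cref{lem:Omega2,1_d4} for $A$.

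First, $\tilde eA=e\alpha+(\pm e\rho\pm\epsilon_m\alpha)\de x^m$. The term $e\alpha$ is a $4$-form on the two-dimensional $\Gamma$, so it vanishes. The $\de x^m$ component vanishes by $\epsilon_m\alpha=e\rho$, once the sign of $\rho$ in $A$ is chosen accordingly. Hence $\tilde eA=0$.

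Second, $\epsilon_nA=e\sigma+\epsilon_n\rho\,\de x^m$. I want this to equal $\tilde e\,s$ with $s=\sigma+s_m\de x^m$. Expanding $\tilde e s=e\sigma+(\pm e s_m\pm\epsilon_m\sigma)\de x^m$, the requirement reduces to $\epsilon_n\rho\pm\epsilon_m\sigma=e s_m$ for some $s_m\in\Omega^{0,1}_\Gamma$. This is exactly the third hypothesis, $\epsilon_n\rho+\epsilon_m\sigma\in\Ima W_1^{\Gamma,(0,1)}$, provided the signs match.

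\cref{lem:Omega2,1_d4} then gives $A=0$, so $\alpha=0$ and $\rho=0$.

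To obtain $\sigma=0$ I cannot simply use $e\sigma=\epsilon_n\alpha=0$. By \cref{thm: coframes prop} the map $W_1^{\Gamma,(1,1)}$ is not injective, since $i+j=2>4-2-1$. Instead I would repeat the argument on $\Sigma_n=\Gamma\times[0,1]$ with $\tilde e'=e+\epsilon_n\de x^n$, transverse vector $\epsilon_m$, and $A'=\alpha+\sigma\,\de x^n$. The hypotheses are symmetric under $m\leftrightarrow n$, $\sigma\leftrightarrow\rho$, with the third condition unchanged up to sign, so this yields $\sigma=0$.

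The main obstacle I expect is the sign bookkeeping. The sign choices in $A$ and $s$ must be compatible with the given combination $\epsilon_n\rho+\epsilon_m\sigma$, and since $\sigma,\rho$ are arbitrary I can absorb a sign into $s_m$. I would also state explicitly that \cref{lem:Omega2,1_d4} is a pointwise algebraic statement, so it holds on $\Sigma_m$ even though the forms there are constant in $x^m$.
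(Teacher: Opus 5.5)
Your overall strategy is a genuinely different route from the paper's. The paper expands $E\wedge A=0$ once on the four-dimensional $M_{mn}$, with $E=e+\epsilon_n dx^n+\epsilon_m dx^m$ and $A=\alpha+\sigma\,dx^n+\rho\,dx^m+\gamma\,dx^mdx^n$, so it uses injectivity of $W_1^{(2,1)}$ in one step. You instead apply Lemma~\ref{lem:Omega2,1_d4} twice, on $\Sigma_m$ and on $\Sigma_n$.

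The gap is exactly the step you flagged, and it cannot be closed by ``absorbing a sign''. The relative sign between $\epsilon_n\rho$ and $\epsilon_m\sigma$ is not yours to choose. $\sigma$ and $\rho$ are not arbitrary: they are fixed, up to $\ker W_1^{\Gamma,(1,1)}$, by the hypotheses $\epsilon_n\alpha=e\sigma$ and $\epsilon_m\alpha=e\rho$. The sign of $\rho$ in $A$ is then forced by $\tilde eA=0$. Rescaling $s_m$ only flips $e\,s_m$, not the relative sign of the other two terms. Doing the computation with Koszul signs in total degree ($dx^m$ anticommutes with $\alpha$ and commutes with $\rho,\sigma$):
\begin{itemize}
\item $\tilde eA=(e\rho-\epsilon_m\alpha)\,dx^m$, so you must take $A=\alpha+\rho\,dx^m$;
\item then $\epsilon_nA=\tilde e(\sigma+s_m\,dx^m)$ is equivalent to $\epsilon_n\rho-\epsilon_m\sigma=e\,s_m$.
\end{itemize}
So your reduction proves the lemma with third condition $\epsilon_n\rho-\epsilon_m\sigma\in\Ima W_1^{\Gamma,(0,1)}$, not the printed one.

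No bookkeeping can rescue the $+$ version, because it is false. Applying the anticommuting $\epsilon_m,\epsilon_n$ to $\alpha$ in both orders shows that the first two conditions already force $e(\epsilon_n\rho+\epsilon_m\sigma)=0$. Moreover, the symmetric combination can simply vanish. Work pointwise with $e=dx^1\otimes e_1+dx^2\otimes e_2$ and the convention $(\omega\otimes v)(\omega'\otimes v')=(\omega\wedge\omega')\otimes(v\wedge v')$, and take
\[
\alpha=dx^1dx^2\otimes e_1\neq 0,\qquad \sigma=-dx^2\otimes\epsilon_n,\qquad \rho=-dx^2\otimes\epsilon_m .
\]
Then $\epsilon_n\alpha=e\sigma$, $\epsilon_m\alpha=e\rho$, and $\epsilon_n\rho+\epsilon_m\sigma=-dx^2\otimes(\epsilon_n\epsilon_m+\epsilon_m\epsilon_n)=0\in\Ima W_1^{\Gamma,(0,1)}$.

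The paper's proof has the same slip. In $E\wedge A$ the cross terms $\epsilon_n dx^n\wedge\rho\,dx^m$ and $\epsilon_m dx^m\wedge\sigma\,dx^n$ enter with opposite signs, since $dx^ndx^m=-dx^mdx^n$. So its expansion also yields the combination $\epsilon_n\rho-\epsilon_m\sigma$.

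With the third condition corrected to $\epsilon_n\rho-\epsilon_m\sigma\in\Ima W_1^{\Gamma,(0,1)}$, your two-step argument is a valid alternative. This includes the $m\leftrightarrow n$ step: on $\Sigma_n$ you need $\epsilon_m\sigma-\epsilon_n\rho\in\Ima W_1^{\Gamma,(0,1)}$, which is the same condition up to an overall sign. The paper's one-shot version also gives $\gamma=0$ directly; yours reuses an already established lemma. A minor point: $e\alpha$ has form degree $3$, not $4$, though it still vanishes on the two-dimensional $\Gamma$.
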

\begin{proof}
    Consider $M_{mn}\coloneqq \Gamma \times [0,1]^2\ni (x,x^m,x^n)$, and define $E\coloneqq e+ \epsilon_n \de x^n + \epsilon_m \de x^m$ vielbein on $M_{mn}$. Let $\Omega^{2,1}\ni A=\alpha + \sigma \de x^n + \rho \de x^m + \gamma \de x^m \de x^n$, with $\alpha\in \Omega^{2,1}_\Gamma$, $\sigma,\rho\in\Omega^{1,1}_\Gamma$ and $\gamma\in\Omega^{0,1}_\Gamma$. By Theorem \ref{thm: coframes prop} we have that 
    \[ A=0 \ \Leftrightarrow \ E\wedge A = 0.\]
    Expanding the latter equation, we obtain 4 equations
        \[
        \begin{cases}
            e\alpha =0 \\
            \epsilon_n \alpha = e\sigma\\
            \emm \alpha = e\rho\\
            \epsilon_n \rho + \epsilon_m \sigma = e \gamma.
        \end{cases}
        \]
    Notice that $e\alpha=0$ identically as $\Omega^3(\Gamma)=\{0\}$. Since the above system of equations is equivalent to $A=0$, this shows the lemma.
\end{proof}
\begin{lemma}\label{lem: (0,1) vanish}
Let $r \in \Omega^{0,1}_\Gamma$. Then 
\begin{align*}
r=0 \qquad \Longleftrightarrow  \qquad \begin{cases}
\epsilon_ne^2r=0\\
\epsilon_me^2r=0\\
\epsilon_n\epsilon_mer=0
\end{cases}
\end{align*}
\end{lemma}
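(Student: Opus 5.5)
The plan is to argue exactly as in the proof of Lemma~\ref{lem: (2,1) constraint vanish}: thicken the corner to a four-dimensional manifold, where the relevant wedge map is injective by Theorem~\ref{thm: coframes prop}, and then read off the components of the bulk equation along the transversal directions. The direction $r=0 \Rightarrow$ (the three equations) is trivial, so only the converse needs work.

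Consider $M_{mn}=\Gamma\times[0,1]^2\ni(x,x^m,x^n)$ with the vielbein
\[
E\coloneqq e+\epsilon_n\,\de x^n+\epsilon_m\,\de x^m .
\]
This is non-degenerate, since $\{e_i,\epsilon_m,\epsilon_n\}$ is a local basis of $\mathcal V$. Regard $r$ as an element $R\in\Omega^{0,1}(M_{mn})$ that is constant in $x^m,x^n$. By Theorem~\ref{thm: coframes prop} with $l=0$, $k=3$ and $(i,j)=(0,1)$, the map $W_3^{(0,1)}$ is injective, because $i+j=1\le 4-0-3$. Hence $R=0$ if and only if $E^3R=0$.

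Next expand $E^3$ using that $\de x^m,\de x^n$ are odd and square to zero. Schematically,
\[
E^3=e^3+3e^2\epsilon_n\,\de x^n+3e^2\epsilon_m\,\de x^m+6\,e\,\epsilon_n\epsilon_m\,\de x^m\de x^n ,
\]
up to signs from reordering odd objects. The term $e^3$ vanishes identically because it is a $3$-form on the two-dimensional $\Gamma$. The remaining terms lie along linearly independent form directions ($\de x^n$, $\de x^m$, $\de x^m\de x^n$) with coefficients in $\Omega_\Gamma$. Therefore $E^3R=0$ is equivalent to the system
\[
\epsilon_ne^2r=0,\qquad \epsilon_me^2r=0,\qquad \epsilon_n\epsilon_m e r=0 .
\]
Combined with the injectivity above, this gives the claim.

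The only step requiring care is the bookkeeping of signs and combinatorial factors in expanding $E^3$. Since every coefficient is a nonzero numerical multiple, these factors do not affect the vanishing conditions, so I expect no genuine obstacle.

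As a fallback, one could instead check the statement pointwise by linear algebra. Write $r=r^ie_i+r^m\epsilon_m+r^n\epsilon_n$. Then $\epsilon_n\epsilon_m e r$ kills the $r^i$ components, $\epsilon_ne^2r$ is proportional to $r^m$, and $\epsilon_me^2r$ is proportional to $r^n$, each up to the nonvanishing volume element $e^2\epsilon_m\epsilon_n$.
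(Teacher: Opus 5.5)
Your proposal is correct and is essentially the paper's own proof: thicken to $M_{mn}$ with $E=e+\epsilon_n\,\de x^n+\epsilon_m\,\de x^m$, use injectivity of $W_3^{(0,1)}$ in the four-dimensional bulk from Theorem~\ref{thm: coframes prop}, and read off the $\de x^n$, $\de x^m$ and $\de x^m\de x^n$ components of $E^3R$, with $e^3=0$ on $\Gamma$. Your write-up is in fact more explicit than the paper's, since it spells out the expansion of $E^3$ and the dimension check $i+j=1\le 4-0-3$.
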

\begin{proof}
    Considering $M_{mn}$ and $E\coloneqq e+ \epsilon_n \de x^n + \epsilon_m \de x^m$ as before, letting $\Omega^{0,1}\ni R=r$, we know from Theorem \ref{thm: coframes prop} that $E^3\wedge R=0$ if and only if  $R=0$. Expanding in each component of $E$, we obtain the desired system of equations. 
\end{proof}

\begin{lemma}\label{lem: (1,1) vanish}
Let $\beta\in \Omega^{1,1}_\Gamma$. Then 
\begin{align*}
\beta=0 \qquad \Longleftrightarrow  \qquad \begin{cases}
\epsilon_ne\beta=\frac{e^2}{2}\beta_n\\
\epsilon_me\beta=\frac{e^2}{2}\beta_m\\
\epsilon_n \epsilon_m \beta= e(\epsilon_n \beta_m + \epsilon_m \beta_m)
\end{cases}
\end{align*}
for some $\beta_n,\beta_m\in\Omega_\Gamma^{0,1}$.
\end{lemma}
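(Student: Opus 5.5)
The plan is to run the same argument as for \cref{lem: (2,1) constraint vanish} and \cref{lem: (0,1) vanish}: lift the problem to the four-dimensional product $M_{mn}=\Gamma\times[0,1]^2$, where injectivity of wedging with the vielbein is available from \cref{thm: coframes prop}. On $M_{mn}$ I will use the vielbein $E\coloneqq e+\epsilon_n\de x^n+\epsilon_m\de x^m$. Given $\beta\in\Omega^{1,1}_\Gamma$ and auxiliary $\beta_n,\beta_m\in\Omega^{0,1}_\Gamma$, I will form
\[
B\coloneqq \beta+\beta_n\de x^n+\beta_m\de x^m\in\Omega^{1,1}(M_{mn}).
\]
For $l=0$, $(i,j)=(1,1)$ and $k=2$ we have $i+j=2\le 4-0-2$. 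Hence \cref{thm: coframes prop} shows that $W_2^{(1,1)}$ is injective, that is, $\tfrac12E^2B=0$ if and only if $B=0$.

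The next step is to expand $\tfrac12E^2$, using that $\de x^n,\de x^m$ anticommute and that the $\epsilon$'s commute with $e$ up to the usual graded signs:
\[
\tfrac12E^2=\tfrac{e^2}{2}+e\epsilon_n\de x^n+e\epsilon_m\de x^m+\epsilon_n\epsilon_m\,\de x^n\de x^m .
\]
I will then multiply by $B$ and sort the result by the transversal form degree. The purely tangential component is $\tfrac{e^2}{2}\beta\in\Omega^{3}(\Gamma)$, which vanishes identically since $\dim\Gamma=2$. The $\de x^n$ component is $\pm e\epsilon_n\beta\pm\tfrac{e^2}{2}\beta_n$, and the $\de x^m$ component is $\pm e\epsilon_m\beta\pm\tfrac{e^2}{2}\beta_m$. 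The $\de x^n\de x^m$ component is $\pm\epsilon_n\epsilon_m\beta\pm e(\epsilon_n\beta_m+\epsilon_m\beta_n)$. I note that in the third line of the statement $\epsilon_m\beta_m$ should read $\epsilon_m\beta_n$, as this expansion makes clear.

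After absorbing the signs into a redefinition $\beta_{n,m}\mapsto\pm\beta_{n,m}$, the vanishing of these three components is exactly the system in the statement. The two directions then follow. If $\beta=0$, the system holds with $\beta_n=\beta_m=0$. Conversely, if the system holds for some $\beta_n,\beta_m$, then the corresponding $B$ satisfies $E^2B=0$, so $B=0$ by injectivity. This gives $\beta=0$, and also $\beta_n=\beta_m=0$, mirroring the final remark of \cref{lem: (2,1) constraint vanish}.

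The main obstacle is the sign bookkeeping in the expansion, because of the graded commutation of $\de x^{n},\de x^m$ with the $1$-form $\beta$ and of $\epsilon$'s with $e$. I will handle this by fixing the ordering convention $\Omega^\bullet(\Gamma)\otimes\Omega^\bullet([0,1]^2)$ once and for all. Since only the existence of some $\beta_n,\beta_m$ matters, the sign choices are harmless: they can always be absorbed into $\beta_n,\beta_m$ consistently across the three equations.
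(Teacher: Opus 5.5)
Your route is the paper's: lift to $M_{mn}$ with $E=e+\epsilon_n\mathrm{d}x^n+\epsilon_m\mathrm{d}x^m$, use injectivity of $W_2^{(1,1)}$ from \cref{thm: coframes prop}, and read off the components of $\tfrac12E^2B$ by transversal degree. The tangential component vanishes for degree reasons on $\Gamma$. That part, and the forward direction, are fine. The problem is your last step, where you claim the signs ``can always be absorbed into $\beta_n,\beta_m$ consistently''. The converse direction needs the system to be literally the component form of $E^2B=0$, and this claim is false for the corrected statement you propose.

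Concretely, the $\mathrm{d}x^n$ and $\mathrm{d}x^m$ components have identical structure: $\tau\,e\epsilon_n\beta+\tfrac{e^2}{2}\beta_n=0$ and $\tau\,e\epsilon_m\beta+\tfrac{e^2}{2}\beta_m=0$, with the same sign $\tau$. Matching them to the first two lines therefore forces the \emph{same} rescaling of $\beta_n$ and $\beta_m$. In the $\mathrm{d}x^n\mathrm{d}x^m$ component the mixed terms come from $e\epsilon_n\mathrm{d}x^n\,\beta_m\mathrm{d}x^m$ and $e\epsilon_m\mathrm{d}x^m\,\beta_n\mathrm{d}x^n$. Moving $\mathrm{d}x^n$ past $\beta_m$ and $\mathrm{d}x^m$ past $\beta_n$ costs the same sign, while $\mathrm{d}x^m\mathrm{d}x^n=-\mathrm{d}x^n\mathrm{d}x^m$. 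Hence these terms always enter as $e(\epsilon_n\beta_m-\epsilon_m\beta_n)$. Flipping $\beta_n$ changes its sign in the first and third lines at once, so this relative minus cannot be removed. Doing the bookkeeping, in either the total-degree or the bigraded sign convention, gives
\[
\epsilon_n\epsilon_m\beta=e(\epsilon_n\beta_m-\epsilon_m\beta_n)=e(\epsilon_n\beta_m+\beta_n\epsilon_m).
\]
So the typo should be corrected to $\beta_n\epsilon_m$, not to $\epsilon_m\beta_n$; with that reading your proof is complete.

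The version with $+e\epsilon_m\beta_n$ is in fact still true, but it is not the expansion of $E^2B=0$ for any choice of signs, so it needs a separate argument. One option is a direct computation in a frame where $\{e_1,e_2,\epsilon_m,\epsilon_n\}$ is the basis of $V$. There the first two lines kill the $\epsilon_m,\epsilon_n$-components of $\beta$. They also determine the $\epsilon_m,\epsilon_n$-components of $\beta_n$ and $\beta_m$ in terms of the trace of the tangential block of $\beta$. The third line then forces all remaining components to vanish, for either sign. The paper's proof is the same one-line expansion and also glosses over these signs.
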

\begin{proof}
    Consider again $E\coloneqq e+ \epsilon_n \de x^n + \epsilon_m \de x^m$ coframe on $M_{mn}$ and let $\Omega^{1,1}\ni B\coloneqq \beta + \beta_m\de x^m + \beta_n \de x^n $. From Theorem \ref{thm: coframes prop}, we know $E^2\wedge B =0$ if and only if $ B=0$, hence $\beta=0$ if and only if
        \[
        \begin{cases}
            e^2 \beta=0\\
            \epsilon_ne\beta=\frac{e^2}{2}\beta_n\\
\epsilon_me\beta=\frac{e^2}{2}\beta_m\\
\epsilon_n \epsilon_m \beta= e(\epsilon_n \beta_m + \epsilon_m \beta_m),
        \end{cases}
        \]
    where $e^2\beta=0$ identically. We therefore obtain the statement of the lemma. 
\end{proof}

\begin{lemma}\label{lem: (1,2) vanish}
Let $v \in \Omega^{1,2}_\Gamma$. Then 
\begin{align*}
v=0 \qquad \Longleftrightarrow  \qquad \begin{cases}
ev=0\\
\epsilon_nv=ev_n\\
\epsilon_m v= ev_m \\
\epsilon_n v_m - \epsilon_mv_n=0
\end{cases}
\end{align*}
for some $v_m,v_n\in\Omega_\Gamma^{0,2}$.
\end{lemma}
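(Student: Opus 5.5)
The plan is to follow the pattern of the neighbouring lemmas (Lemmas \ref{lem: (2,1) constraint vanish}, \ref{lem: (0,1) vanish} and \ref{lem: (1,1) vanish}). We lift $v$ to a form on the auxiliary manifold $M_{mn}=\Gamma\times[0,1]^2$ and invoke the injectivity part of Theorem \ref{thm: coframes prop} in the bulk case $l=0$.

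Concretely, we take the bulk coframe $E\coloneqq e+\epsilon_n\de x^n+\epsilon_m\de x^m$. We then define
\[
V\coloneqq v+v_n\de x^n+v_m\de x^m+w\,\de x^m\de x^n\in\Omega^{1,2}(M_{mn}),
\]
with $v_n,v_m\in\Omega^{0,2}_\Gamma$ the elements appearing in the hypotheses. The last coefficient $w$ would have form degree $-1$, so it is absent, and $V$ really lives in $\Omega^{1,2}$. For $(i,j)=(1,2)$ and $k=1$, $l=0$, Theorem \ref{thm: coframes prop} gives injectivity of $W_1^{(1,2)}$ exactly when $1+2\le 4-0-1$, which holds. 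Hence $V=0$ if and only if $E\wedge V=0$.

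Next we expand $E\wedge V$ along the basis $1,\de x^n,\de x^m,\de x^m\de x^n$, keeping track of the sign coming from moving $\de x$ past odd-degree forms. We expect four components:
\begin{itemize}
\item the pure corner part $ev$;
\item the $\de x^n$ part $\epsilon_n v\mp e v_n$;
\item the $\de x^m$ part $\epsilon_m v\mp e v_m$;
\item the $\de x^m\de x^n$ part $\epsilon_n v_m-\epsilon_m v_n$, up to a global sign.
\end{itemize}
With the sign conventions for $v_n,v_m$ fixed to match the statement, $E\wedge V=0$ is equivalent to the four listed equations.

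The two implications then follow. If the system holds, then $E\wedge V=0$, so $V=0$, and in particular its $\Gamma$-component $v$ vanishes. Conversely, if $v=0$, one takes $v_n=v_m=0$ and all four equations hold trivially.

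The main obstacle is the sign bookkeeping in the expansion. We have to verify that the relative signs in $\epsilon_n v=ev_n$, $\epsilon_m v=ev_m$ and $\epsilon_n v_m-\epsilon_m v_n=0$ are exactly what the graded commutation of $\de x^{m},\de x^n$ with $e$ and $v$ produces. If they do not match directly, we rescale $v_n,v_m$ by $\pm1$, which is harmless since they are only asserted to exist.

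A secondary point is to check that the first equation $ev=0$, an element of $\Omega^{2,3}_\Gamma$, is not automatically trivial. Unlike the $(2,1)$ and $(1,1)$ cases, where the top corner component vanished for degree reasons, it is a genuine condition here and must be kept in the statement.
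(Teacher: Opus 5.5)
Your proposal is correct and is essentially the paper's own proof: lift $v$ to $V=v+v_m\,\de x^m+v_n\,\de x^n\in\Omega^{1,2}(M_{mn})$, use that $E\wedge\cdot$ is injective on bulk $(1,2)$-forms by Theorem \ref{thm: coframes prop} ($1+2\le 4-0-1$), and read off the components of $E\wedge V$. The sign bookkeeping you flag works out directly. With either the bigraded or the total-degree sign convention one gets $E\wedge V=ev+(ev_n-\epsilon_n v)\,\de x^n+(ev_m-\epsilon_m v)\,\de x^m+(\epsilon_m v_n-\epsilon_n v_m)\,\de x^m\de x^n$, so no rescaling of $v_n,v_m$ is needed. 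That is fortunate, because rescaling alone could not independently fix the relative sign in the last equation.
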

\begin{proof}
    Consider again $E\coloneqq e+ \epsilon_n \de x^n + \epsilon_m \de x^m$ coframe on $M_{mn}$ and let $\Omega^{1,2}\ni V= v + v_m \de x^m + v_n \de x^n $. By theorem \ref{thm: coframes prop} $E\wedge V=0$ if and only if $V=0$. Hence, expanding the former equation, 
    \begin{align*}
v=0 \qquad \Longleftrightarrow  \qquad \begin{cases}
ev=0\\
\epsilon_nv=ev_n\\
\epsilon_m v= ev_m \\
\epsilon_n v_m - \epsilon_mv_n=0
\end{cases}
\end{align*}
which shows the lemma.
\end{proof}

\begin{lemma}\label{lem: decomp (1,3)}
    For any $\Theta\in \Omega_\Gamma^{1,3}$, there exists a unique decomposition 
    \[
    \Theta= e\alpha + \epsilon_m \epsilon_n \beta, \quad \mathrm{with} \quad \alpha\in\Omega^{0,2}_\Gamma , \ \beta\in\Omega^{1,1} 
    \]
    such that
    \[
    \epsilon_n \epsilon_m \alpha=0 \qquad \mathrm{and} \qquad e\beta=0.
    \]
\end{lemma}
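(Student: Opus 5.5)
The statement is pointwise linear algebra in the fibre: at each point of $\Gamma$ it concerns finite-dimensional vector spaces built from $T^*_x\Gamma$ (dimension $2$) and $\mathcal V_x$ (dimension $4$). I would therefore prove it by a dimension count plus injectivity. Consider the linear map
\[
\Phi\colon \{\alpha\in\Omega^{0,2}_\Gamma \,:\, \epsilon_n\epsilon_m\alpha=0\}\oplus \ker W_1^{\Gamma,(1,1)}\longrightarrow \Omega^{1,3}_\Gamma,\qquad (\alpha,\beta)\longmapsto e\alpha+\epsilon_m\epsilon_n\beta .
\]
The target has pointwise dimension $2\cdot 4=8$. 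The first summand has dimension $6-1=5$, since $\epsilon_n\epsilon_m\alpha=0$ is a single scalar equation in $\wedge^4\mathcal V$ and it is nontrivial because $\{e_i,\epsilon_m,\epsilon_n\}$ is a basis. For the second summand, $\dim\ker W_1^{\Gamma,(1,1)}=3$, as used in the remark after \cref{thm: fix e corn}, where three equations fix a representative and leave the $5$-dimensional space $\Omega^{1,1}_{\Gamma,r}$. I would also verify this directly by expanding $\beta=\beta^j{}_i\,dx^i\otimes(\cdot)$ in the basis $\{e_1,e_2,\epsilon_m,\epsilon_n\}$: the condition $e\beta=0$ kills the $\epsilon_m,\epsilon_n$ components and the antisymmetric-type part of the $e_i$ components. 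So the domain has dimension $5+3=8$, and it suffices to show that $\Phi$ is injective.

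For injectivity, suppose $e\alpha+\epsilon_m\epsilon_n\beta=0$. Wedging with $e$ and using $e\beta=0$ gives $e^2\alpha=0$. Wedging with $\epsilon_m$ and with $\epsilon_n$ gives $\epsilon_m e\alpha=0$ and $\epsilon_n e\alpha=0$, and we also have $\epsilon_n\epsilon_m\alpha=0$ by hypothesis. I would then argue exactly as in \cref{lem: (0,1) vanish}. On $M_{mn}$, set $E=e+\epsilon_n\de x^n+\epsilon_m\de x^m$ and view $\alpha$ as an element of $\Omega^{0,2}$ on $M_{mn}$. By \cref{thm: coframes prop} with $l=0$, $k=2$, the map $W_2^{(0,2)}$ is injective because $0+2\le 4-0-2$. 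Expanding $E^2\alpha$ componentwise yields precisely the four conditions above, so $\alpha=0$.

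It then remains to deduce $\beta=0$ from $\epsilon_m\epsilon_n\beta=0$ and $e\beta=0$. Write $\beta=\beta^i e_i+b_m\epsilon_m+b_n\epsilon_n$ with one-forms $\beta^i,b_m,b_n$. The first condition forces $\beta^i=0$, since the products $e_i\epsilon_m\epsilon_n$ are independent. The second then reads $e_j\epsilon_m\,dx^j\wedge b_m+e_j\epsilon_n\,dx^j\wedge b_n=0$, and independence of the bivectors $e_j\epsilon_m,e_j\epsilon_n$ gives $dx^j\wedge b_m=dx^j\wedge b_n=0$ for $j=1,2$, hence $b_m=b_n=0$. So $\Phi$ is injective, hence bijective, which gives both existence and uniqueness of the decomposition.

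The main obstacle I expect is justifying $\dim\ker W_1^{\Gamma,(1,1)}=3$ cleanly rather than by quoting it. If the direct count becomes messy, I would avoid it altogether: surjectivity of $\Phi$ can be shown constructively by solving $\epsilon_n\epsilon_m\alpha=0$ componentwise, i.e.\ by choosing $\alpha$ to match the $e_i$-components of $\Theta$ and absorbing the remainder into a $\beta$ with $e\beta=0$. The more robust route, though, is the dimension count together with the injectivity argument above.
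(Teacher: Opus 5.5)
Your proposal is correct and follows the same strategy as the paper: a pointwise count $5+3=8$ plus injectivity. The paper splits injectivity into trivial intersection of the two images and injectivity of each restricted map, expanding $\alpha$ in the basis $\{e_i,\epsilon_m,\epsilon_n\}$ and invoking Lemma~\ref{lem: (1,1) vanish} for $\beta$. You instead isolate $\alpha$ by wedging with $e,\epsilon_m,\epsilon_n$ and use injectivity of $E^2\wedge\cdot$ on $M_{mn}$, then handle $\beta$ by a direct basis expansion.

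One caution: justify $\dim\ker W_1^{\Gamma,(1,1)}=3$ by your direct count or by \cite{CC2022cor}, not by the remark after Theorem~\ref{thm: fix e corn}, since that theorem is itself proved using this lemma. In that count, the one surviving condition on the $e_i$-block is the trace $\beta^1{}_1+\beta^2{}_2=0$.
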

\begin{proof}
    We have to show that $\Omega^{1,3}_\Gamma=e\ker(W_{mn}^{0,2})\oplus \epsilon_m\epsilon_n \ker(W_1^{\Gamma,(1,1)} ) $.
    It was proved in \cite{CC2022cor} that $\dim(\ker(W_1^{\Gamma,(1,1)} ))=3 $.\footnote{By dimension, we mean the dimension of the fiber of $\Omega^{i,j}$ at a point $x\in\Gamma$.} One can also check that $\dim \ker(W_{mn}^{0,2})=5$, as $\epsilon_m\epsilon_n \alpha=0\in\Omega^{0,4}_\Gamma $ fixes 1 component of $\alpha$, and $\dim\Omega_\Gamma^{0,2}=6 $. It follows that $\dim \Omega_\Gamma^{1,3}=8=\dim \ker(W_{mn}^{0,2}) +\dim(\ker(W_1^{\Gamma,(1,1)} ))$. 

    Now we prove that 
    \[
    \Ima \left(W_1^{\Gamma,(0,2)}\vert_{\ker(W_{mn}^{0,2})} \right) \cap \Ima\left( W_{mn}^{1,1}\vert_{\ker(W_1^{\Gamma,(1,1)}) } \right)=\{0\}.
    \]
    Consider $\beta\in\ker(W_1^{\Gamma,(1,1)})$ and $\alpha\in \ker(W_{mn}^{0,2}$. By contradiction, assume
        \[
        e\alpha= \epsilon_m\epsilon_n \beta.
        \]
    In general, one can write $\alpha=\frac{1}{2}\chi^{ij}e_i e_j + \chi^{i}_m\epsilon_m e_i + \chi^{i}_n\epsilon_n e_i + \chi\epsilon_n\epsilon_m $, with
    \[
    \chi^{ij}=\epsilon^{ij}\chi^{12}, \ \mathrm{for}\ \chi^{12}\in\mathcal{C^1}(\Gamma), \quad \chi_m,\chi_n\in\mathfrak{X}(\Gamma), \quad \chi\in\mathcal{C^1}(\Gamma).
    \]
    From $\epsilon_n\epsilon_m \alpha=0$ it follows $\alpha^{ij}=0$, hence obtaining
    \[
    e\alpha=\iota_{\chi_m}\frac{e^2}{2} \epsilon_m + \iota_{\chi_n}\frac{e^2}{2} \epsilon_n + \chi e \epsilon_m \epsilon_n.
    \]
    Imposing $e\alpha=\epsilon_m \epsilon_n \beta$ implies
    \[
    \begin{cases}
        \chi_m=\chi_n=0\\
        \epsilon_m \epsilon_n\beta=e\epsilon_m \epsilon_n \chi.
    \end{cases}
    \]
    But now, since by assumption $g^\Gamma$ is non-degenerate, then $e^2\epsilon_m\epsilon_n=f\mathrm{Vol}_g$, for some non-vanishing $f\in\mathcal{C^1}(\Gamma)$. From $e\beta=0$, one can infer $\epsilon_m \epsilon_n e^2 \chi=0$, which is uniquely solved by $\chi=0$, proving $\alpha=0$. Additionally, the following equations are also satisfied
        \[
        \begin{cases}
        \epsilon_m e \beta=0\\
        \epsilon_n e \beta=0\\
        \epsilon_m \epsilon_n \beta=0
        \end{cases}
        \]
    implying, by Lemma \ref{lem: (1,1) vanish}, that $\beta=0$. 

    Lastly, we are only left with checking that $W_1^{\Gamma,(0,2)}|_{\ker W_{mn}^{0,2}} $ and $W_{mn}^{1,1}|_{\ker(W_1^{\Gamma,(1,1)})} $ are injective. This is ensured by noticing that, repeating the above proof, one has
        \[
        \begin{cases}
            e\alpha=0\\
            \epsilon_m\epsilon_n \alpha=0
        \end{cases}\quad \Rightarrow \ \alpha=0, \quad \mathrm{and}\quad \begin{cases}
            e\beta=0\\
            \epsilon_m\epsilon_n \beta=0
        \end{cases}
        \quad \Rightarrow \ \beta=0.
        \]
\end{proof}

\section{Long computations}

\subsection{Proof of Proposition \ref{prop: involutivity and isotropicity}}\label{app:proof invol}
\begin{proof}

First of all notice that the isotropy of $ {D}$ guarantees that $[-,-]$ is (graded)skew-symmetric. In particular, assuming $\mathbb X + \mathcal{X}, \mathbb Y+ \mathcal{Y}\in {D} $, one has\footnote{In the following, choosing whether to compute $[\mathbb{X}+\mathcal{X},\mathbb Y+\mathcal{Y}]$ or $[\mathbb Y+\mathcal{Y},\mathbb{X}+\mathcal{X}]$ is based on computational convenience. }
    \[
    [\mathbb X + \mathcal{X}, \mathbb Y + \mathcal{Y}] + [\mathbb Y + \mathcal{Y}, \mathbb X + \mathcal{Y}] = \delta\langle \mathbb X + \mathcal{X}, \mathbb Y + \mathcal{Y} \rangle = 0
    \]
Secondly,  because of the odd parity of $\mathbb X_\alpha + \mathcal{X}_\alpha$, unlike in the usual case one has
    \[
    [\mathbb X_\alpha + \mathcal{X}_\alpha,\mathbb X_\alpha + \mathcal{X}_\alpha]\neq 0.
    \]
We have
    \begin{align*}
        [\mathbb J_c+\mathcal{J}_c,\mathbb J_c+\mathcal{J}_c]&=-\frac12\mathbb J_{[c,c]}+\cancel{\frac12\iota_{\mathbb J_c}\delta\mathcal{J}_c}-\frac12\delta(\iota_{\mathbb J_c}\mathcal{J}_c)+\cancel{\frac12\iota_{\mathbb J_c}\delta\mathcal{J}_c}\\[3pt]
        &=-\frac12\mathbb J_{[c,c]}-\frac12\intc c[c,e\delta e]=-\frac12\mathbb J_{[c,c]}-\frac12\intc [c,c]e\delta e\\[3pt]
        &=-\frac12\mathbb J_{[c,c]}-\frac12\mathcal J_{[c,c]},
    \end{align*}
        where we used the exactness property of $\mathcal{J}_c$,
    \begin{align*}
        [\mathbb E_\zeta+\mathcal{E}_\zeta,\mathbb J_c+\mathcal{J}_c]&=\mathbb J_{L_\zeta^{\omega_0}c}-\delta(\iota_{\mathbb E_\zeta}\mathcal{J}_c)\\[3pt]
        &=\mathbb J_{L_\zeta^{\omega_0}c}+\delta\intc ceL_\zeta^{\omega_0}e\\[3pt]
        &=\mathbb J_{L_\zeta^{\omega_0}c}+\delta\intc \frac{e^2}2L_\zeta^{\omega_0}c\\[3pt]
        &=\mathbb J_{L_\zeta^{\omega_0}c}+\mathcal J_{L_\zeta^{\omega_0}c},
    \end{align*}
\begin{align*}
    [\mathbb J_c+\mathcal{J}_c,\mathbb K_\eta+\mathcal{K}_\eta]&=\mathbb{J}_{\eta\de_{\omega^0_m}c}+\iota_{\mathbb J_c}\delta\mathcal{K}_\eta-\delta(\iota_{\mathbb J_c}\mathcal{K}_\eta)\\[3pt]
    &=\mathbb{J}_{\eta\de_{\omega^0_m}c}+\iota_{\mathbb J_c}\intc -\eta \delta e_m e\delta\omega+\eta e_m \delta e\delta\omega-\eta e\delta\omega_m\delta e\\[2pt]
    &\phantom{=}-\delta(\intc \eta e_m e \de_\omega c-\eta e(\omega-\omega_0)_m[c,e])\\[3pt]
    &=\mathbb{J}_{\eta\de_{\omega^0_m}c}+\intc \eta e[c,e_m]\delta \omega -\eta \delta e_m e \de_\omega c+\eta e_m[c,e]\delta\omega\\[2pt]
    &\phantom{=}+\eta e_m\delta e\de_\omega c+\eta e\de_{\omega_m}c\delta e+\eta e \delta\omega_m[c,e]\\[2pt]
    &\phantom{=}+\eta\delta e_m e \de_\omega c-\eta e_m\delta e \de_\omega c-\eta e_m[\delta \omega,c]\\[2pt]
    &\phantom{=}+\eta(\omega-\omega_0)_m[c,e\delta e]-\eta\delta\omega_m[c,\frac{e^2}2]\\[3pt]
    &=\mathbb{J}_{\eta\de_{\omega^0_m}c}+\intc\eta\de_{\omega_m^0}c e\delta e\\[3pt]
    &=\mathbb{J}_{\eta\de_{\omega^0_m}c}+\mathcal{J}_{\eta\de_{\omega^0_m}c},
\end{align*}
\begin{align*}
    [\mathbb F_\lambda+\mathcal{F}_\lambda,\mathbb J_c+\mathcal{J}_c]&=-\mathbb E_{[c,\lambda \epsilon_n]^i}-\mathbb R_{[c,\lambda \epsilon_n]^m}-\mathbb F_{[c,\lambda \epsilon_n]^n}
    +\mathbb J_{[c,\lambda \epsilon_n]^i(\omega-\omega_0)_i}\\[2pt]
    &\phantom{=}+\mathbb J_{[c,\lambda \epsilon_n]^m(\omega-\omega_0)_m} -\delta(\iota_{\mathbb F_\lambda}\mathcal{J}_c)+\iota_{\mathbb J_c}\delta\mathcal{F}_\lambda\\[3pt]
    &=-\mathbb E_{[c,\lambda \epsilon_n]^i}-\mathbb R_{[c,\lambda \epsilon_n]^m}-\mathbb F_{[c,\lambda \epsilon_n]^n}+\mathbb J_{[c,\lambda \epsilon_n]^i(\omega-\omega_0)_i}\\[2pt]
    &\phantom{=}+\mathbb J_{[c,\lambda \epsilon_n]^m(\omega-\omega_0)_m}
    -\delta\intc ce(\de_\omega(\lambda \epsilon_n)+\lambda\sigma)+\iota_{\mathbb J_c}\intc\lambda \epsilon_n\delta e\delta\omega\\[3pt]
    &=-\mathbb E_{[c,\lambda \epsilon_n]^i}-\mathbb R_{[c,\lambda \epsilon_n]^m}-\mathbb F_{[c,\lambda \epsilon_n]^n}+\mathbb J_{[c,\lambda \epsilon_n]^i(\omega-\omega_0)_i}+\mathbb J_{[c,\lambda \epsilon_n]^m(\omega-\omega_0)_m}\\[2pt]
    &\phantom{=}-\delta\intc \de_\omega c(e\lambda \epsilon_n)-\intc\lambda \epsilon_n[c,e]\delta\omega-\lambda \epsilon_n \de_\omega c\delta e\\[3pt]
    &=-\mathbb E_{[c,\lambda \epsilon_n]^i}-\mathbb R_{[c,\lambda \epsilon_n]^m}-\mathbb F_{[c,\lambda \epsilon_n]^n} +\mathbb J_{[c,\lambda \epsilon_n]^i(\omega-\omega_0)_i}+\mathbb J_{[c,\lambda \epsilon_n]^m(\omega-\omega_0)_m}\\[2pt]
    &\phantom{=}-\intc [\delta\omega,c]\lambda \epsilon_n e+\lambda \epsilon_n \de_\omega c \delta e +\lambda \epsilon_n[c,e]\delta\omega+\lambda \epsilon_n \de_\omega c\delta e\\[3pt]
    &=-\mathbb E_{[c,\lambda \epsilon_n]^i}-\mathbb R_{[c,\lambda \epsilon_n]^m}-\mathbb F_{[c,\lambda \epsilon_n]^n} +\mathbb J_{[c,\lambda \epsilon_n]^i(\omega-\omega_0)_i}+\mathbb J_{[c,\lambda \epsilon_n]^m(\omega-\omega_0)_m}\\[2pt]
    &\phantom{=}-\mathcal E_{[c,\lambda \epsilon_n]^i}-\mathcal R_{[c,\lambda \epsilon_n]^m}-\mathcal F_{[c,\lambda \epsilon_n]^n}+\mathcal J_{[c,\lambda \epsilon_n]^i(\omega-\omega_0)_i}+\mathcal J_{[c,\lambda \epsilon_n]^m(\omega-\omega_0)_m},
\end{align*}

\begin{align*}
        [\mathbb E_\zeta+\mathcal{E}_\zeta,\mathbb E_\zeta+\mathcal{E}_\zeta]&=\frac12\mathbb E_{[\zeta,\zeta]}-\frac12\mathbb J_{\iota_\zeta\iota_\zeta F_{\omega_0}}-\frac12\delta(\iota_{\mathbb E_\zeta}\mathcal{E}_\zeta)\\[3pt]
        &=\frac12\mathbb E_{[\zeta,\zeta]}-\frac12\mathbb J_{\iota_\zeta\iota_\zeta F_{\omega_0}}+\frac12\delta\intc \iota_\zeta\frac{e^2}2 (\iota_\zeta F_{\omega_0}+L_\zeta^{\omega_0}(\omega-\omega_0))+\iota_\zeta(\omega-\omega_0)L_\zeta^{\omega_0}\frac{e^2}2\\[3pt]
        &=-\frac12\delta\intc\frac{e^2}2\iota_\zeta\iota_\zeta F_{\omega_0}+\frac{e^2}2\iota_\zeta L_\zeta^{\omega_0}(\omega-\omega_0)-\frac{e^2}2(\iota_{[\zeta,\zeta]}(\omega-\omega_0)+\iota_\zeta L_\zeta^{\omega_0}(\omega-\omega_0))\\[3pt]
        &=\frac12\mathbb E_{[\zeta,\zeta]}+\frac12\mathcal E_{[\zeta,\zeta]}-\frac12\mathbb J_{\iota_\zeta\iota_\zeta F_{\omega_0}}-\frac12\mathcal J_{\iota_\zeta\iota_\zeta F_{\omega_0}},
    \end{align*}
\begin{align*}
    [\mathbb F_\lambda+\mathcal{F}_\lambda,\mathbb F_\lambda+\mathcal{F}_\lambda]&=\iota_{\mathbb F_\lambda}\delta\mathcal F_\lambda-\frac12\delta(\iota_{\mathbb F_\lambda}\mathcal F_\lambda)\\[3pt]
    &=\iota_{\mathbb F_\lambda}\intc\lambda \epsilon_n\delta e\delta \omega-\frac12\delta\intc\lambda \epsilon_n\lambda \epsilon_n F_\omega\\[3pt]
    &=\intc\lambda \epsilon_n(\de_\omega(\lambda \epsilon_n)+\lambda\sigma)\delta\omega+\lambda \epsilon_n\mathbb F_{\lambda,\omega}\delta e\\[3pt]
    &=\intc\lambda \epsilon_n(\de_\omega\lambda \epsilon_n-\lambda \de_\omega \epsilon_n)\delta\omega\\[3pt]
    &=0,
\end{align*}      
where we used $\lambda^2=\epsilon_n^2=0$ and the fact that $\mathbb F_{\lambda,\omega}$ is proportional to $\lambda$,

\begin{align*}
    [\mathbb E_\zeta+\mathcal{E}_\zeta,\mathbb K_\eta+\mathcal{K}_\eta]&=\mathbb K_{\iota_\zeta \de\eta}+\mathbb J_{\eta\iota_\zeta F_{\omega^0_m}}+\mathbb E_{\eta\partial_m\zeta}+\iota_{\mathbb E_\eta}\delta\mathcal K_\eta-\delta(\iota_{\mathbb E_\zeta}\mathcal K_\eta)\\[3pt]
    &=\mathbb K_{\iota_\zeta \de\eta}+\mathbb J_{\eta\iota_\zeta F_{\omega^0_m}}+\mathbb E_{\eta\partial_m\zeta}\iota_{\mathbb E_\zeta+}\intc-\eta\delta e_m e\delta\omega+\eta e_m\delta e\delta\omega -\eta e\delta \omega_m\delta e\\[2pt]
    &\phantom{=}-\delta\intc\eta e_m e (-\iota_\zeta F_{\omega_0}-L^{\omega_0}_\zeta(\omega-\omega_0))+\eta e(\omega-\omega_0)_mL^{\omega_0}_\zeta e\\[3pt]
    &=\mathbb K_{\iota_\zeta \de\eta}+\mathbb J_{\eta\iota_\zeta F_{\omega^0_m}}+\mathbb E_{\eta\partial_m\zeta}+\intc\eta L^{\omega_0}_\zeta e_m e \delta\omega-\eta\delta e_m e (-\iota_\zeta F_{\omega_0}-L^{\omega_0}_\zeta(\omega-\omega_0))\\[2pt]
    &\phantom{=}-\eta e_mL^{\omega_0}_\zeta e\delta\omega+\eta e_m(-\iota_\zeta F_{\omega_0}-L^{\omega_0}_\zeta (\omega-\omega_0))\delta\omega-\eta e\delta\omega_mL^{\omega_0}_\zeta e\\[2pt]
    & \phantom{=}-\eta e(-L^{\omega_0}_\zeta \omega_m-\iota_{\partial_m\zeta}(\omega-\omega_0)+\iota_\zeta\partial_m\omega_0)\delta e\\[2pt]
    &\phantom{=}+\eta\delta e_m e(-\iota_\zeta F_{\omega_0}-L^{\omega_0}_\zeta(\omega-\omega_0))-\eta e_m\delta e(-\iota_\zeta F_{\omega_0}-L^{\omega_0}_\zeta(\omega-\omega_0))\\[2pt]
    &\phantom{=}+\eta(\omega-\omega_0)_mL^{\omega_0}_\zeta(e\delta e)-\eta\delta\omega_mL^{\omega_0}_\zeta(\frac{e^2}2)-\eta e_m eL^{\omega_0}_\zeta\delta\omega+\eta\iota_{\partial_m\zeta}\frac{e^2}2\delta\omega\\[3pt]
    &=\mathbb K_{\iota_\zeta \de\eta}+\mathbb J_{\eta\iota_\zeta F_{\omega^0_m}}+\mathbb E_{\eta\partial_m\zeta}+\intc \iota_\zeta \de\eta e_m e\delta\omega+\iota_\zeta d \eta e(\omega-\omega_0)_m\delta e\\[2pt]
    &\phantom{=}-\eta L^{\omega_0}_\zeta(\omega-\omega_0)_m e \delta e+\eta e (L^{\omega_0}_\zeta\omega_m+\iota_{\partial_m\zeta}(\omega-\omega_0)+\iota_\zeta\partial_m(\omega-\omega_0)\delta e\\[3pt]
    &=\mathbb K_{\iota_\zeta \de\eta}+\mathbb J_{\eta\iota_\zeta F_{\omega^0_m}}+\mathbb E_{\eta\partial_m\zeta}+\mathcal K_{\iota_\zeta \de\eta}+\eta\iota_{\partial_m\zeta}\frac{e^2}2\delta\omega\\[2pt]
    &\phantom{=}+\intc\eta e (\iota_\zeta \de\omega_m^0+\iota_{\partial_m\zeta}(\omega-\omega_0)+\iota_\zeta\partial_m\omega_0)\delta e\\[3pt]
    &=\mathbb K_{\iota_\zeta \de\eta}+\mathbb J_{\eta\iota_\zeta F_{\omega^0_m}}+\mathbb E_{\eta\partial_m\zeta}+\mathcal K_{\iota_\zeta \de\eta}+\mathcal J_{\eta\iota_\zeta F_{\omega^0_m}}+\intc\eta(\iota_{\partial_m\zeta}\frac{e^2}2\delta\omega+\iota_{\partial_m\zeta}(\omega-\omega_0)\frac{\delta e^2}2)\\[3pt]
    &=\mathbb K_{\iota_\zeta\de\eta}+\mathbb J_{\eta\iota_\zeta F_{\omega^0_m}}+\mathbb E_{\eta\partial_m\zeta}+\mathcal K_{\iota_\zeta\de\eta}+\mathcal J_{\eta\iota_\zeta F_{\omega^0_m}}+\mathcal E_{\eta\partial_m\zeta}
\end{align*}
\begin{align*}
    [\mathbb K_\eta+\mathcal{K}_\eta,\mathbb K_\eta+\mathcal{K}_\eta]&=\mathbb K_{\eta\partial_m\eta}+\iota_{\mathbb K_\eta}\delta\mathcal K_\eta-\frac12\delta(\iota_{\mathbb K_\eta}\mathcal K_\eta)\\[3pt]
        &=\mathbb K_{\eta\partial_m\eta}+\iota_{\mathbb K_\eta}\intc-\eta\delta e_m e\delta\omega+\eta e_m\delta e\delta \omega-\eta e\delta \omega_m\delta e\\[2pt]
        &\phantom{=}+\frac12\delta\intc\eta(ee_m\delta\omega+(\omega-\omega_0)_me\delta e\\[3pt]
        &=\mathbb K_{\eta\partial_m\eta}+\intc\eta\partial_m\eta e_m e \delta\omega+\eta\delta e_m e\de\eta(\omega-\omega_0)_m+\eta e_m\delta e\de\eta(\omega-\omega_0)_m\\[2pt]
        &\phantom{=}+\eta e\partial_m\eta(\omega-\omega_0)_m\delta e-\eta ed\eta e_m\delta\omega_m+\frac12\delta\intc\eta ee_md\eta(\omega-\omega_0)_m+\eta(\omega-\omega_0)_med\eta e_m\\[3pt]
        &=\mathbb K_{\eta\partial_m\eta}+\intc\eta\partial_m\eta(e_me\delta\omega+e(\omega-\omega_0)_m\delta e)+\eta\de\eta\delta(e_me(\omega-\omega_0)_m)\\[2pt]
        &\phantom{=}
        +\delta\intc\eta\de\eta e_me(\omega-\omega_0)_m\\[3pt]
        &=\mathbb K_{\eta\partial_m\eta}+\intc\eta\partial_m\eta(e_me\delta\omega+e(\omega-\omega_0)_m\delta e)\\[3pt]
        &=\mathbb K_{\eta\partial_m\eta}+\mathcal K_{\eta\partial_m\eta},        
\end{align*}

\begin{align*}
    [\mathbb E_\zeta+\mathcal{E}_\zeta,\mathbb F_\lambda+\mathcal{F}_\lambda]&=\mathbb{E}_{L_\zeta^{\omega_0}(\lambda \epsilon_n)^i}+\mathbb{F}_{L_\zeta^{\omega_0}(\lambda\epsilon_n)^n}+\mathbb{R}_{L_\zeta^{\omega_0}(\lambda \epsilon_n)^m}-\mathbb{J}_{L_\zeta^{\omega_0}(\lambda \epsilon_n)^i(\omega-\omega_0)_i}\\[2pt]
    &\phantom{=}-\mathbb{J}_{L_\zeta^{\omega_0}(\lambda \epsilon_n)^m(\omega-\omega_0)_m}+\iota_{\mathbb E_\zeta}\delta\mathcal F_\lambda-\delta(\iota_{\mathbb E_\zeta}\mathcal F_\lambda)\\
    &=\mathbb{E}_{L_\zeta^{\omega_0}(\lambda \epsilon_n)^i}+\mathbb{F}_{L_\zeta^{\omega_0}(\lambda \epsilon_n)^n}+\mathbb{R}_{L_\zeta^{\omega_0}(\lambda \epsilon_n)^m}-\mathbb{J}_{L_\zeta^{\omega_0}(\lambda \epsilon_n)^i(\omega-\omega_0)_i}-\mathbb{J}_{L_\zeta^{\omega_0}(\lambda \epsilon_n)^m(\omega-\omega_0)_m}\\
    &\phantom{=}+\iota_{\mathbb E_\zeta}\intc \lambda \epsilon_n\delta e\delta\omega+\delta\intc\lambda \epsilon_n e(L_\zeta^{\omega_0}(\omega-\omega_0)+\iota_\zeta F_{\omega_0})\\
    &=\mathbb{E}_{L_\zeta^{\omega_0}(\lambda \epsilon_n)^i}+\mathbb{F}_{L_\zeta^{\omega_0}(\lambda \epsilon_n)^n}+\mathbb{R}_{L_\zeta^{\omega_0}(\lambda \epsilon_n)^m}-\mathbb{J}_{L_\zeta^{\omega_0}(\lambda \epsilon_n)^i(\omega-\omega_0)_i}-\mathbb{J}_{L_\zeta^{\omega_0}(\lambda \epsilon_n)^m(\omega-\omega_0)_m}\\
    &\phantom{=}-\intc\lambda \epsilon_nL_\zeta^{\omega_0}e\delta\omega+\lambda \epsilon_n(\iota_\zeta F_{\omega_0}+L_\zeta^{\omega_0}(\omega-\omega_0))\delta e\\
    &\phantom{=}+\lambda \epsilon_n(-\iota_\zeta F_{\omega_0}-L_\zeta^{\omega_0}(\omega-\omega_0))\delta e+\lambda \epsilon_n eL_\zeta^{\omega_0}\delta\omega\\
    &=\mathbb{E}_{L_\zeta^{\omega_0}(\lambda \epsilon_n)^i}+\mathbb{F}_{L_\zeta^{\omega_0}(\lambda \epsilon_n)^n}+\mathbb{R}_{L_\zeta^{\omega_0}(\lambda \epsilon_n)^m}Z-\mathbb{J}_{L_\zeta^{\omega_0}(\lambda \epsilon_n)^i(\omega-\omega_0)_i}-\mathbb{J}_{L_\zeta^{\omega_0}(\lambda \epsilon_n)^m(\omega-\omega_0)_m}\\
    &\phantom{=}+\intc (L_\zeta^{\omega_0}(\lambda \epsilon_n)^ie_i+L_\zeta^{\omega_0}(\lambda \epsilon_n)^n\epsilon_n+L_\zeta^{\omega_0}(\lambda \epsilon_n)^me_m)e\delta\omega\\
    &=\mathbb{E}_{L_\zeta^{\omega_0}(\lambda \epsilon_n)^i}+\mathbb{F}_{L_\zeta^{\omega_0}(\lambda \epsilon_n)^n}+\mathbb{R}_{L_\zeta^{\omega_0}(\lambda \epsilon_n)^m}-\mathbb{J}_{L_\zeta^{\omega_0}(\lambda \epsilon_n)^i(\omega-\omega_0)_i}-\mathbb{J}_{L_\zeta^{\omega_0}(\lambda \epsilon_n)^m(\omega-\omega_0)_m}\\
    &\phantom{=}+\mathcal{E}_{L_\zeta^{\omega_0}(\lambda \epsilon_n)^i}+\mathcal{F}_{L_\zeta^{\omega_0}(\lambda \epsilon_n)^n}+\mathcal{R}_{L_\zeta^{\omega_0}(\lambda \epsilon_n)^m}-\mathcal{J}_{L_\zeta^{\omega_0}(\lambda \epsilon_n)^i(\omega-\omega_0)_i}-\mathcal{J}_{L_\zeta^{\omega_0}(\lambda \epsilon_n)^m(\omega-\omega_0)_m},
\end{align*}
and
\begin{align*}
    [\mathbb F_\lambda+\mathcal{F}_\lambda,\mathbb K_\eta+\mathcal{K}_\eta]&=\mathbb E_{\eta\partial_m(\lambda \epsilon_n)^i}+\mathbb F_{\eta\partial_m(\lambda \epsilon_n)^n}+\mathbb K_{\eta\partial_m(\lambda \epsilon_n)^m}-\mathbb J_{\eta\partial_m(\lambda \epsilon_n)^i(\omega-\omega_0)_i}\\[2pt]
    &\phantom{=}-\mathbb J_{\eta\partial_m(\lambda \epsilon_n)^m(\omega-\omega_0)_m}+\iota_{\mathbb F_\lambda}\delta\mathcal K_\eta-\delta(\iota_{\mathbb F_\lambda}\mathcal K_\eta)+\iota_{\mathbb K_\eta}\delta\mathcal F_\lambda\\[3pt]
    &=\mathbb E_{\eta\partial_m(\lambda \epsilon_n)^i}+\mathbb F_{\eta\partial_m(\lambda \epsilon_n)^n}+\mathbb K_{\eta\partial_m(\lambda \epsilon_n)^m}-\mathbb J_{\eta\partial_m(\lambda \epsilon_n)^i(\omega-\omega_0)_i}-\mathbb J_{\eta\partial_m(\lambda \epsilon_n)^m(\omega-\omega_0)_m}\\[2pt]
    &\phantom{=}+\iota_{\mathbb F_\lambda}\intc-\eta\delta ee_m\delta\omega-\eta\delta e_me\delta\omega-\eta e\delta\omega_m\delta e-\delta\intc\eta e_m\lambda \epsilon_n F_\omega\\[2pt]
    &\phantom{=}-\delta\intc-\eta(\omega-\omega_0)_m\de_\omega(\lambda \epsilon_ne)+\iota_{\mathbb K_\eta}\intc\lambda \epsilon_n\delta e\delta\omega\\[3pt]
    &=\mathbb E_{\eta\partial_m(\lambda \epsilon_n)^i}+\mathbb F_{\eta\partial_m(\lambda \epsilon_n)^n}+\mathbb K_{\eta\partial_m(\lambda \epsilon_n)^m}-\mathbb J_{\eta\partial_m(\lambda \epsilon_n)^i(\omega-\omega_0)_i}-\mathbb J_{\eta\partial_m(\lambda \epsilon_n)^m(\omega-\omega_0)_m}\\[2pt]
    &\phantom{=}+\intc\eta(\de_\omega(\lambda \epsilon_n)+\lambda\sigma)e_m\delta\omega+\eta e_m \mathbb F_\omega\delta +\eta(\de_{\omega_m}(\lambda \epsilon_n)+\lambda\sigma)e\delta\omega-\eta\delta e_m\lambda \epsilon_n F_\omega\\[2pt]
    &\phantom{=}+\eta e \mathbb F_\omega\delta e-\eta e(\de_\omega(\lambda \epsilon_n)+\lambda\sigma)\delta\omega_m+\intc\eta\delta e_m\lambda \epsilon_n F_\omega+\eta e_m\lambda \epsilon_n \de_\omega\delta\omega\\[2pt]
    &\phantom{=}-\eta\delta\omega_m \de_\omega(\lambda \epsilon_n e)-\eta(\omega-\omega_0)_m[\delta\omega,\lambda \epsilon_n e]+\eta(\omega-\omega_0)_m\de_\omega(\lambda \epsilon_n\delta e)\\[2pt]
    &\phantom{=}+\intc \lambda \epsilon_n(-\eta\de_{\omega_m^0}e-d\eta e_m)\delta\omega +\lambda \epsilon_n(-\eta\de_{\omega_m^0}\omega+d\eta(\omega-\omega_0)_m)\delta e\\[3pt]
    &=\mathbb E_{\eta\partial_m(\lambda \epsilon_n)^i}+\mathbb F_{\eta\partial_m(\lambda \epsilon_n)^n}+\mathbb K_{\eta\partial_m(\lambda \epsilon_n)^m} -\mathbb J_{\eta\partial_m(\lambda \epsilon_n)^i(\omega-\omega_0)_i}-\mathbb J_{\eta\partial_m(\lambda \epsilon_n)^m(\omega-\omega_0)_m}\\[2pt]
    &\phantom{=}+\intc\eta(\de_\omega(\lambda \epsilon_n)+\lambda\sigma)e_m\delta\omega+\eta\lambda \epsilon_n(\de_\omega\omega_m-\partial_m\omega)\delta e\\[2pt]
    &\phantom{=}+\eta(\de_{\omega_m}(\lambda \epsilon_n)+\lambda\sigma_m)e\delta\omega-\eta\delta e_m\lambda \epsilon_n F_\omega\\[2pt]
    &\phantom{=}-\eta \de_\omega(\lambda \epsilon_n e)\delta\omega_m+\intc\eta\lambda \epsilon_n F_\omega\delta e_m-\de_\omega(\eta e_m\lambda \epsilon_n)\delta\omega\\[2pt]
    &\phantom{=}+\eta\delta\omega[(\omega-\omega_0)_m\lambda \epsilon_ne]+\de_\omega(\eta(\omega-\omega_0)_m)\lambda \epsilon_n\delta e\\[2pt]
    &\phantom{=}+\intc \lambda \epsilon_n(-\eta\de_{\omega_m^0}e-d\eta e_m)\delta\omega+\lambda \epsilon_n(-\eta\de_{\omega_m^0}\omega+d\eta(\omega-\omega_0)_m)\delta e\\[2pt]
    \end{align*}
    \begin{align*}
    &=\mathbb E_{\eta\partial_m(\lambda \epsilon_n)^i}+\mathbb F_{\eta\partial_m(\lambda \epsilon_n)^n}+\mathbb K_{\eta\partial_m(\lambda \epsilon_n)^m}-\mathbb J_{\eta\partial_m(\lambda \epsilon_n)^i(\omega-\omega_0)_i}-\mathbb J_{\eta\partial_m(\lambda \epsilon_n)^m(\omega-\omega_0)_m}\\[2pt]
    &\phantom{=}+\intc\eta(\de_\omega(\lambda \epsilon_n)+\lambda\sigma)e_m\delta\omega+\eta\lambda \epsilon_n(\de_\omega\omega_m-\partial_m\omega)\delta e\\[2pt]
    &\phantom{=}+\eta(\de_{\omega_m}(\lambda \epsilon_n)+\lambda\sigma_m)e\delta\omega-\eta\delta e_m\lambda \epsilon_n F_\omega-\eta \de_\omega(\lambda \epsilon_n e)\delta\omega_m\\[2pt]
    &\phantom{=}+\intc\de\eta e_m\lambda \epsilon_n\delta\omega+\eta\lambda(\epsilon_n\de_{\omega_m}ee\sigma_m)\delta\omega-\eta e_m \de_\omega(\lambda \epsilon_n)\delta\omega\\[2pt]
    &\phantom{=}+\eta\delta\omega[(\omega-\omega_0)_m,\lambda \epsilon_ne]+d\eta(\omega-\omega_0)_m\lambda \epsilon_n\delta e\\[2pt]
    &\phantom{=}-\eta \de_\omega(\omega-\omega_0)_m\lambda \epsilon_n\delta e+\intc \lambda \epsilon_n(-\eta\de_{\omega_m^0}e-d\eta e_m)\delta\omega\\[2pt]
    &\phantom{=}+\lambda \epsilon_n(-\eta\de_{\omega_m^0}\omega+d\eta(\omega-\omega_0)_m)\delta e\\[3pt]
    &=\mathbb E_{\eta\partial_m(\lambda \epsilon_n)^i}+\mathbb F_{\eta\partial_m(\lambda \epsilon_n)^n}+\mathbb K_{\eta\partial_m(\lambda \epsilon_n)^m}-\mathbb J_{\eta\partial_m(\lambda \epsilon_n)^i(\omega-\omega_0)_i}-\mathbb J_{\eta\partial_m(\lambda \epsilon_n)^m(\omega-\omega_0)_m}\\[2pt]
    &\phantom{=}+\intc \eta\lambda \epsilon_n(\de_\omega\omega_m-\partial_m\omega)\delta e+\eta\partial_m(\lambda \epsilon_n)e\delta\omega-\eta\lambda \epsilon_n\de_{\omega_m^0}e \delta\omega\\[2pt]
    &\phantom{=}-\eta\lambda \epsilon_n\de_{\omega_m^0}\omega\delta e+\eta\lambda \epsilon_n\de_{\omega_m}e\delta\omega+\eta\delta\omega[(\omega-\omega_0)_m,\lambda \epsilon_ne]-\eta \de_\omega\omega_m\lambda \epsilon_n\delta e\\[2pt]
    &\phantom{=}+\eta \de_\omega\omega_m^0\lambda \epsilon_n\delta e\\[3pt]
    &=\mathbb E_{\eta\partial_m(\lambda \epsilon_n)^i}+\mathbb F_{\eta\partial_m(\lambda \epsilon_n)^n}+\mathbb K_{\eta\partial_m(\lambda \epsilon_n)^m}-\mathbb J_{\eta\partial_m(\lambda \epsilon_n)^i(\omega-\omega_0)_i}-\mathbb J_{\eta\partial_m(\lambda \epsilon_n)^m(\omega-\omega_0)_m}\\[2pt]
    &\phantom{=}+\mathcal E_{\eta\partial_m(\lambda \epsilon_n)^i}+\mathcal F_{\eta\partial_m(\lambda \epsilon_n)^n}+\mathcal K_{\eta\partial_m(\lambda \epsilon_n)^m}-\mathcal J_{\eta\partial_m(\lambda \epsilon_n)^i(\omega-\omega_0)_i}-\mathcal J_{\eta\partial_m(\lambda \epsilon_n)^m(\omega-\omega_0)_m}.
\end{align*}

\end{proof}

\subsection{Proof of Theorem \ref{thm: Poisson struct}}\label{app: proof poisson}
\begin{proof}

We have
\begin{align*}
    \pi(\mathcal J_c)=\intc [c,E]\pard{}{E}+(e\de_{\omega_0}c+[c,\Omega])\pard{}{\Omega}+[c,\epsilon_m]\pard{}{\epsilon_m}+(e\de_{\omega_0}c+\epsilon_m\de_{\omega_0} c+[c,\Omega_m])\pard{}{\Omega_m}=\mathbb J_c
\end{align*}

\begin{align*}
    \pi(\mathcal F_\lambda)&=\intc[\lambda \epsilon_nr,E]\pard{}{E}+\pard{}{\Omega}(e\de_{\omega_0}(\lambda \epsilon_nr)+[e(\omega-\omega_0),\lambda \epsilon_nr])\\
    &\phantom{=\intc}-\lambda \epsilon_n(e\de_{\omega_0}\pard{}{E}-[e(\omega-\omega_0),\pard{}{E}])+\lambda \epsilon_n(F_\omega+\Theta)\pard{}{\Omega}\\
    &\phantom{=\intc}+(e\de_{\omega_m^0}(\lambda \epsilon_nr)+[\Omega_m,\lambda \epsilon_nr]+\epsilon_m \de_\omega(\lambda \epsilon_nr))\pard{}{\Omega_m}\\
    &\phantom{=\intc}+(\lambda \epsilon_n F_{\omega_m}+\Theta_m)\pard{}{\Omega_m}+[\lambda \epsilon_nr,\epsilon_m]\pard{}{\epsilon_m}+\de_\omega(\lambda \epsilon_n)\pard{}{\epsilon_m}\\
     &\phantom{=\intc}-[\lambda \epsilon_nr,\epsilon_m]\pard{}{\epsilon_m}.
\end{align*}
We now notice that the first and the third terms become
\begin{align*}
    [\lambda \epsilon_nr, E]+\de_{\omega_0}(\lambda \epsilon_n e)+[e(\omega-\omega_0),\lambda \epsilon_n]&=[\lambda \epsilon_n(\omega-\omega_0),e]+\de_{\omega_0}(\lambda \epsilon_ne)\\
    &\phantom{=}-(\omega-\omega_0)[e,\lambda \epsilon_n]+e[\omega-\omega_0,\lambda \epsilon_n]\\
    &=-\lambda \epsilon_n[\omega-\omega_0,e]+[\lambda \epsilon_n,e](\omega-\omega_0)+\de_{\omega_0}(\lambda \epsilon_n e)\\
    &\phantom{=}-(\omega-\omega_0)[e,\lambda \epsilon_n]+e[\omega-\omega_0,\lambda \epsilon_n]\\
    &=\de_{\omega_0}(\lambda \epsilon_n e)+[\omega-\omega_0,\lambda \epsilon_n e]=\de_\omega(\lambda \epsilon_n e),
\end{align*}
whereas the second and the fourth reduce to
\begin{align*}
    e\de_{\omega_0}&(\lambda \epsilon_nr)+[e(\omega-\omega_0),\lambda \epsilon_nr]+\lambda \epsilon_n(F_\omega+\Theta)=\\
    &=e\de_{\omega_0}(\lambda \epsilon_n)r+\lambda \epsilon_n e\de_{\omega_0}r-(\omega-\omega_0)[e,\lambda \epsilon_nr]\\
    &\phantom{=}+e[\omega-\omega_0,\lambda \epsilon_n]r+\lambda \epsilon_n e[\omega-\omega_0,r]+\lambda \epsilon_n(F_\omega+\Theta)\\
    &=\de_\omega(\lambda \epsilon_n)(\omega-\omega_0)-\de_\omega(\lambda \epsilon_n)\epsilon_m \epsilon_nb +\lambda \epsilon_n e\de_\omega r\\
    &\phantom{=}+[e,\lambda \epsilon_nr(\omega-\omega_0)]+\lambda \epsilon_n[e,\omega-\omega_0]r+\lambda \epsilon_n(F_\omega+\Theta)\\
    &=\de_\omega(\lambda \epsilon_n)(\omega-\omega_0)+\lambda \epsilon_n F_\omega+\lambda \epsilon_n \de_\omega(\omega-\omega_0)\\
    &\phantom{=}+\lambda \epsilon_n[e,\omega-\omega_0]r+\lambda \epsilon_n\Theta,
\end{align*}
which produces exactly $\mathbb F_\Omega$ by substituting 
\begin{align*}
    \Theta=-[e,\omega-\omega_0]r-\de_\omega(\omega-\omega_0).
\end{align*}
Regrouping all the terms proportional to $\frac{\delta}{\delta\Omega_m}$, we obtain the expected result after using 
\begin{equation*}
    \Theta_m=- [\emm,\omega-\omega_0] r -\de_{\omega_m}(\omega-\omega_0),
\end{equation*}
and after confronting the expression with $\mathbb F_{\Omega_m} $ in \cref{newdef ham v.f.}.

\begin{equation}\label{eq: pi K eta}
\begin{split}
    \pi(\mathcal{K}_\eta) =\intc & \left( \unl{[\eta\emm r, E]}{a1} + \unl{\eta[(\omega-\omega_0),E]}{a2} \right)\frac{\delta}{\delta E} - \eta \emm \left(e\de_{\omega_0}\frac{\delta}{\delta E} + [\Omega,\frac{\delta}{\delta E}] \right)\\
    & + \frac{\delta}{\delta \Omega} \left( e\de_{\omega_0}\eta\emm r+[\Omega,\eta\emm r ] \right) + [\unl{\eta\emm r}{a3},\emm]\frac{\delta}{\delta\emm} +\frac{\delta}{\delta\Omega_m} ( e\de_{\omega^0_m}\eta\emm r )\\
    & + \frac{\delta}{\delta\Omega_m} \left( \emm \de_\omega\eta\emm r + [\Omega_m,\eta\emm r] \right)
    +\eta \emm (F_\omega    +\Theta)\frac{\delta}{\delta\Omega} \\
    &+\eta\emm (F_{\omega_m}+\Theta_m)\frac{\delta}{\delta \Omega_m} +\left(\de_{\omega_m}(\eta\emm) -\unl{[\eta\emm r, \emm] }{a4}\right)\frac{\delta}{\delta\emm}.
\end{split}
\end{equation}
We first notice that the terms $\reft{eq: pi K eta}{a1}=\reft{eq: pi K eta}{a2}=0$, because $[e,\emm]=[e,r]=0$, and $\reft{eq: pi K eta}{a3}+\reft{eq: pi K eta}{a4}=0$. By inspection, one finds that the terms along $\frac{\delta}{\delta \emm}$ automatically yield $\mathbb K_{\eta,\emm}$. Similarly, regrouping the terms along $\frac{\delta}{\delta E}$, recalling $[e,\omega-\omega_0]=[e,er] +[e,\emm\en b]=0$, we obtain
    \begin{align*}
       \de_{\omega_0}(\eta\emm e) + [e(\omega-\omega_0),\eta \emm]=& \de_\omega(\eta\emm e) -[\omega-\omega_0,\eta\emm e] + [e(\omega-\omega_0),\eta \emm]\\
        =& \de_\omega(\eta\emm e) =\mathbb K_{\eta,E}.
    \end{align*}
The terms along $\frac{\delta}{\delta \Omega}$ are
    \begin{equation}\label{eq: pi K eta Omega}
    \begin{split}
        e\de_{\omega_0}&\eta\emm r +[\Omega,\eta\emm r]+\eta \emm (F_\omega+\Theta)=\\
        =&\de\eta \emm (\omega-\omega_0) - \eta\de_{\omega_0}\emm(\omega-\omega_0 -\emm\en b) + \eta \emm e\de_{\omega_0}r\\
        &+\eta \emm F_\omega -\eta\emm([e,\omega-\omega_0]r +\de_\omega(\omega-\omega_0))\\
        =&\mathbb K_{\eta,\Omega} - \eta \de_\omega \emm (\omega-\omega_0) - \eta\de_{\omega_0}\emm((\omega-\omega_0)-\unl{\emm\en b}{b5})\\
        &-\unl{\eta\emm[\omega-\omega_0,er]}{b1} +\eta \emm \de_\omega(\omega-\omega_0)-\unl{\eta \emm \de_\omega(\emm \en b)}{b6}\\
        & + \eta[\omega-\omega_0,\emm](\omega-\omega_0-\unl{\emm\en b}{b7}) -\unl{\eta\emm[\omega-\omega_0,r]e}{b4}\\
        =&\mathbb{K}_{\eta,\Omega} +\unl{\eta \de_\omega(\emm (\omega-\omega_0))}{b10}- \unl{\eta \de_\omega\emm\omega-\omega_0}{b11} + \unl{\eta \emm \de_\omega(\omega-\omega_0)}{b12}\\
        =&\mathbb{K}_{\eta,\Omega},
    \end{split}
    \end{equation}
    having noticed that $\reft{eq: pi K eta Omega}{b1}+\reft{eq: pi K eta Omega}{b4}=\reft{eq: pi K eta Omega}{b5}+\reft{eq: pi K eta Omega}{b6}+\reft{eq: pi K eta Omega}{b7}=\reft{eq: pi K eta Omega}{b10}+\reft{eq: pi K eta Omega}{b11}+\reft{eq: pi K eta Omega}{b12}=0$.\\
    From the terms along $\frac{\delta}{\delta\Omega_m}$, we obtain 
    \begin{equation}\label{eq: pi K eta Omega m}
    \begin{split}
        &e\de_{\omega^0_m}\eta \emm r + \emm \de_\omega\eta\emm r  +\unl{[\Omega_m,\eta\emm r]}{c1} + \eta \emm(F_{\omega_m}+\Theta_m)\\
        &=e\de_{\omega_m}\eta \emm r + \de_\omega\emm\eta\emm r +\eta \emm F_{\omega_m}\\
        &\phantom{=}-\eta\emm([\emm,\omega-\omega_0] r+\de_\omega(\omega-\omega_0)_m+\de_{\omega_m}(\omega-\omega_0))\\
        &=\mathbb{K}_{\eta,\Omega_m},
    \end{split}
    \end{equation}
    having noticed $\reft{eq: pi K eta Omega m}{c1}=0$.

    Lastly, we are left with showing that $\pi(\mathcal{E}_\zeta)=\mathbb E_\zeta$. We have
    \begin{equation}\label{eq: pi E zeta}
    \begin{split}
        \pi(\mathcal{E}_\zeta)=\intc &\left([\iz(\omega-\omega_0) + \iz e r, E]\right)\frac{\delta}{\delta E} - \iz e \left(e\de_{\omega_0}\frac{\delta}{\delta E} + [\Omega, \frac{\delta}{\delta E}] \right)\\
        &+\frac{\delta}{\delta\Omega}\left(e\de_{\omega_0}(\iz (\omega-\omega_0)+\iz e r) +[\Omega, \iz (\omega-\omega_0) + \iz e r] \right)\\
        &+\frac{\delta}{\delta\Omega}\left(e \de_\omega \left(\alpha_\zeta -\emm\en \frac{\delta\beta_\zeta}{\delta e}\right) -(\omega-\omega_0)[e,\alpha_\zeta -\emm\en \frac{\delta\beta_\zeta}{\delta e}] \right)\\
        &+\left( [\unl{\iz e r}{d1} + \iz (\omega-\omega_0),\emm] +\unl{[\alpha_\zeta -\emm\en \frac{\delta\beta_\zeta}{\delta e},\emm]}{d3} \right)\frac{\delta}{\delta \emm}\\
        &+\frac{\delta}{\delta\Omega_m}\left(e\de_{\omega_m^0}(\iz(\omega-\omega_0 + \iz e r + \alpha_\zeta -\emm\en \frac{\delta\beta_\zeta}{\delta e}) + \emm\de_{\omega_0}(\iz(\omega-\omega_0 + \iz e r ) \right)\\
        &+\frac{\delta}{\delta\Omega_m}\left(\emm\de_{\omega_0}(\alpha_\zeta -\emm\en \frac{\delta\beta_\zeta}{\delta e}) + [\Omega_m, \iz(\omega-\omega_0 + \iz e r + \alpha_\zeta -\emm\en \frac{\delta\beta_\zeta}{\delta e}] \right)\\
        &+\iz e (F_\omega+\Theta)\frac{\delta}{\delta\Omega} +\iz e (F_{\omega_m}+\Theta_m)\frac{\delta}{\delta\Omega_m} +\left(\de_{\omega_m}(\iz e) -\unl{[\iz e r,\emm]}{d2} \right)\frac{\delta}{\delta\emm} \\
        &- \unl{W_e^{-1}\left( [\iz e \en \emm b,\emm] \right)}{d4}\frac{\delta}{\delta\emm}.
    \end{split}
    \end{equation}
    We immediately see that $\reft{eq: pi E zeta}{d1}+\reft{eq: pi E zeta}{d2}=0$. Similarly, we see that $\reft{eq: pi E zeta}{d3}$ is such that 
    \begin{align*}
        e[\alpha_\zeta -\emm\en \frac{\delta\beta_\zeta}{\delta e},\emm]=[e\left(\alpha_\zeta -\emm\en \frac{\delta\beta_\zeta}{\delta e}\right),\emm]=[\iz e \en \emm b,\emm],
    \end{align*}
    and therefore $\reft{eq: pi E zeta}{d3}+\reft{eq: pi E zeta}{d4}=0$. Then, along $\frac{\delta}{\delta E}$, we have
    \begin{align*}
        &\phantom{=\ }[\iota_\zeta(\omega-\omega_0)+\iota_\zeta e r, E]+[\alpha_\zeta-\epsilon_m\epsilon_n\frac{\delta \beta_\zeta}{\delta e},E]-L^{\omega_0}_\zeta E+[e(\omega-\omega_0),\iota_\zeta e]\\
        &=[\iota_\zeta(\omega-\omega_0),E]-[\iota_\zeta e,e]\epsilon_m\epsilon_n b-[e, e(\alpha_\zeta-\epsilon_m\epsilon_n\frac{\delta \beta_\zeta}{\delta e}]-L^{\omega_0}_\zeta E+e[\omega-\omega_0,\iota_\zeta e]\\
        &=-[\omega-\omega_0,e]\iota_\zeta e+[e,\iota_\zeta e]\epsilon_m\epsilon_n b-[e,\iota_\zeta e\epsilon_m\epsilon_nb]-L^{\omega_0}_\zeta E\\
        &=[e,er+\epsilon_m\epsilon_n b]\iota_\zeta e+[e,\iota_\zeta e]\epsilon_m\epsilon_n b-[e,\iota_\zeta e\epsilon_m\epsilon_n b]-L^{\omega_0}_\zeta E\\
        &=-L^{\omega_0}_\zeta E,
    \end{align*}
    where we simply developed the brackets and simplified the resulting terms.
    
    Moreover, along $\frac{\delta}{\delta\Omega}$, we have
    \begin{align*}
    &\phantom{=\ } e \de_\omega(\iota_\zeta(\omega-\omega_0))
    + e \de_\omega(\iota_\zeta e r)
    - (\omega-\omega_0)[e,\iota_\zeta(\omega-\omega_0)+\iota_\zeta e r]\\
    &\quad
    + e \de_\omega(\alpha_\zeta-\epsilon_m\epsilon_n\frac{\delta\beta_\zeta}{\delta e}) - (\omega-\omega_0)[e,\alpha_\zeta-\epsilon_m\epsilon_n\frac{\delta\beta_\zeta}{\delta e}] + \iota_\zeta e (F_\omega+\Theta) \\
    &=
    \de_\omega(e \iota_\zeta(\omega-\omega_0)) + \de_\omega(\iota_\zeta e (\omega-\omega_0)) - [\iota_\zeta(\omega-\omega_0),e](\omega-\omega_0)\\
    &\quad - [e,\iota_\zeta e](e r+\epsilon_m\epsilon_n b) r - (\omega-\omega_0)[e,\alpha_\zeta-\epsilon_m\epsilon_n\frac{\delta\beta_\zeta}{\delta e}]\\
    &\quad
    - e\iota_\zeta F_\omega - \iota_\zeta e [e,\omega-\omega_0] r - \iota_\zeta e \de_\omega(\omega-\omega_0)\\
    &=
    \de_\omega\iota_\zeta\Omega- [\iota_\zeta(\omega-\omega_0),e](\omega-\omega_0)- [e,\iota_\zeta e]\epsilon_m\epsilon_n b r\\
    &\quad - (e r+\epsilon_m\epsilon_n b)[e,\alpha_\zeta-\epsilon_m\epsilon_n\frac{\delta\beta_\zeta}{\delta e}]- e\iota_\zeta F_\omega\\
    &\quad - \iota_\zeta e [e,\epsilon_m\epsilon_n b] r - \iota_\zeta e \de_\omega(\omega-\omega_0)\\
    &= E_\Omega+ \epsilon_m\epsilon_n b[e,\alpha_\zeta-\epsilon_m\epsilon_n\frac{\delta\beta_\zeta}{\delta e}],
    \end{align*}
where
\begin{align*}
    \epsilon_m\epsilon_n b[e,\alpha_\zeta-\epsilon_m\epsilon_n\frac{\delta\beta_\zeta}{\delta e}]&=\epsilon_m\epsilon_n b W_e^{-1}([e,\iota_\zeta e \epsilon_m\epsilon_n b])\\
    &=\epsilon_m\epsilon_n b W_e^{-1}([e,\iota_\zeta e \epsilon_m\epsilon_n] b)\\
    &=b W_e^{-1}(\epsilon_m\epsilon_n[e,\iota_\zeta e\epsilon_m\epsilon_n] b)\\
    &=0.
\end{align*}
Lastly, along $\frac{\delta}{\delta \Omega_m}$, we have
\begin{align*}
    &\phantom{=\ } e\de_{\omega_m^0}(\iota_\zeta(\omega-\omega_0)+\iota_\zeta e r+\alpha_\zeta-\epsilon_m\epsilon_n\frac{\delta\beta_\zeta}{\delta e})
    + \epsilon_m \de_\omega(\iota_\zeta(\omega-\omega_0)+\iota_\zeta e r+\alpha_\zeta-\epsilon_m\epsilon_n\frac{\delta\beta_\zeta}{\delta e})\\
    &\quad
    + [\Omega_m,\iota_\zeta(\omega-\omega_0)+\iota_\zeta e r+\alpha_\zeta-\epsilon_m\epsilon_n\frac{\delta\beta_\zeta}{\delta e}]
    + \iota_\zeta e(F_{\omega_m}+\Theta_m)\\
    &=
    e\de_{\omega_m}(\iota_\zeta(\omega-\omega_0)+\iota_\zeta e r+\alpha_\zeta-\epsilon_m\epsilon_n\frac{\delta\beta_\zeta}{\delta e})
    + \epsilon_m \de_\omega(\iota_\zeta(\omega-\omega_0)+\iota_\zeta e r+\alpha_\zeta-\epsilon_m\epsilon_n\frac{\delta\beta_\zeta}{\delta e})\\
    &\quad
    + (\omega-\omega_0)[\epsilon_m,\iota_\zeta(\omega-\omega_0)+\iota_\zeta e r+\alpha_\zeta-\epsilon_m\epsilon_n\frac{\delta\beta_\zeta}{\delta e}]
    - e \iota_\zeta F_{\omega_m}\\
    &\quad
    - \iota_\zeta e\de_{\omega_m}(\omega-\omega_0)- \iota_\zeta e[\epsilon_m,\omega-\omega_0]r\\
    &=
   \de_{\omega_m}(e \iota_\zeta(\omega-\omega_0)+\iota_\zeta e(\omega-\omega_0-\epsilon_m\epsilon_n b)+\iota_\zeta e \epsilon_m\epsilon_n b)\\
    &\quad
    - \de_\omega\epsilon_m(\iota_\zeta(\omega-\omega_0)+\iota_\zeta e r)
    - \de_\omega\epsilon_m(\alpha_\zeta-\epsilon_m\epsilon_n\frac{\delta\beta_\zeta}{\delta e})\\
    &\quad
    - \de_\omega(\epsilon_m\iota_\zeta(\omega-\omega_0)+\iota_\zeta e \epsilon_m r+\epsilon_m\alpha_\zeta-\epsilon_m^2\epsilon_n\frac{\delta\beta_\zeta}{\delta e})\\
    &\quad
    + \de_\omega\epsilon_m(\iota_\zeta(\omega-\omega_0)+\iota_\zeta e r+\alpha_\zeta-\epsilon_m\epsilon_n\frac{\delta\beta_\zeta}{\delta e})
    - [\iota_\zeta(\omega-\omega_0),\epsilon_m](\omega-\omega_0)\\
    &\quad
    - [\epsilon_m,(\omega-\omega_0)\iota_\zeta e r]
    + \iota_\zeta e[\epsilon_m,\omega-\omega_0]r
    + e r[\epsilon_m,\alpha_\zeta-\epsilon_m\epsilon_n\frac{\delta\beta_\zeta}{\delta e}]\\
    &\quad
    + \epsilon_m\epsilon_n b[\epsilon_m,\alpha_\zeta-\epsilon_m\epsilon_n\frac{\delta\beta_\zeta}{\delta e}]
    - e \iota_\zeta F_{\omega_m}\\
    &\quad
    + \epsilon_m r[e,\iota_\zeta(\omega-\omega_0)+\iota_\zeta e r+\alpha_\zeta-\epsilon_m\epsilon_n\frac{\delta\beta_\zeta}{\delta e}]\\
    &\quad
    - \iota_\zeta e(\de_{\omega_m}(\omega-\omega_0))
    - \iota_\zeta e[\epsilon_m,\omega-\omega_0]r,
\end{align*}
which, after simplification, reduces to
\begin{align*}
    &=\de_{\omega_m}\iota_\zeta\Omega
    + \de_\omega\iota_\zeta\Omega_m
    + \de_\omega(\epsilon_m W_e^{-1}(\iota_\zeta e \epsilon_m\epsilon_n b))
    - [\iota_\zeta(\omega-\omega_0),\epsilon_m](\omega-\omega_0)\\
    &\quad
    - [\epsilon_m,\epsilon_m\epsilon_n b \iota_\zeta e r]
    + [\epsilon_m,\iota_\zeta e \epsilon_m\epsilon_n b]r
    - e \iota_\zeta F_{\omega_m}
    - \iota_\zeta e\de_{\omega_m}(\omega-\omega_0),
\end{align*}
where we used the identities
\begin{align*}
    \epsilon_m\epsilon_n b[\epsilon_m,\alpha_\zeta-\epsilon_m\epsilon_n\frac{\delta\beta_\zeta}{\delta e}]
    &=
    \epsilon_m\epsilon_n b W_e^{-1}([\epsilon_m,\iota_\zeta e \epsilon_m\epsilon_n b])\\
    &=
    b W_e^{-1}([\epsilon_m,\iota_\zeta e \epsilon_m\epsilon_n b]\epsilon_m\epsilon_n)
    =0
\end{align*}
and
\begin{align*}
    \epsilon_m W_e^{-1}(\iota_\zeta e \epsilon_m\epsilon_n b)
    &=
    W_e^{-1}(\iota_\zeta e \epsilon_m^2\epsilon_n b)
    =0.
\end{align*}
The remaining terms in $\pi(\mathcal{E}_\zeta)$ form exactly $\mathbb E_{\zeta,\emm}$, which concludes the proof.
\end{proof}

\newpage
\nocite{*}
\emergencystretch=2em
\newrefcontext[sorting=nty]
\sloppy
\printbibliography

\end{document}